\let\mathcal\mathscr
\definecolor{my1}{cmyk}{0,.6,0,0}
\definecolor{my2}{cmyk}{.3,.0,.0,.0}
\let\mathbbm\mathbb
\theoremstyle{plain}
\newtheorem{theorem}{Theorem}[section]
\newtheorem{thm}[theorem]{Theorem}
\newtheorem{cor}[theorem]{Corollary}
\newtheorem{prop}[theorem]{Proposition}
\newtheorem{lem}[theorem]{Lemma}
\newtheorem{fct}[theorem]{Fact}
\newtheorem{fact}[theorem]{Fact}
\newtheorem{remark}[theorem]{Remark}
\newtheorem{rem}[theorem]{Remark}
\newtheorem{exa}[theorem]{Example}
\newcommand{\dclosp}[1]{\ensuremath{\mathord{\downarrow_{#1}}}\xspace}
\newcommand{\dclosr}{\dclosp{R}}
\newcommand{\dclosq}{\dclosp{Q}}
\newcommand{\etac}{\ensuremath{\eta_\Cs}\xspace}
\newcommand{\canc}{\ensuremath{N_\Cs}\xspace}
\let\Max\max
\renewcommand{\max}{\ensuremath{\text{\scriptsize max}}\xspace}
\newcommand{\fo}{\ensuremath{\textup{FO}}\xspace}
\newcommand{\fow}{\mbox{\ensuremath{\fo(<)}}\xspace}
\newcommand{\fowm}{\mbox{\ensuremath{\fo(<,MOD)}}\xspace}
\newcommand{\fowam}{\mbox{\ensuremath{\fo(<,AMOD)}}\xspace}
\newcommand{\foc}{\ensuremath{\fo(\infsigc)}\xspace}
\newcommand{\ltl}{\ensuremath{\textup{LTL}}\xspace}
\newcommand{\ltlc}[1]{\ensuremath{\textup{LTL}(#1)}\xspace}
\newcommand{\ltlpc}[1]{\ensuremath{\textup{LTLP}(#1)}\xspace}
\newcommand{\ltla}[1]{\ensuremath{\textup{LTL}[#1]}\xspace}
\newcommand{\ltlpa}[1]{\ensuremath{\textup{LTLP}[#1]}\xspace}
\newcommand{\at}{\ensuremath{\textup{AT}}\xspace}
\newcommand{\md}{\ensuremath{\textup{MOD}}\xspace}
\newcommand{\abg}{\ensuremath{\textup{AMT}}\xspace}
\newcommand{\grp}{\ensuremath{\textup{GR}}\xspace}
\newcommand{\bsd}{\ensuremath{\mathit{SD}}\xspace}
\newcommand{\sfr}{\ensuremath{\mathit{SF}}\xspace}
\newcommand{\stzer}{\textup{ST}\xspace}
\newcommand{\sfp}[1]{\ensuremath{\mathit{SF}(#1)}\xspace}
\newcommand{\bsdp}[1]{\ensuremath{\mathit{SD}(#1)}\xspace}
\newcommand{\sfpo}{\ensuremath{\mathit{SF}}\xspace}
\newcommand{\bsdpo}{\ensuremath{\mathit{SD}}\xspace}
\newcommand{\imprint}{imprint\xspace}
\newcommand{\imprints}{imprints\xspace}
\newcommand{\id}[1]{#1-identity\xspace}
\newcommand{\ids}[1]{#1-identities\xspace}
\newcommand{\gid}{\id{\Gs}}
\newcommand{\gids}{\ids{\Gs}}
\newcommand{\optid}[1]{optimal \id{#1}}
\newcommand{\optids}[1]{optimal \ids{#1}}
\newcommand{\Optids}[1]{Optimal \ids{#1}}
\newcommand{\goptid}{\optid{\Gs}}
\newcommand{\goptids}{\optids{\Gs}}
\newcommand{\Goptids}{\Optids{\Gs}}
\newcommand{\tame}{multiplicative\xspace}
\newcommand{\Ratms}{Rating maps\xspace}
\newcommand{\ratms}{rating maps\xspace}
\newcommand{\ratm}{rating map\xspace}
\newcommand{\rata}{rating algebra\xspace}
\newcommand{\ratas}{rating algebras\xspace}
\newcommand{\Nice}{Nice\xspace}
\newcommand{\nice}{nice\xspace}
\newcommand{\mratm}{multiplicative rating map\xspace}
\newcommand{\mratms}{multiplicative rating maps\xspace}
\newcommand{\Mratms}{Multiplicative rating maps\xspace}
\newcommand{\prin}[2]{\ensuremath{\Is[#1](#2)}\xspace}
\newcommand{\pprin}[2]{\ensuremath{\Ps[#1](#2)}\xspace}
\newcommand{\opti}[2]{\ensuremath{\Is_{#1}[#2]}\xspace}
\newcommand{\copti}[1]{\opti{\Cs}{#1}}
\newcommand{\popti}[3]{\ensuremath{\Ps_{#1}[#2,#3]}\xspace}
\newcommand{\brataux}[2]{\ensuremath{\xi_{#1}[#2]}\xspace}
\newcommand{\bratauxd}{\brataux{\Ds}{\rho}}
\newcommand{\bratauxsfc}{\brataux{\sfp{\Cs}}{\rho}}
\newcommand{\bratauxsfg}{\brataux{\sfp{\Gs}}{\rho}}
\newcommand{\iopti}[2]{\ensuremath{\fri_{#1}[#2]}\xspace}
\newcommand{\ioptig}[1]{\iopti{\Gs}{#1}}
\newcommand{\Cs}{\ensuremath{\mathcal{C}}\xspace}
\newcommand{\Ds}{\ensuremath{\mathcal{D}}\xspace}
\newcommand{\Gs}{\ensuremath{\mathcal{G}}\xspace}
\newcommand{\Hs}{\ensuremath{\mathcal{H}}\xspace}
\newcommand{\Is}{\ensuremath{\mathcal{I}}\xspace}
\newcommand{\Ps}{\ensuremath{\mathcal{P}}\xspace}
\newcommand{\Xs}{\ensuremath{\mathcal{X}}\xspace}
\newcommand{\Ys}{\ensuremath{\mathcal{Y}}\xspace}
\newcommand{\Fb}{\ensuremath{\mathbf{F}}\xspace}
\newcommand{\Gb}{\ensuremath{\mathbf{G}}\xspace}
\newcommand{\Hb}{\ensuremath{\mathbf{H}}\xspace}
\newcommand{\Kb}{\ensuremath{\mathbf{K}}\xspace}
\newcommand{\Lb}{\ensuremath{\mathbf{L}}\xspace}
\newcommand{\Ub}{\ensuremath{\mathbf{U}}\xspace}
\newcommand{\Vb}{\ensuremath{\mathbf{V}}\xspace}
\newcommand{\Wb}{\ensuremath{\mathbf{W}}\xspace}
\newcommand{\frB}{\ensuremath{\mathbbm{B}}\xspace}
\newcommand{\frI}{\ensuremath{\mathbbm{I}}\xspace}
\newcommand{\frP}{\ensuremath{\mathbbm{P}}\xspace}
\newcommand{\frS}{\ensuremath{\mathbbm{S}}\xspace}
\newcommand{\frb}{\ensuremath{\mathbbm{b}}\xspace}
\newcommand{\fri}{\ensuremath{\mathbbm{i}}\xspace}
\newcommand{\vari}{prevariety\xspace}
\newcommand{\varis}{prevarieties\xspace}
\def\inv{^{-1}}
\newcommand{\veps}{\ensuremath{\varepsilon}\xspace}
\newcommand{\nat}{\ensuremath{\mathbb{N}}\xspace}
\newcommand{\until}[2]{\ensuremath{#1~\textup{U}~#2}\xspace}
\newcommand{\finally}[1]{\ensuremath{\textup{F}~#1}\xspace}
\newcommand{\nex}[1]{\ensuremath{\textup{X}~#1}\xspace}
\newcommand{\untilp}[3]{\ensuremath{#2~\textup{U}_{#1}~#3}\xspace}
\newcommand{\sincep}[3]{\ensuremath{#2~\textup{S}_{#1}~#3}\xspace}
\newcommand{\finallyp}[2]{\ensuremath{\textup{F}_{#1}~#2}\xspace}
\newcommand{\finallyl}[1]{\finallyp{L}{#1}}
\newcommand{\prefsig}[1]{\ensuremath{\frP_{#1}}\xspace}
\newcommand{\prefsigc}{\prefsig{\Cs}}
\newcommand{\prefsigg}{\prefsig{\Gs}}
\newcommand{\infsig}[1]{\ensuremath{\frI_{#1}}\xspace}
\newcommand{\infsigc}{\infsig{\Cs}}
\newcommand{\infsigg}{\infsig{\Gs}}
\newcommand{\infix}[3]{\ensuremath{#1(#2,#3)}\xspace}
\newcommand{\suffix}[2]{\infix{#1}{#2}{|#1|+1}}
\newcommand{\prefix}[2]{\infix{#1}{0}{#2}}
\newcommand{\poschar}{\textup{\sffamily{Pos}}}
\newcommand{\pos}[1]{\ensuremath{\poschar(#1)\xspace}}
\newcommand{\posc}[1]{\ensuremath{\poschar_{c}(#1)\xspace}}
\title{Closing star-free closure}
\author{Thomas~Place}
\email{tplace@labri.fr}
\author{Marc~Zeitoun}
\email{mz@labri.fr}
\affiliation{%
  \institution{LaBRI, Universit\'e de Bordeaux}
  \streetaddress{351 cours de la Lib\'eration}
  \city{Talence}
  \postcode{F-33405}
  \country{France}
}
\keywords{Words, regular languages, star-free closure, first-order logic, linear temporal logic, aperiodicity, membership, separation, covering}
\begin{abstract}
  We introduce an operator on classes of regular languages, the star-free closure. Our motivation is to generalize standard results of automata theory within a unified framework. Given an arbitrary input class \Cs, the star-free closure operator outputs the least class closed under Boolean operations and language concatenation, and containing all languages of \Cs as well as all finite languages. We establish several equivalent characterizations of star-free closure: in terms of regular expressions, first-order logic, pure future and future-past temporal logic, and recognition by finite monoids. A key ingredient is that star-free closure coincides with another closure operator, defined in terms of regular operations where Kleene stars are allowed in restricted~contexts.

  A consequence of this first result is that we can decide membership of a regular language in the star-free closure of a class whose separation problem is decidable. Moreover, we prove that separation itself is decidable for the star-free closure of any finite class, and of any class of group languages having itself decidable separation (plus mild additional properties). We actually show decidability of a stronger property, called covering.
\end{abstract}
\begin{document}

\maketitle

\section{Introduction}
\label{sec:intro}
\textbf{\sffamily Context.} The starting point of this paper is a major result from the theory of regular languages. It~states that it is equivalent for a language of finite words to be defined by:
\begin{enumerate}[leftmargin=.9cm]
  \item\label{item:sf} a star-free regular expression, \emph{i.e.}, which forbids Kleene star but allows complement,
  \item\label{item:bsd} a regular expression restricting Kleene stars to prefix codes of bounded synchronization~delay,
  \item\label{item:fo} a first-order logic sentence using the linear order and the alphabetic predicates,
  \item\label{item:ltl} a pure future temporal logic formula,
  \item\label{item:ltlp} a future-past temporal logic formula,
  \item\label{item:ap} a morphism into a finite aperiodic monoid.
\end{enumerate}
This statement compiles a series of theorems by Schützenberger~\cite{schutzsf,schutzbd} for \mbox{$\eqref{item:sf}\Leftrightarrow\eqref{item:bsd}\Leftrightarrow\eqref{item:ap}$}, McNaughton and Papert~\cite{mnpfosf} for $\eqref{item:fo}\Leftrightarrow\eqref{item:sf}$ and Kamp~\cite{kltl} for $\eqref{item:fo}\Leftrightarrow\eqref{item:ltl}\Leftrightarrow\eqref{item:ltlp}$. It attests to the robustness of a class of languages defined by seemingly unrelated formalisms: various types of regular expressions, of logics and of machine-based devices. Moreover, Property~\eqref{item:ap} can be decided on a specific  canonical morphism, which can be computed from the language. This yields an algorithm for checking whether a given regular language has any of the aforementioned properties. In other words, the \emph{membership problem} of a regular language to this class of languages is decidable.

\smallskip
This result had a profound influence on automata theory: its impact went far beyond the class of star-free languages. By highlighting the correspondence between specific regular expressions, fragments of second-order monadic logic, variants of temporal logic and classes of finite monoids, it initiated a line of research whose aim is to capture the expressive power of natural classes of regular languages---see~\cite[Part~B]{pingoodref} or \cite{sw-handbook21} for overviews. As in the theorem above, these classes are defined by restricting the syntax of the aforementioned formalisms. Historically, the way to study such a class was inspired by Schützenberger's contribution to the above result: the aim was to design \emph{membership algorithms}. There is an abundant literature on the subject, due to the number of interesting classes of regular languages. See for example~\cite{simonthm,knast83,ChaubardPS06,jeppgbg} (for variations on the quantifier alternation free fragment of first-order logic),~\cite{pwdelta2,glasserdd,sig2mod} (for variations on a more expressive fragment) or~\cite{twfo2,fo2mod,between} (for variations on two-variable first-order logic).

\medskip\noindent\textbf{\sffamily Operators.}
However, the number of publications in the field can also be explained by the fact that the classes that were investigated do not have a unique flavor. Indeed, logic and regular expressions come in a multitude of variants. For example, other versions of first-order logic can be envisaged by extending its signature, \emph{i.e.}, by allowing more predicates, thus increasing the expressive power. This leads to two classic variants: we can add predicates that test the value of a position modulo a certain integer, and more generally predicates that count the number of occurrences of a particular letter modulo an integer. Similarly, star-free expressions can be extended in a natural way: instead of starting with singleton languages, we can start with languages of a certain fixed class. Finally, there are several  extensions of temporal logic, usually obtained by adding more expressive temporal modalities (see for example~\cite{msoltl}).

\smallskip
Naturally, the historical approach has been to treat each of these variations individually. This means that the proofs have to be recast for each variation, which is often technical and sometimes nontrivial. To avoid such adaptations, it is desirable to develop a generic approach, which would encompass several variations of a given class at once. This is where the notion of operator comes in. An \emph{operator} $\mathit{Op}$ associates with any class \Cs of regular languages a larger class $\mathit{Op}(\Cs)$. For example, the star-free closure operator \sfpo takes as input a class \Cs and outputs \sfp{\Cs}, which is the least class containing \Cs and all finite languages, and which is closed under union, complement and concatenation. Notice that we recover the class of star-free languages as the star-free closure of the class consisting of two languages: the empty and the full languages.

\smallskip
Focusing on operators rather than on individual classes meets our main objective (understanding classes of regular languages). Indeed, most interesting classes are obtained from simpler ones by applying operators from a small set. The main operators are Boolean and polynomial closure~\cite{pwdelta2} (they appear in concatenation hierarchies, see for example~\cite{jep-dd45}), unambiguous polynomial closure~\cite{pzupol2} and star-free closure, which is the subject of this paper. Actually, it is more rewarding to concentrate on operators, as this allows multiple variants of the same class to be handled at once, leading to \emph{generic} results. Not only this avoids reproducing proofs for classes that are variations of each other, but also and more importantly, this simplifies the proofs and emphasizes the characteristics of the operator $\mathit{Op}$ and the assumptions needed on the class \Cs to decide $\mathit{Op}(\Cs)$-membership.

\smallskip
Ideally, for an operator $\mathit{Op}$, we would like to reduce $\mathit{Op}(\Cs)$-membership to \Cs-membership, \emph{i.e.}, to obtain a statement like: ``If \Cs has decidable membership, then so does $\mathit{Op}(\Cs)$''. Unfortunately, although this situation may occur~\cite{pzupol2}, it is uncommon: decidability of membership is rarely preserved by operators (see~\cite{ABR:identity92}, which provides negative examples in the context of classes of monoids). This observation leads to the following question:
\begin{quote}
  ``What properties should \Cs satisfy for the $\mathit{Op}(\Cs)$-membership problem to be~\hbox{decidable}?''
\end{quote}

This question, in turn, motivates us to consider a new problem: \emph{\Cs-separation}. It asks whether two regular input languages can be separated by a language from the class~\Cs, \emph{i.e.}, whether there exists a language from~\Cs containing the first input language while being disjoint from the second. There is an easy algorithmic reduction from \Cs-membership to \Cs-separation: a language belongs to~\Cs if and only if it can be \Cs-separated from its complement. Note that separation is more demanding than membership: it requires to exhibit a separating language, if possible, even when none of the input languages belong to the class under study. In contrast, solving membership only requires to prove that the input language does or does not belong to the class. For this reason separation is also more rewarding than membership: although more difficult, it brings more information, which can later be exploited to tackle classes of languages built on top of the one being investigated.

\smallskip
In particular, looking at separation provides a partial answer to the above question (``what properties should \Cs satisfy for the $\mathit{Op}(\Cs)$-membership problem to be decidable?''). Indeed, for some operators $\mathit{Op}$, being able to decide \Cs-separation is sufficient to decide $\mathit{Op}(\Cs)$-membership. This is the case when \textit{Op} is the polynomial closure operator~\cite{PZ:generic19} (assuming mild properties on~\Cs). If in addition, the class~\Cs consists of group languages (see Section~\ref{lem:gmorph}), this is also the case for the Boolean closure of the polynomial closure~\cite{pzconcagroup} (which, in fact, has then decidable \emph{separation}). For this reason, separation has replaced membership as the standard problem to understand a class of regular languages. It turns out that in order to tackle \Cs-separation, it is convenient to study an even more general problem called \emph{\Cs-covering}. Intuitively, it generalizes separation to an arbitrary number of input languages. The state of the art regarding the class of star-free languages is that it has decidable covering, hence also decidable separation (this follows from~\cite{pzfoj} and indirectly from~\mbox{\cite{Henckell88,MR1709911}}).

\medskip\noindent\textbf{\sffamily Contributions}. We investigate the star-free closure \emph{operator}. With any class of languages \Cs, it associates  the least class \sfp\Cs  containing \Cs, all finite languages, and which is closed under Boolean operations and language concatenation. Note that these operations preserve regularity and that Kleene star is explicitly forbidden. We generalize the known results in two orthogonal~directions:

\begin{itemize}[label=\small$\bullet$] 
  \item First, we generalize the Kamp-McNaughton-Papert-Schützenberger theorem. This means finding appropriate generalizations for each of the properties appearing in this theorem, and showing that they all characterize star-free closure. In other words, we need to find suitable operators generalizing the definition of the classes that appear in this result: languages of bounded synchronization delay, first-order definable languages, languages definable in pure future and future-past temporal logic, and languages recognized by finite aperiodic monoids. An important consequence of the algebraic characterization, is that \sfp\Cs-membership reduces to \Cs-separation.

  \item Secondly, we prove that under certain (strong) assumptions on the input class~\Cs, which we detail below, the covering problem for the star-free closure \sfp\Cs is decidable.
\end{itemize}

Let us comment on these two contributions. Concerning the first, one of the operators we have to define already exists: with each class \Cs, one can associate a variant \foc of first-order logic whose predicates depend on~\Cs~\cite{PZ:generic19}. It defines exactly the languages in the star-free closure of~\Cs. Its definition is simple: each language $L$ in \Cs yields a binary predicate that selects pairs of positions such that the infix between them belongs to~$L$.
On the other hand, all other operators are~new.

\smallskip
The main one is the $\bsdpo$ operator. It generalizes a class defined by Schützenberger~\cite{schutzbd} (see also~\cite{DiekertW16a,DiekertW17}). Roughly speaking, \bsdp\Cs is the least class containing all finite languages which is closed under intersection with languages of \Cs, disjoint union, unambiguous concatenation, and Kleene star applied to prefix codes of bounded synchronization delay. Unlike \sfpo, the \bsdpo operator prohibits complement. In fact, the definitions of these operators are of a different nature: the restrictions for \sfpo are syntactic (they constrain legal regular expressions), whereas being a disjoint union, an unambiguous concatenation or a prefix code with bounded synchronization delay are semantic notions: they depend on the languages themselves, not just on expressions used to describe them.

The \bsdpo operator is a key ingredient in the generalization of the Kamp-McNaughton-Papert-Schützenberger theorem: the first step, establishing the inclusion  $\bsdp\Cs\subseteq\sfp\Cs$, is particularly helpful. Indeed, proving inclusion in \sfp\Cs  is generally difficult, since this requires the construction of expressions that involve alternating complement and concatenation operations, which are hard to understand. On the other hand,
proving inclusion in \bsdp\Cs is easier, as we may use Kleene stars. In fact, several of the article's proofs are based on this capability.

\smallskip
The proof of the converse inclusion $\sfp\Cs\subseteq\bsdp\Cs$ is intertwined with the algebraic characterization. Here, we have to generalize Property~\eqref{item:ap}, which involves finite aperiodic monoids (\emph{i.e.}, which are such that the sequence of powers of any element eventually stabilizes). Given a monoid morphism into a finite monoid $M$, we define monoids in $M$ called \emph{\Cs-orbits} for this morphism. They are computable as soon as \Cs-separation is decidable. The generalized algebraic characterization states that a language is in $\bsdp\Cs=\sfp\Cs$ if and only if all the \Cs-orbits of its syntactic morphism are aperiodic. In particular, if \Cs has decidable separation, membership in \sfp\Cs is decidable. This is the way we generalize Schützenberger's membership~theorem.

\smallskip
At last, we generalize the correspondences with temporal logic. We first define an operator that associates a variant of pure future temporal logic with each class. This simply amounts to generalizing the ``Until'' temporal modality to take into account the input class~\Cs. More precisely, each language $L$ of \Cs produces a new ``Until'' modality $\untilp{L}{\!}{\!}$. Intuitively, this modality adds a constraint to the semantic of the standard Until: a formula $\untilp{L}\phi\psi$ holds at position $i$ in a word when there exists a position $j>i$ where $\psi$ holds, such that $\phi$ holds on all the intermediate positions, and such that the infix between $i$ and $j$ belongs to $L$. Adapting this construction to future-past temporal logic is straightforward. Again, we show that both temporal logic operators obtained in this way correspond to the star-free closure operator, thus generalizing Properties~\ref{item:ltl} and \ref{item:ltlp} of the Kamp-McNaughton-Papert-Schützenberger theorem.

\medskip
We now turn to the second contribution: covering algorithms for specific input classes. First, we show that the star-free closure of a \emph{finite class} has decidable covering (and therefore, decidable separation). We then use this result to establish our main theorem: the star-free closure of a class of \emph{group languages} with decidable \emph{separation} has decidable \emph{covering} (and therefore again, decidable separation). Let us mention some important features of this work.

\smallskip
A first point is that the case of a finite class is important by itself. Foremost, it is a crucial step for the main result on the star-free closure of classes of group languages (this is due to the fact that a language in the star-free closure of a class is built using a finite number of languages of the class). Second, it provides a new proof that covering is decidable for the original class of star-free languages (this is shown in~\cite{pzfo} or can be derived from~\cite{Henckell88,MR1709911}). This new proof is simpler and generic. While the original underlying technique goes back to Wilke~\cite{wilkeltl}, the proof has been simplified at several levels. The main simplification is obtained thanks to an abstract framework, introduced in~\cite{pzcovering2}. It is based on the central notion of \ratm, which is meant to measure the quality of a separator. For the framework to be relevant, we actually need to generalize separation to multiple input languages, which leads to the covering problem.
Another key difference is that existing proofs (specific to star-free languages) involve abstracting words by new letters at some point, which requires the working alphabet to be a parameter of the induction. Here, we cannot use this approach as the classes we build with star-free closure are less robust in general. We work with a fixed alphabet, which also makes the proof simpler. In fact, several proofs should look similar to the reader. This is not surprising, since in order to establish membership or covering, we have to build languages from the classes we are interested in.

\smallskip\noindent\textbf{\sffamily Applications.} Finally, let us present important applications of the result about covering for classes made of group languages.  First, one may look at the input class containing all group languages. Straubing~\cite{STRAUBING1979319} described an algebraic counterpart of the star-free closure of this class, whose membership was then shown to be decidable by Rhodes and Karnofsky~\cite{Karnofsky1982}. Altogether, this implies that membership is decidable for the star-free closure of group languages, as noted by Margolis and Pin~\cite{margpin85}. Here, we are able to generalize this result to separation and covering, as separation is known to be decidable for the class of all group languages~\cite{Ash91,pzgroups}.

Another important application is the class of languages definable by first-order logic with modular predicates \fowm. This class is known to have decidable membership~\cite{MIXBARRINGTON1992478}. Moreover, it is the star-free closure of the class consisting of the languages counting the length of words modulo some number. Since this input class is easily shown to have decidable separation (see~\cite{pzconcagroup} for example), our main theorem applies.

The third application concerns first-order logic endowed with predicates counting the number of occurrences of a letter before a position, modulo some integer. Indeed, the class of languages definable in this logic is exactly the star-free closure of the class of languages recognized by Abelian groups (this follows from a generic correspondence theorem between star-free closure of a class and variants of first-order logic~\cite{pzconcagroup,pinbridges}, as well as from the description of languages recognized by Abelian groups~\cite{EilenbergB}). Again, our main theorem applies, since the class of Abelian groups is known to have decidable separation: this follows from~\cite{abelianp} and~\cite{MR1709911} (see also~\cite{pzgroups}).

\smallskip\noindent\textbf{\sffamily Organization.} The paper is structured as follows. We set up the notation and recall the background in Section~\ref{sec:prelims}. We introduce the star-free closure operator and present some of its basic properties in Section~\ref{sec:sfdef}. In the same section, we introduce classes of group languages, for which this operator produces relevant classes. We define prefix codes of bounded synchronization delay in Section~\ref{sec:bsd} and the associated operator \bsdpo, which allows   Kleene star to be applied only to these languages. We also show that this new operator can be simulated by the star-free closure. We then develop in Section~\ref{sec:carac} the material needed to establish, for a class \Cs with mild properties, a common algebraic characterization of \sfp\Cs and \bsdp\Cs (thus proving the missing inclusion $\sfp\Cs\subseteq\bsdp\Cs$). As explained above, this characterization is decidable as soon as separation is decidable for the underlying class~\Cs. We establish the correspondences of star-free closure with first-order logic in Section~\ref{sec:folog} and with temporal logic in Section~\ref{sec:ltl}. Finally, we consider the covering problem. In Section~\ref{sec:ratms}, we recall the framework of \ratms, which is convenient for handling covering. We then prove that the star-free closure operator outputs a class whose covering is decidable in two cases: in Section~\ref{sec:finite}, when the input class is finite and in Section~\ref{sec:group}, when it is composed of group languages (plus lightweight additional properties).

\smallskip\noindent\textbf{\sffamily Related paper.} This paper completes results from~\cite{pzsfclos} and extend them.


\section{Preliminaries}
\label{sec:prelims}
In this section, we introduce the terminology used in the paper. We also present the membership, separation and covering problems, as well as key mathematical tools designed to handle them.

\subsection{Classes of regular languages}

For the whole paper, we fix a finite alphabet $A$. We denote by $A^*$ the set of all \emph{finite} words over $A$, including the empty word \veps. We let $A^{+}=A^{*}\setminus\{\veps\}$. For $u,v \in A^*$, we write $uv$ the word obtained by concatenating $u$ and $v$. Moreover, for every $w\in A^*$, we write $|w| \in \nat$ for its length. We shall also consider \emph{positions}. A word $w =a_1 \cdots a_{|w|} \in A^*$ is viewed as an \emph{ordered set $\pos{w} = \{0,1,\dots,|w|,|w|+1\}$ of $|w|+2$ positions}. A position $i$ such that $1 \leq i \leq |w|$ carries label $a_i \in A$. We write $\posc{w} = \{1,\dots,|w|\}$ for this set of labeled positions.  On the other hand, positions $0$ and $|w|+1$ are \emph{artificial} leftmost and rightmost positions, which carry \emph{no label}. Finally, given a word $w= a_1\cdots a_{|w|} \in A^*$ and $i,j \in \pos{w}$ such that $i < j$, we write $\infix{w}{i}{j} = a_{i+1} \cdots a_{j-1} \in A^*$ (\emph{i.e.}, the infix obtained by keeping the letters carried by the positions that are \emph{strictly} between $i$ and~$j$). Note that $\infix{w}{0}{|w|+1} = w$.

A \emph{language} is a subset of $A^*$. It is standard to extend concatenation to languages: given $K,L \subseteq A^*$, we write~$KL = \{uv \mid u \in K \text{ and } v \in L\}$. Finally, we use the Kleene star: if $K \subseteq A^*$, then $K^+$ denotes the union of all languages $K^n$ for $n \geq 1$ and $K^*$ denotes the language $K^+\cup \{\veps\}$.

\medskip
\noindent
{\bf Classes.} A class of languages \Cs is a set of languages. Such a class \Cs is a \emph{lattice} when $\emptyset\in\Cs$, $A^* \in \Cs$ and \Cs is closed under union and intersection: for every $K,L \in \Cs$, we have $K \cup L \in \Cs$ and $K \cap L \in \Cs$. A \emph{Boolean algebra} is a lattice which is closed under complement: if $K\in \Cs$, then $A^* \setminus K \in \Cs$. Finally, a class \Cs is \emph{quotient-closed} if for every $L \in \Cs$ and $u \in A^*$, the following properties~hold:
\[
  u^{-1}L \stackrel{\text{def}}{=}\{w\in A^*\mid uw\in L\} \text{\quad and\quad} Lu^{-1} \stackrel{\text{def}}{=}\{w\in A^*\mid wu\in L\}\text{\quad both belong to \Cs}.
\]
A \emph{\vari} is a quotient-closed Boolean algebra containing only \emph{regular languages}. The regular languages are those which can be equivalently defined by nondeterministic finite automata, finite monoids or monadic second-order logic. We work with the definition by monoids, which we recall~now.

\medskip
\noindent
{\bf Finite monoids and regular languages.} A \emph{semigroup} is a set $S$  endowed with an associative multiplication $(s,t)\mapsto s\cdot t$ (also denoted by~$st$). A \emph{monoid} is a semigroup $M$ whose multiplication has an identity element $1_M$, \emph{i.e.}, such that ${1_M}\cdot s=s\cdot {1_M}=s$ for every~$s \in M$. 

An \emph{idempotent} of a semigroup $S$ is an element $e \in S$ such that $ee = e$. We write $E(S) \subseteq S$ for the set of all idempotents in $S$. It is folklore that for every \emph{finite} semigroup~$S$, there exists a natural number $\omega(S)$ (denoted by $\omega$ when $S$ is understood) such that for every $s \in S$, the element $s^\omega$ is an idempotent.

Clearly, $A^*$ is a monoid whose multiplication is concatenation (the identity element is \veps). Thus, we may consider morphisms $\alpha: A^* \to M$ where $M$ is an arbitrary monoid. That is, $\alpha:A^*\to M$ is a map satisfying $\alpha(\veps)=1_M$ and $\alpha(uv)=\alpha(u)\alpha(v)$ for all $u,v\in A^*$. Given such a morphism and some language $L \subseteq A^*$, we say that $L$ is \emph{recognized} by $\alpha$ when there exists a subset $F$ of $M$ such that $L= \alpha\inv(F)$. It is standard and well known that the regular languages are those which can be recognized by a morphism into a \emph{finite} monoid.

\medskip
\noindent
{\bf Syntactic morphism.} Every language $L$ is recognized by a canonical morphism. Let us briefly recall its definition. One may associate to $L$ an equivalence relation $\equiv_L$ over $A^*$: the \emph{syntactic congruence of\/~$L$}. Given $u,v \in A^*$, we let,
\[
  \text{$u \equiv_L v$ if and only if $xuy \in L \Leftrightarrow xvy \in L$ for every $x,y \in A^*$}.
\]
As the name suggests, it is known and simple to verify that ``$\equiv_L$'' is a congruence on $A^*$: it is reflexive, symmetric and transitive, and for every $u,u',v,v' \in A^*$ such that $u \equiv_L v$ and $u' \equiv_L v'$, we have $uu' \equiv_L vv'$. Thus, the set of equivalence classes $M_L = {A^*}/{\equiv_L}$ is a monoid. It is called the \emph{syntactic monoid of\/ $L$}. Moreover, the map $\alpha_L: A^* \to M_L$ sending every word to its equivalence class is a morphism recognizing~$L$, called the \emph{syntactic morphism of\/ $L$}. Another characterization of regular languages is that $L$ is regular if and only if $M_L$ is finite (\emph{i.e.}, $\equiv_L$ has finite index): this is  Myhill-Nerode's theorem. In this case, one may compute the syntactic morphism $\alpha_L: A^* \to M_L$ from any representation of $L$ (such as a finite automaton or an arbitrary monoid morphism).

\subsection{Decision problems} We look at three decision problems. They all depend on an arbitrary class of languages \Cs. We use them as mathematical tools for analyzing \Cs. Indeed, obtaining an algorithm for one of these three problems requires a solid understanding of~\Cs.

The \emph{\Cs-membership problem} is the simplest one. It takes as input a single regular language~$L$ and asks whether $L\in \Cs$. The second problem, \emph{\Cs-separation}, is more general. Given three languages $K,L_1,L_2$, we say that $K$ \emph{separates} $L_1$ from $L_2$ if we have $L_1 \subseteq K$ and $L_2 \cap K = \emptyset$. Given a class of languages \Cs, we say that $L_1$ is \emph{\Cs-separable} from $L_2$ if some language in \Cs separates $L_1$ from~$L_2$. Observe that when \Cs is not closed under complement, the definition is not symmetrical: it is possible for $L_1$ to be \Cs-separable from $L_2$ while $L_2$ is not \Cs-separable from $L_1$. The separation problem associated to a given class \Cs, also called \emph{\Cs-separation problem}, takes two regular languages $L_1$ and $L_2$ as input and asks whether $L_1$ is \Cs-separable from $L_2$.

\begin{rem}
  The \Cs-separation problem generalizes the \Cs-membership problem. Indeed, a regular language belongs to $\Cs$ if and only if it is \Cs-separable from its complement, which is also regular.
\end{rem}

In the paper, we do not consider separation directly. Instead, we work with a third, even more general problem: \Cs-covering. It was introduced in~\cite{pzcovering2} and takes as input a single regular language $L_1$ and a \emph{finite set of regular languages} $\Lb_2$. It asks whether there exists a ``\Cs-cover of $L_1$ which is separating for $\Lb_2$''.

Given a language $L$, a \emph{cover of $L$} is a \emph{finite} set of languages \Kb such that $L \subseteq \bigcup_{K \in \Kb} K$. A cover \Kb is a \emph{\Cs-cover} if all languages $K \in \Kb$ belong to \Cs. Moreover, given two finite sets of languages \Kb and \Lb, we say that \Kb is \emph{separating} for \Lb if for every $K\in\Kb$, there exists $L\in\Lb$ such that $K \cap L = \emptyset$. Finally, given a language $L_1$ and a finite set of languages~$\Lb_2$, we say that the pair $(L_1,\Lb_2)$ is \emph{\Cs-coverable} if there exists a \Cs-cover of $L_1$ which is separating~for~$\Lb_2$.

The \emph{\Cs-covering problem} is now defined as follows. Given as input a regular language $L_1$ and a finite set of regular languages $\Lb_2$, it asks whether the pair $(L_1,\Lb_2)$ \Cs-coverable. It is straightforward to prove that covering generalizes separation if the class~\Cs is a lattice, as stated in the following lemma (see~\cite[Theorem~3.5]{pzcovering2} for the proof).

\begin{lem} \label{lem:covsep}
  Let \Cs be a lattice and $L_1,L_2$ be two languages. Then $L_1$ is \Cs-separable from~$L_2$ if and only if $(L_1,\{L_2\})$ is \Cs-coverable.
\end{lem}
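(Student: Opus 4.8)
The plan is to prove both implications directly from the definitions, using that \Cs is closed under finite unions and intersections. For the forward direction, suppose $K\in\Cs$ separates $L_1$ from $L_2$, so $L_1\subseteq K$ and $K\cap L_2=\emptyset$. Then the singleton set $\Kb=\{K\}$ is immediately a \Cs-cover of $L_1$, since $L_1\subseteq K=\bigcup_{K'\in\Kb}K'$, and it is separating for $\{L_2\}$, since the only member $K$ of \Kb satisfies $K\cap L_2=\emptyset$. Hence $(L_1,\{L_2\})$ is \Cs-coverable. This direction uses nothing about \Cs beyond $K\in\Cs$.

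For the converse, suppose $\Kb=\{K_1,\dots,K_n\}$ is a \Cs-cover of $L_1$ that is separating for $\{L_2\}$. Being separating for a one-element set means that \emph{every} $K_i$ satisfies $K_i\cap L_2=\emptyset$. Set $K=\bigcup_{i=1}^n K_i$. Since \Cs is a lattice it is closed under finite union (and contains $\emptyset$, covering the degenerate case $n=0$, where $L_1\subseteq\emptyset$ forces $L_1=\emptyset$ and $K=\emptyset\in\Cs$ separates $L_1$ from $L_2$), so $K\in\Cs$. Because \Kb covers $L_1$ we have $L_1\subseteq\bigcup_i K_i=K$, and because each $K_i$ is disjoint from $L_2$ we get $K\cap L_2=\bigl(\bigcup_i K_i\bigr)\cap L_2=\bigcup_i(K_i\cap L_2)=\emptyset$. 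Thus $K$ separates $L_1$ from $L_2$, so $L_1$ is \Cs-separable from $L_2$.

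There is no real obstacle here: the statement is a routine unpacking of definitions, the only mildly delicate point being that ``separating for a singleton'' forces every block of the cover — not just one — to avoid $L_2$, which is exactly what lets the union stay disjoint from $L_2$. The lattice hypothesis is used only in the converse, and only to keep the union inside \Cs. (The same argument shows why covering genuinely generalizes separation only when \Cs is a lattice: without closure under union one cannot collapse a multi-block cover into a single separator.)
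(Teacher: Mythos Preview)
Your proof is correct and is the standard argument. The paper does not actually include its own proof of this lemma; it only cites \cite[Theorem~3.5]{pzcovering2}, so there is nothing substantive to compare against, but your unpacking of the definitions (singleton cover for the forward direction, union of the cover for the converse using closure under finite union) is exactly the expected proof.
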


\subsection{\Cs-morphisms}

We now present a central mathematical tool. Consider an arbitrary \vari \Cs. A \emph{\Cs-morphism} is a \emph{surjective} morphism $\eta: A^*\to N$ into a finite monoid~$N$ such that every language recognized by $\eta$ belongs to \Cs. While basic, the notion of \Cs-morphism is a central tool in the paper. First, it is connected to the membership problem via the following simple, yet crucial proposition.

\begin{prop} \label{prop:synmemb}
  Let \Cs be a \vari. A regular language~$L$ belongs to~\Cs if and only if its syntactic morphism $\alpha_L: A^* \to M_L$ is a \Cs-morphism.
\end{prop}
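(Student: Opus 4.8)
The plan is to prove the two implications separately; only the direction ``$L\in\Cs\ \Rightarrow\ \alpha_L$ is a \Cs-morphism'' carries content. The converse is immediate: by construction of the syntactic morphism, $L$ is saturated by $\equiv_L$, hence $L=\alpha_L^{-1}(\alpha_L(L))$ is recognized by $\alpha_L$; so if $\alpha_L$ is a \Cs-morphism, the defining property of \Cs-morphisms gives $L\in\Cs$ at once.

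For the forward implication, assume $L\in\Cs$. Since $L$ is regular, Myhill--Nerode's theorem ensures $M_L$ is finite, and $\alpha_L$ is surjective by construction, so it remains only to show that every language recognized by $\alpha_L$ belongs to \Cs. Such a language has the form $\alpha_L^{-1}(F)=\bigcup_{m\in F}\alpha_L^{-1}(m)$, a finite union of $\equiv_L$-classes; as \Cs is closed under union it suffices to treat a single class. Fix $m\in M_L$ and a word $w$ with $\alpha_L(w)=m$. Reading off the definition of $\equiv_L$, a word $v$ lies in the class of $w$ exactly when $xvy\in L\Leftrightarrow xwy\in L$ for all $x,y\in A^*$, which rewrites as
\[
  \alpha_L^{-1}(m)=\bigcap_{\substack{x,y\in A^*\\ xwy\in L}} x^{-1}Ly^{-1}\ \cap\ \bigcap_{\substack{x,y\in A^*\\ xwy\notin L}}\bigl(A^*\setminus x^{-1}Ly^{-1}\bigr),
\]
with the usual convention that an empty intersection equals $A^*$. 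The crucial point is that $L$, being regular, has only finitely many distinct two-sided quotients $x^{-1}Ly^{-1}$ (they are in bijection with a subset of $M_L$), so both intersections above are in fact \emph{finite} once one keeps a single representative per distinct quotient language.

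From here everything follows from the prevariety axioms. Quotient-closure, applied twice via $x^{-1}Ly^{-1}=x^{-1}(Ly^{-1})$, shows each two-sided quotient of $L$ is again in \Cs; closure under complement handles the sets $A^*\setminus x^{-1}Ly^{-1}$; and closure under finite intersection, a lattice property, then gives $\alpha_L^{-1}(m)\in\Cs$. Unioning over $F$ shows that every language recognized by $\alpha_L$ lies in \Cs, i.e.\ $\alpha_L$ is a \Cs-morphism. The only genuinely load-bearing step is the finiteness of the displayed intersections, which is precisely where regularity of $L$ (equivalently, finiteness of $M_L$) is used; the remaining manipulations are routine bookkeeping with the closure properties packaged into the notion of prevariety.
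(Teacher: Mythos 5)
Your proof is correct and reaches the same conclusion, but it organizes the key finiteness argument differently from the paper. The paper fixes a representative word $x_s\in\alpha_L^{-1}(s)$ for each $s\in M_L$, defines an auxiliary equivalence $\sim$ by quantifying only over these finitely many representatives, proves that $\sim$ coincides with $\equiv_L$, and then writes each $\sim$-class as a finite Boolean combination of the quotients $x_s^{-1}Lx_t^{-1}$. You instead write the $\equiv_L$-class as an \emph{a priori} infinite intersection over all pairs $(x,y)\in A^*\times A^*$, then observe that it collapses to a finite one because $L$ has only finitely many distinct two-sided quotients $x^{-1}Ly^{-1}$. Both arguments hinge on the same underlying fact — that $x^{-1}Ly^{-1}$ depends only on $(\alpha_L(x),\alpha_L(y))\in M_L^2$ — but your version avoids introducing the auxiliary relation $\sim$ and the lemma that it equals $\equiv_L$, which is a modest streamlining. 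One implicit point worth making explicit: when you keep ``a single representative per distinct quotient,'' you should note that this collapse respects the dichotomy $xwy\in L$ versus $xwy\notin L$, which holds because that condition is equivalent to $w\in x^{-1}Ly^{-1}$ and so depends only on the quotient itself. Also, the distinct two-sided quotients are indexed by pairs in $M_L\times M_L$, not by ``a subset of $M_L$'' as you wrote — a harmless slip, but worth correcting.
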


\begin{proof}
  The ``if'' implication is immediate since $L$ is recognized by its syntactic morphism. We prove the converse one: assuming that $L \in \Cs$, we prove that every language recognized by $\alpha_L: A^* \to M_L$ belongs to \Cs (recall that syntactic morphisms are surjective by definition). Clearly, every such language is a union of $\equiv_L$-classes. Hence, as \Cs is a \vari, it suffices to prove that all $\equiv_L$-classes belongs to \Cs. For every $s \in M_L$, we fix a word $x_s \in A^*$ such that $\alpha_L(x_s) = s$. Consider the following equivalence $\sim$ on $A^*$:
  \[
    \text{$u \sim v$ if and only if $x_sux_t \in L \Leftrightarrow x_svx_t \in L$ for every $s,t \in M_L$}.
  \]
  We first show that ${\sim}$ and $\equiv_L$ are the same relation. It is immediate by definition that ${\equiv_L} \subseteq {\sim}$. For the converse inclusion, let $u,v \in A^*$ be such that $u \sim v$. We prove that $u \equiv_L v$. Given $x,y \in A^*$, we need to prove that $xuy \in L \Leftrightarrow xvy\in L$. Let $s = \alpha_L(x)$ and $t = \alpha_L(y)$. By definition, we have $\alpha_L(xuy) = \alpha_L(x_sux_t)$ and $\alpha_L(xvy) = \alpha_L(x_svx_t)$. Consequently, $xuy \in L \Leftrightarrow x_sux_t \in L$ and $xvy \in L \Leftrightarrow x_svx_t \in L$. Finally, since $u \sim v$, we know that $x_sux_t \in L \Leftrightarrow x_svx_t \in L$. Altogether, this yields $xuy \in L \Leftrightarrow xvy\in L$, as desired.

  It remains to prove that every $\sim$-class belongs to \Cs. Let $w \in A^*$. We define the following subset $P_w$ and $N_w$ of $M_L^2$:
  \[
    P_w = \big\{(s,t) \in M_L^2 \mid x_swx_t \in L\big\} \quad \text{and} \quad N_w = \big\{(s,t) \in M_L^2 \mid x_swx_t \not\in L\big\}.
  \]
  One may now verify from the definition of $\sim$ that the $\sim$-class of $w$ is the following language:
  \[
    \biggl(\bigcap_{(s,t) \in P_w}\left(x_s\inv L x_t\inv\right)\biggr) \setminus \biggl(\bigcup_{(s,t) \in N_w}\left(x_s\inv L x_t\inv\right)\biggr).
  \]
  Since $L \in \Cs$ and \Cs is a \vari, it follows that the $\sim$-class of $w$ belongs to \Cs, which completes the~proof.
\end{proof}

In view of Proposition~\ref{prop:synmemb}, getting an algorithm for \Cs-membership boils down to finding a procedure to decide whether an input morphism $\alpha: A^* \to M$ is a \Cs-morphism. This is how we approach the question in the paper. 

\smallskip

Additionally, we shall use \Cs-morphisms as mathematical tools in proof arguments. They are convenient when manipulating arbitrary classes. We present a few properties that we shall need in this context. First, we have the following simple corollary of Proposition~\ref{prop:synmemb}.

\begin{prop}\label{prop:genocm}
  Let \Cs be a \vari and consider finitely many languages $L_1,\dots,L_k$ of $\Cs$. There exists a \Cs-morphism $\eta: A^* \to N$ such that $L_1,\dots,L_k$ are all recognized by $\eta$.
\end{prop}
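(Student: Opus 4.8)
The plan is to construct $\eta$ from the syntactic morphisms of $L_1,\dots,L_k$ by forming their product and then corestricting to the image. First, for each $i \in \{1,\dots,k\}$, since $L_i \in \Cs$ and \Cs is a \vari, Proposition~\ref{prop:synmemb} guarantees that the syntactic morphism $\alpha_{L_i}: A^* \to M_{L_i}$ is a \Cs-morphism; in particular, each $M_{L_i}$ is finite. Consider the morphism $\alpha: A^* \to M_{L_1} \times \cdots \times M_{L_k}$ into the (finite) direct product monoid, defined componentwise by $\alpha(w) = (\alpha_{L_1}(w), \dots, \alpha_{L_k}(w))$. Let $N = \alpha(A^*)$ be its image: this is a submonoid of the product, hence finite, and corestricting $\alpha$ to $N$ yields a \emph{surjective} morphism $\eta: A^* \to N$.

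Next, I would check that $\eta$ recognizes each $L_i$. Since $\alpha_{L_i}$ recognizes $L_i$, we have $L_i = \alpha_{L_i}^{-1}(F_i)$ for some $F_i \subseteq M_{L_i}$. Writing $\pi_i: N \to M_{L_i}$ for the (restriction to $N$ of the) $i$-th projection, one checks directly that $w \in L_i$ iff $\alpha_{L_i}(w) \in F_i$ iff $\pi_i(\eta(w)) \in F_i$ iff $\eta(w) \in \pi_i^{-1}(F_i)$, so $L_i = \eta^{-1}(\pi_i^{-1}(F_i))$ is recognized by $\eta$.

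It then remains to verify that $\eta$ is a \Cs-morphism, that is, that every language recognized by $\eta$ belongs to \Cs. Such a language has the form $\eta^{-1}(F) = \bigcup_{s \in F} \eta^{-1}(s)$ for some $F \subseteq N$; as \Cs is closed under finite union, it suffices to handle a single $s = (s_1,\dots,s_k) \in N$. Here $\eta^{-1}(s) = \bigcap_{i=1}^{k} \alpha_{L_i}^{-1}(s_i)$, and each set $\alpha_{L_i}^{-1}(s_i)$ is recognized by the \Cs-morphism $\alpha_{L_i}$, hence belongs to \Cs; closure of \Cs under finite intersection then yields $\eta^{-1}(s) \in \Cs$, which finishes the argument.

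The only point requiring a little care — and essentially the sole obstacle — is the passage to the image $N$: the raw product morphism $\alpha$ need not be surjective, whereas a \Cs-morphism is required to be surjective by definition, so corestricting to $\alpha(A^*)$ is genuinely needed. Everything else is a routine combination of Proposition~\ref{prop:synmemb}, the recognizability of the $L_i$ by their syntactic morphisms, and the lattice closure properties of a \vari.
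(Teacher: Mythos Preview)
Your proof is correct and follows essentially the same approach as the paper: form the product of the syntactic morphisms (which are \Cs-morphisms by Proposition~\ref{prop:synmemb}), corestrict to the image to obtain surjectivity, and use the lattice closure of \Cs to conclude that every language recognized by the resulting morphism lies in \Cs. Your write-up is in fact more explicit than the paper's about the union-of-intersections decomposition and the need for corestriction.
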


\begin{proof}
  For every $i \leq k$, let $\alpha_i: A^* \to M_i$ be the syntactic morphism of $L_i$. We know from Proposition~\ref{prop:synmemb} that $\alpha_i$ is a \Cs-morphism. Let $M = M_1 \times \cdots \times M_k$ be the monoid equipped with the componentwise multiplication. Moreover, let $\alpha: A^* \to M$ be the morphism defined by $\alpha(w) = (\alpha_1(w_1),\dots,\alpha_k(w))$ for every $w \in A^*$. One may verify from the definition of $\alpha$ that all languages recognized by $\alpha$ are finite intersections of languages recognized by $\alpha_1,\dots,\alpha_k$ (in particular, $\alpha$ recognizes each $L_i$). Hence, all languages recognized by $\alpha$ belong to \Cs. It now suffices to let $\eta: A^* \to N$ be the surjective restriction of $\alpha$ to complete the proof.
\end{proof}

Finally, we consider the special case when \Cs is a \emph{finite} \vari (\emph{i.e.}, \Cs contains finitely many languages). In this case, Proposition~\ref{prop:genocm} yields a \Cs-morphism recognizing \emph{all} languages in \Cs. The following lemma implies that it is unique (up to renaming).

\begin{lem}\label{lem:cmdiv}
  Let \Cs be a finite \vari and let $\alpha: A^* \to M$ and $\eta: A^* \to N$ be two \Cs-morphisms. If $\alpha$ recognizes all languages in \Cs, then there exists a morphism $\gamma: M \to N$ such that $\eta = \gamma \circ \alpha$.
\end{lem}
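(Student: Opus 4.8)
The plan is to build $\gamma$ by transporting $\eta$ along $\alpha$. Since $\alpha : A^* \to M$ is surjective, every $m \in M$ can be written as $m = \alpha(u)$ for some $u \in A^*$, and the natural guess is to set $\gamma(m) = \eta(u)$. For this to make sense we must check that the value $\eta(u)$ does not depend on the chosen preimage $u$, that is, that $\alpha(u) = \alpha(v)$ implies $\eta(u) = \eta(v)$. Everything else is then routine: once $\gamma$ is a well-defined function, the identity $\eta = \gamma \circ \alpha$ holds by construction, and the fact that $\gamma$ is a monoid morphism follows immediately from $\alpha$ and $\eta$ being morphisms, together with surjectivity of $\alpha$.

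The heart of the argument is thus the inclusion $\ker \alpha \subseteq \ker \eta$ (viewing each morphism's kernel as the equivalence it induces on $A^*$). I would argue as follows. Fix $n \in N$. The language $\eta\inv(n)$ is recognized by $\eta$, hence belongs to \Cs because $\eta$ is a \Cs-morphism. Since by hypothesis $\alpha$ recognizes every language in \Cs, there is a subset $F_n \subseteq M$ with $\eta\inv(n) = \alpha\inv(F_n)$. Now suppose $\alpha(u) = \alpha(v)$ and let $n = \eta(u)$. Then $u \in \eta\inv(n) = \alpha\inv(F_n)$, so $\alpha(u) \in F_n$; since $\alpha(v) = \alpha(u) \in F_n$, we get $v \in \alpha\inv(F_n) = \eta\inv(n)$, that is, $\eta(v) = n = \eta(u)$. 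This proves $\ker \alpha \subseteq \ker \eta$, which is exactly the well-definedness of $\gamma$.

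With this in hand I would finish by defining $\gamma : M \to N$ through $\gamma(\alpha(u)) = \eta(u)$ and checking the morphism axioms: $\gamma(1_M) = \gamma(\alpha(\veps)) = \eta(\veps) = 1_N$, and for $m_1 = \alpha(u_1)$, $m_2 = \alpha(u_2)$ one has $\gamma(m_1 m_2) = \gamma(\alpha(u_1 u_2)) = \eta(u_1 u_2) = \eta(u_1)\eta(u_2) = \gamma(m_1)\gamma(m_2)$; the equality $\eta = \gamma \circ \alpha$ is then immediate. I do not expect a genuine obstacle here: the only step requiring thought is the well-definedness argument, and it relies solely on the closure properties of a \vari, which ensure that each $\eta$-class $\eta\inv(n)$ is again a language of \Cs and hence recognized by $\alpha$. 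In particular, finiteness of \Cs is not actually used in this proof; it only serves, via Proposition~\ref{prop:genocm}, to guarantee that a morphism $\alpha$ recognizing all of \Cs exists in the first place.
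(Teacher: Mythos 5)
Your proof is correct and follows essentially the same route as the paper's: both reduce the statement to showing $\ker\alpha \subseteq \ker\eta$, using that each $\eta^{-1}(n)$ lies in $\Cs$ (because $\eta$ is a $\Cs$-morphism) and is therefore recognized by $\alpha$ by hypothesis. Your closing remark that finiteness of $\Cs$ is not used in the argument itself, only in guaranteeing via Proposition~\ref{prop:genocm} that a suitable $\alpha$ exists, is accurate.
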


\begin{proof}
  Assume that  $\alpha$ recognizes all languages in \Cs. We define $\gamma: M \to N$. For every $s \in M$, we fix a word $w_s \in \alpha\inv(s)$ (recall that \Cs-morphisms are surjective by definition) and define $\gamma(s) = \eta(w_s)$. It remains to prove that $\gamma$ is a morphism and that $\eta = \gamma \circ \alpha$. It suffices to prove the latter: since $\alpha$ is surjective, the former is an immediate consequence. Let $v \in A^*$. We show that $\eta(v) = \gamma(\alpha(v))$. Let $s = \alpha(v)$. By definition, $\gamma(s) = \eta(w_s)$. Hence, we need to prove that $\eta(v) = \eta(w_s)$. Since $\eta$ is a \Cs-morphism, we have $\eta\inv(\eta(w_{s})) \in \Cs$. Hence, our hypothesis implies that $\eta\inv(\eta(w_{s}))$ is recognized by $\alpha$. Since it is clear that $w_s \in \eta\inv(\eta(w_{s}))$ and $\alpha(v) = \alpha(w_s) = s$, it follows that $v \in \eta\inv(\eta(w_{s}))$ which exactly says that $\eta(v) = \eta(w_s)$, completing the proof.
\end{proof}

By Lemma~\ref{lem:cmdiv}, if \Cs is a finite \vari and $\alpha: A^* \to M$ and $\eta: A^* \to N$ are two \Cs-morphisms which both recognize \emph{all} languages in \Cs, there exist two morphisms $\gamma: M \to N$ and $\beta: N \to M$ such that $\eta=\gamma \circ \alpha$ and $\alpha = \beta \circ \eta$. This yiedls $\beta\circ\gamma\circ\alpha=\alpha$. Since $\alpha$ is surjective, it follows that $\beta \circ \gamma: M \to M$ is the identity morphism. Hence, both $\beta$ and $\gamma$ are isomorphisms, meaning that $\alpha$ and $\eta$ are the same object up to renaming. We call it the \emph{canonical \Cs-morphism} and denote it by $\etac: A^* \to \canc$. Let us emphasize that this object is only defined when \Cs is a \emph{finite \vari}.


\section{Star-free closure}
\label{sec:sfdef}
In this section, we introduce the classes investigated in the paper. Each of them is built from a simpler input class using a single operator: the \emph{star-free closure}, which we first define. Then, we present classes that we use as key inputs for this operator: those containing only \emph{group languages}.

\subsection{Definition} 

Consider a class \Cs. The \emph{star-free closure of\/ \Cs}, denoted by \sfp{\Cs}, is the least class of languages containing \Cs and all singletons $\{a\}$ for $a \in A$, and which is closed under union, complement and concatenation (that is, if $K,L \in \sfp{\Cs}$, then $K \cup L$, $A^* \setminus K$ and $KL$ belong to \sfp{\Cs} as well).

\begin{rem}
  Star-free closure is the generalization of a prominent single class: the class \sfr of \emph{star-free languages}. It contains exactly the languages that can be defined by a star-free expression, \emph{i.e.}, a~regular expression that cannot use the Kleene star but can use complement instead. One may verify that \sfr is exactly the star-free closure of the  class $\{\emptyset,A^*\}$, \emph{i.e.}, $\sfr = \sfp{\{\emptyset,A^*\}}$. Naturally, \sfr is also the star-free closure of itself. Therefore, $\sfp{\Cs}=\sfr$ for every class \Cs  included in \sfr and containing $\{\emptyset,A^*\}$. It follows that investigating the star-free closure is worthwhile only when applied to a class which is \textbf{not} included in \sfr. As we explain below, typical such classes are made of \emph{group languages}.
\end{rem}

In practice, we only apply star-free closure to input classes \Cs that are \varis. In this case, \sfp{\Cs} is a \vari as well. We prove this in the following proposition.

\begin{prop}\label{prop:vari}
  If  \Cs is a \vari, then \sfp{\Cs} is a \vari closed under concatenation.
\end{prop}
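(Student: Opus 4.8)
The plan is to check directly the four ingredients of the statement ``$\sfp{\Cs}$ is a \vari closed under concatenation'': namely that $\sfp{\Cs}$ is (i) a Boolean algebra, (ii) made only of regular languages, (iii) quotient-closed, and (iv) closed under concatenation. Properties~(i) and~(iv) are almost immediate from the definition: $\sfp{\Cs}$ is closed under union, complement and concatenation by construction, it contains $\emptyset$ and $A^*$ because $\Cs\subseteq\sfp{\Cs}$ and \Cs is a \vari, and De Morgan's laws then give closure under intersection. For~(ii), I would observe that the class \reg of all regular languages is a Boolean algebra closed under concatenation, contains every singleton $\{a\}$, and contains \Cs since \Cs is a \vari; by minimality of $\sfp{\Cs}$ this yields $\sfp{\Cs}\subseteq\reg$. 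The only real work is~(iii).

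For quotient-closure, I would first reduce to quotients by single letters, using $(av)^{-1}L=v^{-1}(a^{-1}L)$ and $L(va)^{-1}=(Lv^{-1})a^{-1}$: an induction on $|u|$ shows that if $\sfp{\Cs}$ is closed under $L\mapsto a^{-1}L$ and $L\mapsto La^{-1}$ for every $a\in A$, then it is closed under $L\mapsto u^{-1}L$ and $L\mapsto Lu^{-1}$ for every $u\in A^*$. Then I would introduce the auxiliary class
\[
  \Ds=\{L\in\sfp{\Cs}\mid a^{-1}L\in\sfp{\Cs}\text{ and }La^{-1}\in\sfp{\Cs}\text{ for every }a\in A\}
\]
and prove $\Ds=\sfp{\Cs}$ by minimality of $\sfp{\Cs}$. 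Since $\Ds\subseteq\sfp{\Cs}$ trivially, it suffices to show that $\Ds$ contains \Cs and all singletons and is closed under union, complement and concatenation. It contains \Cs because \Cs is quotient-closed, so the quotients of a language of \Cs lie in $\Cs\subseteq\sfp{\Cs}$; it contains each $\{a\}$ because every left or right quotient of $\{a\}$ by a letter is one of $\{a\}$, $\{\veps\}$, $\emptyset$, and $\{\veps\}=A^*\setminus AA^*\in\sfp{\Cs}$ (using $A=\bigcup_{a\in A}\{a\}\in\sfp{\Cs}$ and $A^*\in\Cs$). Closure of $\Ds$ under union and complement follows from $a^{-1}(K\cup L)=a^{-1}K\cup a^{-1}L$ and $a^{-1}(A^*\setminus K)=A^*\setminus a^{-1}K$, together with their right-quotient analogues and the fact that $\sfp{\Cs}$ is itself closed under union and complement.

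The main (and only delicate) step is closure of $\Ds$ under concatenation. For $K,L\in\Ds$ and $a\in A$ one uses $a^{-1}(KL)=(a^{-1}K)L$ when $\veps\notin K$ and $a^{-1}(KL)=(a^{-1}K)L\cup a^{-1}L$ when $\veps\in K$, together with the dual identities $(KL)a^{-1}=K(La^{-1})$ when $\veps\notin L$ and $(KL)a^{-1}=K(La^{-1})\cup Ka^{-1}$ when $\veps\in L$. In each case the right-hand side is built from $a^{-1}K,\ a^{-1}L,\ Ka^{-1},\ La^{-1},\ K,\ L$ by concatenation and union, all of which remain inside $\sfp{\Cs}$ because $K,L\in\Ds\subseteq\sfp{\Cs}$ and $\sfp{\Cs}$ is closed under concatenation and union; whether the extra disjunct is present is a fixed fact about $K$ (resp.\ $L$), so no case analysis escapes $\sfp{\Cs}$. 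Hence $KL\in\Ds$. By minimality, $\sfp{\Cs}\subseteq\Ds$, so $\sfp{\Cs}$ is closed under single-letter quotients, and the reduction above upgrades this to full quotient-closure, completing~(iii). Everything outside this concatenation case is routine bookkeeping with elementary quotient identities.
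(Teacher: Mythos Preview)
Your proof is correct and follows essentially the same route as the paper: both reduce quotient-closure to single-letter quotients, then argue by induction on the construction of $L\in\sfp{\Cs}$ using the standard identities $a^{-1}(K\cup L)=a^{-1}K\cup a^{-1}L$, $a^{-1}(A^*\setminus K)=A^*\setminus a^{-1}K$, and the two-case formula for $a^{-1}(KL)$. Your packaging via the auxiliary class $\Ds$ and minimality is just a rephrasing of the paper's explicit structural induction; the content is identical (the paper only writes out left quotients and invokes symmetry, whereas you spell out both sides).
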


\begin{proof}
  It is immediate by definition that \sfp{\Cs} is a Boolean algebra closed under concatenation and containing only regular languages (indeed, it is well-known that regular languages are closed under Boolean operations and concatenation). Hence, it suffices to prove that \sfp{\Cs} is quotient-closed. By symmetry, we only present a proof for left quotients. Let $w \in A^*$. We use induction on the length of $w$ to prove that for every $L \in \sfp{\Cs}$, we have $w\inv L \in \sfp{\Cs}$. When $w = \veps$, we have $w\inv L=L$ for every  $L \subseteq A^*$. Hence, the result is trivial. Assume now that $w \in A^+$ and consider $L \in \sfp{\Cs}$. By hypothesis, there exist $u \in A^*$ and $a \in A$ such that $w = ua$. Hence, $w\inv L =  u\inv(a\inv L)$. We use a sub-induction on the construction of $L$ to prove that $a\inv L\in\sfp{\Cs}$. It will then be immediate by induction on the length of $w$ that $w\inv L =  u\inv(a\inv L) \in \sfp{\Cs}$.

  Since $L \in \sfp{\Cs}$, it is built from languages in \Cs and the singletons $\{b\}$ for $b \in A$ using only union, complement and concatenation. We use induction on this construction to prove that $a\inv L\in\sfp{\Cs}$. Assume first that $L \in \Cs$. In this case, $a\inv L\in\Cs\subseteq \sfp{\Cs}$ since \Cs is a \vari. Assume now that $L = \{b\}$ for some $b \in A$. Then, either $b \neq a$ and $L = \emptyset \in\Cs\subseteq \sfp{\Cs}$, or  $b = a$ and $L = \{\veps\}$, which also belongs to $\sfp\Cs$ since it is the complement of the union of all languages $A^*cA^*$ for~$c\in A$.

  We turn to the inductive cases. First, assume that $L = L_1 \cup L_2$ for languages $L_1,L_2 \in \sfp{\Cs}$ for which, by induction, we have $a\inv L_1 \in \sfp{\Cs}$ and $a\inv L_2 \in \sfp{\Cs}$. Since $a\inv L = a\inv L_1 \cup a\inv L_2$, we get $a\inv L \in \sfp{\Cs}$, as desired. Assume now that $L = A^* \setminus H$ for $H \in \sfp{\Cs}$ such that, by induction, $a\inv H \in \sfp{\Cs}$. One may verify that $a\inv L = A^* \setminus \left(a\inv H\right)$. Hence, we get  $a\inv L \in \sfp{\Cs}$, as desired. Finally, assume that $L = L_1L_2$ for  languages $L_1,L_2 \in \sfp{\Cs}$ for which, by induction, we have $a\inv L_1 \in \sfp{\Cs}$ and $a\inv L_2 \in \sfp{\Cs}$. One may verify that,
  \[
    a\inv L =
    \left\{
      \begin{array}{ll}
        (a\inv L_1)L_2 & \text{if $\veps \not\in L_1$}, \\
        (a\inv L_1)L_2 \cup a\inv L_2 & \text{if $\veps \in L_1$}.
      \end{array}
    \right.
  \]
  Since $a\inv L_1 \in \sfp{\Cs}$ and $a\inv L_2 \in \sfp{\Cs}$ by induction, we get $a\inv L \in \sfp{\Cs}$, which concludes the~proof.
\end{proof}

We complete the presentation with a characteristic property of star-free closure (for input classes that are \varis). We present it as a property of the \sfp{\Cs}-morphisms.

\begin{prop}\label{prop:aper}
  Let \Cs be a \vari and let $\alpha: A^* \to M$ be an \sfp{\Cs}-morphism. There exists a \Cs-morphism $\eta: A^* \to N$ such that:
  \[
    \text{for every } u \in A^*\text{, if } \eta(u)\text{ is idempotent, then } (\alpha(u))^{\omega+1} = (\alpha(u))^{\omega}.\]
\end{prop}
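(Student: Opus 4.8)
The plan is to reduce the statement to a pumping property of the languages recognized by $\alpha$, and then to prove that property by induction on how such a language is built inside $\sfp{\Cs}$. First I would fix the morphism $\eta$. Since $M$ is finite and $\alpha$ is an $\sfp{\Cs}$-morphism, each of the finitely many languages $\alpha\inv(s)$ ($s \in M$) lies in $\sfp{\Cs}$, hence is obtained from finitely many languages of $\Cs$ and singletons $\{a\}$ ($a \in A$) using union, complement and concatenation. Let $L_1,\dots,L_k \in \Cs$ be the finitely many languages of $\Cs$ occurring in these expressions. Proposition~\ref{prop:genocm} then provides a \Cs-morphism $\eta: A^* \to N$ recognizing all of $L_1,\dots,L_k$; this is the morphism we shall use.

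The core of the argument is the following claim: for every language $L$ obtained from $L_1,\dots,L_k$ and the singletons $\{a\}$ using union, complement and concatenation, there is an integer $n_L$ such that, for every $u \in A^*$ with $\eta(u)$ idempotent, every $x,y \in A^*$ and every $n \geq n_L$, one has $xu^ny \in L$ if and only if $xu^{n+1}y \in L$. I would prove this by induction on the construction of $L$. If $L = L_j$, then idempotency of $\eta(u)$ gives $\eta(xu^ny) = \eta(x)\eta(u)\eta(y)$ for all $n \geq 1$, and since $\eta$ recognizes $L_j$ the claim holds with $n_L = 1$. If $L = \{a\}$, then $n_L = 2$ works: for $n \geq 2$ either $u = \veps$ and $xu^ny$ does not depend on $n$, or $u \neq \veps$ and $xu^ny$ is too long to equal $a$. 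Union and complement are immediate, taking respectively the maximum or the same value of the integers supplied by induction.

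The delicate case is concatenation $L = L'L''$, with integers $n',n''$ given by induction; here I would set $n_L = n' + n'' + 1$. Fix $u$ with $\eta(u)$ idempotent, $x,y \in A^*$, $n \geq n_L$, and a factorization of $xu^ny$ (resp.\ of $xu^{n+1}y$) witnessing membership in $L'L''$. The cut point lies inside $x$, inside $y$, or inside one of the copies of $u$. If it lies inside $x$, say $x = x_1x_2$ with $x_1 \in L'$ and $x_2u^ny \in L''$, the second factor contains the whole block $u^n$, so (as $n \geq n''$) the induction hypothesis for $L''$ lets us replace $u^n$ by $u^{n+1}$; the case of a cut inside $y$ is symmetric using $L'$. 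If the cut lies inside the $k$-th copy of $u$, with $u = u'u''$, the $L'$-factor is $xu^{k-1}u'$ and the $L''$-factor is $u''u^{n-k}y$; since $(k-1) + (n-k) = n-1 \geq n' + n''$, either $k-1 \geq n'$ or $n-k \geq n''$, and in the first (resp.\ second) subcase the induction hypothesis for $L'$ (resp.\ $L''$) applied to that factor lets us insert one copy of $u$, which recombines with the unchanged factor (through $u'u'' = u$) to yield $xu^{n+1}y$. The reverse direction (deleting a copy) is handled the same way, with the same bound. I expect this factorization bookkeeping—and keeping the index arithmetic straight—to be the only real obstacle; the whole mechanism rests on the idempotency of $\eta(u)$, which makes $\eta$-images, and hence membership in the $L_j$, insensitive to the number of copies of $u$.

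Finally I would conclude. Put $n_0 = \max_{s \in M} n_{\alpha\inv(s)}$. Applying the claim to each $\alpha\inv(s)$ shows that, for every $u$ with $\eta(u)$ idempotent and every $n \geq n_0$, the words $u^n$ and $u^{n+1}$ lie in the same set $\alpha\inv(s)$, that is, $(\alpha(u))^n = (\alpha(u))^{n+1}$. Choosing $n$ to be a multiple of $\omega = \omega(M)$ with $n \geq n_0$, idempotency of $(\alpha(u))^\omega$ gives $(\alpha(u))^n = (\alpha(u))^\omega$ and $(\alpha(u))^{n+1} = (\alpha(u))^{\omega+1}$, whence $(\alpha(u))^{\omega+1} = (\alpha(u))^\omega$, which is exactly the required conclusion.
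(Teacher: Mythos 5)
Your proof is correct and follows essentially the same strategy as the paper's: fix $\eta$ via Proposition~\ref{prop:genocm} to recognize the finitely many $\Cs$-languages occurring in expressions for the $\alpha\inv(s)$, then prove a pumping claim (insensitivity to the number of copies of $u$ when $\eta(u)$ is idempotent) by structural induction on the build-up under union, complement and concatenation, and conclude by choosing a suitable multiple of $\omega(M)$. The only differences are cosmetic — a slightly tighter concatenation bound ($n'+n''+1$ versus the paper's $2\max(k_1,k_2)+1$) and a more explicit case split on the cut position.
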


\begin{proof}
  Since $\alpha$ is an \sfp{\Cs}-morphism, we have $\alpha\inv(s) \in \sfp{\Cs}$ for every $s \in M$. This means that $\alpha\inv(s)$ is built from finitely many languages of \Cs and from the singletons $\{a\}$ (for $a \in A$) using union, complement and concatenation. Since \Cs is a \vari of regular languages, Proposition~\ref{prop:genocm} yields a \Cs-morphism $\eta: A^* \to N$ recognizing all the languages in~\Cs used in the construction of the languages $\alpha\inv(s)$ for $s \in M$. Let  $\Cs_{\eta}=\sfp{\{\eta^{-1}(t)\mid t\in N\}}$. By definition of $\eta$, we know that $\alpha\inv(s)\in\sfp{\Cs_\eta}$ for every $s \in M$. We prove that for every  language $L\in\sfp{\Cs_\eta}$, there exists an integer $k \geq 1$ such that the following property holds (recall that $\equiv_L$ denotes the syntactic congruence of~$L$):
  \begin{equation}\label{eq:sfclos:aperp}
    \text{for every } u\in A^*\text{, if }\eta(u)\text{ is idempotent, then } u^{k+1}\equiv_L u^{k}.
  \end{equation}
  Let us first explain why this implies the statement of the proposition: $(\alpha(u))^{\omega+1} = (\alpha(u))^{\omega}$ for any $u \in A^*$ such that $\eta(u)$ is idempotent. Let $u$ be such a word and let $s = (\alpha(u))^{\omega}$. Since $\alpha\inv(s)\in\sfp{\Cs_\eta}$, there exists $k \geq 1$ such that~\eqref{eq:sfclos:aperp} holds for $L = \alpha\inv(s)$. Let $p = \omega(M)$. Note that $\alpha(u^{pk}) = s$. This yields $u^{(p-1)k}\cdot u^{k}\cdot\veps = u^{pk} \in \alpha\inv(s)$, whence by~\eqref{eq:sfclos:aperp}, $u^{pk+1} = u^{(p-1)k}\cdot u^{k+1}\cdot\veps \in \alpha\inv(s)$. We get $(\alpha(u))^{\omega+1}=\alpha(u^{pk+1})=s=(\alpha(u))^{\omega}$, as~desired.

  If remains to prove that for every language $L\in\sfp{\Cs_{\eta}}$, there exists $k \geq 1$ such that~\eqref{eq:sfclos:aperp} holds. We argue by induction on the construction of $L$. The base cases are when $L=\eta^{-1}(t)$ for $t\in N$ and when $L$ is a singleton $\{a\}$. If $L=\eta^{-1}(t)$ for $t\in N$, then,~\eqref{eq:sfclos:aperp} holds for~$k = 1$. Indeed, given $u,x,y \in A^*$ such that $\eta(u)$ is idempotent, we have $\eta(xuuy) = \eta(xuy)$. Since $L$ is recognized by $\eta$, this yields $xuu y \in L \Leftrightarrow xuy \in L$, \emph{i.e.}, $u^{2}\equiv_{L}u$. Assume next that $L=\{a\}$ for $a\in A$. In this case, \eqref{eq:sfclos:aperp} holds for $k = 2$. Indeed, let $u,x,y \in A^*$ such that $\eta(u)$ is idempotent. If $u = \veps$, then $xu^{2+1}y = xy = xu^2 y$ hence we get $xu^{2+1} y \in L \Leftrightarrow xu^{2}y \in L$. Otherwise, $u \in A^+$ and we have $|xu^2y| > 1$ and $|xu^{2+1}y| > 1$. Therefore, since $L = \{a\}$, we have $xu^2y \not\in L$ and $xu^{2+1}y \not\in L$. In all cases, $u^{3}\equiv_{L}u^{2}$, concluding the proof of~\eqref{eq:sfclos:aperp} in the base cases.

  We turn to the inductive cases. Assume first that the last operation used to build~$L$ is union. We have $L = L_1 \cup L_2$ where $L_1,L_2$ are simpler languages of $\sfp{\Cs_{\eta}}$. By induction, this yields $k_1,k_2 \geq1$ such that for $i = 1,2$, if $u \in A^*$ is such that $\eta(u)$ is idempotent, we have $u^{k_i+1}\equiv_{L_{i}}u^{k_i}$. Hence,~\eqref{eq:sfclos:aperp} holds for $L$ with $k=\text{max}(k_1,k_2)$. We turn to complement. Assume that $L = A^* \setminus H$ where $H$ is a simpler language of $\sfp{\Cs_{\eta}}$. By induction, we get $k\geq1$ such that if $u \in A^*$ is such that $\eta(u)$ is idempotent, we have $u^{k+1}\equiv_{H}u^{k}$, \emph{i.e.}, $xu^{k} y \in H \Leftrightarrow xu^{k+1}y\in H$ for all $x,y\in A^{*}$. Since $L = A^* \setminus H$, the contrapositive states that $xu^{k} y \in L \Leftrightarrow xu^{k+1}y\in L$, and~\eqref{eq:sfclos:aperp} holds for $L$ with the same integer $k$ as for $H$.

  Finally, assume that the last operation used to construct $L$ is concatenation. We have $L = L_1 L_2$ where $L_1,L_2$ are simpler languages of $\sfp{\Cs_{\eta}}$. By induction, this yields $k_1,k_2 \geq1$ such that for $i = 1,2$, if $u \in A^*$ is such that $\eta(u)$ is idempotent, we have $u^{k_i+1}\equiv_{L_{i}}u^{k_i}$. Let~$m=\Max(k_1,k_2)$. We prove that~\eqref{eq:sfclos:aperp} holds for $k = 2m + 1$. Let $u,x,y \in A^*$ with $\eta(u)$ idempotent. We have to show that $xu^{k+1} y \in L \Leftrightarrow xu^ky\in L$. We concentrate on the left to right implication (the converse one is symmetrical): assuming that $xu^{k+1}y \in L$, we show that $xu^k y \in L$. Since $L = L_1L_2$, we get $w_1 \in L_1$ and $w_2 \in L_2$ such that $xu^{k+1}y = w_1 w_2$. Since $k \geq 2m + 1$, it follows that either $xu^{m+1}$ is a prefix of $w_1$ or $u^{m+1}y$ is a suffix of~$w_2$. By symmetry, we assume that the former property holds: we have $w_1 = xu^{m+1}z$ for some $z \in A^*$. Observe that since $xu^{k+1}y = w_1 w_2$, it follows that $zw_2 =u^{k-m}y$. Moreover, we have $m \geq k_1$ by definition of $m$. Since $xu^{m+1} z = w_1 \in L_1$, we know therefore that $xu^{m}z \in L_1$ by definition of $k_1$. Thus, $xu^{m}zw_2 \in L_1L_2 = L$. Since $zw_2 =u^{k-m}y$, this yields $xu^{k}y \in L$, concluding the~proof.
\end{proof}

\subsection{Group languages}

We now present a central kind of class. As we explained in the introduction, all classes investigated in the paper are built from basic ones using the star-free closure operator. Here, we introduce the basic classes used in this construction: the \emph{classes of group languages}.

A \emph{group} is a monoid $G$ such that every element $g \in G$ has an inverse $g\inv \in G$, \emph{i.e.}, $gg\inv = g\inv g = 1_G$. A language $L$ is a \emph{group language} if it is recognized by a morphism $\alpha: A^* \to G$ into a \emph{finite group $G$} (\emph{i.e.}, there exists $F \subseteq G$ such that $L = \alpha\inv(F)$). We write \grp for the class of \emph{all} group languages. One can verify that \grp is a \vari.

\begin{rem}
  No language theoretic definition of\/ \grp is known. There is however a definition based on automata: the group languages are those recognized by a permutation automaton~\cite{permauto} (\emph{i.e.}, which is simultaneously deterministic, co-deterministic and complete).
\end{rem}

A \emph{class of group languages} is a class consisting of group languages only, \emph{i.e.}, a subclass of \grp. The results of this paper apply to arbitrary \emph{\varis of group languages}.

While our results apply in a generic way to all \varis of group languages, there are \emph{four} main classes of this kind that we shall use for providing examples. One of them is \grp itself. Let us present the other three. First, we write $\stzer = \{\emptyset,A^*\}$, which is clearly a \vari of group languages (the notation from the fact that this class is the base level of the \emph{Straubing-Th\'erien} hierarchy~\cite{StrauConcat,TheConcat}). While trivial, we shall see that this class has important applications. Moreover, we look at the class \md of \emph{modulo languages}. For every $q,r \in \nat$ with $r < q$, we write $L_{q,r} = \{w \in A^* \mid |w| \equiv r \bmod q\}$. The class \md consists of all \emph{finite unions} of languages~$L_{q,r}$. One may verify that \md is a \vari of group languages. Finally, we shall consider~the class \abg of \emph{alphabet modulo testable languages}. For all $q,r \in \nat$ with $r < q$ and all $a \in A$, let $L^a_{q,r} = \{w \in A^* \mid |w|_a \equiv r \bmod q\}$. We define \abg as the least class consisting of all languages $L^a_{q,r}$ and closed under union and intersection. It is again straightforward to verify that \abg is a \vari of group languages.

We do not investigate classes of group languages themselves in the paper: we only use them as input classes for our operators.  In particular, we shall use \stzer, \md, \abg and \grp in order to illustrate our results. In this context, it will be important that \emph{separation} is decidable for these four classes. The techniques involved for proving this are independent from what we do in the paper. Actually, this can be difficult. On one hand, the decidability of \stzer-separation is immediate (two languages are \stzer-separable if and only if one of them is empty). On the other hand, the decidability of \grp-separation is equivalent to a difficult algebraic question~\cite{henckell:hal-00019815}, which remained open for several years before it was solved by Ash~\cite{Ash91}. Recent automata-based proofs that separation is decidable for \md, \abg and \grp are available in~\cite{pzgroups}.

We conclude this section with a useful result, which states a simple property of the \Gs-morphisms when \Gs is a group \vari.

\begin{lem} \label{lem:gmorph}
  Let \Gs be a group \vari and let $\alpha: A^* \to G$ be a \Gs-morphism. Then, $G$ is a group.
\end{lem}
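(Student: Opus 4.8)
The plan is to reduce everything to idempotents: I will show that $1_G$ is the \emph{only} idempotent of $G$, and then invoke the standard fact that a finite monoid whose sole idempotent is its neutral element is automatically a group. The latter is immediate: for $s\in G$ there is an integer $n\geq 1$ such that $s^n$ is idempotent (the powers of $s$ are eventually periodic), hence $s^n=1_G$, and then $s^{n-1}$ is a two-sided inverse of $s$. So it suffices to prove that $G$ has no idempotent other than $1_G$.

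So let $e\in G$ be an idempotent and set $L=\alpha\inv(e)$. Since $\alpha$ is a \Gs-morphism, $L$ is recognized by $\alpha$, hence $L\in\Gs\subseteq\grp$; thus $L$ is a group language, so there exist a finite group $H$, a morphism $\beta\colon A^*\to H$ and a subset $F\subseteq H$ with $L=\beta\inv(F)$. As $\alpha$ is surjective, fix $u\in A^*$ with $\alpha(u)=e$. Because $e$ is idempotent, $\alpha(u^n)=e^n=e$ for every $n\geq 1$, so $u^n\in L$ for all $n\geq 1$.

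The key point is now that $\beta(u)$ has finite order in the finite group $H$: there is $k\geq 1$ with $\beta(u)^k=1_H$, hence $\beta(u^k)=1_H=\beta(\veps)$. Since $\beta$ recognizes $L$, this yields $u^k\in L\iff\veps\in L$. But $u^k\in L$ by the previous paragraph (here $k\geq 1$ is essential), so $\veps\in L$, i.e.\ $\alpha(\veps)=e$, that is $e=1_G$. Therefore $1_G$ is the unique idempotent of $G$, and $G$ is a group.

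I do not anticipate a genuine obstacle here: the argument is short once the two ingredients are in place (``unique idempotent $\Rightarrow$ group'' for finite monoids, and invertibility of $\beta(u)$ in the finite group $H$). The only things to watch are that the chosen exponent $k$ is taken $\geq 1$, so that $u^k$ really lies in $L$, and that the proof uses nothing about \Gs beyond $\Gs\subseteq\grp$ together with surjectivity of $\alpha$; no quotient-closure or Boolean structure is needed.
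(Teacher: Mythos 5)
Your proof is correct. It takes a mildly different route from the paper's: the paper picks an arbitrary $g\in G$ with preimage $w$, considers a finite group $H$ recognizing $\alpha\inv(1_G)$, and directly deduces $g^n=1_G$ from $\eta(w^n)=1_H=\eta(\veps)$ with $n=\omega(H)$; you instead reduce to the classical fact that a finite monoid whose only idempotent is the identity is a group, and then kill an arbitrary idempotent $e$ by working with $\alpha\inv(e)$ and a finite group $H$ recognizing \emph{that} language, pushing the implication in the other direction (from $u^k\in L$ to $\veps\in L$). The core mechanism — exploiting invertibility in a finite group that recognizes a preimage of $\alpha$ — is identical, and the two uses of recognition ($\veps\in L\Rightarrow w^n\in L$ versus $u^k\in L\Rightarrow\veps\in L$) are symmetric. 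The paper's version is slightly more economical since it needs only one preimage language and no auxiliary lemma; yours has the small advantage of isolating the ``unique idempotent'' criterion, which is a reusable characterization of finite groups.
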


\begin{proof}
  Let $g \in G$, we exhibit an inverse $g \inv \in G$ for $g$ (\emph{i.e}, such that $gg\inv = g\inv g = 1_G$). By hypothesis, $\alpha\inv(1_G) \in \Gs$. Since \Gs is a group \vari, there exists a morphism $\eta: A^* \to H$ into a finite group $H$ recognizing $\alpha\inv(1_G)$. Let $n = \omega(H)$. We define $g\inv = g^{n-1}$. Since $gg\inv = g\inv g = g^n$, it remains to prove that $g^n = 1_G$. Let $w \in \alpha\inv(g)$ (recall that \Gs-morphisms are surjective). Clearly, $\eta(w^n) = (\eta(w))^n$ is an idempotent of $H$ since $n = \omega(H)$. Hence, $\eta(w^n) = 1_H$ since $H$ is a group. Hence, $\eta(w^n) = \eta(\veps)$. Since $\veps \in \alpha\inv(1_G)$ and since $\alpha\inv(1_G)$ is recognized by $\eta$, we get $w^n \in \alpha\inv(1_G)$, \emph{i.e.}, $\alpha(w^n) = g^{n}=1_G$, as desired.
\end{proof}


\section{Bounded synchronization delay}
\label{sec:bsd}
We now present an alternate definition of star-free closure. More precisely, we introduce a second operator $\Cs \mapsto\bsdp{\Cs}$ whose definition is independent from that of star-free closure. We then prove that $\bsdp{\Cs}=\sfp{\Cs}$ if \Cs~is a \vari. This definition is less prominent than the main one and than the logical characterizations that we shall present below.  Yet, it is a key ingredient of the~paper. Whenever we have to construct languages in \sfp{\Cs} in proof arguments, we actually build them as languages of \bsdp{\Cs}.  For example, this is how we obtain the algebraic characterization of~\sfp{\Cs} (in fact, this argument is intertwined with the proof of the inclusion $\sfp{\Cs} \subseteq \bsdp{\Cs}$ that we present in this section).

This second definition was discovered by Schützenberger~\cite{schutzbd}. He defined a single class \bsd (in our terminology, this is the class \bsdp{\stzer}) and he proved that it coincides with~the class \sfr of star-free languages (see also the work of Diekert and Kufleitner~\cite{dksd} for a recent proof). This is a surprising result since \bsd seems antithetic to \sfr at first~glance. Its definition is based on the operations available in classical regular expressions: union, concatenation \emph{and} Kleene star. However, these operations are restricted to languages satisfying specific \emph{semantic conditions}. The main restriction concerns the Kleene star, which can only be applied to \emph{prefix codes of bounded synchronization delay} (this is a notion from code theory, which we recall below). Here, we generalize the definition of \bsd as an operator $\Cs \mapsto \bsdp{\Cs}$.

We first present preliminary notions from code theory that we shall need for the definition. Then, we define $\Cs \mapsto \bsdp{\Cs}$ properly and state the correspondence with star-free closure.

\subsection{Prefix codes of bounded synchronization delay}

The objects introduced in this section are based on a notion taken from code theory: \emph{prefix codes}. We briefly present them here and prove a few basic properties that we shall need. For a detailed presentation of code theory, we refer the reader to the book of Berstel, Perrin and Reutenauer~\cite{berstel_perrin_reutenauer_2009}.

\medskip
\noindent
{\bf Prefix codes.} A language $K \subseteq A^*$ is a \emph{prefix code} when $\veps \not\in K$ (\emph{i.e.}, $K \subseteq A^+$) and $K \cap KA^+ = \emptyset$ (\emph{i.e.}, no word in $K$ admits a strict prefix which is also a word in $K$).

\begin{exa}\label{ex:sfclos:prefix}
  If $A = \{a,b\}$, then the language $A$ is a prefix code. Any singleton language~$\{u\}$ with $u\neq\varepsilon$ is also a prefix code. Finally, $a^*b$ is a prefix code as well. On the other hand, $L=\{a,aa\}$ is not a prefix code, since $aa\in L\cap LA^+$.
\end{exa}

We now state the key property of prefix codes, which we verify directly using the definition.

\begin{fact}\label{fct:prefdec}
  Let $K$ be a prefix code. Consider $m,n \in \nat$, $u_1,\dots,u_m \in K$ and $v_1 \cdots,v_n \in K$. The two following properties hold:
  \begin{itemize}
	\item If $u_1 \cdots u_m$ is a prefix of $v_1 \cdots v_n$, then $m \leq n$ and $u_i = v_i$ for every $i \leq m$.
	\item If $u_1 \cdots u_m = v_1 \cdots v_n$, then $m = n$ and $u_i = v_i$ for every $i \leq m$.
  \end{itemize}
\end{fact}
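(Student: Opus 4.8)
The plan is to prove the first item by induction on $m$; the second item will then follow as an immediate corollary. So suppose $u_1\cdots u_m$ is a prefix of $v_1\cdots v_n$ with all $u_i,v_j\in K$. If $m=0$ there is nothing to prove, so assume $m\geq 1$. First I would observe that $n\geq 1$ as well: otherwise $v_1\cdots v_n=\veps$ would have $u_1\cdots u_m$ as a prefix, forcing $u_1=\veps$, contradicting $\veps\notin K$. Now the key step is to compare $u_1$ and $v_1$. Since $u_1$ is a prefix of $u_1\cdots u_m$, which is a prefix of $v_1\cdots v_n$, the word $u_1$ is a prefix of $v_1\cdots v_n$; similarly $v_1$ is a prefix of $u_1\cdots u_m$ hence of $v_1\cdots v_n$ when $n\ge m$, but more simply: two words that are both prefixes of the same word are comparable, so one of $u_1,v_1$ is a prefix of the other. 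If $u_1$ is a strict prefix of $v_1$, then $u_1\in KA^+$ (as $v_1\in K$ and $u_1$ is a strict prefix of $v_1$, we have $v_1=u_1 x$ with $x\in A^+$, so $v_1\in KA^+$, and also $v_1\in K$), giving $v_1\in K\cap KA^+=\emptyset$, a contradiction; symmetrically $v_1$ cannot be a strict prefix of $u_1$. Hence $u_1=v_1$.

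Having established $u_1=v_1$, I would cancel this common prefix: from $u_1\cdots u_m$ being a prefix of $v_1\cdots v_n$ and $u_1=v_1$, it follows that $u_2\cdots u_m$ is a prefix of $v_2\cdots v_n$. These are again products of $m-1$ and $n-1$ elements of $K$ respectively, so by the induction hypothesis $m-1\leq n-1$ and $u_i=v_i$ for every $i$ with $2\leq i\leq m$. Combined with $u_1=v_1$, this gives $m\leq n$ and $u_i=v_i$ for all $i\leq m$, completing the induction.

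For the second item, suppose $u_1\cdots u_m=v_1\cdots v_n$. Then $u_1\cdots u_m$ is a prefix of $v_1\cdots v_n$, so by the first item $m\leq n$ and $u_i=v_i$ for all $i\leq m$; symmetrically $v_1\cdots v_n$ is a prefix of $u_1\cdots u_m$, so $n\leq m$. Hence $m=n$ and $u_i=v_i$ for all $i\leq m$, as required.

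There is no real obstacle here: the only point requiring a moment's care is the base-case/edge-case bookkeeping (handling $n=0$ and the empty-word issues) and invoking the prefix-code condition $K\cap KA^+=\emptyset$ at exactly the right place to rule out $u_1$ and $v_1$ being distinct. Everything else is routine prefix cancellation.
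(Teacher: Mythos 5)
Your proof is correct and essentially matches the paper's: both argue by induction on $m$ and invoke the prefix-code condition $K \cap KA^+ = \emptyset$ to force two comparable factors of $K$ to coincide. The only (cosmetic) difference is that you establish $u_1 = v_1$ and cancel on the left before applying the induction hypothesis, whereas the paper applies the hypothesis to $u_1\cdots u_{m-1}$ first and then handles $u_m$ against $v_m$; note also the small typo ``then $u_1 \in KA^+$'' where you mean $v_1 \in KA^+$, which your parenthetical already corrects.
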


\begin{proof}
  The second property is an immediate corollary of the first one. Hence, it suffices to show that if $u_1 \cdots u_m$ is a prefix of $v_1 \cdots v_n$, then $m \leq n$ and $u_i = v_i$ for every $i \leq m$. We proceed by induction on $m \in \nat$. If $m= 0$, then the property is immediate. Otherwise, $m \geq 1$. Clearly, $u_1 \cdots u_{m-1}$ is a prefix of $v_1 \cdots v_n$. Hence, induction yields that $m-1 \leq n$ and $u_i = v_i$ for every $i \leq m -1$. Now, since $u_1 \cdots u_m$ is a prefix of $v_1 \cdots v_n$, it follows that $u_m$ is a prefix of $v_m \cdots v_n$. Since $u_m \in K$ and $K$ is a prefix code, we have $u_m \neq \veps$, which implies that $v_m \cdots v_n \neq \veps$, \emph{i.e.}, $m \leq n$. Moreover, since $K$ is a prefix code and $u_m,v_m \in K$ we know that $u_m$ is \emph{not} a strict prefix of $v_m$ and $v_m$ is \emph{not} a strict prefix of $u_m$. Together with the hypothesis that $u_m$ is a prefix of $v_m \cdots v_n$, this yields $u_m = v_m$, concluding the~proof.
\end{proof}

The second assertion in Fact~\ref{fct:prefdec} implies that when $K$ is a prefix code, every word $w \in K^*$ admits a \emph{unique} decomposition witnessing this membership. This property is exactly the definition of a \emph{code}, which is therefore a notion more general than that of prefix code.

\medskip
\noindent
{\bf Bounded synchronization delay.} We turn to a more restrictive notion. Consider an integer $d \geq 1$. We say that a prefix code $K \subseteq A^+$ has \emph{synchronization delay $d$} when the following property holds:
\begin{equation}\label{eq:sfclos:sync}
  \text{for every $u,v,w \in A^*$,} \quad uvw \in K^+ \text{ and } v \in K^d \quad \Rightarrow \quad \text{$uv \in K^+$}.
\end{equation}
Furthermore, we say that a prefix code $K \subseteq A^+$ has \emph{bounded synchronization delay} when there exists some $d \geq1$ such that $K$ has synchronization delay $d$.

\begin{rem}
  It follows from the definition that if a prefix code has synchronization delay~$d$, then it has also synchronization delay $d'$ for all $d'\geq d$.
\end{rem}

\begin{rem}
  If $K$ is a prefix code with synchronization delay~$d$, then whenever $u,v,w$ are words such that $uvw\in K^+$ and $v\in K^d$, we have $w\in K^*$. Indeed, Condition~\eqref{eq:sfclos:sync} states that $uv\in K^+$. This means that there exist words $u_1,\dots,u_n,v_1,\dots,v_m$ in $K$ such that $uvw=u_1\cdots u_n$ and $uv=v_1\cdots v_m$. From Fact~\ref{fct:prefdec}, we deduce that $m\leq n$ and $w=u_{m+1}\cdots u_n$, which belongs to $K^*$. This explains the terminology: any infix $v$ in $K^d$ of a word $uvw\in K^+$ determines a decomposition $uv\cdot w$ of $uvw$ whose factors ($uv$ and $w$) both belong to $K^*$.
\end{rem}

\begin{exa}\label{ex:delay}
  Assume that $A = \{a,b\}$.
  \begin{itemize}
    \item Clearly, for any $B\subseteq A$, the language $B$ is a prefix code of synchronization delay~$1$.

    \item It is also immediate that any language $L\subseteq a^*b$ is a prefix code of synchronization delay~$1$. Indeed, if $uvw \in L^*$ and $v\in L$, then $v$ ends with a ``$b$'', whence $uv \in L^*$.

    \item Similarly, one may verify that $(aab)^*ab$ is a prefix code of synchronization delay~$2$ (this follows from Fact~\ref{fct:sprefnest} below applied to $K=\{ab,abb\}$ and $H=\{ab\}$, where $K$ is a prefix code of synchronization delay~1). However, it does not have synchronization delay $1$. Indeed, consider the decomposition $aabab=a\cdot ab\cdot ab$. We have $aabab,ab\in ((aab)^*(ab))^1$ but $aab\notin((aab)^*ab)^+$.

    \item Finally, $\{aa\}$ (which is a prefix code) does not have bounded synchronization delay. Indeed, given $d \geq 1$, we have $a(aa)^da \in (aa)^+$ and $(aa)^d \in \{aa\}^d$ but $a(aa)^d \not\in (aa)^*$.
  \end{itemize}
\end{exa}


We complete the definition with two properties of prefix codes of bounded synchronization delay. First, we show that every language included in a such a code retains the property to be a prefix code of bounded synchronization delay.

\begin{fact}\label{fct:sprefsub}
  Let $d \geq 1$ and let $K \subseteq A^+$ be a prefix code with synchronization delay $d$. Then, every language $H \subseteq K$ is also a prefix code with synchronization delay $d$.
\end{fact}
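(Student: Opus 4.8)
The plan is to dispatch the ``prefix code'' part directly from the definitions, and then reduce the synchronization delay bound for $H$ to that for $K$, using the unique decomposition property of prefix codes (Fact~\ref{fct:prefdec}) to bridge the gap between ``$uv \in K^+$'' and the stronger ``$uv \in H^+$'' that we actually need.

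First I would check that $H$ is a prefix code. Since $H \subseteq K \subseteq A^+$, we have $\veps \notin H$. And if some $w$ lay in $H \cap HA^+$, then (as $H \subseteq K$) it would also lie in $K \cap KA^+$, contradicting the assumption that $K$ is a prefix code; hence $H \cap HA^+ = \emptyset$. So $H$ is a prefix code. This step is routine.

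For the synchronization delay, take arbitrary $u,v,w \in A^*$ with $uvw \in H^+$ and $v \in H^d$; I must show $uv \in H^+$. Since $H \subseteq K$, we immediately get $uvw \in K^+$ and $v \in K^d$, so condition~\eqref{eq:sfclos:sync} applied to $K$ yields $uv \in K^+$. Now write $uv = k_1 \cdots k_m$ with each $k_i \in K$ (here $m \geq 1$, since $v \in H^d$ with $d \geq 1$ and $H \subseteq A^+$ force $v \neq \veps$, hence $uv \neq \veps$), and write $uvw = h_1 \cdots h_n$ with each $h_i \in H$. Since $uvw = (uv)w$, the word $uv$ is a prefix of $uvw$; both are concatenations of words of $K$ (as $H \subseteq K$), so the first assertion of Fact~\ref{fct:prefdec} gives $m \leq n$ and $k_i = h_i$ for every $i \leq m$. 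In particular $uv = h_1 \cdots h_m$ with each $h_i \in H$, so $uv \in H^+$, as required.

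The only real subtlety — and the one place where more than a triviality is happening — is that the synchronization delay property of $K$ only delivers $uv \in K^+$, whereas the conclusion must be phrased in terms of $H$; the point is that the prefix of an $H$-decomposition along $K$-boundaries is forced, by Fact~\ref{fct:prefdec}, to itself use only the letters-blocks $h_i \in H$, so no genuine strengthening is lost. No other obstacle is expected; the argument is short and uses only the two preceding facts.
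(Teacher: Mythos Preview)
Your proof is correct and follows essentially the same approach as the paper's: both arguments pass from $uv \in K^+$ to $uv \in H^+$ by invoking Fact~\ref{fct:prefdec} to match the $K$-decomposition of $uv$ against the initial segment of the $H$-decomposition of $uvw$. The paper phrases this via the uniqueness (second) assertion of Fact~\ref{fct:prefdec} and then the first, while you go straight through the first assertion, but the substance is identical.
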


\begin{proof}
  It is immediate from the definitions that $H$ is itself a prefix code. It remains to prove that $H$ has synchronization delay $d$. Consider $u,v,w \in A^*$ such that $uvw \in H^+$ and $v \in H^d$. We show that $uv \in H^+$. Since $H \subseteq K$, we have $uvw \in K^+$ and $v \in K^d$. Since $K$ has synchronization delay $d$, we obtain $uv \in K^+$. Moreover, since $K$ is a prefix code and $uvw \in K^+$, it follows from the second property in Fact~\ref{fct:prefdec} that $uvw$ admits a unique decomposition into factors of $K$. Additionally, since $uvw \in H^+$ with $H \subseteq K$, all factors in this unique decomposition belong to $H$. Finally, since $uv\in K^+$, the first property in Fact~\ref{fct:prefdec} yields that $uv$ is a concatenation of factors in this unique decomposition. Hence, we have $uv \in H^+$, which concludes the proof.
\end{proof}

Let us now present a construction to build a new prefix code of bounded synchronization delay from another one.

\begin{fact}\label{fct:sprefnest}
  Let $d \geq 1$ and let $K \subseteq A^+$ be a prefix code with synchronization delay $d$. Let $H \subseteq K$. Then, the language $(K \setminus H)^*H$ is a prefix code with synchronization delay $d+1$.
\end{fact}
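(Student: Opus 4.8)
The plan is to set $E = K \setminus H$, so that the language in question is $L = E^{*}H$, and to prove separately that $L$ is a prefix code and that $L$ has synchronization delay $d+1$. The guiding observation throughout is that every word $m \in L$ is a concatenation of words of $K$ --- zero or more words of $E$ followed by exactly one word of $H$ --- and that, since $K$ is a prefix code, Fact~\ref{fct:prefdec} makes such a $K$-factorization \emph{unique}. So each $m \in L$ has a canonical $K$-factorization whose last factor lies in $H$ while all earlier factors lie in $K \setminus H$; I will call that last factor the \emph{$H$-block} of $m$.

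I would first check that $L$ is a prefix code. Since $H \subseteq A^{+}$, no word of $L$ is empty. Suppose some $w' \in L$ were a strict prefix of some $w \in L$. Both are concatenations of words of $K$, so by Fact~\ref{fct:prefdec} the $K$-factorization of $w'$ is a strict initial segment of that of $w$ (it cannot be the whole factorization, since $w' \neq w$). But then the $H$-block of $w'$ coincides with a non-final factor of the $K$-factorization of $w$, which lies in $K \setminus H$; this contradicts $(K \setminus H) \cap H = \emptyset$. Hence no word of $L$ strictly extends another, so $L$ is a prefix code.

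For the synchronization delay, fix $u,v,w$ with $uvw \in L^{+}$ and $v \in L^{d+1}$; the goal is $uv \in L^{+}$. Fix a decomposition $uvw = m_1 \cdots m_p$ with each $m_j \in L$. Concatenating the $K$-factorizations of the $m_j$'s gives the unique $K$-factorization of $uvw$, and in it the factors lying in $H$ are exactly the $H$-blocks of $m_1, \dots, m_p$, hence occur precisely at the positions ending a block $m_j$. Also write $v = \ell_1 \cdots \ell_{d+1}$ with $\ell_i \in L$ and $\ell_i = f_i h_i$, where $f_i \in E^{*}$ and $h_i \in H$. The key move is to look at the infix $X = h_1\, \ell_2 \cdots \ell_d\, f_{d+1}$ of $v$ (legitimate because $d \geq 1$; for $d = 1$ it is simply $h_1 f_2$). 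On one hand $v = f_1 X h_{d+1}$, so $uvw = (uf_1)\,X\,(h_{d+1}w)$. On the other hand $X$ is a concatenation of words of $K$ using at least $d$ of them (the word $h_1$, plus at least one word from each of $\ell_2, \dots, \ell_d$), so $X \in K^{s}$ for some $s \geq d$; since $K$ has synchronization delay $d$, hence also synchronization delay $s$, applying \eqref{eq:sfclos:sync} to $(uf_1)\,X\,(h_{d+1}w)$ yields $uf_1 X \in K^{+}$. Therefore $uv = (uf_1 X)\,h_{d+1}$ with $uf_1 X \in K^{+}$ and $h_{d+1} \in K$, which exhibits a $K$-factorization of $uv$: in particular $uv \in K^{+}$, and by uniqueness the last factor of the $K$-factorization of $uv$ is $h_{d+1} \in H$. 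Finally, $uv$ is a prefix of $uvw$ and both lie in $K^{+}$, so by Fact~\ref{fct:prefdec} the $K$-factorization of $uv$ is an initial segment of that of $uvw$, terminating with a factor lying in $H$. By the observation above, such a factor ends some block $m_j$, so $uv = m_1 \cdots m_q$ for some $q \leq p$, whence $uv \in L^{+}$, as required.

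The one genuine difficulty I anticipate is the choice of the infix $X$: it must be a concatenation of at least $d$ words of $K$, so that the synchronization delay of $K$ can be invoked, and at the same time it must be positioned so that ``$uv$ with its trailing word $h_{d+1}$ deleted'' is exactly $uf_1 X$ --- this is what lets the conclusion $uf_1 X \in K^{+}$ force the last $K$-factor of $uv$ to be the $H$-block $h_{d+1}$. Once $X$ is in hand, everything else is routine bookkeeping with the uniqueness and prefix-compatibility of $K$-factorizations supplied by Fact~\ref{fct:prefdec}.
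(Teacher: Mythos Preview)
Your proof is correct. The prefix-code part is essentially the same as the paper's. For the synchronization delay, however, the paper takes a shorter route: it simply splits $v = xy$ with $x \in L^{d}$ and $y \in L$, observes that $x \in K^{n}$ for some $n \geq d$, applies the delay of $K$ to get $ux \in K^{+}$, and then concludes in one line via
\[
  uv = uxy \in K^{+}\,(K\setminus H)^{*}H \subseteq K^{*}H \subseteq \bigl((K\setminus H)^{*}H\bigr)^{+}.
\]
So the paper replaces your explicit factorization-tracking (matching the last $K$-factor of $uv$ to an $H$-block of the fixed $L$-decomposition of $uvw$) by the single regular-language identity $K^{*}H = L^{+}$. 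Your more intricate choice of infix $X = h_1\ell_2\cdots\ell_d f_{d+1}$ is exactly what is needed to make ``$uv$ with $h_{d+1}$ stripped off'' land in $K^{+}$, and it works; but with the paper's decomposition you would not need to strip anything off, since the identity $K^{*}H = L^{+}$ absorbs the tail $y$ directly. In short: the paper's argument is shorter and more algebraic, while yours is more explicit and self-contained (you never invoke $K^{*}H = L^{+}$, which the paper uses without proof).
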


\begin{proof}
  We first verify that $(K \setminus H)^*H$ is a prefix code. Clearly, $(K \setminus H)^*H \subseteq A^+$ since $H \subseteq K \subseteq A^+$. Hence, we have to show that $(K \setminus H)^*H \cap (K \setminus H)^*HA^+ = \emptyset$. Assume by contradiction that there exists $w\in(K \setminus H)^*H \cap (K \setminus H)^*HA^+ $. In particular, we have $w \in (K \setminus H)^*H\subseteq K^*$. Since~$K$ is a prefix code, $w$ admits a \emph{unique} decomposition $w = w_1 \cdots w_n$ with $w_1,\dots,w_n \in K$. Since $w \in (K \setminus H)^*H$, the factor $w_n$ is the only one to be in~$H$ among all the~$w_i$'s. However, since $w \in (K \setminus H)^*HA^+$, the first property in Fact~\ref{fct:prefdec} implies that one of the factors $w_i$ for $i \leq n-1$ must belong to $H$. This is a contradiction. Therefore, $(K \setminus H)^*H$ is a prefix code.

  It remains to show that $(K \setminus H)^*H$ has synchronization delay $d+1$. Let $u,v,w \in A^*$ such that $uvw \in ((K \setminus H)^*H)^+$ and $v \in ((K \setminus H)^*H)^{d+1}$. We prove that $uv \in ((K \setminus H)^*H)^+$. Clearly $v = xy$ with $x \in ((K \setminus H)^*H)^{d}$ and $y \in (K \setminus H)^*H$. Observe that $x \in K^{n}$ for some $n \geq d$.  Hence, since $uxyw = uvw \in K^+$ and $K$ has synchronization delay $d$, it follows that $ux \in K^+$. Consequently $uv = uxy \in K^+(K \setminus H)^*H$, whence $uv \in K^*H\subseteq ((K \setminus H)^*H)^+$. This concludes the proof.
\end{proof}

\subsection{Definition}

We now define the operator $\Cs \mapsto \bsdp{\Cs}$. The definition involves two additional notions. First, we consider \emph{disjoint union}. Two languages $K,L \subseteq A^*$ are \emph{disjoint} if $K \cap L =\emptyset$. In this case, we write $K \uplus L$ for $K \cup L$ in order to emphasize disjointedness. Additionally, we consider \emph{unambiguous concatenation}. Given two languages $K,L \subseteq A^*$, their concatenation $KL$ is \emph{unambiguous} when every word $w \in KL$ admits a \emph{unique} decomposition witnessing this membership: if $u,u' \in K$, $v,v' \in L$ and $uv = u'v'$, then $u = u'$ and $v = v'$.


Let \Cs be some class of languages. We write \bsdp{\Cs} for the least class containing $\emptyset$ and $\{a\}$ for every $a \in A$, and which is closed under the following properties:
\begin{itemize}
  \item \textbf{Intersection with \Cs:} if $K \in \bsdp{\Cs}$ and $L \in \Cs$, then $K \cap L \in \bsdp{\Cs}$.
  \item \textbf{Disjoint union:} if $K,L \in \bsdp{\Cs}$ are disjoint then $K \uplus L \in \bsdp{\Cs}$.
  \item \textbf{Unambiguous concatenation:} if $K,L \in \bsdp{\Cs}$ and $KL$ is unambiguous, then $KL \in \bsdp{\Cs}$.
  \item \textbf{Kleene star for prefix codes of bounded synchronization delay:} if $K \in \bsdp{\Cs}$ is a prefix code of bounded synchronization delay, then $K^* \in \bsdp{\Cs}$.
\end{itemize}

An important special case is when the input class \Cs is the trivial \vari $\stzer = \{\emptyset,A^*\}$. In this case, \bsdp{\stzer} is the original class \bsd of Schützenberger~\cite{schutzbd}. His definition is slightly different, as it does not require unions to be disjoint, nor concatenations to be unambiguous. Yet, the two definitions are equivalent.

\begin{exa}\label{ex:bsd}
  Let us present two examples. Let $A = \{a,b\}$.
  \begin{itemize}
    \item We have $(ab)^* \in \bsdp{\stzer}$. Indeed, $\{a\},\{b\} \in \bsdp{\stzer}$ which implies that $\{ab\}\in \bsdp{\stzer}$ by closure under unambiguous concatenation. Since $\{ab\}$ is a prefix code of bounded synchronization delay (the delay is 1), we get $(ab)^* \in \bsdp{\stzer}$.
    \item We have $(aa+bb)^* \in \bsdp{\md}$ (on the other hand,  $(aa+bb)^* \not\in \bsdp{\stzer}$, which can be verified using Theorem~\ref{thm:sfcarac}). Clearly, $\{a\}$ and $\{b\}$ are prefix codes of bounded synchronization delay. Hence, $a^*,b^* \in \bsdp{\md}$. Moreover, since $(AA)^* \in \md$, we get $(aa)^*,(bb)^* \in \bsdp{\md}$ by closure under intersection with~\md. We then use unambiguous concatenation to get $(aa)^+(bb)^+ = aa(aa)^*bb(bb)^* \in \bsdp{\md}$. This is a prefix code of bounded synchronization delay. Hence, $((aa)^+(bb)^+)^* \in \bsdp{\md}$. Using unambiguous concatenation again, this yields $(bb)^*((aa)^+(bb)^+)^*(aa)^* \in \bsdp{\md}$. One may now verify that $(aa+bb)^* =  (bb)^*((aa)^+(bb)^+)^*(aa)^* \in \bsdp{\md}$.

  \end{itemize}
\end{exa}

\begin{rem}\label{rem:intersections}
  We do not explicitly require in the definition that \bsdp{\Cs} contains \Cs. Yet, this is a simple consequence of the definition. Clearly, $A^* \in \bsdp{\Cs}$ since $A \in \bsdp{\Cs}$ is a prefix code of bounded synchronization delay. Hence, $L = A^* \cap L \in \bsdp{\Cs}$ for every $L \in \Cs$.

  On the other hand, it is crucial to allow intersection with languages in \Cs. If we only require the inclusion $\Cs \subseteq \bsdp{\Cs}$ in the definition, we would end up with a weaker operator (which, therefore, does not correspond to star-free closure in general). For example, consider the class \md of modulo languages. Assume that $A = \{a,b\}$. As observed in Example~\ref{ex:bsd}, $(aa)^* \in \bsdp{\md}$.  On the other hand, one may verify that $(aa)^*$ cannot be built from the languages of\/ \md using only  union, concatenation and Kleene star for prefix codes of bounded synchronization delay.
\end{rem}

It is not immediate that the classes \bsdp{\Cs} have robust closure properties, even when this is the case for the input class \Cs. Actually, it is not even clear whether \bsdp{\Cs} is a lattice, since closure under intersection is not required in the definition, and closure under union is restricted. However, \bsdp{\Cs} \emph{does} have robust properties: if \Cs is a \vari, then \bsdp{\Cs} is a \vari closed under concatenation. This follows from Proposition~\ref{prop:vari} and the following theorem, which states the correspondence with star-free closure.

\begin{thm}\label{thm:bsdcar}
  Let \Cs be a \vari. Then, $\bsdp{\Cs} = \sfp{\Cs}$.
\end{thm}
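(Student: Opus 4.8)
The plan is to prove the two inclusions separately, by completely different methods. For $\bsdp{\Cs} \subseteq \sfp{\Cs}$, the approach is a straightforward induction on the construction of a language in $\bsdp{\Cs}$. The base cases ($\emptyset$ and the singletons $\{a\}$) are immediate since these belong to $\sfp{\Cs}$ by definition. For the inductive cases, intersection with a language of $\Cs$, disjoint union, and unambiguous concatenation all follow at once: $\sfp{\Cs}$ is a Boolean algebra closed under concatenation and contains $\Cs$, so it is closed under the (unrestricted) versions of all three operations, hence a fortiori under the restricted versions. The only case requiring real work is Kleene star applied to a prefix code $K$ of bounded synchronization delay: we must show that if $K \in \sfp{\Cs}$ is such a code, then $K^* \in \sfp{\Cs}$. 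This is the heart of the ``easy'' inclusion and is essentially Schützenberger's original argument, adapted to the generalized setting. The key idea is that bounded synchronization delay lets one express membership in $K^*$ by a first-order-style condition: a word $w$ lies in $K^*$ iff it can be ``tiled'' by factors of $K$, and the synchronization delay bound $d$ guarantees that this global tiling is determined locally, so that $K^*$ can be written using Boolean operations and concatenations involving $K$, $K^d$, and the (star-free, since finite-index-related) languages that mark the synchronization points. Concretely, one shows $K^*$ equals a Boolean combination of languages of the form $A^* K^d A^*$ intersected with conditions forcing consistency of overlapping $K^d$-windows; each such building block is in $\sfp{\Cs}$ because $K \in \sfp{\Cs}$ and $\sfp{\Cs}$ is closed under concatenation and Boolean operations. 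Facts~\ref{fct:prefdec}, \ref{fct:sprefsub} and \ref{fct:sprefnest} are the combinatorial tools that make this decomposition valid.

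The converse inclusion $\sfp{\Cs} \subseteq \bsdp{\Cs}$ is the genuinely hard direction, and — as the text explicitly warns — it is intertwined with the algebraic characterization developed in Section~\ref{sec:carac}. I would not attempt a direct syntactic induction on $\sfp{\Cs}$-expressions, since that would require building languages of $\bsdp{\Cs}$ that survive arbitrary alternations of complement and concatenation, which the restricted operations of $\bsdp{\Cs}$ cannot obviously handle. Instead the plan is to go through monoids. First, reduce to showing: for every $\sfp{\Cs}$-morphism $\alpha : A^* \to M$ and every $s \in M$, the language $\alpha^{-1}(s)$ belongs to $\bsdp{\Cs}$. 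Proposition~\ref{prop:aper} gives us a $\Cs$-morphism $\eta : A^* \to N$ such that $\eta(u)$ idempotent implies $(\alpha(u))^{\omega+1} = (\alpha(u))^{\omega}$ — this is the ``aperiodicity of $\Cs$-orbits'' condition in disguise. The task is then to reconstruct each $\alpha^{-1}(s)$ inside $\bsdp{\Cs}$ using the factorization machinery (this is where the results of Section~\ref{sec:carac}, on $\Cs$-orbits and their aperiodicity, are invoked as black boxes per the problem's ground rules). The standard route is a Krohn–Rhodes / Schützenberger-style induction on $|M|$ (or on the $\Js$-order of $M$): pick a minimal ideal or a $\Js$-class, use the pair $(\alpha,\eta)$ to isolate a prefix code $K$ (built from the letters and lower-complexity pieces) whose iteration $K^*$ captures the ``stable'' part of the behavior, verify via the aperiodicity condition coming from Proposition~\ref{prop:aper} that $K$ has bounded synchronization delay, and then assemble $\alpha^{-1}(s)$ from $K^*$, intersections with languages recognized by $\eta$ (which lie in $\Cs$, hence are legal by the ``intersection with $\Cs$'' clause), disjoint unions, and unambiguous concatenations with strictly simpler pieces handled by the induction hypothesis.

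The main obstacle, by a wide margin, is this second inclusion — specifically, producing the prefix code of bounded synchronization delay at each inductive step and proving that the bounded-delay property really does follow from the idempotent-power stabilization guaranteed by Proposition~\ref{prop:aper}. This requires a careful analysis of how a word decomposes relative to the $\eta$-images of its factors: one wants that once a segment maps under $\eta$ to an idempotent and is long enough (a multiple of the ``orbit'' parameter), the $\alpha$-value of the whole word is insensitive to the internal structure of that segment, which is precisely what lets the synchronization window have bounded size. Making the disjoint-union and unambiguous-concatenation clauses suffice (rather than their unrestricted versions) also demands some care: one must always decompose along genuine prefix-code boundaries so that Fact~\ref{fct:prefdec} guarantees uniqueness of factorization, ensuring the concatenations one writes down are unambiguous and the unions one takes are disjoint. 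Since the problem statement permits me to assume everything proved earlier, including the forthcoming algebraic characterization of Section~\ref{sec:carac}, the cleanest presentation packages the hard direction as: (i) $\sfp{\Cs}$-morphism $\Rightarrow$ all $\Cs$-orbits aperiodic (via Proposition~\ref{prop:aper}); (ii) all $\Cs$-orbits aperiodic $\Rightarrow$ every recognized language is in $\bsdp{\Cs}$ (the substantive induction, using bounded-synchronization-delay codes); and the easy direction plus Proposition~\ref{prop:vari} close the circle of equivalences.
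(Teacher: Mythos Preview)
Your overall strategy matches the paper's: prove $\bsdp{\Cs} \subseteq \sfp{\Cs}$ directly by induction on the construction, with only the Kleene-star case requiring work, and defer $\sfp{\Cs} \subseteq \bsdp{\Cs}$ to the algebraic characterization in Section~\ref{sec:carac}, going through the chain $\sfp{\Cs} \Rightarrow$ aperiodic $\Cs$-orbits $\Rightarrow \bsdp{\Cs}$. That packaging is exactly how the paper does it.

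A couple of points where your details diverge. For the easy direction, the paper does not argue via ``consistency of overlapping $K^d$-windows''; it writes down two explicit languages $H$ and $G$ built from $K^h$ (for $0 \leq h \leq d+1$) using Boolean operations and concatenation, and proves $K^* = G$ by a short induction on word length. Also, Facts~\ref{fct:sprefsub} and~\ref{fct:sprefnest} are \emph{not} used here --- they are the tools for the hard direction, where one must manufacture new bounded-delay prefix codes from old ones. The easy direction uses only the raw definition of synchronization delay.

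For the hard direction, your proposed induction on $|M|$ or on the $\Js$-order (Krohn--Rhodes style) is plausible but not what the paper does. The paper's Lemma~\ref{lem:sfclos:fromapertostar} keeps $M$ fixed and instead inducts on three parameters tied to a running prefix code $P$ and a \bsdp{\Cs}-partition $\Hb$ of it: the size of $\alpha(P^+)$, the size of $\Hb$, and the size of $s \cdot \alpha(P^*)$ for an auxiliary element $s$. The base case is when $s$ is ``$\Hb$-stable'', and there the aperiodicity of $\Cs$-orbits is invoked directly; otherwise one strips off a single piece $H \in \Hb$ and recurses on either $(P \setminus H)^*$ or $((P \setminus H)^*H)^*$ --- this is precisely where Facts~\ref{fct:sprefsub} and~\ref{fct:sprefnest} earn their keep. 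So the induction lives on the combinatorics of codes and partitions, not on the monoid's ideal structure.
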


The difficult direction in Theorem~\ref{thm:bsdcar} is the inclusion $\sfp{\Cs} \subseteq \bsdp{\Cs}$. We rely on an indirect approach based on the generic algebraic characterization of star-free closure, which we use as an intermediary result to prove this implication. In fact, the proof of the difficult inclusion is intertwined with the one of~the characterization itself. Hence, we postpone it to the next section. On the other hand, we prove the easier inclusion $\bsdp{\Cs} \subseteq \sfp{\Cs}$ now.

\begin{proof}[Inclusion $\bsdp{\Cs} \subseteq \sfp{\Cs}$ in Theorem~\ref{thm:bsdcar}]
  We fix a \vari \Cs and prove the inclusion $\bsdp{\Cs} \subseteq \sfp{\Cs}$. This amounts to proving that \sfp{\Cs} satisfies all properties in the definition of \bsdp{\Cs}. In all cases but one, this is immediate by definition of \sfp{\Cs}. Indeed, we have $\emptyset\in\sfp\Cs$ and $\{a\} \in \sfp{\Cs}$ for every $a \in A$. Moreover, \sfp{\Cs} is closed under union, intersection and concatenation by definition (this includes intersection with languages of \Cs since $\Cs \subseteq \sfp{\Cs}$). It remains to show that $\sfp{\Cs}$ is closed under Kleene star applied to a prefix code of bounded synchronization delay.

  We let $K \in \sfp{\Cs}$ be such a prefix code, and we let $d\geq 1$ be its synchronization delay. We have to show that $K^* \in \sfp{\Cs}$.
  Consider the following languages:
  \begin{align*}
	H &= \biggl(A^*K^d \cap \Bigl(A^* \setminus \bigl(A^*K^{d+1} \cup \bigcup_{0 \leq h \leq d} K^h\bigr)\Bigr) \biggr) A^*.\\
	G &= \Bigl(\bigcup_{0 \leq h \leq d-1} K^h\Bigr) \cup \Bigl(A^*K^d \cap (A^* \setminus H)\Bigr).
  \end{align*}

  Clearly, $H \in \sfp{\Cs}$ since both $K$ and $A^*$ belong to \sfp{\Cs}, which is closed under Boolean operations and concatenation. Therefore $G\in\sfp\Cs$ as well. We show that $K^* = G$, which will entail that $K^*\in\sfp{\Cs}$, concluding the proof of $\bsdp{\Cs} \subseteq \sfp{\Cs}$ in Theorem~\ref{thm:bsdcar}.

  We first show that $K^*\cap H=\emptyset$. Note that this is the only part of the proof  where we use the hypothesis that $K$ has synchronization delay $d$.

  \begin{fact}\label{fct:sfclos:delay}
	We have $K^* \subseteq A^* \setminus H$.
  \end{fact}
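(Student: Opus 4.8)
The plan is to argue by contradiction: suppose some word $w$ lies in $K^* \cap H$ and derive an absurdity from the three conditions packed into the definition of $H$, together with the hypothesis that $K$ has synchronization delay $d$.

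First I would unfold the membership $w \in H$. By definition of $H$, we may write $w = w'w''$ with $w'' \in A^*$ and $w' \in A^*K^d$, while moreover $w' \notin A^*K^{d+1}$ and $w' \notin \bigcup_{0 \le h \le d} K^h$. Since $w' \in A^*K^d$, fix a factorization $w' = uv$ with $u \in A^*$ and $v \in K^d$. Then $w = uvw''$ and, by assumption, $w \in K^*$; as $v \in K^d$ with $d \ge 1$ and $K \subseteq A^+$, the word $v$, hence $w$, is nonempty, so in fact $w = uvw'' \in K^+$.

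Now I would apply the synchronization delay hypothesis~\eqref{eq:sfclos:sync} to $u$, $v$, $w''$: from $uvw'' \in K^+$ and $v \in K^d$ we obtain $uv \in K^+$, that is $w' \in K^+$. Write $w' \in K^m$ for some $m \ge 1$. If $m \le d$, then $w' \in \bigcup_{0 \le h \le d} K^h$, contradicting one of the conditions on $w'$. If instead $m \ge d+1$, then splitting a $K$-factorization $w' = k_1 \cdots k_m$ of $w'$ as $(k_1\cdots k_{m-d-1})(k_{m-d}\cdots k_m)$, with the first group (possibly empty) in $A^*$ and the last $d+1$ factors in $K^{d+1}$, shows $w' \in A^*K^{d+1}$, contradicting the other condition. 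Either way we reach a contradiction, so $K^* \cap H = \emptyset$, i.e. $K^* \subseteq A^* \setminus H$.

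I do not expect a genuine obstacle here: the set $H$ is engineered precisely so that the synchronization delay property upgrades the innocuous suffix condition $w' \in A^*K^d$ to the much stronger $w' \in K^+$, and the remaining two conditions ($w'\notin A^*K^{d+1}$ and $w'\notin\bigcup_{0\le h\le d}K^h$) are tailored to be jointly incompatible with $w' \in K^+$. The only point requiring a little care is checking that $w$ is nonempty, so that membership in $K^*$ can be sharpened to membership in $K^+$; this is exactly where the assumption $d \ge 1$ is used. Note also that the argument uses only property~\eqref{eq:sfclos:sync}, not the prefix code structure of $K$ (which is why this is singled out as the place where synchronization delay is needed).
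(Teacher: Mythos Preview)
Your proof is correct and follows essentially the same approach as the paper. The paper compresses your case analysis on $m$ by first observing $K^* \subseteq A^*K^{d+1} \cup \bigcup_{0 \le h \le d} K^h$ (so the two conditions on $w'$ jointly say $w' \notin K^*$), and then uses synchronization delay exactly as you do to get $w' = uv \in K^+$, yielding the contradiction directly.
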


  \begin{proof}
	We have to show that $K^*\cap H=\emptyset$. Since $K^* \subseteq A^*K^{d+1} \cup \bigcup_{0 \leq h \leq d} K^h$, we have,
	\[
	  A^* \setminus \bigl(A^*K^{d+1} \cup \bigcup_{0 \leq h \leq d} K^h\bigr) \subseteq A^* \setminus K^*.
	\]
	By definition of $H$, this yields $H \subseteq \left(A^*K^d \cap \left(A^* \setminus K^*\right)\right) A^*$. Therefore, it suffices to show that $K^*\cap \left(A^*K^d \cap \left(A^* \setminus K^*\right)\right) A^*=\emptyset$, which follows immediately from the hypothesis that $K$ has synchronization delay~$d$.
  \end{proof}

  It remains to show that $K^*=G$. We start with the left to right inclusion. Recall that $G = (\bigcup_{0 \leq h \leq d-1} K^h) \cup \left(A^*K^d \cap \left(A^* \setminus H\right)\right)$. Consider $x \in K^*$. If $x \in K^h$ for $h \leq d-1$, it is immediate that $x \in G$. Otherwise, we have $x \in A^*K^d$ and since $x \in K^*$, we know that $x \in A^* \setminus H$ by Fact~\ref{fct:sfclos:delay}.
  This implies that $x \in A^*K^d \cap \left(A^* \setminus H\right)\subseteq G$, finishing the proof for this inclusion.

  \medskip

  For  the right to left inclusion, consider $x \in G$. We show that $x \in K^*$. If $x\in \bigcup_{0 \leq h \leq d-1} K^h$, this is immediate. Otherwise, $x\in A^*K^d \cap \left(A^* \setminus H\right)$. We proceed by induction on the length of~$x$. By hypothesis, $x \in A^*K^d$ and $x \not\in H$. By definition of $H$, this implies that,
  \[
    x \in A^*K^{d+1} \cup \bigcup_{0 \leq h \leq d} K^h.
  \]
  If $x \in \bigcup_{0 \leq h \leq d} K^h$, it is immediate that $x \in K^*$, which finishes the proof. Otherwise, $x \in A^*K^{d+1}$ which means that $x = x'y$ with $x' \in A^*K^d$ and $y \in K$. Since $\varepsilon \not\in K$ (as $K$ is a prefix code), we have $y \neq \veps$ which implies that $|x'| < |x|$. Moreover, since $x \in A^* \setminus H$ and $x'$ is a prefix of $x$, one may verify from the definition of $H$ that $x' \in A^* \setminus H$ as well. Altogether, we have $x' \in A^*K^d \cap \left(A^* \setminus H\right)$ and $|x'| < |x|$. Therefore by induction, $x' \in K^*$. Finally, since $y \in K$, we get $x = x'y \in K^*K \subseteq K^*$, which concludes the proof.
\end{proof}

\begin{rem}
  In the above proof, we used Fact~\ref{fct:sfclos:delay} only to establish that $K^*\subseteq G$. This inclusion relies on the assumption that $K$ has bounded synchronization delay (clearly, relying on this hypothesis to show that $K^*$ belongs to $\sfp{\Cs}$ is mandatory). On the other hand, the inclusion $G\subseteq K^*$ is independent from this hypothesis.
\end{rem}


\section{Algebraic characterization}
\label{sec:carac}
We present a generic algebraic characterization of the classes \sfp{\Cs} built with star-free closure from a \vari \Cs. It yields an effective reduction from \sfp{\Cs}-membership to \Cs-separation (here, we~mean reduction in the Turing sense: we get a generic algorithm for \sfp{\Cs}-membership that uses an oracle for \Cs-separation). Moreover, we use this characterization to prove the missing inclusion $\sfp{\Cs} \subseteq \bsdp{\Cs}$ in Theorem~\ref{thm:bsdcar}.

We characterize the languages in \sfp{\Cs} by a property of their syntactic morphisms. It generalizes Schützenberger's characterization of star-free languages as those whose syntactic monoid is \emph{aperiodic}~\cite{schutzsf}. First, with every class \Cs and every morphism $\alpha: A^*\to M$, we associate a relation on~$M$: the \emph{\Cs-pair relation for $\alpha$} (it was defined in~\cite{PZ:generic19}). Then, we use this relation to identify special subsets of $M$, which happen to be monoids when \Cs is a \vari: \emph{the \Cs-orbits of $\alpha$}. Finally, the characterization states that for every \vari \Cs, a language belongs to \sfp{\Cs} if and only if all \Cs-orbits of its syntactic morphism are aperiodic monoids. Let us now define \Cs-pairs.

\subsection{\Cs-Pairs}

Consider a class \Cs and a morphism $\alpha: A^* \to M$ into a finite monoid. We define the \Cs-pair relation for $\alpha$ on $M$ as follows. Let $(s,t) \in M^2$. We say that,
\begin{equation}\label{eq:cpairs}
  \text{$(s,t)$ is a \emph{\Cs-pair} (for $\alpha$) if and only if $\alpha\inv(s)$ is \emph{not} \Cs-separable from $\alpha\inv(t)$}.
\end{equation}

\begin{rem}
  While we often make this implicit, being a \Cs-pair depends on $\alpha$.
\end{rem}

By definition, the set of \Cs-pairs for $\alpha$ is finite: it is a subset of $M^2$. Moreover, having a \Cs-separation algorithm in hand is clearly enough to compute all \Cs-pairs associated to an input morphism $\alpha$. We complete the definition with some properties of \Cs-pairs. A simple and useful one is that the \Cs-pair relation is reflexive when the morphism $\alpha$ is surjective (which is always the case in practice). Moreover, it is symmetric when~\Cs is closed under complement (this is the case for all classes considered in the paper). On the other hand, the \Cs-pair relation is \emph{not} transitive in general, as the following example shows.

\begin{exa}\label{exa:cpairnottrans}
   Let $A=\{a,b\}$ and $\Cs=\at$ be the least Boolean algebra containing $A^*aA^*$ and $A^*bA^*$. Let~$M$ be the monoid $\{1,a,b,0\}$ where $1$ acts as an identity element, $0$ as an absorbing element, and the rest of the multiplication is given by $aa=ab=ba=bb=0$. Let $\alpha:A^*\to M$ be the morphism defined by $\alpha(a)=a$ and $\alpha(b)=b$. We have $\alpha\inv(a)=\{a\}$. Therefore, one may verify that any language of \at containing $\alpha\inv(a)$ also contains $a^{+}$, and therefore intersects $\alpha\inv(0)$ (which is the set of words of length at least~2). Hence, $(a,0)$ is an \at-pair. Likewise, $(0,b)$ is an \at-pair. However, $(a,b)$ is not an \at-pair, since the language $a^+\in\at$ separates $\alpha\inv(a)=\{a\}$ from $\alpha\inv(b)=\{b\}$. This example shows that the \Cs-pair relation is not transitive in general.
\end{exa}

We now provide a useful characterization of \Cs-pairs via \Cs-morphisms in the special case when~\Cs is a \vari, which, again, is the only case that we consider here.

\begin{lem}\label{lem:cmorph}
  Let \Cs be a \vari and let $\alpha: A^* \to M$ be a morphism into a finite monoid. The two following properties hold:
  \begin{enumerate}
    \item\label{item:cmoprh1} For every \Cs-morphism $\eta: A^* \to N$ and every \Cs-pair $(s,t) \in M^2$ for $\alpha$, there exist $u,v \in A^*$ such that $\eta(u) = \eta(v)$, $\alpha(u) = s$ and $\alpha(v) = t$.
    \item\label{item:cmoprh2} There exists a \Cs-morphism $\eta: A^* \to N$ such that for all $u,v \in A^*$, if $\eta(u) = \eta(v)$, then $(\alpha(u),\alpha(v))$ is a \Cs-pair for $\alpha$.
  \end{enumerate}
\end{lem}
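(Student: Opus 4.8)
The plan is to prove the two assertions separately. Both follow the same idea: an absence of collisions under $\eta$ can always be converted into an explicit \Cs-separator built from a language recognized by $\eta$, so the contrapositive of each statement is easy to establish.

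For assertion~\eqref{item:cmoprh1}, I would fix a \Cs-morphism $\eta : A^* \to N$ and a \Cs-pair $(s,t)$ for $\alpha$, and argue by contradiction. Suppose $\eta(u) \neq \eta(v)$ for every $u \in \alpha\inv(s)$ and every $v \in \alpha\inv(t)$; equivalently, the subsets $\eta(\alpha\inv(s))$ and $\eta(\alpha\inv(t))$ of $N$ are disjoint. Set $F = \eta(\alpha\inv(s)) \subseteq N$ and $K = \eta\inv(F)$. Since $K$ is recognized by the \Cs-morphism $\eta$, we have $K \in \Cs$. By construction $\alpha\inv(s) \subseteq K$, and $K \cap \alpha\inv(t) = \emptyset$ because $F \cap \eta(\alpha\inv(t)) = \emptyset$. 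Hence $K$ \Cs-separates $\alpha\inv(s)$ from $\alpha\inv(t)$, contradicting the hypothesis that $(s,t)$ is a \Cs-pair. Therefore the desired $u,v$ exist.

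For assertion~\eqref{item:cmoprh2}, I would construct $\eta$ directly. Since $M^2$ is finite, there are finitely many pairs $(s,t) \in M^2$ that are \emph{not} \Cs-pairs for $\alpha$; for each such pair fix a language $K_{s,t} \in \Cs$ that \Cs-separates $\alpha\inv(s)$ from $\alpha\inv(t)$ (this exists by definition of a non-\Cs-pair; if $\alpha\inv(s) = \emptyset$ one may take $K_{s,t} = \emptyset \in \Cs$). As \Cs is a \vari of regular languages, Proposition~\ref{prop:genocm} yields a \Cs-morphism $\eta : A^* \to N$ recognizing all of these finitely many languages $K_{s,t}$. Now suppose $u,v \in A^*$ satisfy $\eta(u) = \eta(v)$, and let $s = \alpha(u)$ and $t = \alpha(v)$. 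If $(s,t)$ were not a \Cs-pair, then $u \in \alpha\inv(s) \subseteq K_{s,t}$, while $v \in \alpha\inv(t)$ forces $v \notin K_{s,t}$; but $K_{s,t}$ is recognized by $\eta$ and $\eta(u) = \eta(v)$, so $u \in K_{s,t} \iff v \in K_{s,t}$, a contradiction. Hence $(\alpha(u),\alpha(v))$ is a \Cs-pair, as required.

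Both arguments are short, and the only external ingredient is Proposition~\ref{prop:genocm}, used in assertion~\eqref{item:cmoprh2} to merge the finitely many separators $K_{s,t}$ into a single \Cs-morphism. The only points worth keeping in mind are minor: a \vari is closed under complement, so ``being \Cs-separable'' is symmetric and there is no issue with the direction of separation; and the degenerate cases where $\alpha\inv(s)$ or $\alpha\inv(t)$ is empty are handled trivially since $\emptyset \in \Cs$. I do not anticipate a genuine obstacle.
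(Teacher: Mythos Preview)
Your proposal is correct and follows essentially the same approach as the paper. For assertion~\eqref{item:cmoprh1} the paper also uses $F = \eta(\alpha\inv(s))$ and the language $\eta\inv(F) \in \Cs$, arguing directly rather than by contradiction; for assertion~\eqref{item:cmoprh2} the paper's argument is identical to yours, down to the use of Proposition~\ref{prop:genocm} and the contrapositive.
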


\begin{proof}
  Let us start with the first assertion. Let $\eta: A^* \to N$ be a \Cs-morphism and let $(s,t) \in M^2$ be a \Cs-pair for $\alpha$. Let $F = \eta(\alpha\inv(s) )\subseteq N$. We have $\eta\inv(F) \in \Cs$ since $\eta$ is a \Cs-morphism. Moreover, it is immediate from the definition of $F$ that $\alpha\inv(s) \subseteq \eta\inv(F)$. Since $(s,t)$ is a \Cs-pair (meaning that $\alpha\inv(s)$ cannot be separated from $\alpha\inv(t)$ by a language in \Cs), it follows that $\eta\inv(F) \cap \alpha\inv(t) \neq \emptyset$. This yields $v \in A^*$ such that $\eta(v) \in F$ and $\alpha(v) = t$. Finally, since $\eta(v)\in F=\eta(\alpha\inv(s))$, we get $u \in A^*$ such that $\eta(u) = \eta(v)$ and $\alpha(u) = s$. This concludes the proof of the first~assertion.

  Let us turn to the second assertion. Let $P \subseteq M^2$ be the set of all pairs $(s,t) \in M^2$ which are \emph{not} \Cs-pairs. For every $(s,t) \in P$, there exists $K_{s,t} \in \Cs$ separating $\alpha\inv(s)$ from $\alpha\inv(t)$. Proposition~\ref{prop:genocm} yields a \Cs-morphism $\eta: A^* \to N$ such that every language $K_{s,t}$ for $(s,t) \in P$ is recognized by $\eta$. It remains to prove that for every $u,v \in A^*$, if $\eta(u) = \eta(v)$, then $(\alpha(u),\alpha(v))$ is a \Cs-pair. We prove the contrapositive. Assuming that $(\alpha(u),\alpha(v))$ is a \emph{not} a \Cs-pair, we show that $\eta(u) \neq \eta(v)$. By hypothesis, $(\alpha(u),\alpha(v)) \in P$, which means that $K_{\alpha(u),\alpha(v)} \in \Cs$ is defined and separates $\alpha\inv(\alpha(u))$ from $\alpha\inv(\alpha(v))$. In particular, $u \in K_{\alpha(u),\alpha(v)}$ and $v \not\in K_{\alpha(u),\alpha(v)}$. Since $K_{\alpha(u),\alpha(v)}$ is recognized by $\eta$, this implies that $\eta(u) \neq \eta(v)$.
\end{proof}

Finally, we prove that when \Cs is a \vari of regular languages, the \Cs-pair relation is compatible with multiplication.

\begin{lem}\label{lem:mult}
  Let \Cs be a \vari and let $\alpha: A^* \to M$ be a morphism into a finite monoid. If $(s_1,t_1),(s_2,t_2) \in M^2$ are \Cs-pairs, then $(s_1s_2,t_1t_2)$ is a \Cs-pair as well.
\end{lem}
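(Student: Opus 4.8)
The plan is to reduce the whole statement to Lemma~\ref{lem:cmorph}, which converts the (non-transitive, somewhat opaque) $\Cs$-pair relation into a condition about a single, well-chosen $\Cs$-morphism. Concretely, I would begin by invoking the second assertion of Lemma~\ref{lem:cmorph} to fix a $\Cs$-morphism $\eta: A^* \to N$ with the property that, for all $u,v \in A^*$, $\eta(u) = \eta(v)$ implies that $(\alpha(u),\alpha(v))$ is a \Cs-pair for $\alpha$. The entire argument then amounts to exhibiting, \emph{for this particular $\eta$}, words $u$ and $v$ such that $\eta(u) = \eta(v)$, $\alpha(u) = s_1 s_2$ and $\alpha(v) = t_1 t_2$; once these are in hand, the chosen property of $\eta$ immediately yields that $(s_1 s_2, t_1 t_2)$ is a \Cs-pair, which is the claim.

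To construct these words, I would feed the \emph{same} morphism $\eta$ into the first assertion of Lemma~\ref{lem:cmorph}. Since $(s_1,t_1)$ is a \Cs-pair, that assertion provides $u_1,v_1 \in A^*$ with $\eta(u_1) = \eta(v_1)$, $\alpha(u_1) = s_1$ and $\alpha(v_1) = t_1$; likewise, since $(s_2,t_2)$ is a \Cs-pair, it provides $u_2,v_2 \in A^*$ with $\eta(u_2) = \eta(v_2)$, $\alpha(u_2) = s_2$ and $\alpha(v_2) = t_2$. Setting $u = u_1 u_2$ and $v = v_1 v_2$, the fact that $\eta$ and $\alpha$ are morphisms gives $\eta(u) = \eta(u_1)\eta(u_2) = \eta(v_1)\eta(v_2) = \eta(v)$, while $\alpha(u) = s_1 s_2$ and $\alpha(v) = t_1 t_2$, exactly as required. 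In fact this reasoning establishes the cleaner statement that, relative to the fixed $\eta$ of the second assertion, ``$(s,t)$ is a \Cs-pair for $\alpha$'' is equivalent to ``there exist $u,v \in A^*$ with $\eta(u) = \eta(v)$, $\alpha(u) = s$ and $\alpha(v) = t$'' (one direction being the first assertion of Lemma~\ref{lem:cmorph}, the other its second assertion), and this reformulation is visibly stable under componentwise multiplication.

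I do not expect a genuine obstacle: all the work is already encapsulated in Lemma~\ref{lem:cmorph}, and the only delicate point is the bookkeeping — one must use the second assertion to \emph{pin down} the morphism $\eta$ first, and then apply the first assertion to that very $\eta$, rather than to two unrelated \Cs-morphisms. It is worth recording why a more direct approach is unappealing: starting from a hypothetical $K \in \Cs$ separating $\alpha\inv(s_1 s_2)$ from $\alpha\inv(t_1 t_2)$ and trying to manufacture separators for $(s_1,t_1)$ and $(s_2,t_2)$ by taking left or right quotients of $K$ fails, because such quotients need not separate the relevant preimages; routing through \Cs-morphisms is precisely what makes concatenating the two pairs of witnesses legitimate.
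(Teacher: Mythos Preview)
Your proposal is correct and follows essentially the same approach as the paper's proof: invoke the second assertion of Lemma~\ref{lem:cmorph} to fix the \Cs-morphism $\eta$, then apply the first assertion to that same $\eta$ for each of the two \Cs-pairs, and concatenate the resulting witnesses. Your additional remarks about the equivalence and about why a direct quotient-based argument is unappealing are accurate and add useful context, but the core argument matches the paper's exactly.
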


\begin{proof}
  Item~\ref{item:cmoprh2} of Lemma~\ref{lem:cmorph} yields a \Cs-morphism $\eta: A^* \to N$ such that for all $u,v \in A^*$, if $\eta(u) = \eta(v)$, then $(\alpha(u),\alpha(v))$ is a \Cs-pair. Let $(s_1,t_1),(s_2,t_2) \in M^2$ be \Cs-pairs. Since $\eta$ is a \Cs-morphism, it follows from Item~\ref{item:cmoprh1} of Lemma~\ref{lem:cmorph} that there exist $u_i,v_i \in A^*$ for $i = 1,2$ such that $\eta(u_i) = \eta(v_i)$, $\alpha(u_i) = s_i$ and $\alpha(v_i) = t_i$. This yields $\eta(u_1u_2) = \eta(v_1v_2)$, $\alpha(u_1u_2) = s_1s_2$ and $\alpha(v_1v_2) = t_1t_2$. Hence, $(s_1s_2,t_1t_2)$ is a \Cs-pair by definition of $\eta$.
\end{proof}

\subsection{\Cs-orbits and \Cs-kernels}

Consider a class \Cs and a morphism $\alpha: A^* \to M$ into a finite monoid. For every idempotent $e\in E(M)$ we define the \emph{\Cs-orbit of $e$} (for $\alpha$) as the set consisting of all elements $ese\in M$ such that $(e,s)\in M^2$ is a \Cs-pair (for $\alpha$). More generally, the \emph{\Cs-orbits for $\alpha$} are all the subsets of $M$ which are the \Cs-orbit of some idempotent $e \in E(M)$. When \Cs is a \vari, we have the following simple corollary of Lemma~\ref{lem:mult}.

\begin{lem} \label{lem:orbitmono}
  Let \Cs be a \vari and let $\alpha: A^* \to M$ be a surjective morphism into a finite monoid. For every $e \in E(M)$, the \Cs-orbit of $e$ for $\alpha$ is a subsemigroup of $M$. Moreover, it is a monoid whose neutral element is $e$.
\end{lem}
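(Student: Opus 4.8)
The plan is to unwind the definition---the \Cs-orbit of $e$ is the set $\{ese\mid (e,s)\text{ is a \Cs-pair for }\alpha\}$---and derive both assertions directly from Lemma~\ref{lem:mult} together with reflexivity of the \Cs-pair relation. So the first thing I would record is that the \Cs-pair relation is reflexive here: because $\alpha$ is surjective, $\alpha\inv(s)$ is nonempty for every $s\in M$, and a nonempty language is never \Cs-separable from itself (a separator would be both a superset of and disjoint from $\alpha\inv(s)$). Hence $(r,r)$ is a \Cs-pair for every $r\in M$; in particular $(e,e)$ is one. Applying the definition of the \Cs-orbit with $s=e$ gives $eee=e$, so $e$ belongs to the \Cs-orbit of $e$, which is therefore nonempty.

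Next I would prove closure under multiplication. Take two elements $ese$ and $ete$ of the orbit, where $(e,s)$ and $(e,t)$ are \Cs-pairs. Using $e^2=e$, the product rewrites as $(ese)(ete)=e(sete)e$, so it suffices to show that $(e,sete)$ is a \Cs-pair. I would obtain this from Lemma~\ref{lem:mult} (which applies since \Cs is a \vari) applied to the decompositions $e=e\cdot e\cdot e\cdot e$ and $sete=s\cdot e\cdot t\cdot e$: the factor-wise pairs $(e,s),(e,e),(e,t),(e,e)$ are all \Cs-pairs, hence so is their product $(e^{4},sete)=(e,sete)$. Consequently $e(sete)e\in\{eue\mid (e,u)\text{ is a \Cs-pair}\}$, i.e.\ $(ese)(ete)$ lies in the \Cs-orbit of $e$. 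This shows the \Cs-orbit of $e$ is a subsemigroup of $M$.

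Finally I would check that $e$ is a neutral element of this subsemigroup: for any $ese$ in the orbit, $e\cdot ese=ese$ and $ese\cdot e=ese$ since $e^2=e$. Combined with the fact that $e$ itself lies in the orbit, this gives that the \Cs-orbit of $e$ is a monoid with identity $e$, completing the proof.

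There is no serious obstacle here---this is a short corollary of Lemma~\ref{lem:mult}. The only points that need a moment of care are the bookkeeping of the four-factor product when invoking Lemma~\ref{lem:mult}, and the observation that reflexivity of the \Cs-pair relation is exactly where the surjectivity hypothesis on $\alpha$ is used (which is why it appears in the statement).
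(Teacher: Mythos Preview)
Your proof is correct and follows essentially the same approach as the paper: both use reflexivity of the \Cs-pair relation (from surjectivity of $\alpha$) to get $e$ in the orbit, and both reduce closure under product to Lemma~\ref{lem:mult} applied to a decomposition of $e$ into copies of $e$. The only cosmetic difference is that the paper uses a three-factor product to obtain the \Cs-pair $(e,set)$ and writes $qr=esete=e(set)e$, whereas you use a four-factor product to obtain $(e,sete)$ directly.
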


\begin{proof}
  We write $N_e \subseteq M$ for the \Cs-orbit of $e$. Observe that $N_e$ is nonempty: $e = eee \in N_e$ since $(e,e)$ is a \Cs-pair (note that here, we need the hypothesis that $\alpha$ is surjective, as it implies that $\alpha\inv(e)\neq\emptyset$). Moreover, since $e$ is idempotent, it is clear that for every $q \in N_e$, we have $eq=qe = q$ since $q = ese$ for some $s \in M$. It remains to prove that $N_e$ is a subsemigroup of $M$. Let $q,r\in N_e$. We show that $qr\in N_e$. By definition, we get $s,t \in M$ such that $(e,s)$ and $(e,t)$ are \Cs-pairs, $q = ese$ and $r = ete$. Since $(e,e)$ is also a \Cs-pair and \Cs is a \vari, Lemma~\ref{lem:mult} implies that $(e,set)$ is a \Cs-pair. Hence, we have $qr = eseete = esete \in N_e$, as~desired.
\end{proof}

By definition, the \Cs-pairs associated to a morphism $\alpha: A^*\to M$ can be computed provided that \Cs-separation is decidable. Therefore, it is immediate that for each $e\in E(M)$, the \Cs-orbit of $e$ can be computed as well in this case. 

\begin{lem}\label{lem:seporbits}
  Let \Cs be a class of languages with decidable separation. Then, given as input a morphism $\alpha: A^*\to M$ into a finite monoid and an idempotent $e \in E(M)$, one can compute the \Cs-orbit of $e$ for~$\alpha$.
\end{lem}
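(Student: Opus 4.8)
The plan is to unwind the definitions of \Cs-orbit and \Cs-pair and observe that everything reduces to finitely many calls to a \Cs-separation algorithm. First, I would note that for every $s \in M$, the language $\alpha\inv(s)$ is regular: it is recognized by the input morphism $\alpha$ itself, with accepting set $\{s\}$. Hence, from $\alpha$ one can effectively produce a representation (say, a finite automaton, or simply $\alpha$ together with the singleton accepting set) of each of the finitely many languages $\alpha\inv(s)$, $s \in M$, in a form suitable as input to the separation algorithm.

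Next, since \Cs-separation is assumed decidable, for each $s \in M$ one can decide whether $\alpha\inv(e)$ is \Cs-separable from $\alpha\inv(s)$. By the definition of \Cs-pairs in~\eqref{eq:cpairs}, this decides exactly whether $(e,s)$ is a \Cs-pair for $\alpha$. Going through all $s \in M$ (there are finitely many, since $M$ is finite), one collects the set $S = \{s \in M \mid (e,s) \text{ is a \Cs-pair for } \alpha\}$, computes the product $ese \in M$ for each $s \in S$, and outputs $\{ese \mid s \in S\}$. By definition, this set is precisely the \Cs-orbit of $e$ for $\alpha$, which completes the computation.

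I do not expect any real obstacle here: the statement is essentially a bookkeeping consequence of the definitions, combined with the trivial remark that the languages $\alpha\inv(s)$ are effectively regular given $\alpha$. The only point deserving a line of attention is the interface with the separation procedure — it expects its two regular input languages in a fixed representation — but converting $\alpha$ into such representations for the languages $\alpha\inv(e)$ and $\alpha\inv(s)$ is routine. (Note also that surjectivity of $\alpha$ is not needed: if some $s$ lies outside the image of $\alpha$, then $\alpha\inv(s) = \emptyset$ and the separation algorithm simply reports that $(e,s)$ is not a \Cs-pair, so $s$ contributes nothing to the orbit.)
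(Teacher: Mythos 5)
Your proof is correct and follows exactly the same reasoning the paper indicates: the paper states this lemma without a formal proof, remarking just above it that ``the \Cs-pairs associated to a morphism $\alpha$ can be computed provided that \Cs-separation is decidable,'' and that the orbit computation is then immediate. Your write-up simply spells out that observation — finitely many calls to the separation oracle on the regular languages $\alpha\inv(e)$ and $\alpha\inv(s)$, then collect the products $ese$ — which is precisely what is intended.
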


We complete the definition of \Cs-orbits by connecting it with another notion tailored to classes that are \emph{group \varis}~\cite{pzupol2}. Given a class \Gs, we associate  with any morphism $\alpha: A^* \to M$ (where $M$ is a finite monoid) a subset of $M$. We call this subset of $M$ the \emph{\Gs-kernel} of $\alpha$. It consists of all elements $s \in M$ such that $\{\veps\}$ is \emph{not} \Gs-separable from $\alpha\inv(s)$.

\begin{rem}
  While the definition makes sense for an arbitrary class \Gs, it is meant to be used in the special case when \Gs is a group \vari.
\end{rem}

\begin{rem} \label{rem:stablemono}
  When \Gs is the class \md of modulo languages, it can be shown that the \md-kernel of a morphism corresponds to a standard notion: the stable monoid, defined in~\cite{stablemono}. Given a morphism $\alpha: A^*\to M$ into a finite monoid, it can be verified that there exists a number $d \geq 1$ such that $\alpha(A^{2d}) = \alpha(A^d)$. The least such number $d \geq 1$ is called the stability index of $\alpha$. The stable monoid of $\alpha$ is $N = \{1_M\} \cup \alpha(A^d)$. One may verify that $N$ is the \md-kernel of $\alpha$ (this follows from a simple analysis of \md-separation).
\end{rem}

Clearly, having a \Gs-separation algorithm in hand suffices to compute the \Gs-kernel of an input morphism $\alpha$. This yields the following lemma.

\begin{lem}\label{lem:kercomp}
  Let \Gs be a class with decidable separation. Given as input a morphism $\alpha: A^*\to M$ into a finite monoid, one may compute the \Gs-kernel of $\alpha$.
\end{lem}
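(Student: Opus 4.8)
The plan is to reduce the computation of the \Gs-kernel to finitely many instances of the \Gs-separation problem. Since $M$ is a finite monoid given as part of the input, it suffices to decide, for each individual element $s \in M$, whether $s$ belongs to the \Gs-kernel of $\alpha$, and then to collect all such elements.

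Fix $s \in M$. By definition, $s$ lies in the \Gs-kernel of $\alpha$ if and only if $\{\veps\}$ is \emph{not} \Gs-separable from $\alpha\inv(s)$. Both of these are regular languages for which a standard representation can be produced effectively: $\{\veps\}$ is a fixed finite language, and $\alpha\inv(s) = \alpha\inv(\{s\})$ is recognized by the input morphism $\alpha: A^* \to M$ with accepting set $F = \{s\}$. Hence we may feed the ordered pair $(\{\veps\},\alpha\inv(s))$ to the algorithm provided by the hypothesis that \Gs-separation is decidable. By the very definition of the \Gs-kernel, $s$ belongs to it precisely when this algorithm reports that $\{\veps\}$ is not \Gs-separable from $\alpha\inv(s)$.

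Running this test for each of the finitely many $s \in M$ and gathering the positive instances yields the \Gs-kernel of $\alpha$, which proves the lemma. There is no genuine obstacle here: the argument is a direct unfolding of the definitions combined with a call to the separation oracle, entirely analogous to the computation of \Cs-orbits in Lemma~\ref{lem:seporbits}. In particular, neither the statement nor this argument uses that \Gs is a group \vari; they apply verbatim to any class with decidable separation, as claimed.
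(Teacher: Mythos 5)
Your proof is correct and matches the paper's approach: the paper dismisses this lemma with the one-line observation that a \Gs-separation algorithm suffices to compute the \Gs-kernel, which is exactly the element-by-element reduction you spell out. Your additional remarks (effectiveness of the input representations, analogy with Lemma~\ref{lem:seporbits}, no group hypothesis needed) are accurate and merely make explicit what the paper leaves implicit.
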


We now characterize \Gs-kernels in terms of \Gs-orbits when \Gs is a \emph{group} \vari.

\begin{lem}\label{lem:kerorbit}
  Let \Gs be a group \vari and let $\alpha: A^* \to M$ be a surjective morphism into a finite monoid. Let $N$ be the \Gs-kernel of $\alpha$. The two following properties hold:
  \begin{itemize}
    \item Every \Gs-orbit for $\alpha$ is a subset of $N$.
    \item $N$ is exactly the \Gs-orbit of\/ $1_M$ for $\alpha$.
  \end{itemize}
\end{lem}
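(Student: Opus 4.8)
We must show, for a group \vari \Gs and surjective $\alpha\colon A^*\to M$ with \Gs-kernel $N$: (a) every \Gs-orbit for $\alpha$ is contained in $N$, and (b) $N$ equals the \Gs-orbit of $1_M$.

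The plan is to unwind both sides using the characterisation of \Gs-pairs via \Gs-morphisms (Lemma~\ref{lem:cmorph}), together with the fact that a \Gs-morphism lands in a group when \Gs is a group \vari (Lemma~\ref{lem:gmorph}). First I would pin down the definition of $N$: $s\in N$ iff $\{\veps\}$ is not \Gs-separable from $\alpha\inv(s)$, which (since $\alpha(\veps)=1_M$ and separation behaves well here) is equivalent to saying $(1_M,s)$ is a \Gs-pair for $\alpha$. Thus $N$ is exactly $\{s\in M\mid (1_M,s)\text{ is a \Gs-pair}\}$, and the \Gs-orbit of $1_M$ is $\{1_M\cdot s\cdot 1_M\mid (1_M,s)\text{ is a \Gs-pair}\}=\{s\mid (1_M,s)\text{ is a \Gs-pair}\}$. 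So (b) is essentially immediate from the definitions: the \Gs-orbit of $1_M$ and the \Gs-kernel are literally the same set, once one observes $1_M$ is idempotent so the \Gs-orbit of $1_M$ is defined. I would state this identification carefully, noting $\veps\in\alpha\inv(1_M)$ so the non-separability of $\{\veps\}$ from $\alpha\inv(s)$ implies non-separability of $\alpha\inv(1_M)$ from $\alpha\inv(s)$, and conversely any \Gs-language separating $\{\veps\}$ from $\alpha\inv(s)$ can be replaced by one recognised by a group morphism that is constant on $\alpha\inv(1_M)$ (using Lemma~\ref{lem:gmorph}-style reasoning), giving the other direction.

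For (a), let $e\in E(M)$ and let $q=ese$ lie in the \Gs-orbit of $e$, so $(e,s)$ is a \Gs-pair. I need $q\in N$, i.e. $(1_M,q)$ is a \Gs-pair. The idea: apply Item~\ref{item:cmoprh1} of Lemma~\ref{lem:cmorph} to a suitable \Gs-morphism $\eta\colon A^*\to H$ (with $H$ a finite group by Lemma~\ref{lem:gmorph}). From $(e,e)$ being a \Gs-pair we get $u,u'\in A^*$ with $\eta(u)=\eta(u')$, $\alpha(u)=\alpha(u')=e$; from $(e,s)$ being a \Gs-pair we get $v,v'$ with $\eta(v)=\eta(v')$, $\alpha(v)=e$, $\alpha(v')=s$. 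Since $e$ is idempotent, take the word $u^{\omega}$ (a high power) so that $\eta(u^\omega)$ is idempotent in the group $H$, hence $\eta(u^\omega)=1_H$, while $\alpha(u^\omega)=e$. Now consider $x=u^\omega v u^\omega$ and $y=u^\omega v' u^\omega$: we get $\eta(x)=1_H\cdot\eta(v)\cdot 1_H=\eta(v)$, similarly $\eta(y)=\eta(v')=\eta(v)=\eta(x)$, while $\alpha(x)=e\cdot e\cdot e=e$ and $\alpha(y)=e\cdot s\cdot e=q$. Then I'd use Lemma~\ref{lem:mult} (compatibility of \Gs-pairs with multiplication) or directly Item~\ref{item:cmoprh2}: since $\eta(x)=\eta(y)$, the pair $(\alpha(x),\alpha(y))=(e,q)$ is a \Gs-pair. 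Finally, to get from $(e,q)$ to $(1_M,q)$: iterate the idempotent trick once more, or observe that $(1_M,1_M)$ is always a \Gs-pair and that $(1_M,e)$ is a \Gs-pair because $\eta(u^\omega)=1_H=\eta(\veps)$ with $\alpha(u^\omega)=e$, $\alpha(\veps)=1_M$; then Lemma~\ref{lem:mult} applied to $(1_M,e)$ and $(e,q)$—wait, that gives $(e,eq)=(e,q)$ again—so instead apply it to $(1_M,e)$ and $(1_M,e)$ repeatedly, or more cleanly: from $\eta(u^\omega v' u^\omega)=\eta(\veps\cdot v'\cdot\veps)$? That is false since $\eta(v')$ need not be trivial. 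The clean route is: $(1_M,e)$ is a \Gs-pair (shown above), so by Lemma~\ref{lem:mult}, $(1_M\cdot s'\cdot 1_M,\, e\cdot ?\,)$—let me instead just directly build witnesses for $(1_M,q)$: the words $x'=v'$ with $\alpha(v')=s$ don't obviously help, so the cleanest is $(1_M,e)\cdot(e,q)\cdot(e,1_M)$? We don't have $(e,1_M)$ symmetrically unless \Gs is closed under complement—which it is, since group \varis are Boolean algebras. So $(e,1_M)$ is also a \Gs-pair, and Lemma~\ref{lem:mult} gives $(1_M\cdot e\cdot e,\, e\cdot q\cdot 1_M)=(e\cdot e,\,q)=(e,q)$, still stuck at $e$ on the left. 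The actual fix: use $(1_M,e)$ three times is wrong too. Instead I would argue directly with the morphism as above but choose $x=u^\omega v u^\omega$ and $y'=v'$: no. The genuinely clean argument is to just replay Lemma~\ref{lem:cmorph}\ref{item:cmoprh1} from scratch with the word $\veps$ on one side: we want $u_0,v_0$ with $\eta(u_0)=\eta(v_0)$, $\alpha(u_0)=1_M$, $\alpha(v_0)=q$; take $u_0=\veps$ (so $\eta(u_0)=1_H$, $\alpha(u_0)=1_M$) and $v_0=u^\omega v' u^\omega$ where $\eta(v_0)=\eta(v')$ and we additionally need $\eta(v')=1_H$—which need not hold. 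So I'd instead first replace $v,v'$: since $(e,e)$ is a \Gs-pair and \Gs-morphisms can be chosen with $H$ a group, one shows $(e,s)$ being a \Gs-pair forces, for \emph{every} \Gs-morphism $\eta$ into a group $H$, that $\eta(s$-witness$)$ lies in the same coset; composing with the principal coset and using that in $H$ the element $\eta(u^\omega)=1_H$ absorbs, one gets the witness pair $(\veps\text{-type},\, q\text{-type})$ with equal $\eta$-image iff $\eta(v')=1_H$, and \emph{this last fact follows because} $(e,s)$ a \Gs-pair with $e$ idempotent means, taking any group \Gs-morphism $\eta$, that $\eta$ cannot separate, hence $\eta(u^\omega v' u^\omega)$ must equal $\eta(u^\omega)=1_H$ — i.e. $\eta(v')=1_H$ automatically.

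The main obstacle is this last coset-trivialisation step — making precise that if $(e,s)$ is a \Gs-pair with $e$ idempotent, then for every group \Gs-morphism $\eta$, the image of any $\alpha$-preimage of $ese$ conjugated by $\eta(u^\omega)=1_H$ is forced to be $1_H$, so that $ese\in N$. I would handle it by invoking Lemma~\ref{lem:cmorph}\ref{item:cmoprh2}: pick the single \Gs-morphism $\eta\colon A^*\to H$ furnished there (with $H$ a group by Lemma~\ref{lem:gmorph}); for any $q=ese$ in the \Gs-orbit of $e$, the chain of equalities $\eta(u^\omega v' u^\omega)=1_H\eta(v')1_H$ together with the \emph{other} direction of Lemma~\ref{lem:cmorph}\ref{item:cmoprh1} applied to the \Gs-pair $(e,e)$ — giving a word $w$ with $\eta(w)=\eta(u^\omega)$, $\alpha(w)=e$ — pins down $\eta(v')$ by comparison, and one concludes $\eta(u^\omega v' u^\omega)=\eta(\veps)$, whence $(1_M,q)$ is a \Gs-pair via Lemma~\ref{lem:mult}, i.e. $q\in N$. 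Everything else (part (b), and the semigroup/monoid structure which we get for free from Lemma~\ref{lem:orbitmono}) is routine bookkeeping. I would write the two bullets as: (a) follows from the displayed conjugation argument using a group \Gs-morphism; (b) is the definitional identification $N=\{s\mid(1_M,s)\text{ a \Gs-pair}\}=$ \Gs-orbit of $1_M$, noting $1_M\in E(M)$.
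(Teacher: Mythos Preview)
Your overall strategy for (a) --- manufacture a word $w$ with $\eta(w)=1_H$ and $\alpha(w)=q=ese$, using that $H$ is a group --- is exactly right, but the execution has a real gap, and the framing creates a second one.

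\textbf{The execution gap.} You invoke Lemma~\ref{lem:cmorph} twice, once for the pair $(e,e)$ (getting $u$) and once for $(e,s)$ (getting $v,v'$). This loses the crucial relation: you end up with $\eta(u^\omega v' u^\omega)=\eta(v')$, and you correctly observe there is no reason $\eta(v')=1_H$. Your attempted rescue (``$\eta$ cannot separate, hence \ldots'') is circular. The fix is to use a \emph{single} invocation for the pair $(e,s)$: you get $u,v\in A^*$ with $\eta(u)=\eta(v)$, $\alpha(u)=e$, $\alpha(v)=s$. Now with $k=\omega(H)$ the word $x=u^k v\,u^{k-1}$ satisfies $\eta(x)=\eta(u)^{2k}=1_H$ (precisely because $\eta(u)=\eta(v)$) and $\alpha(x)=e^k s\,e^{k-1}=ese=q$. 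No second invocation is needed.

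\textbf{The framing gap.} You aim to conclude ``$(1_M,q)$ is a \Gs-pair'' and then jump to $q\in N$. But these are not equivalent a priori: $q\in N$ says $\{\veps\}$ is not \Gs-separable from $\alpha\inv(q)$, whereas $(1_M,q)$ being a \Gs-pair says the larger set $\alpha\inv(1_M)$ is not \Gs-separable from $\alpha\inv(q)$. Only the implication $q\in N\Rightarrow(1_M,q)$ \Gs-pair is immediate; the reverse is exactly the special case $e=1_M$ of what you are trying to prove. The paper avoids this by arguing directly with the definition of $N$: assume for contradiction that some $K\in\Gs$ separates $\{\veps\}$ from $\alpha\inv(q)$, take a \Gs-morphism $\eta$ recognising $K$ (with group target by Lemma~\ref{lem:gmorph}), build $x$ as above, and observe $\eta(x)=1_H=\eta(\veps)$ forces $x\in K$ while $\alpha(x)=q$ forces $x\notin K$. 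This yields $q\in N$ without detouring through \Gs-pairs.

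\textbf{On (b).} Your ``converse'' sketch (replace a separator of $\{\veps\}$ from $\alpha\inv(s)$ by one constant on $\alpha\inv(1_M)$) does not work: the group morphism recognising $K$ need not be constant on $\alpha\inv(1_M)$. The paper simply derives the inclusion \Gs-orbit of $1_M\subseteq N$ from part (a); the other inclusion $N\subseteq$ \Gs-orbit of $1_M$ is the easy definitional one you correctly identified.
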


\begin{proof}
  First, let $e \in E(M)$ and let $N_e$ be the \Gs-orbit of $e$. We prove that $N_e \subseteq N$. Let $q \in N_e$. This yields $s \in M$ such that $(e,s)$ is a \Gs-pair and $q = ese$. By contradiction, assume that $q \not\in N$. By definition, we get a language $K \in \Gs$ separating $\{\veps\}$ from $\alpha\inv(q)$. That is, $\veps \in K$ and $K \cap \alpha\inv(q) = \emptyset$. We exhibit an element $x\in K\cap\alpha\inv(q)$, yielding a contradiction. Since $K\in\Gs$, Proposition~\ref{prop:genocm} yields a \Gs-morphism $\eta: A^* \to G$ recognizing $K$. Moreover, Lemma~\ref{lem:gmorph} implies that $G$ is a group since \Gs is a group \vari. Since $(e,s)$ is a \Gs-pair, Lemma~\ref{lem:cmorph} yields $u,v \in A^*$ such that $\eta(u) = \eta(v)$, $\alpha(u) = e$ and $\alpha(v) = s$. Let $k = \omega(G)$ and $x = u^kvu^{k-1}$. Since $\eta(u) = \eta(v)$ and $G$ is a group, we have $\eta(x) = (\eta(u))^k = 1_G=\eta(\veps)$. Hence, since $\veps \in K$ and $K$ is recognized by $\eta$, we have $x \in K$. This a contradiction since $\alpha(x) = e^k s e^{k-1} = ese = q$ and $K \cap \alpha\inv(q) = \emptyset$ by hypothesis.

  It remains to prove that $N$ is exactly the \Gs-orbit of $1_M$ for $\alpha$. Since we already proved that the latter is included in the former, it suffices to show that every $s \in N$ belongs to the \Gs-orbit of $1_M$. Since $s \in N$, we know that $\{\veps\}$ is not \Gs-separable from $\alpha\inv(s)$. Since $\veps \in \alpha\inv(1_M)$, it follows that $\alpha\inv(1_M)$ is not \Gs-separable from $\alpha\inv(s)$, \emph{i.e.}, $(1_M,s)$ is a \Gs-pair. Hence, $s = 1_Ms1_M$ belongs to the \Gs-orbit of $1_M$, as desired.
\end{proof}

\subsection{Characterization}

Let us first recall the definition of aperiodic monoids. We use an equational definition, specific to \emph{finite} monoids. We say that a finite monoid $M$ is \emph{aperiodic} when every $s \in M$ satisfies $s^{\omega+1}=s^{\omega}$. We are ready to state the generic characterization of \sfp{\Cs}. In fact, the statement also mentions the correspondence with \bsdp{\Cs}. This is important because we still have to prove the inclusion $\sfp{\Cs}\subseteq \bsdp{\Cs}$, and the argument is intertwined with the proof of the algebraic characterization.

\begin{thm}\label{thm:sfcarac}
  Let \Cs be a \vari and consider a regular language $L \subseteq A^*$. The following properties are equivalent:
  \begin{enumerate}
    \item $L \in \sfp{\Cs}$.
    \item $L \in \bsdp{\Cs}$.
    \item All \Cs-orbits for the syntactic morphism of\/ $L$ are aperiodic monoids.
  \end{enumerate}
\end{thm}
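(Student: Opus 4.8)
The plan is to establish the cycle of implications $(1)\Rightarrow(3)\Rightarrow(2)\Rightarrow(1)$. The implication $(2)\Rightarrow(1)$ is already available: it is exactly the inclusion $\bsdp{\Cs}\subseteq\sfp{\Cs}$ proved above in Theorem~\ref{thm:bsdcar}. So only $(1)\Rightarrow(3)$ and $(3)\Rightarrow(2)$ remain, and as a by-product the chain $(1)\Rightarrow(3)\Rightarrow(2)$ also yields the still-missing inclusion $\sfp{\Cs}\subseteq\bsdp{\Cs}$ of Theorem~\ref{thm:bsdcar}.

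For $(1)\Rightarrow(3)$, assume $L\in\sfp{\Cs}$. Since \Cs is a \vari, so is $\sfp{\Cs}$ by Proposition~\ref{prop:vari}, hence the syntactic morphism $\alpha_L\colon A^*\to M_L$ is an $\sfp{\Cs}$-morphism by Proposition~\ref{prop:synmemb}. Proposition~\ref{prop:aper} then provides a \Cs-morphism $\eta\colon A^*\to N$ such that $(\alpha_L(u))^{\omega+1}=(\alpha_L(u))^{\omega}$ whenever $\eta(u)$ is idempotent. Fix $e\in E(M_L)$ with \Cs-orbit $N_e$; by Lemma~\ref{lem:orbitmono}, $N_e$ is a submonoid of $M_L$ with identity $e$, so proving it aperiodic amounts to showing $q^{\omega+1}=q^{\omega}$ (computed in $M_L$) for every $q\in N_e$. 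Write $q=ese$ with $(e,s)$ a \Cs-pair. Since $\alpha_L$ is surjective, $(e,e)$ is a \Cs-pair too, so applying Lemma~\ref{lem:mult} twice to the \Cs-pairs $(e,e)$, $(e,s)$, $(e,e)$ shows that $(e,ese)=(e,q)$ is a \Cs-pair. By Item~\ref{item:cmoprh1} of Lemma~\ref{lem:cmorph}, there are $u,v\in A^*$ with $\eta(u)=\eta(v)$, $\alpha_L(u)=e$ and $\alpha_L(v)=q$. Put $k=\omega(N)$ and $w=u^{k-1}v$; then $\eta(w)=(\eta(u))^{k}$ is idempotent, while $\alpha_L(w)=e^{k-1}q=q$ (using $e^2=e$ and $eq=q$). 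Proposition~\ref{prop:aper} gives $q^{\omega+1}=(\alpha_L(w))^{\omega+1}=(\alpha_L(w))^{\omega}=q^{\omega}$, so $N_e$ is aperiodic.

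For $(3)\Rightarrow(2)$ it suffices to prove the more general statement that if $\alpha\colon A^*\to M$ is \emph{any} surjective morphism all of whose \Cs-orbits are aperiodic, then every language recognized by $\alpha$ lies in $\bsdp{\Cs}$; applying it to $\alpha=\alpha_L$ yields $L\in\bsdp{\Cs}$. A recognized language is a disjoint union of classes $\alpha\inv(s)$ and $\bsdp{\Cs}$ is closed under disjoint union, so it is enough to treat a single class $\alpha\inv(s)$. One first reduces to the case where $\alpha$ refines a \Cs-morphism: fix $\eta\colon A^*\to N$ as in Item~\ref{item:cmoprh2} of Lemma~\ref{lem:cmorph} and replace $\alpha$ by the surjection $(\alpha,\eta)$ onto its image $R\subseteq M\times N$. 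Since $\eta$ is a \Cs-morphism, a \Cs-pair $((e,f),(s,t))$ for $(\alpha,\eta)$ must satisfy $f=t$ and have $(e,s)$ a \Cs-pair for $\alpha$; hence every \Cs-orbit of $(\alpha,\eta)$ embeds as a monoid into a \Cs-orbit of $\alpha$, so it is still aperiodic, while $(\alpha,\eta)$ recognizes everything $\alpha$ does. One then argues by induction on $|M|$: for a well-chosen idempotent $f\in E(N)$, one cuts each word at its prefixes whose $\eta$-image equals $f$, thereby writing $\alpha\inv(s)$ as a finite disjoint union of unambiguous products $P\,C^{*}\,Q$, where $C$ is a prefix code of bounded synchronization delay built from the languages $\eta\inv(f)$ by means of Facts~\ref{fct:sprefsub} and~\ref{fct:sprefnest}, and $P,Q$ are recognized either by proper quotients of $M$ (to which induction applies) or by morphisms into a \Cs-orbit of $\alpha$; orbits being aperiodic, the latter languages lie in $\bsd=\bsdp{\stzer}\subseteq\bsdp{\Cs}$ by Sch\"utzenberger's theorem. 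Aperiodicity of the orbits is precisely what forces these Kleene stars to absorb the recursion, so that the decomposition terminates.

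The main obstacle is making this last induction close. This requires a careful choice of the cutting idempotent $f$, tight control of the monoids occurring in $P$ and $Q$ (so that each is either a strictly smaller divisor of $M$ or lands exactly in one \Cs-orbit), and a verification at each step of unambiguity of the concatenations and of the bounded-synchronization-delay property of $C$ — this is where Facts~\ref{fct:sprefsub} and~\ref{fct:sprefnest} and the structure of the codes $\eta\inv(f)$ come in. A further subtlety is to run the whole argument over the \emph{fixed} alphabet $A$: the classical route would abstract the blocks $C$ by fresh letters and induct on $|A|$, but $\bsdp{\Cs}$ is not robust enough for this, so the induction must be carried out on $M$ (and $N$) alone. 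Once this is done, the full equivalence $(1)\Leftrightarrow(2)\Leftrightarrow(3)$ follows, and together with the already-established $\bsdp{\Cs}\subseteq\sfp{\Cs}$ it also completes the proof of Theorem~\ref{thm:bsdcar}.
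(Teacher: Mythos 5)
Your handling of $(2)\Rightarrow(1)$ and $(1)\Rightarrow(3)$ is correct and, for the latter, essentially matches the paper's argument (Propositions~\ref{prop:vari}, \ref{prop:synmemb}, \ref{prop:aper}, Lemmas~\ref{lem:orbitmono}, \ref{lem:mult}, \ref{lem:cmorph}); your choice $w=u^{k-1}v$, which gives $\alpha_L(w)=ese$ directly, is a slight streamlining of the paper's word $u^{k-1}vu^k$ and sidesteps the marginal $k=1$ corner.

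The implication $(3)\Rightarrow(2)$, which carries the substance of the theorem, is only sketched and has genuine gaps. You outline a Schützenberger-style plan: refine $\alpha$ by $(\alpha,\eta)$, cut words at a well-chosen idempotent $f\in E(N)$, induct on $|M|$, and dispatch some pieces via the classical star-free/aperiodic theorem applied to the $\Cs$-orbits. But you then list the difficulties (choice of $f$, control of the ambient monoids, unambiguity, bounded synchronization delay, fixed alphabet) without resolving them, and at least two of them are not mere technicalities. First, a $\Cs$-orbit $N_e$ is a \emph{submonoid} of $M$, not a quotient, so there is no morphism $A^*\to N_e$ that would let you ``recognize $P,Q$ by morphisms into a $\Cs$-orbit'' and invoke Schützenberger's theorem; making this precise is exactly where the work lies, not a detail to fill in. Second, the code $C$ built from prefixes with a given $\eta$-image is not a prefix code of bounded synchronization delay for free, and Facts~\ref{fct:sprefsub} and~\ref{fct:sprefnest} only transform codes that already have this property — they do not create one. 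The paper proceeds by an entirely different and fully worked-out scheme: it proves Lemma~\ref{lem:sfclos:fromapertostar}, which states that if $P$ is a prefix code of bounded synchronization delay admitting a $\bsdp{\Cs}$-partition $\Hb$ whose blocks are all $1_M$-safe, then for every $s\in M$ there is an $s$-safe $\bsdp{\Cs}$-partition of $P^*$. The induction there is not on $|M|$ but on the triple $(|\alpha(P^+)|,|\Hb|,|s\,\alpha(P^*)|)$, carried out over the fixed alphabet, and aperiodicity of the $\Cs$-orbits enters exactly once, in the base case (when $s$ is $\Hb$-stable) through a direct computation (Fact~\ref{fct:sfclos:icarbase}); the classical Schützenberger theorem is never used as a black box. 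Applying the lemma to $P=A$ and $\Hb=\{\{a\}\mid a\in A\}$ then yields $\alpha\inv(t)\in\bsdp{\Cs}$ for every $t\in M$. To complete your proposal you would need to replace the sketch of $(3)\Rightarrow(2)$ with something of that form, or actually close the difficulties you flag.
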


Before we prove Theorem~\ref{thm:sfcarac}, let us discuss its consequences. First, it yields a transfer result concerning the decidability of \sfp{\Cs}-membership. It follows from Lemma~\ref{lem:seporbits} that the \Cs-orbits associated to a morphism into a finite monoid can be computed as soon as \Cs-separation is decidable. Hence, we obtain an algorithm for \sfp{\Cs}-membership in this case. Given an input language $L$, one first computes its syntactic morphism $\alpha: A^* \to M$. Then, one computes all elements $s \in M$ belonging to a \Cs-orbit for $\alpha$ (this is possible since \Cs-separation is decidable). Finally, it follows from Theorem~\ref{thm:sfcarac} and the definition of aperiodicity that $L \in \sfp{\Cs}$ if and only if every such element $s \in M$ satisfies $s^{\omega+1} = s^{\omega}$.

\begin{cor} \label{cor:sfcarac}
  Let \Cs be a \vari with decidable separation. Then, \sfp{\Cs}-membership is decidable.
\end{cor}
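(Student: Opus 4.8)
The plan is to read the algorithm directly off Theorem~\ref{thm:sfcarac}, using the equivalence between its items~(1) and~(3): a regular language $L$ lies in \sfp{\Cs} if and only if every \Cs-orbit of its syntactic morphism is an aperiodic monoid. So the whole task reduces to making each ingredient of this statement effective, given a \Cs-separation oracle.

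First I would compute, from the given representation of $L$ (a finite automaton, a recognizing morphism, \emph{etc.}), its syntactic morphism $\alpha: A^* \to M$ into a finite monoid; this is classical (Myhill--Nerode). Next, I would enumerate the finitely many idempotents $e \in E(M)$. For each such $e$, I would invoke Lemma~\ref{lem:seporbits}: since \Cs-separation is decidable by hypothesis and the languages $\alpha\inv(s)$ for $s \in M$ are regular, one can decide for each pair $(e,s) \in M^2$ whether $\alpha\inv(e)$ is \Cs-separable from $\alpha\inv(s)$, hence compute the set of \Cs-pairs of the form $(e,s)$, and therefore compute the \Cs-orbit $N_e = \{ese \mid (e,s) \text{ is a \Cs-pair for } \alpha\}$. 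By Lemma~\ref{lem:orbitmono}, each $N_e$ is a monoid (with identity $e$).

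Then, for each computed orbit $N_e$, I would test aperiodicity. Since $N_e \subseteq M$ is a subsemigroup, powers of its elements agree whether computed in $N_e$ or in $M$, so fixing an exponent $\omega = \omega(M)$ that makes all powers in $M$ idempotent (computable, $M$ being finite), it suffices to check that $s^{\omega+1} = s^{\omega}$ for every $s \in N_e$; this condition is equivalent to $N_e$ being aperiodic, and it is a finite computation. Finally, the algorithm outputs ``$L \in \sfp{\Cs}$'' precisely when these tests succeed for all idempotents $e \in E(M)$; correctness is exactly the equivalence (1)$\Leftrightarrow$(3) of Theorem~\ref{thm:sfcarac}.

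There is essentially no genuine obstacle: the corollary is a direct consequence of the (hard) Theorem~\ref{thm:sfcarac} together with the computability of \Cs-orbits (Lemma~\ref{lem:seporbits}). The only points requiring minimal care are that aperiodicity of a finite monoid is manifestly decidable, and that a single exponent $\omega(M)$ can be reused across all orbits. All the real content---the algebraic characterization itself and the computability of \Cs-orbits from a separation algorithm---has already been established, so this statement merely repackages it as a membership procedure.
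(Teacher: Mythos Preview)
Your proof is correct and follows essentially the same approach as the paper: compute the syntactic morphism, use Lemma~\ref{lem:seporbits} (via the \Cs-separation oracle) to compute the \Cs-orbits, and then check the aperiodicity condition $s^{\omega+1}=s^{\omega}$ on their elements, invoking the equivalence (1)$\Leftrightarrow$(3) of Theorem~\ref{thm:sfcarac}. The only cosmetic difference is that the paper phrases the final test as checking $s^{\omega+1}=s^{\omega}$ for every element $s$ belonging to \emph{some} \Cs-orbit, rather than testing each orbit $N_e$ separately, but this is the same computation.
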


Theorem~\ref{thm:sfcarac} can be simplified in the special case of classes \sfp{\Gs} where \Gs is a group \vari. In this case, it is possible to characterize the languages in \sfp{\Gs} using only the \emph{\Gs-kernel} of their syntactic morphisms. Indeed, we have the following statement as an immediate corollary of Theorem~\ref{thm:sfcarac}, Lemma~\ref{lem:kerorbit} and the definition of aperiodic monoids.

\begin{cor}\label{cor:sfcaracg}
  Let \Gs be a group \vari and consider a regular language $L$. The following properties are equivalent:
  \begin{enumerate}
    \item $L \in \sfp{\Gs}$.
    \item $L \in \bsdp{\Gs}$.
    \item The \Gs-kernel of the syntactic morphism of\/ $L$ is an aperiodic monoid.
  \end{enumerate}
\end{cor}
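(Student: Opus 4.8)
The plan is to obtain this directly from Theorem~\ref{thm:sfcarac}, Lemma~\ref{lem:orbitmono} and Lemma~\ref{lem:kerorbit}, the only extra ingredient being the trivial fact that aperiodicity passes to subsemigroups. Let $\alpha\colon A^* \to M$ be the syntactic morphism of $L$; it is surjective, so the hypotheses of Lemmas~\ref{lem:orbitmono} and~\ref{lem:kerorbit} apply. Write $N$ for the \Gs-kernel of $\alpha$. The equivalence of items~(1) and~(2) is precisely the equivalence of the first two items of Theorem~\ref{thm:sfcarac} (specialized to $\Cs = \Gs$), so it remains to match item~(3) of that theorem --- all \Gs-orbits for $\alpha$ are aperiodic monoids --- with the assertion that $N$ is an aperiodic monoid. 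First, $N$ is genuinely a monoid: by Lemma~\ref{lem:kerorbit} it coincides with the \Gs-orbit of the idempotent $1_M$, which is a monoid with neutral element $1_M$ by Lemma~\ref{lem:orbitmono}; in particular $N$ is a submonoid of $M$.

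For the first direction, if every \Gs-orbit for $\alpha$ is aperiodic then so is $N$, since $N$ is one of them (the \Gs-orbit of $1_M$). For the converse, suppose $N$ is aperiodic and let $N_e$ be the \Gs-orbit of an arbitrary $e \in E(M)$. By Lemma~\ref{lem:kerorbit} we have $N_e \subseteq N$, and by Lemma~\ref{lem:orbitmono} the set $N_e$ is a subsemigroup of $M$, so the multiplication of $N_e$ is the restriction of that of $M$. A finite monoid is aperiodic exactly when the sequence of powers of each of its elements eventually stabilizes, and this condition is clearly inherited when one restricts to a subsemigroup: for $s \in N_e \subseteq N$ there is $k \geq 1$ with $s^{k+1} = s^k$ in $M$, hence the same identity holds inside $N_e$. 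Thus $N_e$ is aperiodic, and since $e$ was arbitrary, all \Gs-orbits for $\alpha$ are aperiodic; Theorem~\ref{thm:sfcarac} then yields items~(1) and~(2).

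There is no real obstacle here: the statement is a repackaging of Theorem~\ref{thm:sfcarac} through the dictionary between \Gs-orbits and the \Gs-kernel provided by Lemma~\ref{lem:kerorbit}. The single point deserving a line of justification is that aperiodicity descends to subsemigroups, and even that is immediate from the ``powers eventually stabilize'' formulation of aperiodicity.
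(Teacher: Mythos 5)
Your proposal is correct and takes essentially the same route as the paper: the paper merely remarks that the corollary is immediate from Theorem~\ref{thm:sfcarac}, Lemma~\ref{lem:kerorbit}, and the definition of aperiodicity, and your writeup supplies exactly the straightforward dictionary that remark has in mind (identifying the \Gs-kernel with the \Gs-orbit of $1_M$ via Lemma~\ref{lem:kerorbit}, using Lemma~\ref{lem:orbitmono} for the monoid structure, and observing that aperiodicity passes to subsemigroups for the converse). No gaps.
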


\begin{rem} \label{rem:schutzcor}
  Schützenberger's original characterization~\cite{schutzsf} of the class \sfr of star-free languages is an immediate consequence of Corollary~\ref{cor:sfcaracg}. Indeed, consider a regular language~$L$ and let $\alpha: A^* \to M$ be its syntactic morphism.  Since $\sfr = \sfp{\stzer}$, it follows from Corollary~\ref{cor:sfcaracg} that $L \in \sfr$ if and only if the \stzer-kernel of $\alpha$ is aperiodic. Moreover, since $\stzer = \{\emptyset,A^*\}$ and syntactic morphisms are surjective, it is immediate that the \stzer-kernel of $\alpha$ is the whole syntactic monoid $M$. Hence, $L \in \sfr$ if and only if its syntactic monoid $M$ is aperiodic. This is exactly Schützenberger's theorem.
\end{rem}

\begin{proof}[Proof of Theorem~\ref{thm:sfcarac}]
  We fix a \vari \Cs for the proof. Moreover, we let $L\subseteq A^*$ be a regular language and $\alpha: A^* \to M$ be its syntactic morphism. We already proved that $(2) \Rightarrow (1)$ in Theorem~\ref{thm:bsdcar}. Hence, it suffices to prove that $(1) \Rightarrow (3)$ and $(3) \Rightarrow (2)$. The latter implication corresponds of the inclusion $\sfp{\Cs}\subseteq \bsdp{\Cs}$, which we omitted in the proof of Theorem~\ref{thm:bsdcar}.

  \medskip
  \noindent
  {\bf Implication $(1) \Rightarrow (3)$.} Assume that $L \in \sfp{\Cs}$. For every idempotent $e \in E(M)$, we prove that the \Cs-orbit of $e$ for $\alpha$ is aperiodic. By definition, this boils down to proving that for all $s \in M$ such that $(e,s) \in M^2$ is a \Cs-pair for $\alpha$, we have $(ese)^{\omega+1} = (ese)^{\omega}$. The argument is based on Proposition~\ref{prop:aper}. By hypothesis on \Cs, it follows from Proposition~\ref{prop:vari} that \sfp{\Cs} is a \vari. Hence, since $L \in \sfp{\Cs}$, Proposition~\ref{prop:synmemb} implies that its syntactic morphism~$\alpha$ is an \sfp{\Cs}-morphism. Consequently, Proposition~\ref{prop:aper} yields a \Cs-morphism $\eta: A^* \to N$ such that for every $u \in A^*$, if $\eta(u)$ is idempotent, then we have $(\alpha(u))^{\omega+1} = (\alpha(u))^{\omega}$.

  Since $\eta$ is a \Cs-morphism and $(e,s)$ is a \Cs-pair for $\alpha$, Lemma~\ref{lem:cmorph} yields $u,v \in A^*$ such that $\eta(u) = \eta(v)$, $\alpha(u) = e$ and $\alpha(v) = s$. Let $k = \omega(N)$. Clearly, $\eta(u^{k-1}vu^k) = \eta(u^k)$ is an idempotent of~$N$. Hence, it follows from the definition of $\eta$ that $(\alpha(u^{k-1}vu^k))^{\omega+1} = (\alpha(u^{k-1}vu^k))^{\omega}$. Finally, since $\alpha(u) = e$ and $\alpha(v) = s$, this exactly says that $(ese)^{\omega+1} = (ese)^{\omega}$, as desired.

  \medskip
  \noindent
  {\bf Implication $(3) \Rightarrow (2)$.} We assume that all \Cs-orbits for $\alpha$ are aperiodic monoids and prove that $L \in \bsdp{\Cs}$. We first present a preliminary definition. Given a language $K \subseteq A^*$ and an element $s\in M$, we say that $K$ is \emph{$s$-safe} when $s\alpha(u)=s\alpha(v)$ for every $u,v \in K$. Additionally, given a language $P \subseteq A^*$, an \emph{\bsdp{\Cs}-partition of $P$} is a finite partition of $P$ into languages which all belong to  \bsdp{\Cs}. The argument is based on the following lemma.

  \begin{lem}\label{lem:sfclos:fromapertostar}
    Let $P \subseteq A^+$ be a prefix code with bounded synchronization delay such that there exists an \bsdp{\Cs}-partition \Hb of\/ $P$ whose elements are all $1_M$-safe. Then, for every $s \in M$, there exists an \bsdp{\Cs}-partition \Kb of\/ $P^*$ whose elements are all $s$-safe.
  \end{lem}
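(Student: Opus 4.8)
The plan is to prove the statement by induction on $|M|$, keeping the prefix code $P$, its $\bsdp{\Cs}$-partition $\Hb$, and the target element $s$ as parameters that may change between applications of the induction hypothesis. Two preliminary remarks reduce the problem. First, $P = \biguplus_{H \in \Hb} H$ is a finite disjoint union of members of $\bsdp{\Cs}$, so $P \in \bsdp{\Cs}$, and since $P$ is a prefix code of bounded synchronization delay we get $P^* \in \bsdp{\Cs}$. Second, for every fixed $n$ the family $\{H_1 \cdots H_n \mid H_1,\dots,H_n \in \Hb\}$ is an $\bsdp{\Cs}$-partition of $P^n$: each block is an unambiguous concatenation because $P$ is a code (Fact~\ref{fct:prefdec}), and has a single image under $\alpha$ because the $H_i$ are $1_M$-safe, hence is $s$-safe for every $s$. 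So only words of $P^*$ with an unbounded number of $P$-factors cause trouble. The base case $|M|=1$ is the partition $\{P^*\}$, and more generally, if $st=s$ for every image $t$ of a factor class then $s\alpha(w)$ is constant on $P^*$ and $\{P^*\}$ works again. Assume henceforth that the factor images generate a submonoid $R \subseteq M$ with $|sR| \geq 2$.

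In the remaining case I would fix an idempotent $e$ in a $\Jord$-minimal $\Js$-class of $R$, together with a natural sub-collection $\Hb_e \subseteq \Hb$ (determined by a condition relating the image $\alpha(H)$ of a factor class to $e$) and the associated sub-prefix-code $P_e = \biguplus \Hb_e$. The idea is to cut the unique $P$-factorization $p_1 \cdots p_n$ of a word of $P^*$ into a prefix that uses no $P_e$-factor, a bounded-length middle portion, and a long inner run along which the running product $s\alpha(p_1\cdots p_i)$ has entered — and, by the minimality of the $\Js$-class, can no longer leave — the ideal of $R$ generated by $e$. The $P_e$-free prefix is a word of $(P \setminus P_e)^*$; with $e$ chosen so that the images of $\Hb \setminus \Hb_e$ generate a monoid strictly smaller than $M$, such prefixes are dispatched by the induction hypothesis. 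The blocks ``a run of non-$P_e$-factors ended by one $P_e$-factor'' form the language $(P \setminus P_e)^* P_e$, which is a prefix code of bounded synchronization delay by Fact~\ref{fct:sprefnest} (with $P_e$ in the role of $H$, and using Fact~\ref{fct:sprefsub} to keep $P_e$ and $P\setminus P_e$ prefix codes of synchronization delay $d$), lies in $\bsdp{\Cs}$ by closure under disjoint union, $\bsdp{\Cs}$-intersection, unambiguous concatenation and Kleene star, and therefore has its star in $\bsdp{\Cs}$ as well.

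The heart of the argument is the long inner run, and this is where the hypothesis enters. Along it, the running product takes values of the form $e s' e$ with $(e,s')$ a $\Cs$-pair for $\alpha$ — that is, values in the $\Cs$-orbit of $e$ — which by assumption is an aperiodic monoid, so powers of such a value are constant past exponent $\omega(M)$. I would then use Lemma~\ref{lem:cmorph} to obtain a $\Cs$-morphism $\eta : A^* \to N$, chosen via Proposition~\ref{prop:genocm} to also recognize every $\Cs$-language used in building the $H$'s, such that $\eta$-classes separate exactly the pairs of elements of $M$ that are \emph{not} $\Cs$-pairs. Intersecting the Kleene-star language $((P \setminus P_e)^* P_e)^*$ with the finitely many $\eta^{-1}$-classes then splits the long run into pieces on each of which the running product — and hence $s\alpha(w)$ — stabilizes to a single value; reattaching, by unambiguous concatenation, the bounded middle portion (handled by the fixed-$n$ partitions) and then the inductively handled $P_e$-free prefix yields an $s$-safe $\bsdp{\Cs}$-partition of this part of $P^*$. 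A disjoint union with the partition of the complementary part (words whose running product never reaches the ideal of $e$, which again have a controlled structure) completes the inductive step.

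The main obstacle is exactly this confluence: matching the purely combinatorial bookkeeping on $P$-factorizations — which factor classes occur, at what point the running product enters the minimal ideal of $R$, and how bounded synchronization delay licenses Kleene-starring the intermediate blocks — against the purely algebraic input, namely aperiodicity of the $\Cs$-orbits, which is reachable only through the $\Cs$-pair relation and Lemma~\ref{lem:cmorph}; and then verifying that every language produced genuinely lies in $\bsdp{\Cs}$, each Kleene star being applied to a prefix code of bounded synchronization delay, which is precisely the content of Facts~\ref{fct:sprefsub} and~\ref{fct:sprefnest}. I also anticipate that the induction parameter may have to be refined — for instance to $(|M|,d)$, since the intermediate blocks have synchronization delay one larger — but the architecture above should absorb that change.
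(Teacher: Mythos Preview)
Your proposal shares several ingredients with the paper's proof---sub-prefix-codes built via Fact~\ref{fct:sprefnest}, intersection with $\eta$-classes from a \Cs-morphism, and the aperiodicity of \Cs-orbits---but the architecture is different and has a real gap. The paper does \emph{not} use Green's relations or a minimal $\Js$-class. Instead it runs a three-parameter induction on $(|\alpha(P^+)|,\,|\Hb|,\,|s\cdot\alpha(P^*)|)$, and the base case is the purely combinatorial condition ``$s$ is $\Hb$-stable'': $s\cdot\alpha(P^*)=s\cdot\alpha(P^*H)$ for every $H\in\Hb$. That condition yields the key cancellation fact $sqf=sq$ for all $q,f\in\alpha(P^*)$ with $f$ idempotent (Fact~\ref{fct:sfclos:icarbase}), and \emph{only with this fact} does the aperiodicity equation $(e\alpha(vu^{n-1})e)^{\omega}=(e\alpha(vu^{n-1})e)^{\omega+1}$ collapse to $s\alpha(u)=s\alpha(v)$ for $u,v$ in the same $\eta$-class. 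Your ``long inner run'' step asserts that the running product ``takes values of the form $es'e$ with $(e,s')$ a \Cs-pair''; but being in the minimal ideal of the image monoid is a statement about $\alpha$, not about $\eta$, and gives no reason for $(e,s')$ to be a \Cs-pair. Without an analogue of $\Hb$-stability you have no bridge from aperiodicity of the \Cs-orbit to the desired equality $s\alpha(u)=s\alpha(v)$.

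Your induction parameter is also off: $M$ never changes, so inducting on $|M|$ cannot terminate; you already suspect this, but the fix is not cosmetic. The paper's inductive step picks a single $H\in\Hb$ witnessing failure of stability, then branches: if $\alpha(P^*H)=\alpha(P^+)$ it recurses on the third parameter (replacing $s$ by $sr_Ut$), and if $\alpha(P^*H)\subsetneq\alpha(P^+)$ it passes to the new prefix code $(P\setminus H)^*H$ and recurses on the first. Both branches also need the auxiliary $\bsdp{\Cs}$-partition of $(P\setminus H)^*$, obtained by dropping $H$ from $\Hb$ and recursing on the second parameter. Your single-letter $P_e$ and the decomposition prefix/middle/run do not reproduce this case split, and in particular you have no mechanism for the case where removing $P_e$ does \emph{not} shrink the image semigroup.
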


  We first apply Lemma~\ref{lem:sfclos:fromapertostar} to show that every language recognized by $\alpha$ (such as $L$) belongs to \bsdp{\Cs} and conclude the main argument. By definition, \bsdp{\Cs} is closed under disjoint union. Hence, it suffices to show that $\alpha\inv(t) \in \bsdp{\Cs}$ for every $t \in M$. We fix such an element $t$ in $M$.

  Note that $A \subseteq A^+$ is a prefix code with bounded synchronization delay. Moreover, the set $\Hb = \bigl\{\{a\} \mid a \in A\bigr\}$ is an \bsdp{\Cs}-partition of $A$ such that every $H \in \Hb$ is $1_M$-safe. Hence, we may apply Lemma~\ref{lem:sfclos:fromapertostar} in the case when $P = A$ and $s = 1_M$. We get an \bsdp{\Cs}-partition \Kb of $A^*$ such that every $K \in \Kb$ is $1_M$-safe. Being $1_M$-safe means that $\alpha(K)$ is a singleton for every $K\in\Kb$. Therefore, $\alpha\inv(t)$ is the disjoint union of all $K \in \Kb$ intersecting $\alpha\inv(t)$. Since $\bsdp{\Cs}$ is closed under disjoint union, we obtain that $\alpha\inv(t) \in \bsdp{\Cs}$, which concludes the main argument.

  \smallskip

  It remains to prove Lemma~\ref{lem:sfclos:fromapertostar}. Let $P \subseteq A^*$ be a prefix code with bounded synchronization delay and consider an \bsdp{\Cs}-partition \Hb of $P$ such that every $H \in \Hb$ is $1_M$-safe. Moreover, fix $s \in M$. We need to build an \bsdp{\Cs}-partition \Kb of $P^*$ such that every $K \in \Kb$ is $s$-safe. We proceed by induction on the three following parameters listed by order of importance:
  \begin{enumerate}
    \item The size of $\alpha(P^+) \subseteq M$.
    \item The size of $\Hb$.
    \item The size of $s \cdot \alpha(P^*) \subseteq M$.
  \end{enumerate}

  We distinguish two cases depending on whether the following property of $s$ and $\Hb$ holds. We say that \emph{$s$ is \Hb-stable} when:
  \begin{equation}\label{eq:sfclos:mostable}
    \text{for every $H \in \Hb$,} \quad  s \cdot \alpha(P^*) = s \cdot \alpha(P^*H).
  \end{equation}
  The base case happens when $s$ is \Hb-stable: we conclude directly without using induction. Otherwise, we use induction on our three parameters.

  \medskip
  \noindent
  {\bf Base case: $s$ is \Hb-stable.} This is the only part of the proof where we need the hypothesis that all \Cs-orbits for $\alpha$ are aperiodic monoids. Lemma~\ref{lem:cmorph} yields a \Cs-morphism $\eta: A^* \to N$ such that for every $u,v \in A^*$, if $\eta(u) = \eta(v)$, then  $(\alpha(u),\alpha(v))$ is a \Cs-pair for $\alpha$. We define,
  \[
    \Kb = \{P^* \cap \eta\inv(t) \mid t \in N\}.
  \]
  Clearly, \Kb is a partition of $P^*$. Moreover, it only contains languages in \bsdp{\Cs}. Indeed, we have $P \in \bsdp{\Cs}$: it is the disjoint union of all languages in the \bsdp{\Cs}-partition \Hb of $P$. Therefore, $P^* \in \bsdp{\Cs}$ since $P$ is a prefix code with bounded synchronization delay. Hence, $P^* \cap \eta\inv(t) \in \bsdp{\Cs}$ for every $t \in N$ since $\eta\inv(t) \in \Cs$ (as $\eta$ is a \Cs-morphism). It remains to show that every language $K \in \Kb$ is $s$-safe. The argument is based on the following fact, which is where we use the hypothesis that $s$ is \Hb-stable.

  \begin{fct}\label{fct:sfclos:icarbase}
    Let $q,f \in \alpha(P^*)$ such that $f$ is idempotent. Then, we have $sqf = sq$.
  \end{fct}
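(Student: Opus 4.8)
The plan is to deduce this fact from \Hb-stability alone; no appeal to the aperiodicity of the \Cs-orbits is needed here (that hypothesis is used only afterwards, when deriving $s$-safety of the languages $K \in \Kb$ from this fact). The starting observation is that, since each $H \in \Hb$ is $1_M$-safe, the set $\alpha(H)$ is a singleton, say $\alpha(H) = \{h_H\}$, and since \Hb is a partition of $P$ we have $\{h_H \mid H \in \Hb\} = \alpha(P)$. Using that $\alpha(P^*H) = \alpha(P^*)\alpha(H) = \alpha(P^*)h_H$, the defining condition~\eqref{eq:sfclos:mostable} of \Hb-stability rewrites as the purely monoid-theoretic statement that $s\,\alpha(P^*) = s\,\alpha(P^*)\,h$ for every $h \in \alpha(P)$.

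Next I would bootstrap this identity from single generators to arbitrary elements of $\alpha(P^*)$. If $f = 1_M$, the claimed equality $sqf = sq$ is trivial, so assume $f \neq 1_M$. Since $P^* = P^+ \cup \{\veps\}$, this forces $f \in \alpha(P^+)$, so $f = \alpha(p_1)\cdots\alpha(p_m)$ for some $p_1,\dots,p_m \in P$ with $m \geq 1$, and each $\alpha(p_i)$ lies in $\alpha(P)$. Writing $T = \alpha(P^*)$, a straightforward induction on $m$ using the relations $sT = sTh$ for $h \in \alpha(P)$ then yields $sT = sTf$.

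Finally, since $q \in \alpha(P^*) = T$, we get $sq \in sT = sTf$, so $sq = sq'f$ for some $q' \in T$; multiplying on the right by $f$ and using $f^2 = f$ gives $sqf = sq'ff = sq'f = sq$, which is the desired equality. The argument is essentially bookkeeping, and I expect no real obstacle: the only points needing a little care are translating the language-level identity~\eqref{eq:sfclos:mostable} into the monoid identity $sT = sTh$ (which is where $1_M$-safety of the $H \in \Hb$ enters) and handling the degenerate case $f = 1_M$ separately.
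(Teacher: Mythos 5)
Your proof is correct and takes essentially the same approach as the paper's. The paper proves the intermediate statement that for all $u,v\in P^*$ there exists $r\in\alpha(P^*)$ with $sr\alpha(u)=s\alpha(v)$ (by induction on the $P$-decomposition of $u$, using~\eqref{eq:sfclos:mostable} and $1_M$-safety at each step), which is the existential rephrasing of your set identity $s\alpha(P^*)=s\alpha(P^*)f$; both proofs then conclude by the same idempotency argument $sqf = sq'ff = sq'f = sq$.
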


  \begin{proof}
    The proof is based on the following preliminary result. For every $u,v \in P^*$, we show that,
    \begin{equation}\label{eq:sfclos:prelim}
      \text{there exists $r\in \alpha(P^*)$ such that $sr \alpha(u) = s\alpha(v)$.}
    \end{equation}
    We fix $u,v \in P^*$ for the proof of~\eqref{eq:sfclos:prelim}. There exists a decomposition $u = u_1 \cdots u_n$ with $u_1,\dots,u_n \in P$. We use induction on the length $n$ of this decomposition. If $n = 0$, then $u = \veps$ and it suffices to choose $r = \alpha(v) \in \alpha(P^*)$. Otherwise, $u = wu'$ with $w \in P$ and $u' \in P^*$ admits a decomposition of length $n-1$. Induction yields $r' \in \alpha(P^*)$ such that $sr'\alpha(u') = s\alpha(v)$. Moreover, since $w \in P$ and \Hb is a partition of $P$, there exists some $H \in \Hb$ such that $w \in H$. Since $s$ is \Hb-stable and $r' \in \alpha(P^*)$, it follows from~\eqref{eq:sfclos:mostable} that there exists $r \in \alpha(P^*)$ and $x \in \alpha(H)$ such that $sr' = srx$. Additionally, recall that $H \in \Hb$ is $1_M$-safe by hypothesis. Hence, since $x,\alpha(w) \in \alpha(H)$, we have $x = \alpha(w)$. Therefore, $sr'=sr\alpha(w)$. Altogether, this yields $sr \alpha(u) = sr \alpha(w)\alpha(u') = sr'\alpha(u') = s\alpha(v)$, which concludes the proof of~\eqref{eq:sfclos:prelim}.

    It remains to prove the fact. Consider $q,f \in \alpha(P^*)$ such that $f$ is idempotent. By definition, there exist $u,v \in P^*$ such that $q = \alpha(v)$ and $f = \alpha(u)$. Hence~\eqref{eq:sfclos:prelim} yields $r \in \alpha(P^*)$ such that $srf = sq$. Since $f$ is idempotent, we obtain $sqf = srff = srf = sq$, which completes the proof.
  \end{proof}

  We are ready to show that every language $K \in \Kb$ is $s$-safe. By definition, $K = P^* \cap \eta\inv(t)$ for $t \in N$. Given $u,v\in K$, we have to show that $s\alpha(u)=s\alpha(v)$. Let $n = \omega(M)$ and $e = (\alpha(u))^n \in E(M)$. Since $u,v \in K$, we have $\eta(u) = \eta(v) = t$. Hence, $\eta(u^n) = \eta(vu^{n-1})$. By definition of $\eta$, it follows that $(e,\alpha(vu^{n-1})) =  (\alpha(u^n),\alpha(vu^{n-1}))$ is a \Cs-pair. Hence, $e\alpha(vu^{n-1})e$ belongs to the \Cs-orbit of $e$, which is aperiodic by hypothesis on $\alpha$. This yields $(e\alpha(vu^{n-1})e)^n = (e\alpha(vu^{n-1})e)^{n+1}$. Multiplying by $s$ on the left gives $s(e\alpha(vu^{n-1})e)^n = s(e\alpha(vu^{n-1})e)^{n+1}$. Since $n = \omega(M)$, we know that $(e\alpha(vu^{n-1})e)^n$ is an idempotent of $M$. Moreover, since $u,v \in K$, we have $(e\alpha(vu^{n-1})e)^n \in \alpha(P^*)$. Therefore, Fact~\ref{fct:sfclos:icarbase} yields $s(e\alpha(vu^{n-1})e)^n=s$. Together with  $s(e\alpha(vu^{n-1})e)^n = s(e\alpha(vu^{n-1})e)^{n+1}$, this yields $s = se\alpha(vu^{n-1})e$.  We now multiply by $\alpha(u)$ on the right to get $s\alpha(u) = se\alpha(v)e$ (recall that $e = \alpha(u^n)$). Finally, $e$ is an idempotent of $M$ and since $u,v \in K$, we have $e,\alpha(v) \in \alpha(P^*)$. Hence, we may apply Fact~\ref{fct:sfclos:icarbase} twice to get $se\alpha(v)e = s \alpha(v)$. Altogether, this yields $s\alpha(u)=s\alpha(v)$, as~desired.

  \medskip
  \noindent
  {\bf Inductive step: $s$ is not \Hb-stable.} By hypothesis, we know that~\eqref{eq:sfclos:mostable} does not hold. Therefore, we get some $H \in \Hb$ such that the following \emph{strict} inclusion holds,
  \begin{equation}\label{eq:sfclos:godinduc}
    s \cdot \alpha(P^*H) \subsetneq s \cdot \alpha(P^*).
  \end{equation}
  We fix this language $H \in \Hb$ for the remainder of the proof. The following fact is proved by induction on our second parameter (the size of $\Hb$).

  \begin{fact}\label{fct:sfclos:alphind}
    There exists an \bsdp{\Cs}-partition \Ub of $(P \setminus H)^*$ such that every $U \in \Ub$ is $1_M$-safe.
  \end{fact}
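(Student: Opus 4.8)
The plan is to deduce Fact~\ref{fct:sfclos:alphind} directly from the statement of Lemma~\ref{lem:sfclos:fromapertostar} itself, applied to a strictly smaller instance. Concretely, I would set $P' = P \setminus H$ and $\Hb' = \Hb \setminus \{H\}$, and apply Lemma~\ref{lem:sfclos:fromapertostar} to $P'$, $\Hb'$, and the element $1_M$ in place of $P$, $\Hb$ and $s$.

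First I would check that this is a legitimate instance. Since $P$ is a prefix code with bounded synchronization delay and $P' \subseteq P$, Fact~\ref{fct:sprefsub} gives that $P'$ is again a prefix code with bounded synchronization delay (with the same delay $d$). Because $\Hb$ partitions $P$ into pairwise disjoint blocks, deleting the block $H$ leaves a family $\Hb'$ whose union is exactly $P \setminus H = P'$; moreover every member of $\Hb'$ is a member of $\Hb$, hence lies in \bsdp{\Cs} and is $1_M$-safe. Thus $\Hb'$ is an \bsdp{\Cs}-partition of $P'$ all of whose elements are $1_M$-safe, so the hypotheses of Lemma~\ref{lem:sfclos:fromapertostar} are met.

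Next I would argue that appealing to Lemma~\ref{lem:sfclos:fromapertostar} here is an appeal to the induction hypothesis, i.e. that the three-parameter measure strictly decreases. Since $(P')^+ \subseteq P^+$ we have $\alpha((P')^+) \subseteq \alpha(P^+)$, so the first parameter does not increase; and since $H \in \Hb$ we have $|\Hb'| = |\Hb| - 1 < |\Hb|$, so the second parameter strictly decreases (and if the first already strictly decreases, the appeal is valid a fortiori). Taking $s = 1_M$ is precisely what lets us avoid having to control the third parameter. This also covers the degenerate situation $\Hb = \{H\}$, where $P' = \emptyset$, $(P')^{*} = \{\veps\}$, and the base case of Lemma~\ref{lem:sfclos:fromapertostar} applies vacuously. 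The induction hypothesis therefore produces an \bsdp{\Cs}-partition of $(P')^{*} = (P \setminus H)^{*}$ all of whose elements are $1_M$-safe; calling it \Ub yields exactly the fact.

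The only delicate point is the bookkeeping of the induction measure: one must notice that removing a block of $\Hb$ never increases $\alpha(P^+)$, and that when it leaves $\alpha(P^+)$ unchanged the size of the partition genuinely drops, so the recursion is well-founded. No aperiodicity, and in fact no property of $s$, is needed here — this step is purely combinatorial.
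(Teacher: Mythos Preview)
Your proposal is correct and follows essentially the same approach as the paper: set $P' = P \setminus H$, $\Hb' = \Hb \setminus \{H\}$, verify via Fact~\ref{fct:sprefsub} that $P'$ is still a prefix code with bounded synchronization delay, observe that the first induction parameter has not increased while the second has strictly decreased, and invoke the induction hypothesis of Lemma~\ref{lem:sfclos:fromapertostar} with $s = 1_M$. The paper does not spell out the degenerate case $\Hb = \{H\}$, but otherwise your argument matches it line for line.
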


  \begin{proof}
    Clearly, $P \setminus H \subseteq P$ remains a prefix code with bounded synchronization delay by Fact~\ref{fct:sprefsub}. Moreover, it is immediate that $\Gb = \Hb \setminus \{H\}$ is an \bsdp{\Cs}-partition of $P\setminus H$ such that every $G \in \Gb$ is $1_M$-safe. Additionally, it is clear that $\alpha((P \setminus H)^+) \subseteq \alpha(P^+)$ (our first induction parameter has not increased) and $\Gb \subsetneq \Hb$ (our second parameter has decreased). Hence, we may apply induction in Lemma~\ref{lem:sfclos:fromapertostar} for the case when $P,\Hb$ and $s$ have been replaced by $P \setminus H$, $\Gb$ and $1_M$ respectively. This yields an \bsdp{\Cs}-partition \Ub of $(P \setminus H)^*$ such that every $U \in \Ub$ is $1_M$-safe.
  \end{proof}

  We fix the partition \Ub of $(P \setminus H)^*$ given by Fact~\ref{fct:sfclos:alphind} for the remainder of the proof. We distinguish two independent subcases. Since $H$ is an element of the partition \Hb of $P$, we have $H \subseteq P$. It is therefore immediate that the inclusion $\alpha(P^*H) \subseteq \alpha(P^+)$ holds. We use a different argument depending on whether this inclusion is strict or not.

  \medskip
  \noindent
  {\bf Subcase~1: we have the equality \boldmath{$\alpha(P^*H) = \alpha(P^+)$}.} We handle this subcase with induction on our third parameter (\emph{i.e.}, the size of $s\alpha(P^*)$). Recall that we have to build an \bsdp{\Cs}-partition \Kb of~$P^*$ containing only $s$-safe languages.

  Observe that the hypothesis that $H$ is $1_M$-safe means that there exists some element $t \in M$ satisfying $\alpha(H) = \{t\}$. Similarly, since every $U \in \Ub$ is $1_M$-safe, there exists some element $r_U \in M$ such that $\alpha(U) = \{r_U\}$. We fix these elements of $M$ for the rest of this subcase. The construction of \Kb is based on the following lemma, which is where we use our hypotheses and induction.

  \begin{fact}\label{fct:sfclos:sc1carac}
    For every $U \in \Ub$, there exists an \bsdp{\Cs}-partition $\Wb_{U}$ of $P^*$ such that every $W \in \Wb_U$ is $sr_Ut$-safe.
  \end{fact}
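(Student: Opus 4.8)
The plan is to derive Fact~\ref{fct:sfclos:sc1carac} directly from the induction hypothesis of Lemma~\ref{lem:sfclos:fromapertostar} itself, applied to the \emph{same} prefix code $P$ and the \emph{same} \bsdp{\Cs}-partition $\Hb$, but with the element $s$ replaced by $sr_Ut$. Indeed, $P$ is still a prefix code with bounded synchronization delay and $\Hb$ is still an \bsdp{\Cs}-partition of $P$ all of whose members are $1_M$-safe, so the hypotheses of Lemma~\ref{lem:sfclos:fromapertostar} hold for the triple $(P,\Hb,sr_Ut)$; and its conclusion is exactly an \bsdp{\Cs}-partition $\Wb_U$ of $P^*$ whose members are all $sr_Ut$-safe. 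Thus the only thing to verify is that this is a \emph{legitimate} recursive call, i.e., that the triple of induction parameters $(|\alpha(P^+)|,\ |\Hb|,\ |s\cdot\alpha(P^*)|)$ drops strictly. Parameters (1) and (2) are literally unchanged, so everything comes down to proving $|\,sr_Ut\cdot\alpha(P^*)\,| < |\,s\cdot\alpha(P^*)\,|$.

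To get this strict drop I would first record that $r_U$ and $t$ are genuine elements of $M$: Fact~\ref{fct:sfclos:alphind} gives that every $U\in\Ub$ is $1_M$-safe, hence $\alpha(U)=\{r_U\}$, and $H\in\Hb$ is $1_M$-safe, hence $\alpha(H)=\{t\}$. Then the key inclusion is obtained as follows. Since $U\subseteq(P\setminus H)^*\subseteq P^*$ and $H\subseteq P$, we have $r_Ut\in\alpha\bigl((P\setminus H)^*H\bigr)\subseteq\alpha(P^*H)$, and by the defining hypothesis of Subcase~1 this last set equals $\alpha(P^+)$. Multiplying on the right by $\alpha(P^*)$ and using $P^+P^*=P^+$ yields $sr_Ut\cdot\alpha(P^*)\subseteq s\cdot\alpha(P^+)$. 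On the other hand, $H\in\Hb$ was chosen in the inductive step precisely so that $s\cdot\alpha(P^*H)\subsetneq s\cdot\alpha(P^*)$, which under Subcase~1 reads $s\cdot\alpha(P^+)\subsetneq s\cdot\alpha(P^*)$. Chaining these, $sr_Ut\cdot\alpha(P^*)\subseteq s\cdot\alpha(P^+)\subsetneq s\cdot\alpha(P^*)$, so parameter (3) strictly decreases and the recursive application of Lemma~\ref{lem:sfclos:fromapertostar} produces the desired $\Wb_U$.

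The argument is short, so there is no serious obstacle; the care needed is purely in the bookkeeping. The two points to get right are: that the partition $\Ub$ (and hence the well-definedness of each $r_U$ via $1_M$-safety) is already available at this stage, which is why Fact~\ref{fct:sfclos:alphind} was proved first; and — the real crux — that it is the Subcase~1 equality $\alpha(P^*H)=\alpha(P^+)$ which converts the inductive-step inequality $s\cdot\alpha(P^*H)\subsetneq s\cdot\alpha(P^*)$ into a bona fide strict decrease of $s\cdot\alpha(P^*)$ after right-multiplication by $r_Ut$. This is exactly what makes Subcase~1 ``cheap'': it uses none of the aperiodicity hypothesis on the \Cs-orbits (that is spent only in the base case), being a pure induction-parameter argument. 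Assembling the partitions $\Wb_U$ for $U\in\Ub$ into the partition \Kb of $P^*$ required by Lemma~\ref{lem:sfclos:fromapertostar} is then the next step of the main proof, outside of Fact~\ref{fct:sfclos:sc1carac} itself.
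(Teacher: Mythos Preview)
Your proposal is correct and essentially identical to the paper's own proof: both apply the induction hypothesis of Lemma~\ref{lem:sfclos:fromapertostar} with the same $P$ and $\Hb$ but with $s$ replaced by $sr_Ut$, and both verify the strict decrease of the third parameter via the chain $sr_Ut\cdot\alpha(P^*)\subseteq s\cdot\alpha(P^+)=s\cdot\alpha(P^*H)\subsetneq s\cdot\alpha(P^*)$, using the Subcase~1 equality and~\eqref{eq:sfclos:godinduc}. The only cosmetic difference is that the paper routes through $sr_Ut\in s\alpha(P^*H)$ and $s\alpha(P^*HP^*)\subseteq s\alpha(P^+)$ rather than your $P^+P^*=P^+$, but these are the same inclusion.
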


  \begin{proof}
    We fix $U \in \Ub$ for the proof. Since \Ub is a partition of $(P \setminus H)^*$, we have $\alpha(U) \subseteq \alpha(P^*)$ which means that $r_U \in \alpha(P^*)$. Thus, we have $sr_Ut \in s\alpha(P^*H)$. Therefore, $sr_Ut\alpha(P^*) \subseteq s\alpha(P^*HP^*)$ and since $H \subseteq P$, we get $sr_Ut\alpha(P^*) \subseteq s\alpha(P^+)$. Combined with our hypothesis in Subcase~1 (\emph{i.e.}, $\alpha(P^*H) = \alpha(P^+)$), this yields $sr_Ut\alpha(P^*) \subseteq s\alpha(P^*H)$. Finally, the hypothesis~\eqref{eq:sfclos:godinduc} of the inductive step yields the \emph{strict} inclusion $sr_Ut\alpha(P^*) \subsetneq s\alpha(P^*)$: the third parameter in our induction has decreased. On the other hand, the first two parameters have not increased, as they only depend on $P$ and~\Hb, which remain unchanged. Consequently, by induction, we may apply Lemma~\ref{lem:sfclos:fromapertostar} in the case when $s \in M$ has been replaced by $sr_Ut \in M$. This yields the desired \bsdp{\Cs}-partition $\Wb_{U}$ of~$P^*$.
  \end{proof}

  We are ready to define the partition \Kb of $P^*$. Using Fact~\ref{fct:sfclos:sc1carac}, we define,
  \[
    \Kb = \Ub \cup \bigcup_{U \in \Ub} \{UHW \mid W \in \Wb_U\}.
  \]
  It remains to show that \Kb is an \bsdp{\Cs}-partition of $P^*$ and that every $K \in \Kb$ is $s$-safe. Let us first verify that \Kb is a partition of $P^*$. Since $P$ is a prefix code, every word $w \in P^*$ admits a \emph{unique} decomposition $w = w_1 \cdots w_n$ with $w_1,\dots,w_n \in P$. If no factor $w_i$ belongs to $H$, then $w \in (P \setminus H)^*$ and $w$ belongs to some unique $U \in \Ub$. Otherwise, let $w_i$ be the leftmost factor such that $w_i \in H$. This implies that $w_1 \cdots w_{i-1} \in (P \setminus H)^*$, which yields a unique $U \in \Ub$ such that $w_1 \cdots w_{i-1} \in U$. Moreover, $w_{i+1} \cdots w_n \in P^*$, which yields a unique $W \in \Wb_U$ such that $w_{i+1} \cdots w_n \in W$. It follows that $w \in UHW$, and that $UHW$ is the unique element of \Kb containing $w$.

  Let us next verify that every $K \in \Kb$ belongs to \bsdp{\Cs}. If $K \in \Ub$, this is immediate by definition of \Ub in Fact~\ref{fct:sfclos:alphind}. Otherwise, $K = UHW$ with $U \in \Ub$ and $W \in \Wb_U$. We know that $U,H,W \in \bsdp{\Cs}$: this an hypothesis for $H$ and stated in Facts~\ref{fct:sfclos:alphind} and~\ref{fct:sfclos:sc1carac} for $U$ and $W$. Furthermore, one may verify that the concatenation $UHW$ is \emph{unambiguous} since $P$ is a prefix code, $U \subseteq (P \setminus H)^*$ and $W \subseteq P^*$. Altogether, it follows that $K \in \bsdp{\Cs}$.

  Finally, we prove that every $K \in \Kb$ is $s$-safe. If $K \in \Ub$, this is immediate since $K$ is actually $1_M$-safe by definition of \Ub in Fact~\ref{fct:sfclos:alphind}. Otherwise, $K=UHW$ with $U \in \Ub$ and $W \in \Wb_U$. Consider $w,w' \in K$. We show that $s\alpha(w) = s \alpha(w')$. By definition, $\alpha(H) = \{t\}$ and $\alpha(U) = \{r_U\}$ which implies that $s\alpha(w) = str_U \alpha(x)$ and $s\alpha(w') = str_U \alpha(x')$ for $x,x' \in W$. Moreover, $W \in \Wb_U$ is $sr_Ut$-safe by definition in Fact~\ref{fct:sfclos:sc1carac}. Therefore, $s\alpha(w) = s \alpha(w')$. This concludes the proof of Subcase~1.

  \medskip
  \noindent
  {\bf Subcase~2: we have the strict inclusion \boldmath{$\alpha(P^*H) \subsetneq \alpha(P^+)$}.} In this case, we conclude using induction on the first parameter (\emph{i.e.}, the size of $\alpha(P^+)$). Recall that our objective is to construct an \bsdp{\Cs}-partition \Kb of $P^*$ containing only $s$-safe languages.

  Consider a word $w \in P^*$. Since $P$ is a prefix code, $w$ admits a unique decomposition $w = w_1 \cdots w_n$ with $w_1,\dots,w_n \in P$. We may uniquely decompose $w$ in two (possibly empty) parts: a prefix $w_1 \cdots w_i \in ((P \setminus H)^*H)^*$ and a suffix in $w_{i+1} \cdots w_n \in (P \setminus H)^*$. Using induction, we construct \bsdp{\Cs}-partitions of the possible prefixes and suffixes. Then, we combine them to construct a partition of the whole set $P^*$. Actually, we already handled the suffixes: Fact~\ref{fct:sfclos:alphind} provides an \bsdp{\Cs}-partition \Ub of $(P \setminus H)^*$. It remains to partition the prefixes. We do so this in the following fact, which is proved using the hypothesis of Subcase~2 and induction.

  \begin{fact}\label{fct:sfclos:sc2carac}
    There exists an \bsdp{\Cs}-partition \Vb of $((P \setminus H)^*H)^*$ such that every $V \in \Vb$ is $1_M$-safe.
  \end{fact}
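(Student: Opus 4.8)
The plan is to reduce Fact~\ref{fct:sfclos:sc2carac} to a single application of the induction hypothesis of Lemma~\ref{lem:sfclos:fromapertostar}, with the prefix code $Q := (P \setminus H)^*H$ playing the role of $P$. First, I would record that $Q$ is a prefix code of bounded synchronization delay: choose $d \geq 1$ such that $P$ has synchronization delay $d$; then Fact~\ref{fct:sprefnest} (applied to $P$ and $H \subseteq P$) shows that $Q = (P \setminus H)^*H$ has synchronization delay $d+1$. Note that $Q^* = ((P \setminus H)^*H)^*$ is exactly the set we must partition, so an \bsdp{\Cs}-partition of $Q^*$ with $1_M$-safe members is precisely the output promised by Lemma~\ref{lem:sfclos:fromapertostar} applied to $Q$ with the parameter $s$ set to $1_M$.

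To feed Lemma~\ref{lem:sfclos:fromapertostar} with that choice, I must supply a $1_M$-safe \bsdp{\Cs}-partition of $Q$ itself. I would take the partition \Ub of $(P \setminus H)^*$ fixed above (via Fact~\ref{fct:sfclos:alphind}, whose members are all $1_M$-safe) and set $\Qb := \{UH \mid U \in \Ub\}$. The verification is routine from the fact that $P$ is a prefix code: (i) every $w \in (P \setminus H)^*H$ factors \emph{uniquely} as $w = uh$ with $u \in (P \setminus H)^*$ and $h \in H$ (this is the uniqueness part of Fact~\ref{fct:prefdec}, reading off the unique $P$-decomposition of $w$), and $u$ then lies in a unique $U \in \Ub$, so $\Qb$ is indeed a partition of $Q$; (ii) for each $U \in \Ub$, the concatenation $UH$ is unambiguous for the same reason (from a putative equality $u_1h_1 = u_2h_2$ one reads off the unique $P$-decomposition and gets $u_1 = u_2$, $h_1 = h_2$), and since $U \in \bsdp{\Cs}$ (Fact~\ref{fct:sfclos:alphind}) and $H \in \bsdp{\Cs}$ (it is a member of the \bsdp{\Cs}-partition \Hb), closure of \bsdp{\Cs} under unambiguous concatenation gives $UH \in \bsdp{\Cs}$; (iii) each $UH$ is $1_M$-safe because $\alpha(UH) = \alpha(U)\alpha(H)$ is a product of two singletons (both $U$ and $H$ being $1_M$-safe), hence a singleton.

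The last step is to invoke the induction hypothesis of Lemma~\ref{lem:sfclos:fromapertostar} with $P$, \Hb, $s$ replaced by $Q$, $\Qb$, $1_M$; this yields the desired partition \Vb of $Q^* = ((P \setminus H)^*H)^*$ with all members $1_M$-safe. The one point requiring care — and the main obstacle — is to check that the induction is legitimate, i.e.\ that the \emph{first} (most important) parameter strictly decreases. For this I would note that $Q^+ \subseteq P^*H$: every word of $((P \setminus H)^*H)^k$ ($k \geq 1$) ends with a factor in $H$ while the preceding part lies in $P^*$ (a product of words in $(P \setminus H)^* \cup H \subseteq P^*$). Hence $\alpha(Q^+) \subseteq \alpha(P^*H)$, and the Subcase~2 hypothesis $\alpha(P^*H) \subsetneq \alpha(P^+)$ gives $\alpha(Q^+) \subsetneq \alpha(P^+)$, so $|\alpha(Q^+)| < |\alpha(P^+)|$ and induction applies regardless of how the remaining two parameters change. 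Everything else is bookkeeping inherited from $P$ being a prefix code.
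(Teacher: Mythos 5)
Your proposal matches the paper's proof essentially line for line: the same prefix code $Q = (P\setminus H)^*H$ via Fact~\ref{fct:sprefnest}, the same partition $\{UH \mid U \in \Ub\}$ of $Q$ (the paper names it $\Fb$), the same verification that the first induction parameter decreases via $Q^+ \subseteq P^*H$ together with the Subcase~2 hypothesis, and the same final invocation of Lemma~\ref{lem:sfclos:fromapertostar} with parameters $Q$, $\{UH \mid U \in \Ub\}$, $1_M$. The only difference is that you spell out a few routine prefix-code uniqueness checks in slightly more detail.
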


  \begin{proof}
    Let $Q = (P \setminus H)^*H$. In view of Fact~\ref{fct:sprefnest}, $Q$ is a prefix code with bounded synchronization delay. We apply induction in Lemma~\ref{lem:sfclos:fromapertostar} for the case when $P$ has been replaced by $Q$. Doing so requires building an appropriate \bsdp{\Cs}-partition of $Q$ and proving that one of our induction parameters has decreased.

    Let $\Fb = \{UH \mid U \in \Ub\}$. Since \Ub is a partition of $(P\setminus H)^*$ and $P$ is a prefix code, one may verify that \Fb is a partition of $Q = (P\setminus H)^*H$. Moreover, it only contains languages in \bsdp{\Cs}. Indeed, if $U \in \Ub$, then the concatenation $UH$ is \emph{unambiguous} since $U\subseteq (P\setminus H)^*$ and $P$ is a prefix code. Moreover, $U,H \in \bsdp{\Cs}$ by hypothesis. Finally, $UH$ is $1_M$-safe since this is the case for both $U$ and $H$ by definition. It remains to show that our induction parameters have decreased. Since $Q = (P \setminus H)^*H$, it is clear that $Q^+ \subseteq P^*H$. Now, $\alpha(P^*H) \subsetneq \alpha(P^+)$ by hypothesis in Subcase~2, whence $\alpha(Q^+) \subsetneq \alpha(P^+)$: our first induction parameter has decreased. Thus, we may apply Lemma~\ref{lem:sfclos:fromapertostar} in the case when $P,\Hb$ and $s$ have been replaced by $Q$, $\Fb$ and $1_M$ respectively. This yields the desired \bsdp{\Cs}-partition \Vb of $((P \setminus H)^*H)^*$.
  \end{proof}

  We are ready to construct the \bsdp{\Cs}-partition \Kb of $P^*$ and conclude the main argument. Let $\Kb = \{VU \mid V \in \Vb \text{ and } U \in \Ub\}$. It is immediate by definition that \Kb is a partition of $P^*$ since $P$ is a prefix code and $\Vb,\Ub$ are partitions of $((P \setminus H)^*H)^*$ and $(P \setminus H)^*$ respectively (cf.~the above discussion). Moreover, every $K \in \Kb$ belongs to \bsdp{\Cs}. Indeed, one may verify that $K$ is the \emph{unambiguous} concatenation $VU$ of $V \in \Vb$ and $U\in \Ub$ which both belong to $\bsdp{\Cs}$. It remains to prove that every $K \in \Kb$ is $s$-safe. Let $w,w' \in K$, we show that $s\alpha(w) = s\alpha(w')$. By definition, we have $K = VU$ with $V \in \Vb$ and $U \in \Ub$. Therefore, $w =vu$ and $w' = v'u'$ with $u,u' \in U$ and $v,v' \in V$. Since $U$ and $V$ are both $1_M$-safe by definition, we have $\alpha(u) = \alpha(u')$ and $\alpha(v) = \alpha(v')$. It follows that $s\alpha(w) = s\alpha(w')$, which concludes the proof of Lemma~\ref{lem:sfclos:fromapertostar}.
\end{proof}


\section{First logical characterization: first-order logic}
\label{sec:folog}
We now turn to the logical characterizations of star-free closure. In this section, we present the first one. It generalizes a well-known theorem of McNaughton and Papert~\cite{mnpfosf}, which characterizes the star-free languages as those which can be defined by a sentence of first-order logic equipped with the linear ordering (\emph{i.e.}, $\sfr=\sfp{\stzer} = \fow$). Here, we extend this theorem to all classes \sfp{\Cs} where \Cs is a \vari. More precisely, we associate a set \infsigc of first-order predicates to every \vari \Cs (its definition is taken from~\cite{PZ:generic19}). Then, we show that \sfp{\Cs} contains exactly the languages that can be defined by a sentence of first-order logic equipped with the predicates in \infsigc (\emph{i.e.}, $\sfp{\Cs} = \foc$). First, we briefly recall the definition of first-order logic over words. Then, we present the theorem itself.

\subsection{Definitions}

We view each word $w \in A^*$ as a logical structure. Its domain is the set $\pos{w} = \{0,\dots,|w|+1\}$ of positions in $w$. A position $i$ such that $1 \leq i \leq |w|$ carries a label in $A$. On the other hand, $0$ and $|w|+1$ are artificial \emph{unlabeled} positions. We use first-order logic (\fo) to express properties of words~$w$: a formula can quantify over the positions of $w$ with first-order variables and use a predetermined set of predicates to test properties of these positions. We also allow two constants ``$\mathit{min}$'' and ``$\mathit{max}$''  interpreted as the artificial unlabeled positions $0$ and $|w|+1$. Let us briefly recall the definition.

\medskip
\noindent
{\bf Signatures.}  A \emph{signature} is a (possibly infinite) set of predicates interpreted over words in $A^*$. Consider a natural number $k \in \nat$. A \emph{predicate of arity $k$} over $A$ is defined by a symbol $P$ and for every word $w \in A^*$, an interpretation of $P$ as a relation of arity $k$ over the set of positions of $w$. More precisely, with every word $w \in A^*$, the predicate $P$ associates a set of $k$-tuples of positions of~$w$ (\emph{i.e.}, a subset of $((\pos{w})^k$). If $(i_1,\dots,i_k)$ is a $k$-tuple in this set, we shall say that \emph{$P(i_1,\dots,i_k)$ holds (in $w$)}. All predicates that we consider in practice are either unary (they have arity $1$) or binary (they have arity $2$). Let us present them.

First, we use \emph{label predicates}. For every letter $a\in A$, we associate a unary predicate (also denoted by ``$a$''). It is interpreted as the unary relation selecting all positions whose label is $a$: given a word $w \in A^*$ and $i \in \pos{w}$, we have that $a(i)$ holds when the label of $i$ is ``$a$''. In particular, if $a(i)$ holds, then $i$ cannot be one of the two artificial positions $0$ and $|w|+1$. Abusing notation, we write ``$A$'' for the set of all label predicates. Moreover, we use a binary predicate ``$<$'', interpreted as the linear ordering between positions. Given a word $w$ and $i,j \in \pos{w}$, we have that ${<}(i,j)$ holds if $i < j$. For the sake of improved readability, we use the infix notation, writing $i < j$ instead of~${<}(i,j)$.

Finally, with each class \Cs, we associate two \emph{generic} sets of predicates. The first one, written \infsigc, contains a binary ``infix'' predicate $I_L(x,y)$ for every $L \in \Cs$. Given $w \in A^*$ and two positions $i,j \in \pos{w}$, we have $w \models I_L(i,j)$ when $i< j$ \emph{and} $\infix{w}{i}{j}\in L$. The second set, written  \prefsigc, contains a unary ``prefix'' predicate $P_L(x)$ for every $L \in \Cs$. Given $w \in A^*$ and a position $i \in \pos{w}$, we have $w \models P_L(i)$ when $0 < i$ \emph{and} $\infix{w}{0}{i} \in L$.

\begin{rem}\label{rem:order-as-IL}
  All classes \Cs that we consider in practice are \varis. In particular, this implies that $A^* \in \Cs$. Hence, set\/ \infsigc always contains the linear order predicate, since ``$x<y$'' is clearly equivalent to ``$I_{A^*}(x,y)$''. In particular, when \Cs is the trivial \vari $\stzer= \{\emptyset,A^*\}$, the signature \infsig{\stzer} contains only $I_{A^*}$ and $I_{\emptyset}$. Since $I_{\emptyset}(i,j)$ \emph{never holds}, using \infsig{\stzer} boils down to considering $\{<\}$.
\end{rem}

\medskip
\noindent
{\bf First-order formulas.} With a signature \frS (\emph{i.e.}, \frS is a possibly infinite set of predicates), we associate a set $\fo[\frS]$ of first-order formulas. They are built-up from simple expressions called \emph{atomic formulas}. The atomic formulas can test properties of the positions that were quantified by first-order variables using either equality or the predicates in \frS. More precisely, they are of the form:
\[
  x_1=x_2\quad\text{or}\quad  P(x_1,\dots,x_k),
\]
where $P \in \frS$ is a predicate of arity $k$ for some $k \in \nat$ and, for every $i \leq k$, $x_i$ is either a first-order variable or one of the two constants $min$ and $max$ (symbols can be repeated: it may happen that $x_i$ and $x_j$ are the same symbol for $i \neq j$). We define $\fo[\frS]$ as the least set of expressions containing the atomic formulas and closed under the following rules:
\begin{itemize}
  \item \emph{Disjunction}: if $\varphi$ and $\psi$ are $\fo[\frS]$ formulas, then so is $(\varphi \vee \psi)$.
  \item \emph{Negation}: if $\varphi$ is an $\fo[\frS]$ formula, then so is $(\neg \varphi)$.
  \item \emph{Existential quantification}: for any first-order variable $x$, if $\varphi$ is an $\fo[\frS]$ formula, then so is $(\exists x\ \varphi)$.
\end{itemize}
For the sake of improved readability, we omit the parentheses when there is no ambiguity. Moreover, we define the other standard logical connectives as abbreviations. We write $\varphi \wedge \psi$ for $\neg ((\neg \varphi) \vee (\neg \psi))$ and $\varphi \Rightarrow \psi$ for $(\neg \varphi) \vee \psi$. We also write $\forall x\ \varphi$ for $\neg (\exists x\ \neg \varphi)$.

Finally, we use the standard notion of ``\emph{free variable}''. Let $\varphi$ be an $\fo[\frS]$ formula. An occurrence of some variable $x$ in $\varphi$ is said to be \emph{bound} when it occurs inside an atomic formula that is under the scope of a quantification $\exists x$. For example, in the following formula $a(x) \wedge \exists y\ (x < y \wedge b(y)) \wedge \exists x\ c(x)$,
the occurrence of the variable $x$ inside the atomic formula $c(x)$ is bound. We say that a variable $x$ is \emph{free} in a formula $\varphi$ if there \emph{exists} an occurrence of $x$ in $\varphi$ that is \emph{not} bound. For example, in the formula above, there also exist two occurrences of $x$ that are not bound (inside the atomic formula $a(x)$ and inside the atomic formula $x < y$). Hence, $x$ is a free variable of this formula. A \emph{sentence} is a formula that has no free variable.

\medskip
\noindent
{\bf Semantics.} We define when a word $w \in A^*$ satisfies a \emph{sentence} $\varphi$ (a fact that we denote by $w \models \varphi$). The definition is by structural induction on the sentence $\varphi$. This means that we actually need to give the semantics of \emph{all formulas}, not just sentences. Indeed, in general, a sentence may have subformulas with free variables. To tackle this issue, we need the notion of variable assignment. Let $w \in A^*$ be a word and let \Xs be a finite set of variables. We define an \emph{assignment} of \Xs in $w$ as a map $\mu: \Xs \to \pos{w}$. Additionally, we canonically extend every such assignment as a map $\mu: \Xs \cup \{min,max\} \to \pos{w}$ by defining $\mu(min) = 0$ and $\mu(max) = |w|+1$.

Let $\varphi$ be an $\fo[\frS]$ formula and let $\Xs$ be a set of variables containing all free variables of $\varphi$. For every word $w \in A^*$ and every assignment $\mu: \Xs \to \pos{w}$, we write $w,\mu \models \varphi$ when one the following properties hold:
\begin{itemize}
  \item $\varphi :=$ ``$x_1=x_2$'' and $\mu(x_1)=\mu(x_2)$ holds.
  \item $\varphi :=$ ``$P(x_1,\dots,x_k)$'' for some predicate $P \in \frS$ and $P(\mu(x_1),\dots,\mu(x_k))$ holds.
  \item $\varphi :=$ ``$\psi \vee \chi$'' and either $w,\mu \models \psi$ or $w,\mu \models \chi$.
  \item $\varphi :=$ ``$\neg \psi$'' and $w,\mu \not\models \psi$ ($w$ does not satisfy $\psi$ under $\mu$).
  \item $\varphi :=$ ``$\exists y\ \psi$'' and there exists an assignment $\gamma: \Xs \cup \{y\} \to \pos{w}$ such that $\mu(x) = \gamma(x)$ for every $x \in \Xs \setminus \{y\}$ and $w,\gamma \models \psi$.
\end{itemize}
The definition depends on an assignment $\mu: \Xs \to \pos{w}$ where \Xs contains \emph{all} free variables of $\varphi$.  In particular, it may happen that \Xs contains variables that are \emph{not} free in $\varphi$. Yet, this is allowed only for the sake of simplifying the presentation: one may verify that whether $w,\mu \models \varphi$ only depends on the restriction of $\mu$ to the variables that are free in $\varphi$. In particular, when no variable is free variable (\emph{i.e.}, $\varphi$ is a sentence), whether $w,\mu \models \varphi$ holds is independent from the assignment $\mu$. Hence, we simply write $w \models \varphi$ in this case. Altogether, it follows that each sentence $\varphi$ of $\fo[\frS]$ defines a language: we let $L(\varphi) = \{w \in A^* \mid w \models \varphi\}$.

\medskip
\noindent
{\bf Classes associated to first-order logic.} To every set of predicates \frS, we associate a class of languages $\fo(\frS)$. It consists of all languages that can be defined by a~sentence in $\fo[A,\frS]$ (that is, we use the signature $A \cup \frS$, containing the label predicates \emph{and} those in \frS). For the sake of avoiding clutter, we often abuse terminology and speak of an $\fo(\frS)$-sentence to mean an  $\fo[A,\frS]$-sentence.

\begin{exa}
  Let $A = \{a,b\}$. Let us present some languages in \fow. We use the following abbreviation in first-order sentences: we write ``$x+1 = y$'' for the formula ``$(x < y) \wedge \neg \exists z\ (x < z \wedge z < y)$''. In other words, ``$+1$'' is interpreted as the successor relation over positions. We have $A^*aA^*bA^*a \in \fow$ since it is defined by the following sentence: $\exists x \exists y\ (x < y) \wedge a(x) \wedge b(y)) \wedge (\exists x\ a(x) \wedge (x+1 = max))$. Moreover, $(ab)^* \in \fow$ as well, since it is defined by the following \fow sentence:
  \[
    \forall x \forall y\ \left(
      x+1 = y
    \right) \Rightarrow
    \left(
      \begin{array}{ll}
        & \left(x = min \wedge y = max\right) \\
        \vee & \left(x = min \wedge a(y)\right)\\
        \vee & \left(a(x) \wedge b(y)\right) \\
		\vee & \left(b(x) \wedge a(y)\right) \\
        \vee& \left(b(x) \wedge y = max\right)
      \end{array}\right).
  \]
\end{exa}

We are interested in classes $\fo(\infsigc)$ where \Cs is a \vari. Indeed, we prove below that $\fo(\infsigc) = \sfp{\Cs}$ in that case. However, when \Cs is a \emph{group \vari}, the presentation of $\fo(\infsigc)$ can be simplified in view of the following lemma.

\begin{lem} \label{lem:gensig}
  Let \Gs be a group \vari. Then, $\fo(\infsigg) = \fo(<,\prefsigg)$.
\end{lem}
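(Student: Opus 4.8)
The plan is to prove the two inclusions $\fo(<,\prefsigg)\subseteq\fo(\infsigg)$ and $\fo(\infsigg)\subseteq\fo(<,\prefsigg)$ separately. The first inclusion is the easy one: by Remark~\ref{rem:order-as-IL} the order predicate ``$x<y$'' is already expressible as ``$I_{A^*}(x,y)$'' since $A^*\in\Gs$, so it suffices to simulate each prefix predicate $P_L(x)$ using infix predicates. The natural guess is $P_L(x)\equiv I_L(min,x)$, since $\infix{w}{0}{x}$ is exactly the prefix of $w$ up to position $x$; one just has to check the boundary condition ($P_L(i)$ requires $0<i$, which matches $I_L(min,i)$ requiring $min<i$). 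So this direction reduces to a one-line translation of atomic formulas, pushed through the inductive definition of formulas in the obvious way.

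**The substantive direction.** The content is the converse inclusion: every infix predicate $I_L(x,y)$ with $L\in\Gs$ must be expressed using only $<$ and prefix predicates $P_K(x)$ for $K\in\Gs$. The idea is that for a group language, knowing the prefix class up to position $x$ and the prefix class up to position $y$ should determine the infix class of $\infix{w}{x}{y}$. Concretely, fix a morphism $\beta:A^*\to H$ into a finite group recognizing $L$ (such exists since $\Gs$ is a group \vari and contains $L$), say $L=\beta^{-1}(F)$. For a word $w$ and positions $i<j$, write $p=\beta(\infix{w}{0}{i})$ and $q=\beta(\infix{w}{0}{j})$; then since $\infix{w}{0}{j}$ differs from $\infix{w}{0}{i}$ by appending $\infix{w}{i}{j}$ (modulo the single letter at position $i$, which needs care — see below), we have $\beta(\infix{w}{i}{j})$ determined by $p^{-1}q$ up to multiplication by $\beta$ of that one letter. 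The point is that each value $p=\beta(\infix{w}{0}{i})$ is tested by a prefix predicate $P_{K_p}$ with $K_p=\beta^{-1}(p)\in\Gs$, and ``$i<j$ and $\infix{w}{i}{j}\in L$'' becomes a finite disjunction, over pairs $(p,q,a)\in H\times H\times A$, of the conjunction ``$i<j$, $P_{K_p}(i)$, $P_{K_q}(j)$, $a(i)$, and $\beta(a)^{-1}p^{-1}q\in F$''. Since $H$ and $A$ are finite this is a legitimate $\fo(<,\prefsigg)$ formula, and one then lifts this atomic-level translation through disjunction, negation and quantification.

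**Main obstacle.** The delicate point I expect to wrestle with is the off-by-one bookkeeping around the artificial positions $0$ and $|w|+1$ and the labeled position $i$ itself. Recall $\infix{w}{i}{j}=a_{i+1}\cdots a_{j-1}$ uses the letters \emph{strictly} between $i$ and $j$, whereas $\infix{w}{0}{i}=a_1\cdots a_{i-1}$ stops strictly before $i$; so $\infix{w}{0}{j}=\infix{w}{0}{i}\cdot a_i\cdot\infix{w}{i}{j}$ when $1\le i\le|w|$, which is why the letter $a_i$ at position $i$ enters the equation and must be read off with a label predicate $a(i)$. One also has to treat the edge cases $i=0$ (then $\infix{w}{0}{i}=\varepsilon$, no letter to prepend, handled by a separate disjunct using the constant $min$) and, symmetrically, remember that $P_K$ is unary so there is no analogous issue at $j$. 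I would isolate the identity $\infix{w}{0}{j}=\infix{w}{0}{i}\,a_i\,\infix{w}{i}{j}$ (and its $i=0$ variant $\infix{w}{0}{j}=\infix{w}{i}{j}$) as a small lemma, then the rest is a routine-but-careful finite case split over $H\times A$.

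**Writing it up.** I would structure the proof as: (1) one paragraph for $\supseteq$ via $P_L(x)\equiv I_L(min,x)$ and $x<y\equiv I_{A^*}(x,y)$; (2) for $\subseteq$, fix $L\in\Gs$, take a group morphism $\beta:A^*\to H$ recognizing it, establish the factorization identity for prefixes, and exhibit the explicit finite disjunction above as an $\fo(<,\prefsigg)$ formula equivalent to $I_L(x,y)$; (3) note that in both directions the translation of atomic formulas propagates through Boolean connectives and quantifiers without change, so the two logics define the same languages. The only genuinely group-theoretic input is invertibility in $H$, used to write $\beta(\infix{w}{i}{j})=\beta(a_i)^{-1}p^{-1}q$; this is exactly where the hypothesis that $\Gs$ is a \emph{group} \vari (rather than an arbitrary one) is essential, and it is worth flagging that the lemma genuinely fails for general \varis.
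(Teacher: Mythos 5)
Your proposal is correct and follows essentially the same route as the paper's proof: the easy inclusion via $P_L(x)\equiv I_L(\mathit{min},x)$ and $x<y\equiv I_{A^*}(x,y)$, and for the converse, a group morphism $\alpha:A^*\to G$ recognizing $L$, the prefix factorization $\infix{w}{0}{j}=\infix{w}{0}{i}\,a_i\,\infix{w}{i}{j}$, invertibility in $G$ to write $\alpha(\infix{w}{i}{j})=(\alpha(\infix{w}{0}{i})\alpha(a_i))^{-1}\alpha(\infix{w}{0}{j})$, and a finite disjunction over $G\times A\times G$ together with a separate disjunct $x=\mathit{min}\wedge P_L(y)$ for the boundary case. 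Your ``main obstacle'' discussion correctly anticipates the off-by-one bookkeeping that the paper handles implicitly in its definition of the set $T$.
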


\begin{proof}
  The inclusion $\fo(<,\prefsigg) \subseteq \fo(\infsigg)$ is immediate since all predicates in $\{<\}\cup \prefsigg$ can be simulated using those in \infsigg. Indeed, $x<y$ is equivalent to $I_{A^*}(x,y)$ and $P_L(x)$ (for $L \in \Gs)$ is equivalent to $I_L(min,x)$. We now prove that $\fo(\infsigg) \subseteq \fo(<,\prefsigg)$. By definition, it suffices to prove that for every language $L \in \Gs$, the atomic formula $I_L(x,y)$ is equivalent to a formula of $\fo(<,\prefsigg)$. Proposition~\ref{prop:genocm} yields a \Gs-morphism  $\alpha: A^* \to G$ recognizing $L$. Since \Gs is a group \vari, Lemma~\ref{lem:gmorph} implies that $G$ is a group. Let $F \subseteq G$ be the set such that $\alpha\inv(F) = L$.

  For every $g\in G$, the language $\alpha\inv(g)$ belongs to \Gs, whence $P_{\alpha\inv(g)}$ is a predicate in $\prefsigg$.  Since $G$ is a group, it is immediate that $\alpha(v) = (\alpha(ua))\inv \alpha(uav)$ for all $u,v \in A^*$ and $a \in A$. Therefore, one may verify that $I_L(x,y)$ is equivalent to the following formula of $\fo(<,\prefsigg)$, where $T \subseteq G \times A \times G$ is the set of all triples $(g,a,h) \in G \times A \times G$ such that $(g\alpha(a))\inv h \in F$:
  \[
    (x < y) \wedge \Big((x = min \wedge P_L(y)) \vee   \bigvee_{(g,a,h) \in T} \big(P_{\alpha\inv(g)}(x) \wedge a(x) \wedge P_{\alpha\inv(h)}(y)\big)\Big).
  \]
  This concludes the proof.
\end{proof}

\begin{exa}
  Lemma~\ref{lem:gensig} applies to important sets of predicates. First, if \Gs is the trivial \vari $\stzer = \{\emptyset,A^*\}$, all predicates in $\prefsig{\stzer}$ are trivial. Hence, $\fo(<,\prefsig{\stzer}) = \fow$.

  Next, let us consider the class \md of \emph{modulo languages}, consisting in Boolean combinations of languages $\{w \in A^* \mid |w| \equiv k \bmod m\}$ with $k,m \in \nat$ such that $k<m$. In this case, we obtain first-order logic with \emph{modular predicates}. For all $k,m \in \nat$ such that $k < m$, the set ``$MOD$'' of \emph{modular predicates} contains a unary predicate $M_{k,m}$ selecting the positions $i$ such that $i \equiv k \bmod m$. One may use Lemma~\ref{lem:gensig} to verify that $\fo(<,\prefsig{\md}) = \fowm$.

  Finally, we consider the class \abg of \emph{alphabet modulo testable languages}. If $w\in A^*$ and $a \in A$, we let $\#_a(w) \in \nat$ be the number of occurrences of letter~$a$ in $w$. The class \abg consists of all Boolean combinations of languages $\{w \in A^* \mid \#_a(w) \equiv k \bmod m\}$ where $a \in A$ and $k,m \in \nat$ such that $k < m$ (these are the languages recognized by commutative groups). In this case, we obtain first-order logic with \emph{alphabetic modular predicates}. For all $a \in A$ and all $k,m \in \nat$ the set ``$AMOD$''  of \emph{alphabetic modular predicates} contains a unary predicate $M^a_{k,m}$ selecting the positions $i$ such $\#_a(\prefix{w}{i}) \equiv k \bmod m$). One may use Lemma~\ref{lem:gensig} to verify that $\fo(<,\prefsig{\abg}) = \fowam$.
\end{exa}

\subsection{Main theorem}

We may now present the main result of the section. It connects star-free closure to first-order logic for all input classes that are \varis.

\begin{thm} \label{thm:folog}
  Let \Cs be a \vari. Then, $\sfp{\Cs} = \foc$.
\end{thm}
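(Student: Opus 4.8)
The plan is to prove the two inclusions $\foc \subseteq \sfp{\Cs}$ and $\sfp{\Cs} \subseteq \foc$ separately, using the algebraic characterization (Theorem~\ref{thm:sfcarac}) and its \bsdpo-counterpart as the main bridge wherever a direct syntactic manipulation would be unwieldy. The easy direction is $\sfp{\Cs} \subseteq \foc$: since \Cs is a \vari, each $L \in \Cs$ is directly definable by the atomic sentence obtained from the infix predicate $I_L$ applied to the artificial endpoints, namely $I_L(\mathit{min},\mathit{max})$, and each singleton $\{a\}$ is definable in $\fo(<)\subseteq\foc$ (by Remark~\ref{rem:order-as-IL}, $<$ is $I_{A^*}$). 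It then suffices to check that $\foc$-definable languages are closed under union, complement, and concatenation. Union and complement are immediate from closure of first-order logic under $\vee$ and $\neg$. For concatenation $KL$ with $K,L$ defined by sentences $\varphi_K,\varphi_L$, one writes the standard relativization: $KL$ is defined by $\exists x\, (\varphi_K^{\le x} \wedge \varphi_L^{> x})$, where $\varphi_K^{\le x}$ is $\varphi_K$ with all quantifiers restricted to positions $\le x$ and with $\mathit{max}$ replaced by $x$ (and symmetrically for $\varphi_L^{>x}$, with $\mathit{min}$ replaced by $x$). Here one uses that the infix predicates behave correctly under such restriction: $I_M(y,z)$ evaluated in the prefix up to $x$ agrees with $I_M(y,z)$ in $w$ whenever $y<z\le x$, since the infix $\infix{w}{y}{z}$ only depends on positions strictly between $y$ and $z$. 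This gives $\foc \supseteq \sfp{\Cs}$.

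The harder direction, $\foc \subseteq \sfp{\Cs}$, I would handle via the algebraic characterization. By Theorem~\ref{thm:sfcarac} it is enough to show that any language defined by an $\fo(\infsigc)$-sentence has all \Cs-orbits of its syntactic morphism aperiodic. A clean way to organize this is to set up an Ehrenfeucht--Fra\"iss\'e-style argument: let $\alpha: A^* \to M$ be the syntactic morphism of $L(\varphi)$ where $\varphi$ has quantifier rank $r$. Pick a \Cs-morphism $\eta: A^* \to N$ that recognizes all the languages $L$ of \Cs appearing in the infix predicates used by $\varphi$ (Proposition~\ref{prop:genocm}); this exists because \Cs is a \vari. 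One then shows by induction on $r$ that if $\eta(u)$ is idempotent then, for a suitable $k$ depending on $r$ and $|N|$, the words $xu^k y$ and $xu^{k+1}y$ satisfy the same $\fo(\infsigc)$-formulas of rank $\le r$ — i.e.\ Duplicator wins the corresponding pebble game on these two words. The key point in the game is that when Spoiler plays inside one of the $u$-blocks, Duplicator copies into a corresponding block; the infix predicates $I_M$ are respected because the infix between two chosen positions, read modulo $\eta$ (hence modulo membership in each relevant $M \in \Cs$), is unchanged when we add or remove one copy of $u$ — precisely because $\eta(u)$ is idempotent, so $\eta(u^j) = \eta(u^{j+1})$ for $j \ge 1$. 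This yields $u^{k+1} \equiv_{L(\varphi)} u^k$, hence (as in the proof of Proposition~\ref{prop:aper}) $(\alpha(u))^{\omega+1} = (\alpha(u))^\omega$ for every $u$ with $\eta(u)$ idempotent; combined with Lemma~\ref{lem:cmorph} this forces every \Cs-orbit of $\alpha$ to be aperiodic, so $L(\varphi) \in \sfp{\Cs}$ by Theorem~\ref{thm:sfcarac}.

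Alternatively — and this may be the cleaner route to write up — one can avoid reproving an EF argument from scratch and instead go \emph{through} temporal logic or through a direct translation of $\fo(\infsigc)$ into $\bsdp{\Cs}$. But the most economical path uses what is already available: the implication $(1)\Rightarrow(3)$ of Theorem~\ref{thm:sfcarac} already shows $\sfp{\Cs}$-membership reduces to the aperiodicity of \Cs-orbits, and Proposition~\ref{prop:aper} is essentially the combinatorial engine; so the real content left for $\foc \subseteq \sfp{\Cs}$ is exactly the block-insertion invariance of $\fo(\infsigc)$-formulas under idempotent $\eta$, which is the EF argument sketched above. I would present it as a self-contained lemma: \emph{for every $\fo(\infsigc)$-sentence $\varphi$ there is a \Cs-morphism $\eta$ and an integer $k$ such that $\eta(u)$ idempotent implies $u^{k+1}\equiv_{L(\varphi)}u^k$}, then quote Theorem~\ref{thm:sfcarac}.

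The main obstacle I anticipate is getting the Ehrenfeucht--Fra\"iss\'e argument for the generic signature $\infsigc$ right: unlike the classical case with just $<$, the binary predicates $I_M$ encode membership of arbitrary infixes in languages of \Cs, so Duplicator's strategy must simultaneously preserve the order type of the pebbles \emph{and} the $\eta$-image of every infix determined by pairs of pebbles. The resolution is that choosing $\eta$ to recognize all relevant $M$'s reduces ``$\infix{w}{i}{j} \in M$'' to ``$\eta(\infix{w}{i}{j}) \in F_M$'', and the $\eta$-image of an infix spanning some copies of an idempotent block $u$ is robust under adding or deleting a block — so the standard ``copy into the long block'' strategy works provided $k$ is large enough (exponential in $r$) that there are enough blocks to absorb all $r$ moves. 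One must also be careful with the artificial endpoints $\mathit{min},\mathit{max}$ and with the constant $\varepsilon$-style corner cases ($u=\varepsilon$), but these are routine once the main invariant is stated correctly.
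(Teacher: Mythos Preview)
Your proof of the easy inclusion $\sfp{\Cs} \subseteq \foc$ matches the paper's: both verify that $\foc$ contains $\Cs$ and the singletons and is closed under union, complement, and concatenation, the latter via a relativization lemma (the paper's Lemma~\ref{lem:foconcat}).

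For the hard inclusion $\foc \subseteq \sfp{\Cs}$, your route is genuinely different. The paper does \emph{not} pass through the algebraic characterization; it gives a direct syntactic translation. Concretely, it introduces ``$n$-schemes'' $(L_0,a_1,L_1,\dots,a_n,L_n)$ with each $L_i \in \sfp{\Cs}$, and proves by structural induction on a constant-free formula $\varphi$ whose free variables lie in a fixed linearly ordered set of size $n+2$ that there is a finite set of such schemes (an ``$n$-blueprint'') capturing exactly the assignments satisfying $\varphi$. The atomic case $I_L(x_g,x_h)$ uses a $\Cs$-morphism recognizing $L$; disjunction is union of blueprints; negation is a de-Morgan-style complementation of schemes; and existential quantification over a fresh variable $y$ concatenates the two languages adjacent to the slot where $y$ sits. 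Applied to a sentence, this yields an explicit $\sfp{\Cs}$-expression, and the argument is entirely independent of Theorem~\ref{thm:sfcarac}.

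Your approach --- an Ehrenfeucht--Fra\"iss\'e argument showing that for each $\foc$-sentence $\varphi$ there is a $\Cs$-morphism $\eta$ and a bound $k$ with $u^{k+1}\equiv_{L(\varphi)}u^k$ whenever $\eta(u)$ is idempotent, then invoking Theorem~\ref{thm:sfcarac} --- is also correct and is the natural generalization of the classical ``$\fo(<)\subseteq\sfr$ via EF plus Sch\"utzenberger'' proof. It buys conceptual economy (the heavy lifting of Theorem~\ref{thm:sfcarac} is reused) at the cost of being non-constructive and requiring the EF adaptation you flag: Duplicator must preserve not just order type but the $\eta$-image of every infix between pebbles (including $\min,\max$), which forces matching both the block index up to a shrinking threshold and the exact offset within a block. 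You have correctly identified this as the one non-routine step; it goes through, but the paper's direct translation avoids it entirely.
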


Additionally, in view of Lemma~\ref{lem:gensig}, Theorem~\ref{thm:folog} can be simplified when the input class is a group \vari \Gs. More precisely, we have the following theorem.

\begin{cor} \label{cor:folog}
  Let \Gs be a group \vari. Then, $\sfp{\Gs} = \fo(<,\prefsigg)$.
\end{cor}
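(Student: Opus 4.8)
The plan is to obtain the statement by directly chaining two results already established in the excerpt, namely Theorem~\ref{thm:folog} and Lemma~\ref{lem:gensig}. First I would observe that a group \vari is, by definition, a \vari consisting of group languages; in particular it \emph{is} a \vari, so the hypothesis of Theorem~\ref{thm:folog} is satisfied with $\Cs = \Gs$. Applying that theorem then yields $\sfp{\Gs} = \fo(\infsigg)$.

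Next I would invoke Lemma~\ref{lem:gensig}, whose hypothesis is precisely that \Gs be a group \vari: it asserts $\fo(\infsigg) = \fo(<,\prefsigg)$. Composing the two equalities gives $\sfp{\Gs} = \fo(\infsigg) = \fo(<,\prefsigg)$, which is the claim. I do not anticipate any real obstacle in this step: it is a routine corollary, and all the genuine content lies in the proofs of Theorem~\ref{thm:folog} (the first-order characterization of star-free closure via the infix predicates of the input class) and of Lemma~\ref{lem:gensig} (the elimination of the binary infix predicates $I_L$ in favour of the order together with the unary prefix predicates $P_L$, exploiting that $\alpha$ takes values in a finite \emph{group}, so that $\alpha(v) = (\alpha(ua))\inv\alpha(uav)$). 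The only thing to check when assembling the corollary is that these two hypotheses hold, and both do by assumption.
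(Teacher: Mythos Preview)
Your proposal is correct and matches the paper's approach exactly: the paper states the corollary immediately after Theorem~\ref{thm:folog} with the remark that it follows ``in view of Lemma~\ref{lem:gensig}'', without giving any further proof. Your chaining of $\sfp{\Gs} = \fo(\infsigg)$ from Theorem~\ref{thm:folog} with $\fo(\infsigg) = \fo(<,\prefsigg)$ from Lemma~\ref{lem:gensig} is precisely what is intended.
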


\begin{rem} \label{rem:fomod}
  Corollary~\ref{cor:folog} has interesting applications when combined with Corollary~\ref{cor:sfcaracg} (\emph{i.e.}, the algebraic characterization of the class \sfp{\Gs}): we obtain that a regular language belongs to $\fo(<,\prefsigg)$ if and only if the \Gs-kernel of its syntactic morphism is aperiodic. In particular when $\Gs=\md$, recall from Remark~\ref{rem:stablemono} that \md-kernels correspond to a standard notion: \emph{stable monoids}. Hence, we obtain that a regular language belongs to \fowm if and only if the stable monoid of its syntactic morphism is aperiodic. This is a well-known theorem of\/ Barrington, Compton, Straubing and Thérien~\cite{MIXBARRINGTON1992478}, whose original proof relies on entirely different techniques.
\end{rem}

We now concentrate on the proof of Theorem~\ref{thm:folog}. Let us point out that both directions of the proofs are handled directly: we ``translate'' $\foc$ sentences into expressions witnessing membership in \sfp{\Cs} and vice-versa.

\begin{proof}[Proof of Theorem~\ref{thm:folog}]
  We fix a \vari \Cs and show that $\sfp{\Cs}=\foc$. The two inclusions are proved independently. We start with $\sfp{\Cs} \subseteq \foc$, which is simpler.

  \medskip
  \noindent
  {\bf Inclusion $\sfp{\Cs} \subseteq \foc$.} By definition of \sfp{\Cs}, it suffices to prove that $\Cs \subseteq \foc$, that $\{a\} \in \foc$ for every $a \in A$ and that $\foc$ is closed under union, complement and concatenation. It is clear that every $L \in \Cs$ is defined by the sentence $I_L(min,max)$ of $\foc$. Therefore, $\Cs \subseteq \foc$. Moreover, for every letter $a \in A$, the language $\{a\}$ is defined by the sentence $\exists x \left(a(x) \wedge (min+1 = x) \wedge (x+1 = max)\right)$. It is also clear that $\foc$ is closed under union and complement since Boolean connectives can be used freely in $\foc$ sentences. It remains to prove that $\foc$ is closed under concatenation. The argument is based on the following lemma.

  \begin{lem}\label{lem:foconcat}
    Let $L \in \foc$. There exists an $\foc$ formula $\varphi_L(x,y)$ with two free variables $x$ and $y$ such that for every $w \in A^*$ and every positions $i,j \in \pos{w}$ in $w$ such that $i < j$, we have $w \models \varphi_L(i,j)$ if and only if $\infix{w}{i}{j} \in L$
  \end{lem}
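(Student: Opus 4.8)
The plan is to obtain $\varphi_L$ by \emph{relativizing} a sentence that defines $L$ to the positions lying in the window delimited by $x$ and $y$. Fix a sentence $\psi \in \fo[A,\infsigc]$ with $L = L(\psi)$. First I would substitute $x$ for every occurrence of the constant $min$ and $y$ for every occurrence of $max$ in $\psi$; this is syntactically legal since atomic formulas accept variables in place of the constants. Then I would define, by structural induction on subformulas, a relativization $\chi \mapsto \chi^{[x,y]}$: a quantifier $\exists z\,\chi$ becomes $\exists z\,(x \leq z \wedge z \leq y \wedge \chi^{[x,y]})$ (where $x \leq z$ abbreviates $x < z \vee x = z$); a label atom $a(z)$ becomes $x < z \wedge z < y \wedge a(z)$; the atoms $z_1 = z_2$, $z_1 < z_2$ and $I_K(z_1,z_2)$ for $K \in \Cs$ are left \emph{unchanged}; and Boolean connectives commute with relativization. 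Finally I would set $\varphi_L(x,y) := \psi^{[x,y]}$, which is an $\foc$ formula whose free variables are $x$ and $y$ (adding a harmless conjunct $y = y$ if one insists that both occur syntactically).

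For correctness, fix $w = a_1 \cdots a_n \in A^*$ and $i < j$ in $\pos{w}$, and write $w' = \infix{w}{i}{j}$, so $|w'| = j - i - 1$ and $(w')_m = a_{i+m}$ for $1 \leq m \leq |w'|$. The key observation is that $p \mapsto p - i$ is an order-preserving \emph{bijection} from $\{i, i+1, \dots, j\} \subseteq \pos{w}$ onto $\pos{w'} = \{0,1,\dots,|w'|+1\}$, sending $i$ (the value assigned to $x$) to $0$, the constant $min$ of $w'$, and $j$ (assigned to $y$) to $|w'|+1$, the constant $max$ of $w'$. I would then prove by structural induction that for every subformula $\chi$ of the substituted $\psi$ and every assignment $\nu$ of its free variables into $\{i,\dots,j\}$, one has $w,\nu \models \chi^{[x,y]}$ if and only if $w',\bar\nu \models \chi$, where $\bar\nu(z) = \nu(z) - i$. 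The base cases follow from the bijection: equality and order are preserved; a label atom holds at a position of $w'$ exactly when that position is interior and carries the right letter, which is precisely what the guarded atom $x < z \wedge z < y \wedge a(z)$ tests in $w$; and for $I_K$ one checks directly that $\infix{w'}{p-i}{q-i} = \infix{w}{p}{q}$ for all $i \leq p < q \leq j$ — including the borderline cases $p = i$ and $q = j$ — so no relativization is needed there. For the quantifier step, the crucial point is that the relativized quantifier ranges over $\{i,\dots,j\}$, which the bijection identifies with \emph{all} of $\pos{w'}$ (the two artificial boundary positions included), so the induction goes through; the Boolean cases are trivial. Instantiating this with $\chi = \psi$ yields $w,[x \mapsto i, y \mapsto j] \models \varphi_L(x,y)$ iff $w' \models \psi$ iff $\infix{w}{i}{j} \in L$, as required.

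The only delicate points concern the artificial endpoints. Because quantifiers in the semantics range over the boundary positions $0$ and $|w|+1$, the relativized quantifier must use $x \leq z \leq y$ rather than $x < z < y$; and because those boundary positions are unlabeled, label atoms — but \emph{not} the infix atoms $I_K$ — must be guarded by $x < z \wedge z < y$. Getting this asymmetry right (label predicates get relativized, infix predicates do not) is the main thing to be careful about; the rest is a routine structural induction, and the lemma then feeds the closure of $\foc$ under concatenation by cutting a word at a quantified position and conjoining the two relativized copies.
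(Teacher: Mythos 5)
Your proposal is correct and takes essentially the same approach as the paper: substitute $x$ for $min$ and $y$ for $max$, relativize quantifiers to the window $\{p : x \leq p \leq y\}$ (the paper writes the equivalent disjunction $(z=x)\vee(x<z\wedge z<y)\vee(z=y)$), and guard label atoms but not the infix predicates $I_K$. The paper's proof states the construction and leaves the verification to the reader, whereas you spell out the order-preserving bijection $p\mapsto p-i$ and the structural induction; this is a welcome expansion, not a different argument.
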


  \begin{proof}
    By hypothesis, there exists a sentence $\psi$ of $\foc$ defining $L$. We build $\varphi_L(x,y)$ from~$\psi$ by restricting quantifications with respect to the free variables $x,y$: we only allow quantification over positions between $x$ and $y$. Moreover, we use $x$ and $y$ themselves as substitutes for the unlabeled positions. More precisely, $\varphi_L(x,y)$ is defined by applying the  following modifications to $\psi$:
    \begin{enumerate}
      \item Every subformula of the form $\exists z\ \Gamma$ is recursively replaced by,
      \[
        \exists z\ \left(\left((z = x) \vee (x < z \wedge z < y) \vee (z = y)\right) \wedge \Gamma\right).
      \]
      \item All occurrences of the constant $min$ are replaced by the free variable $x$ and all occurrences of the constant $max$ are replaced by $y$.
      \item Every atomic subformula of the form $a(z)$ for some $a \in A$ is replaced by,
      \[
        a(z) \wedge (x < z) \wedge (z < y).
      \]
    \end{enumerate}
    One may verify that $\varphi_L(x,y)$ satisfies the desired property.
  \end{proof}

  We may now prove that $\foc$ is closed under concatenation. Let $K,L \in \foc$. We prove that $KL \in \foc$. Lemma~\ref{lem:foconcat} yields two formulas $\varphi_K(x,y)$ and $\varphi_L(y,z)$ such that for every $w \in A^*$ and every positions $i,j \in \pos{w}$ in $w$ such that $i < j$, we have $w \models \varphi_K(i,j)$ if and only if $\infix{w}{i}{j} \in K$ and $w \models \varphi_L(i,j)$ if and only if $\infix{w}{i}{j} \in L$. It is now immediate that $KL$ is defined by the following sentence of \foc:
  \[
    \exists x\exists y \quad (x+1 = y) \wedge \varphi_K(min,y) \wedge \varphi_L(x,max).
  \]
  We obtain $KL \in \foc$, as desired. This completes the proof of the first inclusion: $\sfp{\Cs} \subseteq \foc$.

  \medskip
  \noindent
  {\bf Inclusion $\foc \subseteq \sfp{\Cs}$.} The proof of this inclusion is more involved. Yet, it is constructive as well: starting from an \foc sentence $\varphi$, we use structural induction on $\varphi$ to prove that the language it defines may be built from basic languages in \Cs using Boolean combinations and concatenations. Of course, this means that we shall have to deal with formulas that are \emph{not} sentences. We start with preliminary definitions.

  Let $n \in \nat$. An \emph{$n$-scheme} is a tuple $\bar{L} = (L_0,a_1,L_1,\dots,a_n,L_n)$ where $L_0, \dots,L_n \subseteq A^*$ and $a_1,\dots,a_n \in A$. Note that $0$-schemes are well-defined: they are simply languages. We write $\bar{L} \in \sfp{\Cs}$ to indicate that $L_0, \dots,L_n \in \sfp{\Cs}$. Additionally, we define an \emph{$n$-blueprint} as a \emph{finite} set \Lb of $n$-schemes. We write $\Lb \in \sfp{\Cs}$ to indicate that $\bar{L} \in \sfp{\Cs}$ for every $n$-scheme $\bar{L} \in \Lb$. Let us provide semantics for $n$-blueprints.

  Given $n \in \nat$, an \emph{$n$-split} is a \emph{linearly ordered} set \Xs of exactly $n+2$ first-order variables. For the sake of avoiding clutter, we often make the linear ordering implicit. For example, if we say that the set $\Xs = \{x_0,\dots,x_{n+1}\}$ is an $n$-split, we implicitly mean that the ordering is $x_0<x_1 < \cdots < x_n<x_{n+1}$. Moreover, given an $n$-split $\Xs = \{x_0,\dots,x_{n+1}\}$ and a word $w \in A^*$, we say that an assignment $\mu: \Xs \to \pos{w}$ is \emph{correct} to indicate that,
  \[
    0 = \mu(x_0) < \mu(x_1) < \cdots < \mu(x_{n+1}) = |w|+1.
  \]
  Now, consider an $n$-blueprint \Lb and an $n$-split $\Xs = \{x_0,\dots,x_{n+1}\}$, a word $w \in A^*$ and a correct assignment $\mu: \Xs \to \pos{w}$. We say that $(w,\mu)$ satisfies \Lb and write $w,\mu \models \Lb$ if and only if there exists an $n$-scheme $(L_0,a_1,L_1,\dots,a_n,L_n) \in \Lb$ which satisfies the two following conditions:
  \begin{enumerate}
    \item For all $1 \leq i \leq n$, the position $\mu(x_i) \in \pos{w}$ is labeled by $a_i$.
    \item For all $0 \leq i \leq n$, we have $\infix{w}{\mu(x_i)}{\mu(x_{i+1})}\in L_i$.
  \end{enumerate}
  Observe that when $n =0$, there exists only one correct assignment $\mu: \Xs \to \pos{w}$: we have $\mu(x_0) = 0$ and $\mu(x_1) = |w|+1$. Moreover, in that case, $w,\mu \models \Lb$ if and only there exists a language $L \in \Lb$ such that $w \in L$.

  The argument is based on the next proposition, proved by induction on the size of \foc formulas. The statement only applies to \emph{constant-free} formulas (\emph{i.e.}, which do not involve symbols ``$min$'' and ``$max$''), a restriction we shall deal with when using it to complete the main proof.

  \begin{prop} \label{prop:consf}
    Let $n \in \nat$, \Xs be an $n$-split and $\varphi$ be a constant-free \foc-formula whose free variables are contained in \Xs. Then, there exists an $n$-blueprint $\Lb_\varphi \in \sfp{\Cs}$ such that for all $w\in A^*$ and for all correct assignment $\mu: \Xs \to \pos{w}$, we have:
    \begin{equation} \label{eq:consf}
	  w,\mu \models \Lb_\varphi~~~\Longleftrightarrow~~~w,\mu \models \varphi.
    \end{equation}
  \end{prop}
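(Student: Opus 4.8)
The plan is to prove Proposition~\ref{prop:consf} by structural induction on the constant-free \foc-formula $\varphi$, simultaneously over all $n$-splits $\Xs$. The $n$-blueprint $\Lb_\varphi$ will be constructed so that, for each $n$-scheme $(L_0,a_1,L_1,\dots,a_n,L_n)$ it contains, the languages $L_i$ record all the information about the ``zones'' $\infix{w}{\mu(x_i)}{\mu(x_{i+1})}$ that $\varphi$ can observe, and the letters $a_i$ record the labels at the split positions. The whole point of allowing $n$-blueprints rather than a single $n$-scheme is that disjunction forces us to take unions, and negation forces us to complement, which naturally turns a ``deterministic'' description into a finite disjunction of cases; keeping the $L_i$'s in \sfp{\Cs} is possible precisely because \sfp{\Cs} is a \vari closed under concatenation (Proposition~\ref{prop:vari}).

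\textbf{Base cases (atomic formulas).} Since $\varphi$ is constant-free, each atomic formula is either $x_p = x_q$, a label predicate $a(x_p)$, the order $x_p < x_q$ (recovered as $I_{A^*}(x_p,x_q)$, cf.\ Remark~\ref{rem:order-as-IL}), or an infix predicate $I_L(x_p,x_q)$ with $L\in\Cs$, where $x_p,x_q\in\Xs$. Because $\Xs$ is linearly ordered and $\mu$ is correct, the position $\mu(x_p)$ is strictly increasing in the index $p$, so $x_p = x_q$ holds iff $p = q$ (a purely syntactic condition, giving either the ``everything'' blueprint or the empty one), and $x_p < x_q$ holds iff $p < q$, likewise syntactic. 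For $a(x_p)$ with $1\le p\le n$: the position $\mu(x_p)$ must be a genuine labeled position, and its label is whatever $a_p$ the scheme prescribes, so $\Lb_\varphi$ consists of all $n$-schemes with $a_p = a$ and all $L_i = A^*$; for $p\in\{0,n+1\}$ the formula is never satisfiable, so $\Lb_\varphi = \emptyset$. For $I_L(x_p,x_q)$ with $p<q$: note $\infix{w}{\mu(x_p)}{\mu(x_q)}$ is the concatenation $\infix{w}{\mu(x_p)}{\mu(x_{p+1})}\,a_{p+1}\,\infix{w}{\mu(x_{p+1})}{\mu(x_{p+2})}\cdots a_{q-1}\,\infix{w}{\mu(x_{q-1})}{\mu(x_q)}$, so membership of this word in $L$ depends only on the tuple $(L_p,a_{p+1},\dots,a_{q-1},L_{q-1})$; since $\Cs$ is a \vari, for each choice of the letters $a_{p+1},\dots,a_{q-1}$ and each residual class we may use quotients and the like to describe the relevant constraint inside $\sfp\Cs$. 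Concretely, $\Lb_\varphi$ is the finite set of all $n$-schemes whose zone languages intersected with the appropriate quotient-shifted copy of $L$ force the concatenation into $L$; the key point is just that all resulting languages lie in $\sfp\Cs$. The cases $I_L(x_p,x_q)$ with $p\ge q$ are trivial.

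\textbf{Inductive cases.} For $\varphi = \psi\vee\chi$, by induction we get $\Lb_\psi,\Lb_\chi\in\sfp\Cs$ over the same split $\Xs$; then $\Lb_\varphi = \Lb_\psi\cup\Lb_\chi$ works, after refining so that the two blueprints use a common finite family of zone languages (take all pairwise intersections of the zone languages appearing in $\Lb_\psi$ and $\Lb_\chi$, which stays in $\sfp\Cs$; this makes each $n$-scheme of one blueprint a disjoint union of $n$-schemes over the refined family, so the union of sets of schemes computes the union of the defined relations). For $\varphi = \neg\psi$: refine $\Lb_\psi$ so that its zone languages form a partition of $A^*$ into \sfp\Cs-pieces and its schemes list exactly the satisfying tuples; then $\Lb_\varphi$ is the complementary set of tuples over the same partition --- this is where making the description ``complete'' (every possible combination of labels and partition-classes appears either in $\Lb_\psi$ or in $\Lb_\varphi$) is essential. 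For $\varphi = \exists y\,\psi$: here $\psi$'s free variables sit in $\Xs\cup\{y\}$, which we reorder as an $(n+1)$-split by inserting $y$ in each of the $n+1$ gaps (and the endpoints are excluded since $y$ ranges over $\pos w$ but a correct assignment on the larger split would pin $y$ strictly between $x_0$ and $x_{n+1}$ --- one must also separately allow $y$ to coincide with some $x_i$, which is handled by substitution, reducing to fewer variables). Apply induction to get $\Lb_\psi^{(j)}\in\sfp\Cs$ over the $(n+1)$-split with $y$ in gap $j$; then ``project out'' $y$: an $(n+1)$-scheme $(L_0,\dots,L_j', b, L_j'', \dots, L_n)$ contributes to $\Lb_\varphi$ the $n$-scheme obtained by replacing the two zones $L_j', L_j''$ straddling $y$ and the letter $b$ by the single zone $L_j'\,b\,L_j''$, which is in $\sfp\Cs$ by closure under concatenation (and constants: $\{b\}\in\sfp\Cs$). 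Taking the union over $j$ and over all schemes gives $\Lb_\varphi$, and the semantic equivalence~\eqref{eq:consf} is immediate from the definition of satisfaction of a blueprint.

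\textbf{Main obstacle.} The bookkeeping around the negation and disjunction cases is the delicate part: one must be disciplined about always working with blueprints whose zone languages come from a common finite \sfp\Cs-partition of $A^*$ and whose scheme-set is ``complete'' (records satisfaction as an exact list), so that set-theoretic complement of the scheme-set matches logical negation and union matches disjunction. Establishing, as a running invariant, that every blueprint can be put in this normal form (using that \sfp\Cs is a \vari, hence closed under Boolean operations and quotients) without changing the defined relation is the technical heart; once it is in place, all inductive steps are routine. The existential case's subtlety --- splitting off the subcases where $y$ equals an existing split variable versus lies strictly in a gap --- is a finite case analysis that reduces cleanly to the induction hypothesis on smaller or equal-complexity formulas.
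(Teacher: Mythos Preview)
Your overall plan---structural induction on $\varphi$, handling atomics, Boolean connectives, and existentials separately, with the existential case split into ``$y$ coincides with some $x_i$'' (via substitution) versus ``$y$ lies strictly in gap $j$'' (via an $(n+1)$-blueprint then projection by concatenation)---matches the paper's proof exactly. For the atomic $I_L(x_p,x_q)$, the paper makes your ``quotients and the like'' precise by choosing a \Cs-morphism $\eta:A^*\to N$ recognizing $L$ and taking schemes whose zone languages are the classes $\eta^{-1}(s)$; this is cleaner than working with quotients directly.

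The genuine divergence is in how you handle the Boolean connectives. For disjunction the paper simply sets $\Lb_\varphi=\Lb_{\psi_1}\cup\Lb_{\psi_2}$ with \emph{no} refinement: the blueprint semantics is existential over schemes, so a union of scheme-sets already computes logical disjunction, and your common-refinement step is unnecessary. For negation, rather than maintaining a partition normal form, the paper gives a direct one-shot construction: from $\Lb_\psi=\{\bar L_1,\dots,\bar L_k\}$ with $\bar L_j=(L_{0,j},a_{1,j},\dots,a_{n,j},L_{n,j})$, it forms all schemes $(H_{0,J_0},b_1,\dots,b_n,H_{n,J_n})$ where $H_{i,J}=A^*\setminus\bigcup_{j\in J}L_{i,j}$ and the tuple $(b_1,\dots,b_n,J_0,\dots,J_n)$ satisfies: for every $j\le k$, either some $b_i\neq a_{i,j}$ or $j\in J_i$ for some $i$. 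In effect, the sets $J_i$ record for each $j$ which zone witnesses the failure of $\bar L_j$. This avoids any running invariant and needs no re-normalisation after the existential projection (which, as you note, would destroy a partition structure). Your normal-form approach is correct, but the invariant you identify as ``the technical heart'' is a burden the paper's construction sidesteps entirely.
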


  We first apply Proposition~\ref{prop:consf} to prove that $\foc \subseteq \sfp{\Cs}$. Let $L \in \foc$. We prove that $L \in \sfp{\Cs}$. The hypothesis $L \in \foc$ means that there is an \foc-sentence $\psi$ defining~$L$. Consider the $0$-split $\Xs = \{x_0,x_1\}$. Since variables can be renamed, we may assume without loss of generality that $x_0$ and $x_1$ do \emph{not} occur in $\psi$. Let $\varphi$ be the formula obtained from $\psi$ by replacing all occurrences of the constant symbols $min$ and $max$ by $x_0$ and $x_1$, respectively. It follows from Proposition~\ref{prop:consf} that there exists a $0$-blueprint $\Lb_\varphi \in \sfp{\Cs}$ satisfying~\eqref{eq:consf}. By definition $\Lb_\varphi$ is a finite set of languages in \sfp{\Cs}. We let $H = \bigcup_{K \in \Lb_\varphi} K \in \sfp{\Cs}$ and show that $H = L$ to complete the proof. Consider a word $w \in A^*$ and let $\mu: \Xs\to\pos{w}$ be the only correct assignment: $\mu(x_0) = 0$ and $\mu(x_1) = |w|+1$. By definition of $\varphi$ from $\psi$, it is immediate that $w \in L \Leftrightarrow w,\mu \models \varphi$. It then follows from~\eqref{eq:consf} that $w \in L \Leftrightarrow w,\mu \models \Lb_\varphi$. Finally by definition of $H$, we obtain $w \in L \Leftrightarrow w \in H$, as~desired.

  \smallskip

  This concludes the main argument. It remains to prove Proposition~\ref{prop:consf}. Let $n\in\nat$, \Xs be an $n$-split and $\varphi$ be a constant-free \foc formula whose free variables are contained in \Xs. We write $\Xs = \{x_0,\dots,x_{n+1}\}$. Moreover, we may assume without loss of generality that $\varphi$ does not contain the equality predicate, as subformulas of the form ``$y = z$'' may be replaced by the equivalent formula ``$\neg (y < z \vee z < y)$'' (recall from Remark~\ref{rem:order-as-IL} that the linear ordering is  available in~\infsigc). We use induction on the size of $\varphi$ (\emph{i.e.}, on the number of symbols in its syntax tree) to construct an $n$-blueprint $\Lb_\varphi \in \sfp{\Cs}$  satisfying~\eqref{eq:consf}.

  \medskip
  \noindent
  {\bf Atomic Formulas.} By hypothesis on $\varphi$ there are only two kinds of atomic formulas: those involving the label predicates and those involving the predicates in \infsigc. Note that since $\varphi$ is constant-free, there are no atomic formulas involving the constants $min$ and $max$.

  Assume first that $\varphi :=$ ``$a(x_h)$'' for some $a \in A$ and $h$ such that $0 \leq h \leq n+1$. There are two cases. If $h = 0$ or $h = n+1$, then it suffices to define $\Lb_\varphi = \emptyset \in \sfp{\Cs}$. Otherwise, we have $1 \leq h \leq n$. We define $\Lb_\varphi$ as the set of all $n$-schemes $(A^*,a_1,A^*,\dots,a_n,A^*)$ where $a_1,\dots, a_n \in A$ are letters such that $a_h = a$. it is clear that $\Lb_\varphi \in \sfp{\Cs}$ and one may verify from the definitions that it satisfies~\eqref{eq:consf}.

  Assume now that $\varphi :=$ ``$I_L(x_g,x_h)$'' for $g,h$ such that $0 \leq g,h \leq n+1$ and $L \in \Cs$. Recall that given a word $w \in A^*$ and two positions $i,j \in \pos{w}$, $I_L(i,j)$ holds if and only if $i < j$ and $\infix{w}{i}{j} \in L$. Hence, there are two cases. First, if $h \leq g$, it suffices to define $\Lb_\varphi = \emptyset \in \sfp{\Cs}$. Assume now that $g < h$. Since \Cs is a \vari,  Proposition~\ref{prop:genocm} yields a \Cs-morphism $\eta: A^* \to N$ recognizing $L$. Let $F \subseteq M$ such that $L = \alpha\inv(F)$. We define a set of tuples $T \subseteq M \times (A \times M)^{n}$ (when $n = 0$, we have $T \subseteq M$) as follows:
  \[
    T = \bigl\{(s_0,a_{1},s_{1},\dots,a_{n},s_{n})  \mid s_g\alpha(a_{g+1})s_{g+1} \cdots \alpha(a_{h-1})s_{h-1} \in F\bigr\}.
  \]
  We define $\Lb_\varphi$ as the set of all $n$-schemes of the form $(\eta\inv(s_0),\{a_1\},\eta\inv(s_1),\dots,\{a_n\},\eta\inv(s_n))$ such that  $(s_0,a_{1},s_{1},\dots,a_{n},s_{n}) \in T$. Since $\eta$ is a \Cs-morphism, it is immediate that $\Lb_\varphi \in \Cs \subseteq \sfp{\Cs}$. One may now verify from the definitions that $\Lb_\varphi$ satisfies~\eqref{eq:consf}.

  \medskip
  \noindent
  {\bf Disjunction.} Let us now assume that $\varphi :=$ ``$\psi_1 \vee \psi_2$''. For $i = 1,2$, induction yields an $n$-blueprint $\Lb_{i} \in \sfp{\Cs}$ which satisfies~\eqref{eq:consf} for $\psi_i$.  It is now immediate from the definitions that $\Lb_{\varphi} = \Lb_{1} \cup \Lb_{2} \in \sfp{\Cs}$ satisfies~\eqref{eq:consf} for $\varphi$.

  \medskip
  \noindent
  {\bf Negation.} We assume that $\varphi :=$ ``$\neg \psi$''. Induction yields an $n$-blueprint $\Lb_{\psi} \in \sfp{\Cs}$ such that if $w \in A^*$ and  $\mu: \Xs \to \pos{w}$ is a correct assignment, then $w,\mu \in \Lb_\psi \Leftrightarrow w,\mu \models \psi$. Let $\bar{L}_1,\dots, \bar{L}_k \in \sfp{\Cs}$ be the $n$-schemes such that $\Lb_{\psi} = \{\bar{L}_1,\dots,\bar{L}_k\}$. Finally, for every $j \leq k$, we let $(L_{0,j},a_{1,j},L_{1,j},\dots,a_{n,j},L_{n,j}) = \bar{L}_j$. Let $J \subseteq \{1,\dots,k\}$. We define,
  \[
    H_{i,J} = A^* \setminus \Bigl(\bigcup_{j \in J} L_{i,j}\Bigr) \text{ for $0 \leq i \leq n$}.
  \]
  Note that $H_{i,\emptyset} = A^*$. Moreover, since \sfp{\Cs} is closed under union and complement by definition, we have $H_{i,J} \in \sfp{\Cs}$. We now define $\Lb_{\varphi}$ as the set of all $n$-schemes $(H_{0,J_0},b_1,H_{1,J_1}, \dots,b_n,H_{n,J_n})$ where $b_1,\dots,b_n \in A$ and $J_0,\dots,J_n \subseteq \{1,\dots,k\}$ are such that for every $j \leq k$, either $b_i \neq a_{i,j}$ for some $i$ such that $1 \leq i \leq n$, or $j \in J_i$ for some $i$ such that $0 \leq i \leq n$. By definition, $\Lb_\varphi$ is an $n$-blueprint satisfying $\Lb_\varphi \in \sfp{\Cs}$. Moreover, it is straightforward to verify that if  $w \in A^*$ and  $\mu: \Xs \to \pos{w}$ is a correct assignment, then $w,\mu \models \Lb_\varphi$ if and only if $w,\mu \not\models \Lb_\psi$. Since $\varphi :=$ ``$\neg \psi$'', it is immediate that $\Lb_\varphi$ satisfies~\eqref{eq:consf}, as desired.

  \medskip
  \noindent
  {\bf First-order quantification.}  Finally, assume that $\varphi :=$ ``$\exists y\ \psi$''. Since variables may be renamed, we may assume without loss of generality that $y \not\in \Xs = \{x_0,\dots, x_{n+1}\}$. We define $\Lb_\varphi$ as the union of some $n$-blueprints that we build by induction. We use two kinds of $n$-blueprints in this union.

  For every $i$ such that $0 \leq i \leq n+1$, let $\psi_i$ be the formula obtained from $\psi$ by replacing every free occurrence of the variable $y$ with $x_i$. By definition all free variables in $\psi_i$ belong to $\Xs$. Moreover, the size of $\psi_i$ is the same as the one of $\psi$ which is strictly smaller than the size of $\varphi :=$ ``$\exists y\ \psi$''. Consequently, induction yields an $n$-blueprint $\Lb_{i} \in \sfp{\Cs}$ such that if $w \in A^*$ and $\mu: \Xs \to \pos{w}$ is a correct assignment, then $w,\mu \models \Lb_i \Leftrightarrow w,\mu \models \psi_i$.

  We turn to the second kind of $n$-blueprint. Let $i$ such that $0 \leq i \leq n$. We consider the linearly ordered set of first-order variables $\Ys_i = \{x_0,\dots,x_i,y,x_{i+1},\dots,x_{n+1}\}$ (\emph{i.e.}, $\Ys_i = \Xs \cup \{y\}$ and $y$ is placed between $x_i$ and $x_{i+1}$ for the linear ordering). Since the size of $\psi$ is strictly smaller than the one of $\varphi :=$ ``$\exists y\ \psi$', induction yields an $(n+1)$-blueprint $\Gb_{i} \in \sfp{\Cs}$ such that if $w \in A^*$ and $\gamma: \Ys_i \to \pos{w}$ is a correct assignment, then $w,\mu \models \Gb_i \Leftrightarrow w,\gamma \models \psi$. We use $\Gb_i$ to define an $n$-blueprint $\Hb_i$. Let $\bar{G} = (G_0,c_1,G_1,\dots,c_{n+1},G_{n+1})$ be an arbitrary $(n+1)$-scheme. We associate to $\bar{G}$ an $n$-scheme $f_{i}(\bar{G}) = (H_0,d_1,H_1,\dots,d_n,H_n)$ as follows:
  \begin{itemize}
    \item for every $j$ such that $0 \leq j \leq i-1$, we let $H_j = G_j$ and $d_{j+1} = c_{j+1}$.
    \item we let $H_i = G_ic_{i+1}G_{i+1}$.
    \item for every $j$ such that $i+1 \leq j \leq n$, we let $d_j = c_{j+1}$ and $H_{j} = G_{j+1}$.
  \end{itemize}
  Finally, we define $\Hb_i = \{f_i(\bar{G}) \mid \bar{G} \in \Gb_i\}$. Observe that since $\Gb_i \in \sfp{\Cs}$, $\{c_{i+1}\}\in\sfp{\Cs}$ and \sfp{\Cs} is closed under concatenation, it is immediate that $\Hb_i \in \sfp{\Cs}$.

  \smallskip

  We may now define the $n$-blueprint $\Lb_\varphi$. We let,
  \[
    \Lb_\varphi = \Bigl(\bigcup_{0 \leq i \leq n+1} \Lb_{i}\Bigr) \cup \Bigl(\bigcup_{0 \leq i \leq n} \Hb_{i}\Bigr).
  \]
  It it clear that $\Lb_\varphi$ is an $n$-blueprint such that $\Lb_\varphi \in \sfp{\Cs}$ since this is the case for all sets $\Lb_i$ and $\Hb_i$ by definition. Hence, it remains to verify that~\eqref{eq:consf} is satisfied. Let $w \in A^*$ and $\mu: \Xs \to \pos{w}$ a correct assignment. We prove that $w,\mu \models \Lb_\varphi \Leftrightarrow w,\mu \models \varphi$. There are two directions.

  First, assume that $w,\mu \models \varphi$. We prove that $w,\mu \models \Lb_\varphi$. Since $\varphi :=$ ``$\exists y\ \psi$'', there exists an assignment $\gamma: \Xs \cup \{y\} \to \pos{w}$ such that $\gamma(x) = \mu(x)$ for all $x \in \Xs$ and $w,\gamma \models \psi$  (note that at this stage, $\gamma$ is not a ``correct assignment'': indeed, this is not well-defined since we have not specified any linear ordering  on $\Xs\cup\{y\}$) yet. We distinguish two cases. First, assume that there exists $i$ satisfying $0 \leq i \leq n+1$ and $\gamma(y) = \gamma(x_i) = \mu(x_i)$. In that case, one may verify from the definitions that $w,\gamma \models \psi$ entails $w,\mu \models \psi_i$. By definition of $\psi_i$, it follows that $w,\mu \models \Lb_i$ and therefore that $w,\mu \models \Lb_\varphi$ as desired, since $\Lb_i \subseteq \Lb_\varphi$.  In the second case, since $\mu(x_0) = 0$ and $\mu(x_{n+1}) = |w|+1$ (by definition of increasing assignments), there exists $i$ such that $0 \leq i \leq n$ and $\mu(x_i) < \gamma(y) < \mu(x_{i+1})$. Hence, $\gamma$ can be viewed as a \emph{correct} assignment $\gamma: \Ys_i \to \pos{w}$. Since $w,\gamma \models \psi$, it follows that $w,\gamma \models \Gb_i$ by definition of $\Gb_i$. One may then verify from the definition of $\Hb_i$ from $\Gb_i$ that $w,\mu \models \Hb_i$. Consequently, $w,\mu \models \Lb_\varphi$ again, since $\Hb_i \subseteq \Lb_\varphi$ by definition.

  We turn to the converse implication. Assume that $w,\mu \models \Lb_\varphi$ . We prove that $w,\mu \models \varphi$. By definition of $\Lb_\varphi$ as the union of smaller $n$-blueprints, there are two cases. In the first case, we assume that there exists $i$ such that $0 \leq i \leq n+1$ and $w,\mu \models \Lb_i$. By definition of $\Lb_i$, it follows that $w,\mu \models \psi_i$. Moreover, by definition of $\psi_i$ from $\psi$, this implies that $w,\gamma \models \exists y\ \psi$ where $\gamma: \Xs \cup \{y\} \to \pos{w}$ is the assignment defined by $\gamma(x) = \mu(x)$ for every $x \in \Xs$ and $\gamma(y) = \mu(x_i)$. Thus, since $\varphi :=$ ``$\exists y\ \psi$'', it follows that $w,\mu \models \varphi$, as desired. In the second case, we assume that there exists $i$ such that $0 \leq i \leq n$ and $w,\mu \models \Hb_{i}$. By definition of $\Hb_i$ from $\Gb_i$, one may verify that there exists a correct assignment $\gamma: \Ys_i \to \pos{w}$ such $\mu(x) = \gamma(x)$ for every $x \in \Xs$ and $w,\gamma \models \Gb_{i}$. By definition of $\Gb_i$, it follows that  $w,\gamma \models \psi$. Since $\varphi :=$ ``$\exists y\ \psi$'', it follows that $w,\mu \models \varphi$, concluding the~proof.
\end{proof}


\section{Second logical characterization: linear temporal logic}
\label{sec:ltl}
We present a second logical characterization of star-free closure. It also generalizes a well-known result concerning the star-free languages: they are exactly those that can be defined in linear temporal logic (\ltl). This is a consequence of Kamp's theorem~\cite{kltl} which implies the equality $\fow=\ltl$. It then follows from the theorem of McNaughton and Papert that $\sfr = \fow = \ltl$. Here, we introduce a generalized definition of linear temporal logic, which is parameterized by a class \Cs. Actually, we define two classes \ltlc{\Cs} and \ltlpc{\Cs}. They generalize the two classical variants of linear temporal logic: without and with past,  respectively. Then, we prove that $\sfp{\Cs} =\ltlc{\Cs} = \ltlpc{\Cs}$ for every \vari \Cs. The proof argument relies heavily on the characterizations of star-free closure that we already presented. The proof that $\ltlpc{\Cs}\subseteq\sfp{\Cs}$ is based on the equality $\sfp{\Cs} = \fo(\infsigc)$ (Theorem~\ref{thm:folog}). Moreover, the proof that $\sfp{\Cs} \subseteq \ltlc{\Cs}$ uses the algebraic characterization of star-free closure (Theorem~\ref{thm:sfcarac}).

\subsection{Preliminaries} We first define the generalized notion of ``linear temporal logic over finite words''.  Then, we present some useful results about it, which we shall need later when proving the correspondence with star-free~closure.

\medskip
\noindent
{\bf Syntax.} For every class \Cs, we define two sets of temporal formulas denoted by \ltla{\Cs} and \ltlpa{\Cs}, which generalize the classical notions of ``linear temporal logic'' and ``linear temporal logic with past''. We first define the \ltlpa{\Cs} formulas, which are more general.

A particular formula is  built from the atomic formulas using Boolean connectives and temporal modalities. The atomic formulas are: $min$, $max,\top$ and $a$ for every letter $a \in A$. Moreover, we allow all Boolean connectives: if $\psi_1$ and $\psi_2$ are \ltlpa{\Cs} formulas, then so are $(\psi_{1} \vee \psi_{2})$, $(\psi_{1} \wedge \psi_{2})$ and $(\neg \psi_1)$. Finally, there are two binary temporal modalities ``\emph{until}'' and ``\emph{since}''. They are parameterized by a language in \Cs. If $\psi_1$ and $\psi_2$ are \ltlpa{\Cs} formulas and $L \in \Cs$, then the following expressions are \ltlpa{\Cs} formulas as well:
\[
  (\untilp{L}{\psi_1}{\psi_2}) \quad\qquad \text{and} \quad\qquad (\sincep{L}{\psi_1}{\psi_2}).
\]
Moreover, we write ``$\textup{U}$'' for ``$\textup{U}_{A^*}$'' and ``$\textup{S}$'' for ``$\textup{S}_{A^*}$'' (their semantics will be the same as the standard modalities ``\emph{until}'' and ``\emph{since}'' in classical linear temporal logic). We also use the following abbreviations: given an \ltlpa{\Cs} formula $\psi$ and a language $L \in \Cs$, we write $\finallyl{\psi}$ for $\untilp{L}{\top}{\psi}$ and $\nex{\psi}$ for $\until{(\neg \top)}{\psi}$.

Finally, an \ltla{\Cs} formula is an \ltlpa{\Cs}  formula that only contains ``\emph{until}'' modalities (\emph{i.e.}, ``\emph{since}'' is disallowed).

\smallskip
\noindent
{\bf Semantics.} In order to evaluate an \ltlpa{\Cs} formula $\varphi$, one needs a word $w \in A^*$ \emph{and} a position $i \in \pos{w}$. We use structural induction to define when the pair \emph{$(w,i)$ satisfies the formula $\varphi$}. We denote this property by $w,i\models \varphi$:
\begin{itemize}
  \item {\bf Atomic formulas:} We always have $w,i\models\top$. For $a \in A$, we have $w,i \models a$ when $a$ is the letter at position $i$ in $w$. Moreover, we have $w,i \models \mathit{min}$ when $i = 0$ (\emph{i.e.}, $i$ is the leftmost unlabeled position) and $w,i \models \mathit{max}$ when $i = |w| + 1$ (\emph{i.e.}, $i$ is the rightmost unlabeled position).
  \item {\bf Disjunction:} $w,i \models \psi_1 \vee \psi_2$ when $w,i \models \psi_1$ or $w,i \models \psi_2$.
  \item {\bf Conjunction:} $w,i \models \psi_1 \wedge \psi_2$ when $w,i \models \psi_1$ and $w,i \models \psi_2$.
  \item {\bf Negation:} $w,i \models \neg \psi$ when $w,i \models \psi$ does not hold.
  \item {\bf Until:} $w,i \models \untilp{L}{\psi_1}{\psi_2}$ when there exists $j \in \pos{w}$ such that $i< j$, $\infix{w}{i}{j} \in L$, and,
        \begin{enumerate}
          \item For every $k \in \pos{w}$ such that $i < k < j$, we have $w,k \models \psi_1$, and,
          \item $w,j \models \psi_2$.
        \end{enumerate}
  \item {\bf Since:} $w,i \models \sincep{L}{\psi_1}{\psi_2}$ when there exists $j \in \pos{w}$ such that $j < i$, $\infix{w}{j}{i} \in L$ and,
        \begin{enumerate}
          \item For every $k \in \pos{w}$ such that $j < k < i$, we have $w,k \models \psi_1$, and,
          \item $w,j \models \psi_2$.
        \end{enumerate}
\end{itemize}
It remains to define what it means for a single word $w$ (without any distinguished position) to satisfy an \ltlpa{\Cs} formula $\varphi$: we evaluate formulas at the \emph{leftmost unlabeled position} of each word. That is, we say that a word \emph{$w \in A^*$ satisfies $\varphi$} and write $w \models\varphi$ if and only if $w,0\models\varphi$. The language \emph{defined by the formula $\varphi$} is $L(\varphi) = \{w \in A^* \mid w \models \varphi\}$.

Finally, we let \ltlc{\Cs} (resp. \ltlpc{\Cs}) be the class consisting of all languages defined by a formula in \ltla{\Cs} (resp. \ltlpa{\Cs}). The classes \ltlc{\stzer} and \ltlpc{\stzer} (where $\stzer = \{\emptyset,A^*\}$ is the trivial \vari) correspond to the classical variants of linear temporal logic from the literature. Typically, the classes \ltlc{\Gs} and \ltlpc{\Gs} associated to some standard group \vari \Gs (such as \md or \abg), are also natural. We present an example using the class \md of modulo~languages.

\begin{exa} \label{exa:ltlex}
  Let $A = \{a,b\}$. The language $(ab)^*$ belongs to $\ltlc{\stzer}$ since it is defined by the \ltla{\stzer} formula $\nex{(a \vee max)} \wedge \bigl(\until{(a \Rightarrow \nex{b}) \wedge (b \Rightarrow \nex{(a \vee max)})}{max}\bigr)$. Moreover $(aa+bb)^* \in \ltlc{\md}$. It is defined by the following  \ltla{\md} formula:
  \[
    (\finallyp{(AA)^*}{max}) \wedge \Bigl(\until{\bigl((\finallyp{(AA)^*A}{max}) \Rightarrow ((a \wedge \nex{a}) \vee (b \wedge \nex{b}))\bigr)}{max}\Bigr).
  \]
\end{exa}

\noindent
{\bf Properties.} We present a few properties of the classes \ltlc{\Cs}, which we shall use later for proving that $\sfp{\Cs} \subseteq \ltlc{\Cs}$ (when \Cs is a \vari). A key point is that the proof involves auxiliary arbitrary alphabets, independent from the alphabet $A$ that we fix at the beginning. When $B$ is such an alphabet, we need to specify what are the languages over $B$ corresponding to the class \ltlc{\Cs} (this is not clear since \Cs will be defined over our fixed alphabet $A$).  We shall do so using morphisms $\eta: B^* \to N$ that we obtain from the class \Cs.

Consider an alphabet $B$ and a morphism $\eta: B^* \to N$ into a finite monoid. An \ltla{\eta} formula is an \ltla{\Ds} formula $\varphi$ where \Ds is the (finite) class  consisting of all languages recognized by $\eta$ (in particular, this means that atomic formulas in $\varphi$ are $min$, $max$, $\top$ and letters from $B$). Additionally, we write \ltlc{\eta} for the class consisting of all languages (over $B$) that can be defined by an \ltla{\eta} formula. We complete these definitions with two lemmas, which are useful to build \ltla{\eta} formulas.

\begin{lem} \label{lem:ltlsuffix}
  Let $B$ be an alphabet, $\eta: B^*\to N$ be a morphism into a finite monoid and $L\in\ltlc{\eta}$. There exists a formula $\varphi \in \ltla{\eta}$ such that for every $w \in B^*$ and $i \in \pos{w}$ we have $w,i \models \varphi$ if and only if\/ $\suffix{w}{i} \in L$.
\end{lem}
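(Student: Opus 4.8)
The plan is to take an $\ltla{\eta}$ formula $\psi$ defining $L$ (so that $v\in L$ iff $v,0\models\psi$ for every $v\in B^*$) and to \emph{relativize} $\psi$ syntactically to the part of a word lying strictly to the right of a given position. Fix $w\in B^*$ and $i\in\pos{w}$ with $i\le|w|$, write $v=\suffix{w}{i}$, and observe that $j\mapsto j-i$ is an order-preserving bijection from $\{j\in\pos{w}\mid i\le j\le|w|+1\}$ onto $\pos{v}$ which sends $i$ to the artificial leftmost position $0$ of $v$, sends $|w|+1$ to the artificial rightmost position $|v|+1$ of $v$, and sends a labeled position $j$ to the labeled position $j-i$ carrying the same letter. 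Two facts make the relativization work: $\ltla{\eta}$ has no past modalities, so a formula evaluated at $i$ only inspects positions $\ge i$; and for $i\le j<k\le|w|+1$ one has the literal identity $\infix{w}{j}{k}=\infix{v}{j-i}{k-i}$, so the language parameters of the ``until'' modalities need not be changed.

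I would then define, by mutual structural induction, two $\ltla{\eta}$ formulas $\chi^{\circ}$ and $\chi^{\bullet}$ for each $\ltla{\eta}$ formula $\chi$: the formula $\chi^{\circ}$ is meant to be evaluated at the boundary position $i$ (which corresponds to the \emph{unlabeled} position $0$ of $v$), while $\chi^{\bullet}$ is meant for positions strictly to the right of $i$ (which correspond to positions $>0$ of $v$). Concretely one sets $\top^{\circ}=\top^{\bullet}=\top$; for $a\in B$, $a^{\circ}=\neg\top$ and $a^{\bullet}=a$ (position $0$ of $v$ has no label); $min^{\circ}=\top$ and $min^{\bullet}=\neg\top$; $max^{\circ}=\neg\top$ and $max^{\bullet}=max$; the translations commute with $\neg,\vee,\wedge$; and $(\untilp{L'}{\chi_1}{\chi_2})^{\circ}=(\untilp{L'}{\chi_1}{\chi_2})^{\bullet}=\untilp{L'}{\chi_1^{\bullet}}{\chi_2^{\bullet}}$, since an ``until'' always moves into the interior. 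The key claim, proved by induction on $\chi$, is: for all $w\in B^*$, all $i\le|w|$ and all $j$ with $i\le j\le|w|+1$, one has $w,i\models\chi^{\circ}\iff v,0\models\chi$, and (when $j>i$) $w,j\models\chi^{\bullet}\iff v,j-i\models\chi$. The atomic cases are immediate from the description of the bijection; the Boolean cases are routine; the ``until'' case unwinds both semantics, matches witnesses via $j\leftrightarrow j-i$, and uses the infix identity together with the observation that all witness and intermediate positions lie strictly to the right of the current one, hence fall under clause $\chi^{\bullet}$.

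Taking $\chi=\psi$ gives, for every $i\le|w|$, $w,i\models\psi^{\circ}\iff v,0\models\psi\iff v\in L\iff\suffix{w}{i}\in L$. It then remains to patch the single artificial position $i=|w|+1$, where $\suffix{w}{i}=\varepsilon$: I would set $\varphi:=(\neg max\wedge\psi^{\circ})\vee(max\wedge\xi)$ with $\xi:=\top$ if $\varepsilon\in L$ and $\xi:=\neg\top$ otherwise. Since $max$ holds at position $i$ exactly when $i=|w|+1$, this $\varphi$ agrees with $\psi^{\circ}$ at every $i\le|w|$ and reduces to $\xi$ at $i=|w|+1$, so $w,i\models\varphi\iff\suffix{w}{i}\in L$ for all $i\in\pos{w}$; and $\varphi$ is manifestly an $\ltla{\eta}$ formula, as it uses only $\top$, the letters of $B$, $min$, $max$, Boolean connectives, and ``until'' modalities with language parameters recognized by $\eta$. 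The only delicate point is in the atomic cases of the induction — correctly accounting for the fact that the boundary position $i$ of $w$ maps to an \emph{unlabeled} position of $v$, so that label, $min$ and $max$ predicates behave differently at the boundary than inside — together with the easily overlooked adjustment for $i=|w|+1$; there is no conceptual difficulty beyond this bookkeeping.
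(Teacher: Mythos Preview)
Your argument is correct and follows the same core idea as the paper's proof: exploit that \ltla{\eta} has only future modalities, so the semantics at position $i$ in $w$ depends only on positions $\ge i$, which correspond bijectively (preserving labels, order, and infixes) to the positions of $\suffix{w}{i}$.

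The paper's proof is terser. Rather than defining two mutually recursive translations $\chi^{\circ},\chi^{\bullet}$, it observes that since the defining formula $\varphi$ is evaluated at the unlabeled position $0$, one may replace each \emph{top-level} atomic subformula by its constant value there ($a\mapsto\neg\top$, $min\mapsto\top$, $max\mapsto\neg\top$), obtaining a Boolean combination of until formulas; it then asserts directly that such a formula satisfies $w,i\models\varphi\iff \suffix{w}{i},0\models\varphi$, leaving the structural induction implicit. Your $\psi^{\circ}$ is exactly this normalized formula, and your $\chi^{\bullet}$ is semantically equivalent to $\chi$ at every position $j>i\ge 0$ (in particular your rewrite $min^{\bullet}=\neg\top$ is harmless but unnecessary, since $min$ already fails at all such positions). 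So the two proofs coincide once unfolded; yours is simply the explicit version. You are also more careful about $i=|w|+1$, where $\suffix{w}{i}$ is strictly speaking undefined in the paper's conventions; the paper glosses over this edge case (it never matters in the subsequent applications), whereas your $max$-patch handles it cleanly.
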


\begin{proof}
  By definition, there exists a formula $\varphi \in \ltla{\eta}$ such that $L(\varphi) = L$. Moreover, as $\varphi$ is evaluated at the leftmost unlabeled position in words, we may assume without loss of generality that $\varphi$ is a Boolean combination of formulas of the form $\untilp{H}{\psi_1}{\psi_2}$. Since \ltla{\eta} formulas only contain ``until'' modalities by definition, it can now be verified that for every $w \in B^*$ and $i \in \pos{w}$ we have $w,i \models \varphi$ if and only if $\suffix{w}{i} \in L$.
\end{proof}

\begin{lem} \label{lem:ltlprefix}
  Let $B$ be an alphabet, $\eta: B^*\to N$ be a morphism into a finite monoid, $L\in\ltlc{\eta}$ and $\zeta \in \ltla{\eta}$. There exists a formula $\varphi \in \ltla{\eta}$ such that for all $w\in B^*$ and $i \in \pos{w}$, we have $w,i\models\varphi$ if and only if there exists $j \in \pos{w}$ satisfying the three following conditions:
  \begin{enumerate}
    \item  $i < j$ and for all $k \in \pos{w}$ such that $i < k < j$, we have $w,k \not\models \zeta$,
    \item $w,j\models\zeta$, and
    \item $\infix{w}{i}{j} \in L$.
  \end{enumerate}
\end{lem}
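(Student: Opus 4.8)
The plan is to prove Lemma~\ref{lem:ltlprefix} by mimicking, at the level of \ltla{\eta} formulas, the standard ``until'' construction, but with the language constraint $\infix{w}{i}{j} \in L$ handled separately by a bookkeeping argument. First I would observe that since $L \in \ltlc{\eta}$, $L$ is recognized by $\eta$, so whether $\infix{w}{i}{j} \in L$ depends only on $\eta(\infix{w}{i}{j}) \in N$. The obstacle is that an \ltla{\eta} formula evaluated at position $i$ cannot directly ``see'' $\eta$ of a growing prefix; I need to express, at each position $k$ strictly between $i$ and the target $j$, enough local information to reconstruct $\eta(\infix{w}{i}{j})$. The natural idea is to guess the value $t = \eta(\infix{w}{i}{j})$ (there are finitely many choices in $N$, so we take a disjunction over $t \in N$) and also guess, at each intermediate position, the value $\eta$ of the infix so far.

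The key steps, in order, are as follows. (1) For every $n \in N$, use Lemma~\ref{lem:ltlsuffix} applied to the language $\eta\inv(n)$ (which lies in \ltlc{\eta}) to obtain an \ltla{\eta} formula $\theta_n$ such that $w,k \models \theta_n$ iff $\eta(\suffix{w}{k}) = n$; symmetrically, one gets a formula testing $\eta$ of a prefix-type infix, but it is cleaner to work with the following reformulation: the condition $\infix{w}{i}{j} \in L$ is equivalent to $\eta(\suffix{w}{i}) = n_1$ and $\eta(\suffix{w}{j}) = n_2$ for some pair $(n_1,n_2)$ with $n_1 = n \cdot n_2$ for some $n$ with $\eta\inv(n) \cdot (\text{letter at }j) \cdots \subseteq$ — actually more simply: $\suffix{w}{i} = \infix{w}{i}{j} \cdot (a_j) \cdot \suffix{w}{j+1}$ where $a_j$ is the label at $j$, and $\suffix{w}{j} = a_j \cdot \suffix{w}{j+1}$, so $\eta(\suffix{w}{i}) = \eta(\infix{w}{i}{j}) \cdot \eta(\suffix{w}{j})$. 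Hence $\infix{w}{i}{j} \in L = \eta\inv(F)$ holds iff there exist $s \in F$, $r \in N$ with $\eta(\suffix{w}{i}) = s r$ and $\eta(\suffix{w}{j}) = r$. (2) Using Lemma~\ref{lem:ltlsuffix}, get for each $r \in N$ a formula $\sigma_r$ with $w,k \models \sigma_r$ iff $\eta(\suffix{w}{k}) = r$. (3) Now combine: the required $\varphi$ is
\[
  \bigvee_{\substack{s \in F,\ r \in N}} \Bigl( \sigma_{sr} \wedge \bigl( \until{(\neg \zeta)}{(\zeta \wedge \sigma_r)} \bigr) \Bigr),
\]
where the inner formula uses the plain $\textup{U} = \textup{U}_{A^*}$ modality (over $B^*$, the full-language modality), asserting there is a later position $j$ where $\zeta \wedge \sigma_r$ holds and $\neg\zeta$ holds strictly in between. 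Wait — $\sigma_{sr}$ is evaluated at the current position $i$, giving $\eta(\suffix{w}{i}) = sr$; the until-part locates the first $j > i$ with $w,j\models\zeta$ and forces $\eta(\suffix{w}{j}) = r$. Together with step~(1)'s algebraic equivalence, $w,i\models\varphi$ iff there is $j$ satisfying conditions (1)--(3), which is exactly the claim.

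The main thing to verify carefully is the direction of conditions~(1) and~(2) of the lemma: the inner formula $\until{(\neg\zeta)}{(\zeta\wedge\sigma_r)}$ says \emph{some} $j$ with $\neg\zeta$ strictly in between and $\zeta\wedge\sigma_r$ at $j$, which already gives a $j$ that is in particular the first position past $i$ where $\zeta$ holds, matching (1); and (2) is immediate. One subtlety: $\sigma_{sr}$ and $\sigma_r$ being \ltla{\eta} formulas is exactly what Lemma~\ref{lem:ltlsuffix} delivers (since $\eta\inv(sr), \eta\inv(r) \in \ltlc{\eta}$, being recognized by $\eta$), and $\zeta \in \ltla{\eta}$ by hypothesis, so the whole Boolean-and-until combination stays in \ltla{\eta}; note that \ltla{\eta} is by definition closed under Boolean connectives and the $\textup{U}_{A^*}$ (here $\textup{U}_{B^*}$) modality since $B^* = \eta\inv(1_N)\cup\cdots$ is recognized by $\eta$. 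I expect no real difficulty beyond writing the algebraic identity $\eta(\suffix{w}{i}) = \eta(\infix{w}{i}{j})\eta(\suffix{w}{j})$ cleanly and checking the edge case where no intermediate positions exist (then $\until{(\neg\zeta)}{(\zeta\wedge\sigma_r)}$ degenerates correctly, since the ``for all $k$ with $i<k<j$'' is vacuous).
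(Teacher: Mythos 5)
Your proof breaks down at the very first step: the claim ``since $L \in \ltlc{\eta}$, $L$ is recognized by $\eta$'' is false. The class $\ltlc{\eta}$ is the set of languages \emph{definable by an $\ltla{\eta}$ formula}, not the set of languages recognized by $\eta$. The latter is the finite class $\Ds$ that merely supplies the parameters of the $\textup{U}_X$ modalities; the former is in general much larger (for instance, with $\eta$ the trivial morphism to the one-element monoid, $\ltlc{\eta}$ is all of classical LTL, while only $\emptyset$ and $B^*$ are $\eta$-recognized). Moreover, in the only place Lemma~\ref{lem:ltlprefix} is used — inside the proof of Lemma~\ref{lem:ltlmain}, applied to the blocks $H$ and $U$ of an $\ltlc{\eta}$-partition — these blocks are typically \emph{not} $\eta$-recognized, so the special case you would actually be proving is not the one that is needed. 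Once the premise ``$L = \eta^{-1}(F)$'' is dropped, the remainder of your argument (guessing $s \in F$, $r \in N$, and testing $\eta(\suffix{w}{i}) = sr$, $\eta(\suffix{w}{j}) = r$) has nothing to hang on: membership $\infix{w}{i}{j} \in L$ is simply not a function of $\eta(\infix{w}{i}{j})$.

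The correct route is a \emph{relativization} of the formula, not an algebraic decomposition of the language. Apply Lemma~\ref{lem:ltlsuffix} to $L$ to get $\psi \in \ltla{\eta}$ with $w,i \models \psi \Leftrightarrow \suffix{w}{i} \in L$, and then syntactically rewrite $\psi$ so that the first position where $\zeta$ holds plays the role of the end of the word: replace every occurrence of $max$ by $\zeta$, and recursively replace each subformula $\untilp{H}{\psi_1}{\psi_2}$ by $(\neg\zeta) \wedge \untilp{H}{(\psi_1 \wedge \neg\zeta)}{\psi_2}$, so that no until-step crosses the first $\zeta$-position. Taking $\varphi := (\finally{\zeta}) \wedge \psi'$ then tests membership of $\infix{w}{i}{j}$ (with $j$ the first $\zeta$-position past $i$) in $L$, which is exactly what the lemma asks for, and it works for every $L \in \ltlc{\eta}$. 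Note in passing that this transformation leaves the formula in $\ltla{\eta}$, since the parameters $H$ of the modalities are untouched — the only thing that changes is the local Boolean skeleton.
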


\begin{proof}
  Lemma~\ref{lem:ltlsuffix} yields a formula $\psi \in \ltla{\eta}$ such that for every $w \in B^*$ and $i \in \pos{w}$ we have $w,i \models \psi$ if and only if $\suffix{w}{i} \in L$. We build  a new formula $\psi'$ by applying the two following modifications to $\psi$:
  \begin{itemize}
    \item we replace every occurrence of the atomic formula $max$ by $\zeta$.
    \item we recursively replace every sub-formula of the form $\untilp{H}{\psi_1}{\psi_2}$ by  $(\neg \zeta) \wedge (\untilp{H}{(\psi_1 \wedge \neg \zeta)}{\psi_2})$.
  \end{itemize}
  It can now be verified that the formula $\varphi := (\finally{\zeta}) \wedge \psi'$ satisfies the property described in the lemma.
\end{proof}

\subsection{Main Theorem}

It is well-known that we have $\sfr = \fow = \ltlc{\stzer} = \ltlpc{\stzer}$. The equality $\fow = \ltlc{\stzer} =\ltlpc{\stzer}$ follows from Kamp's theorem~\cite{kltl} (these equalities are only an instance of Kamp's theorem, which is more general, as itconnects first-order logic to linear temporal logic for more general structures than finite words). Then, the equality $\sfr = \fow$ follows from the work of McNaughton and Papert~\cite{mnpfosf}. Here, we generalize this result to arbitrary input classes that are \varis. More precisely, we prove the following theorem.

\begin{thm} \label{thm:ltlfo}
  Let \Cs be a \vari. Then, $\sfp{\Cs} = \ltlc{\Cs} = \ltlpc{\Cs}$.
\end{thm}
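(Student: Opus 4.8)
The plan is to establish the two inclusions $\ltlpc{\Cs}\subseteq\sfp{\Cs}$ and $\sfp{\Cs}\subseteq\ltlc{\Cs}$, since the chain $\ltlc{\Cs}\subseteq\ltlpc{\Cs}$ holds trivially by definition (every \ltla{\Cs} formula is an \ltlpa{\Cs} formula). Combining these three inclusions gives the cycle $\sfp{\Cs}\subseteq\ltlc{\Cs}\subseteq\ltlpc{\Cs}\subseteq\sfp{\Cs}$, which yields all the claimed equalities at once.

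For the inclusion $\ltlpc{\Cs}\subseteq\sfp{\Cs}$, I would exploit the already-proven equality $\sfp{\Cs}=\foc$ (Theorem~\ref{thm:folog}). It suffices to translate each \ltlpa{\Cs} formula $\varphi$ into an equivalent $\foc$ formula $\widehat{\varphi}(x)$ with one free variable $x$, such that for every word $w$ and position $i$, $w,i\models\varphi$ iff $w\models\widehat{\varphi}(i)$; then $w\models\varphi$ iff $w\models\widehat{\varphi}(min)$, so $L(\varphi)=L(\widehat{\varphi}(min))\in\foc$. The translation is by structural induction: atomic formulas $\top$, $min$, $max$, $a$ translate to obvious atomic $\foc$ formulas at $x$; Boolean connectives commute with the translation; and the modality $\untilp{L}{\psi_1}{\psi_2}$ at $x$ translates to $\exists y\,\bigl(x<y \wedge I_L(x,y) \wedge \widehat{\psi_2}(y) \wedge \forall z\,((x<z\wedge z<y)\Rightarrow\widehat{\psi_1}(z))\bigr)$, using the infix predicate $I_L$ available in \infsigc because $L\in\Cs$; the $\textup{S}_L$ modality is symmetric. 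This is entirely routine.

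The substantial direction is $\sfp{\Cs}\subseteq\ltlc{\Cs}$, and this is where the main obstacle lies: \ltla{\Cs} is a pure-future logic with no past modalities, so one cannot directly mimic the two-sided structure of languages in \sfp{\Cs}. I would follow the route taken for the original class of star-free languages (à la Wilke~\cite{wilkeltl}), using the algebraic characterization (Theorem~\ref{thm:sfcarac}): since \bsdp{\Cs}$=\sfp{\Cs}$, it suffices to show that every language in \bsdp{\Cs} lies in \ltlc{\Cs}, and here one can argue by induction on the construction of \bsdp{\Cs} expressions (closure under intersection with \Cs, disjoint union, unambiguous concatenation, and Kleene star on prefix codes of bounded synchronization delay). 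Alternatively, and more likely what the authors intend, one works directly with a \Cs-morphism $\alpha:A^*\to M$ recognizing $L$ all of whose \Cs-orbits are aperiodic, and produces, for each $s\in M$ and each auxiliary alphabet/morphism coming from \Cs, an \ltla{\eta}-type formula capturing $\alpha\inv(s)$; the key technical devices are Lemmas~\ref{lem:ltlsuffix} and~\ref{lem:ltlprefix}, which let one rebuild ``look-ahead'' information using only future modalities and a marked stopping predicate $\zeta$. The aperiodicity of all \Cs-orbits is exactly what powers the induction that unwinds Kleene stars on prefix codes of bounded synchronization delay into nested future-until formulas (the bounded synchronization delay is what makes the resynchronization local, hence expressible). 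The induction will be organized along the same three parameters (size of $\alpha(P^+)$, size of the partition, size of $s\cdot\alpha(P^*)$) that appear in Lemma~\ref{lem:sfclos:fromapertostar}, mirroring the proof of Theorem~\ref{thm:sfcarac}; the hard part will be carefully bookkeeping the \ltla{\eta} formulas through the subcase analysis, particularly verifying that the base case (when $s$ is $\Hb$-stable) genuinely uses aperiodicity of the \Cs-orbits to collapse the relevant idempotent powers into a formula of bounded modal depth.
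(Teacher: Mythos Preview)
Your overall strategy matches the paper's proof exactly: the trivial inclusion $\ltlc{\Cs}\subseteq\ltlpc{\Cs}$, the translation $\ltlpc{\Cs}\subseteq\foc=\sfp{\Cs}$ via Theorem~\ref{thm:folog} (the paper's translation $[\varphi](x)$ is identical to your $\widehat{\varphi}(x)$), and the hard direction $\sfp{\Cs}\subseteq\ltlc{\Cs}$ via the algebraic characterization of Theorem~\ref{thm:sfcarac} combined with Lemmas~\ref{lem:ltlsuffix} and~\ref{lem:ltlprefix}.

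There are, however, two technical points where the paper's execution differs from what you sketch, and both matter. First, the paper does \emph{not} carry out the induction over prefix codes $P$ with bounded synchronization delay as in Lemma~\ref{lem:sfclos:fromapertostar}; instead it works with sub-alphabets $C\subseteq B$ of auxiliary alphabets $B$, introducing fresh alphabets along the way (this is why the notion of ``tame pair'' $(\beta,\eta)$ and the classes $\ltlc{\eta}$ are set up in advance). The three induction parameters are the sizes of $\beta(C^+)$, of $C$, and of $\beta(C^*)\cdot s$. Second, and this is the crucial point you have backwards: the buffer element $s$ sits on the \emph{right}, not the left. The paper explicitly remarks on this asymmetry: because \ltla{\Cs} is pure future, the natural abstraction is of the suffix yet to be read, so the relevant safety condition is $\beta(u)s=\beta(v)s$ and the decomposition in the inductive step locates the \emph{last} occurrence of the distinguished letter $c$ (building $K=UcH$ with $U\subseteq C^*$, $H\subseteq D^*$), mirror-image to the first-occurrence decomposition in Lemma~\ref{lem:sfclos:fromapertostar}. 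Trying to run the induction with $s$ on the left would force you to recover prefix information, which pure-future modalities cannot do directly.
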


Note that, in view of Corollary~\ref{cor:sfcarac}, Theorem~\ref{thm:ltlfo} implies that for every \vari \Cs with decidable separation, the class $\ltlc{\Cs} = \ltlpc{\Cs}$ has decidable membership. We now concentrate on the proof of Theorem~\ref{thm:ltlfo}. 

\begin{proof}[Proof of Theorem~\ref{thm:ltlfo}]
  We fix a \vari \Cs for the proof. The inclusion $\ltlc{\Cs}\subseteq\ltlpc{\Cs}$ is trivial. We prove that $\ltlpc{\Cs}\subseteq\sfp{\Cs}$ and $\sfp{\Cs}\subseteq\ltlc{\Cs}$. Let us start with the former, which is simpler.

  \medskip
  \noindent
  {\bf Inclusion} $\ltlpc{\Cs} \subseteq \sfp{\Cs}$. The proof is based on Theorem~\ref{thm:folog}. It states the equality $\sfp{\Cs}=\foc$. Thus, it suffices to prove that $\ltlpc{\Cs} \subseteq \foc$. Let $\varphi$ be an \ltlpa{\Cs} formula. We use structural induction on $\varphi$ to build an \foc formula $[\varphi](x)$ with at most one free variable $x$ which satisfies the following condition:
  \begin{equation}\label{eq:ltltofo}
    \text{for all $w \in A^*$ and $i \in \pos{w}$,} \quad  w,i \models \varphi \quad \text{if and only if} \quad w,x \mapsto i \models [\varphi](x),
  \end{equation}
  where $x\mapsto i$ denotes the assignment that maps $x$ to $i$.
  It will then be immediate that the language $L \subseteq A^*$ defined by the \ltla{\Cs} formula $\varphi$ (\emph{i.e.}, $L = \{w \in A^* \mid w \models \varphi\}$) is defined by the \foc sentence $[\varphi](min)$ (which is obtained from $[\varphi](x)$ by replacing every free occurrence of the variable $x$ by the constant $min$). Hence, we obtain $\ltlpc{\Cs} \subseteq \foc$ as desired.

  If  $\varphi = \top$, then we let $[\varphi](x) := (\mathit{min} = \mathit{min})$. If $\varphi = min $ or $\varphi = max$, then we define $[\varphi](x) := (x = min)$ or $[\varphi](x) := (x = max)$, respectively. If $\varphi = a$ for some $a \in A$, then we let $[\varphi](x) := a(x)$. Logical connectives are handled in the usual way. It remains to treat the temporal operators \emph{until} and \emph{since}.
  \begin{enumerate}
    \item If $\varphi = \untilp{L}{\varphi_1}{\varphi_2}$ for some $L \in \Cs$, we define:
          \[
          [\varphi](x) := \exists x_2\ I_L(x,x_2) \wedge [\varphi_2](x_2) \wedge \forall x_1\ (x < x_1 \wedge x_1 < x_2) \Rightarrow [\varphi_1](x_1).
          \]
    \item If $\varphi = \sincep{L}{\varphi_1}{\varphi_2}$ for some $L \in \Cs$, we define:
          \[
          [\varphi](x) := \exists x_2\ I_L(x_2,x) \wedge [\varphi_2](x_2) \wedge \forall x_1\ (x_2 < x_1 \wedge x_1 < x) \Rightarrow [\varphi_1](x_1).
          \]
  \end{enumerate}
  It is simple to verify that this construction satisfies~\eqref{eq:ltltofo}, as desired.

  \medskip
  \noindent
  {\bf Inclusion} $\sfp{\Cs} \subseteq \ltlc{\Cs}$. Let $L \in \sfp{\Cs}$. We prove that $L \in \ltlc{\Cs}$. The argument is based on Theorem~\ref{thm:sfcarac} which implies that all the \Cs-orbits for the syntactic morphism $\alpha: A^*\to M$ of $L$ are aperiodic. We use this hypothesis and induction to construct an \ltla{\Cs} formula defining $L$. This implies that $L \in \ltlc{\Cs}$, as desired.

  The construction borrows ideas from the argument of Theorem~\ref{thm:sfcarac}, which proves that under the same hypotheses on $L$, we have $L \in \bsdp{\Cs}$. Yet, there are key differences, as \ltlc{\Cs} and \bsdp{\Cs} are distinct formalisms. In particular, as mentioned above, we shall consider auxiliary alphabets independent from $A$. We start with preliminary definitions aimed at manipulating them.

  First, each time we consider an auxiliary alphabet $B$, we shall have to recast the morphism $\alpha: A^* \to M$ into a morphism $\beta: B^* \to M$ and reformulate on $\beta$ the hypothesis that all~\Cs-orbits for $\alpha$ are aperiodic. For this, we consider another morphism $\eta: B^*\to N$ into a finite monoid. Roughly, $\eta$ is used as an abstraction of the class \Cs over the alphabet $B$. 
  We say that the pair $(\beta,\eta)$ is \emph{tame} to indicate that the~following property holds:
  \begin{equation} \label{eq:goodm}
    \text{For all $u,v\in B^*$, if $\eta(u)= \eta(v)$ and $\beta(u) \in E(M)$, then $(\beta(uvu))^{\omega} = (\beta(uvu))^{\omega+1}$.}
  \end{equation}
  We first connect this definition to our hypothesis in the following simple fact.

  \begin{fct} \label{fct:ptame}
    There exists a \Cs-morphism $\eta: A^* \to N$ such that the pair $(\alpha,\eta)$ is tame.
  \end{fct}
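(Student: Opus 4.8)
The plan is to pipe the standing hypothesis through the algebraic characterization and then extract $\eta$ from Lemma~\ref{lem:cmorph}. Recall that at this point in the argument we are assuming $L\in\sfp{\Cs}$ and that $\alpha\colon A^*\to M$ is the syntactic morphism of $L$; Theorem~\ref{thm:sfcarac} therefore tells us that every \Cs-orbit for $\alpha$ is an aperiodic monoid. I would then invoke Item~\ref{item:cmoprh2} of Lemma~\ref{lem:cmorph} to obtain a \Cs-morphism $\eta\colon A^*\to N$ such that $\eta(u)=\eta(v)$ implies that $(\alpha(u),\alpha(v))$ is a \Cs-pair for $\alpha$, and claim that this $\eta$ makes $(\alpha,\eta)$ tame.

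To verify~\eqref{eq:goodm}, take $u,v\in A^*$ with $\eta(u)=\eta(v)$ and $\alpha(u)\in E(M)$, and set $e=\alpha(u)$. By the choice of $\eta$, the pair $(e,\alpha(v))=(\alpha(u),\alpha(v))$ is a \Cs-pair for $\alpha$, so by the very definition of the \Cs-orbit of $e$ the element $t=\alpha(uvu)=e\,\alpha(v)\,e$ belongs to this orbit $N_e$, which is aperiodic by the above. Hence there is an integer $m\geq 1$ (namely $m=\omega(N_e)$) with $t^{m+1}=t^m$; since $N_e$ is a subsemigroup of $M$, this equation also holds in $M$, and consequently $t^k=t^m$ for every $k\geq m$.

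The only point requiring a little care is that the exponent $\omega=\omega(M)$ appearing in the desired conclusion $t^{\omega+1}=t^{\omega}$ need not equal $\omega(N_e)$; a priori the two are unrelated. This is settled by a short power computation inside $M$: taking $k=m\cdot\omega(M)\geq m$ gives $t^m=t^{m\cdot\omega(M)}=(t^{\omega(M)})^{m}=t^{\omega(M)}$, where the last equality holds because $t^{\omega(M)}$ is idempotent in $M$ and $m\geq 1$. Therefore $t^{\omega(M)+1}=t^{m+1}=t^m=t^{\omega(M)}$, which is precisely~\eqref{eq:goodm} for the chosen $u,v$. This shows that $(\alpha,\eta)$ is tame. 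I do not expect any genuine obstacle here: the argument is a direct combination of Theorem~\ref{thm:sfcarac} and Lemma~\ref{lem:cmorph}, the only mild subtlety being the reconciliation of the idempotent powers $\omega(N_e)$ and $\omega(M)$ just described.
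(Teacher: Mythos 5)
Your proof is correct and takes essentially the same approach as the paper: invoke Theorem~\ref{thm:sfcarac} to get aperiodicity of the \Cs-orbits, then extract $\eta$ from Item~\ref{item:cmoprh2} of Lemma~\ref{lem:cmorph} and verify tameness by noting that $\alpha(uvu)=e\alpha(v)e$ lies in the (aperiodic) orbit of $e=\alpha(u)$. The paper compresses the last step into ``it follows that $(\alpha(uvu))^{\omega}=(\alpha(uvu))^{\omega+1}$''; you spell out the reconciliation between $\omega(N_e)$ and $\omega(M)$, which is a valid and slightly more careful rendering of the same inference (one could also observe directly that $t^{\omega(N_e)}$ and $t^{\omega(M)}$ are both the unique idempotent in the monogenic subsemigroup generated by $t$, hence equal).
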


  \begin{proof}
    Lemma~\ref{lem:cmorph} yields a \Cs-morphism $\eta: A^* \to N$ such that for every $u,v \in A^*$, if $\eta(u) = \eta(v)$, then  $(\alpha(u),\alpha(v))$ is a \Cs-pair. Since all \Cs-orbits for $\alpha$ are aperiodic by hypothesis, it follows that if we additionally know that $\alpha(u) \in E(M)$, then $(\alpha(uvu))^{\omega} = (\alpha(uvu))^{\omega+1}$. Hence, \eqref{eq:goodm} holds and $(\alpha,\eta)$ is tame.
  \end{proof}

  Given an alphabet $B$, a morphism $\eta: B^* \to N$ into a finite monoid and $P \subseteq B^*$, an \ltlc{\eta}-partition of $P$ is a \emph{finite} partition \Kb of $P$ into languages of \ltlc{\eta}. Moreover, given a morphism $\beta: B^* \to M$ (where $M$ is the original finite monoid used in $\alpha$) and $s \in M$, we say that \Kb is \emph{$(\eta,\beta,s)$-safe} to indicate that for every $K\in\Kb$ and every $w,w' \in K$, we have $\eta(w) = \eta(w')$ and $\beta(w)s=\beta(w')s$. We may now start the proof. The argument is based on the following lemma.

  \begin{lem}\label{lem:ltlmain}
    Let $B$ be an alphabet and consider a morphism $\beta: B^* \to M$ into the fixed monoid $M$ and another morphism $\eta: B^* \to N$ into a finite monoid such that $(\beta,\eta)$ is tame. Let $C \subseteq B$ and $s \in M$. Then, there exists an  $(\eta,\beta,s)$-safe \ltlc{\eta}-partition of\/ $C^*$.
  \end{lem}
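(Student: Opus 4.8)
The plan is to prove Lemma~\ref{lem:ltlmain} by induction on parameters analogous to those used in the proof of Lemma~\ref{lem:sfclos:fromapertostar}, but adapted to the temporal-logic setting. The three nested parameters will be, in order of importance: (1) the size of $\beta(C^+) \subseteq M$, (2) the size of the alphabet $C$, and (3) the size of $s \cdot \beta(C^*) \subseteq M$. For each word $w$, its first letter (if any) splits $w$ as $cw'$, so $C^*$ decomposes naturally according to this first letter, which is the temporal-logic analogue of the prefix-code decomposition. The key definition will again be a notion of \emph{stability}: we say $s$ is stable (with respect to $\beta$, $\eta$ and $C$) when $s\cdot\beta(C^*) = s\cdot\beta(C^*c)$ for every $c \in C$. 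The base case is when $s$ is stable; the inductive step handles the unstable case.

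In the \textbf{base case} ($s$ stable), this is where aperiodicity is used. I would first prove an auxiliary ``absorption'' fact mirroring Fact~\ref{fct:sfclos:icarbase}: if $q, f \in \beta(C^*)$ with $f$ idempotent, then $sqf = sq$. The proof goes by induction on a $\beta$-decomposition, using stability to push a letter $c$ past and the fact that $\eta$ abstracts $\Cs$ correctly, together with tameness~\eqref{eq:goodm}. Actually, in the temporal-logic setting the cleaner route is: since $\eta$ is the abstracting morphism, partition $C^*$ by $\eta$-value, i.e.\ take $\Kb = \{C^* \cap \eta^{-1}(t) \mid t \in N\}$. Each such language is in $\ltlc{\eta}$ because membership of $u$ in $C^*$ is expressible (all letters lie in $C$, and one walks to $max$) and $\eta^{-1}(t)$ is $\eta$-recognized hence definable by an $\ltla{\eta}$ formula applied via Lemma~\ref{lem:ltlsuffix}. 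For $(\eta,\beta,s)$-safety one needs: if $\eta(u) = \eta(v) = t$ with $u,v \in C^*$ then $\eta(u) = \eta(v)$ trivially, and $\beta(u)s = \beta(v)s$ — wait, the safety condition is $\beta(w)s = \beta(w')s$; here the product is on the \emph{right} of $\beta(w)$, so I need to track orientation carefully and apply the absorption fact with $s$ acting where appropriate, invoking that $(e\beta(vu^{n-1})e)^{\omega}$ lands in the relevant $\Cs$-orbit which is aperiodic.

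In the \textbf{inductive step} ($s$ not stable), there is some $c \in C$ with $s\cdot\beta(C^*c) \subsetneq s\cdot\beta(C^*)$. Fix it. Every word $w \in C^*$ decomposes uniquely: a prefix in $((C\setminus\{c\})^*c)^*$ followed by a suffix in $(C\setminus\{c\})^*$ — analogous to Subcase~2 of Lemma~\ref{lem:sfclos:fromapertostar}. By induction on parameter~(2) (smaller alphabet $C\setminus\{c\}$) I get an $(\eta,\beta,1_M)$-safe $\ltlc{\eta}$-partition $\Ub$ of $(C\setminus\{c\})^*$. The prefixes are handled by applying the induction hypothesis to the letter-$c$-blocks: first build a $\ltlc{\eta}$-partition of $(C\setminus\{c\})^*c$ from $\Ub$ (concatenating with the singleton $\{c\}$, which is $\ltlc{\eta}$-definable), then iterate — here I must check a parameter has strictly decreased, which follows from $\beta$ of the iterated block being strictly smaller than $\beta(C^+)$ or the stable/unstable bookkeeping as in Facts~\ref{fct:sfclos:sc1carac}--\ref{fct:sfclos:sc2carac}. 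Then I combine the prefix-partition and $\Ub$ by a ``concatenation at the decomposition point'' construction; crucially, in temporal logic this concatenation is realized by Lemma~\ref{lem:ltlprefix}: one writes a formula asserting ``walk forward to the last occurrence of a $c$-block boundary, where the prefix read so far lies in the prefix-piece, and the remaining suffix lies in the suffix-piece'', using $\zeta$ to mark the boundary. Safety of the combined partition follows from safety of the two pieces and multiplicativity of $\beta$ and $\eta$.

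The main obstacle I anticipate is the \textbf{combination step in the inductive case}: in Lemma~\ref{lem:sfclos:fromapertostar} one freely used unambiguous concatenation of languages, but $\ltlc{\eta}$ is a temporal logic without an explicit concatenation operator, so every concatenation must be re-expressed as a navigation formula via Lemmas~\ref{lem:ltlsuffix} and~\ref{lem:ltlprefix}, and one must verify that the ``marker'' $\zeta$ picking out the decomposition point is itself expressible in $\ltla{\eta}$ and unambiguously locates the split. A secondary subtlety is keeping the orientation of the products $\beta(w)s$ versus $s\beta(\cdot)$ straight, since temporal logic evaluates left-to-right and the safety condition multiplies $s$ on the right, whereas Lemma~\ref{lem:sfclos:fromapertostar} multiplied on the left — so the roles of prefixes and suffixes, and of ``until'' versus the (absent) ``since'', are swapped relative to the algebraic proof, and I will need to be careful that the induction parameters and the stability condition are set up with the correct handedness.
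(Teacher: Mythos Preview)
Your overall architecture is correct: the three-parameter induction, the stability dichotomy, the base-case partition $\{C^*\cap\eta^{-1}(t)\mid t\in N\}$ with an absorption fact derived from tameness, and an inductive step splitting on a distinguished letter $c$. The paper does exactly this. Your closing remark about handedness is also spot on: for pure-future \ltla{\eta} the buffer $s$ must sit on the \emph{right}, so the correct third parameter is $|\beta(C^*)\cdot s|$, stability reads $\beta(C^*)\cdot s=\beta(cC^*)\cdot s$, and the absorption fact is $fqs=qs$ (idempotent on the left). With your left-sided setup the decomposition would need ``since'', which you do not have.

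There is, however, one genuine gap. In the inductive step the paper splits into two subcases according to whether $\beta(cC^*)=\beta(C^+)$ or the inclusion is strict. Your sketch conflates these. Subcase~1 (equality) is handled by induction on the third parameter, decomposing each $w\in C^*$ at its \emph{last} occurrence of $c$ into $u\,c\,x$ with $u\in C^*$ and $x\in D^*=(C\setminus\{c\})^*$; the prefix $u$ is covered by the inductive call with the smaller buffer $\beta(c)t_H s$. This is close to what you wrote. The real problem is Subcase~2 (strict inclusion), where the first parameter drops. Here one must partition $(cD^*)^*$, but this set is \emph{not} of the form $C'^*$ for any sub-alphabet $C'\subseteq B$, so the induction hypothesis does not apply directly. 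Your phrase ``then iterate'' hides a nontrivial step that the paper carries out explicitly: it introduces a \emph{fresh auxiliary alphabet} $\frB=\{\frb_H\mid H\in\Hb\}$, one letter per block of the partition $\Hb$ of $D^*$, defines new morphisms $\gamma:\frB^*\to M$ and $\delta:\frB^*\to N$ by $\gamma(\frb_H)=\beta(c)t_H$ and $\delta(\frb_H)=\eta(c)q_H$, checks that the pair $(\gamma,\delta)$ is again tame, and only then applies the induction hypothesis over the new alphabet $\frB$ (now $\gamma(\frB^+)\subsetneq\beta(C^+)$). The resulting \ltlc{\delta}-partition of $\frB^*$ must then be pulled back to an \ltlc{\eta}-partition of $(cD^*)^*$, which requires a nontrivial formula translation (replacing each atom $\frb_H$ by an \ltla{\eta} formula detecting ``current position has label $c$ and the following maximal $D^*$-block lies in $H$'', and relativising each $\textup{U}_X$ to positions satisfying $c\vee\mathit{max}$). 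Without this change-of-alphabet device the recursion in Subcase~2 does not go through.
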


  Let us first apply Lemma~\ref{lem:ltlmain} to prove that $L \in \ltlc{\Cs}$. Fact~\ref{fct:ptame} yields a \Cs-morphism $\eta: A^* \to N$ such that the pair $(\alpha,\eta)$ is tame. Since $\eta$ is a \Cs-morphism, $\ltlc{\eta} \subseteq \ltlc{\Cs}$. Hence, it suffices to prove that $L \in \ltlc{\eta}$. We apply the lemma for $B = C = A$, $\beta = \alpha$ and $s = 1_M$. This yields an  $(\eta,\alpha,1_M)$-safe \ltlc{\eta}-partition \Kb of $A^*$. Now, \Kb being $(\eta,\alpha,1_M)$-safe implies that for every $K \in \Kb$, there exists $s\in M$ such that $K \subseteq \alpha\inv(s)$. Since $\Kb$ is a partition of $A^*$ and $L$ is recognized by $\alpha$, it follows that $L$ is a union of languages of \Kb. Since \ltlc{\eta} is closed under union, it follows that  $L$ itself belongs to $\ltlc{\eta}$, which completes the main argument.

  \medskip

  It remains to prove Lemma~\ref{lem:ltlmain}. Let $B$ be an alphabet and consider two morphisms $\beta: B^*\to M$ and $\eta: B^*\to N$ such that $(\beta,\eta)$ is tame. Moreover, let $C \subseteq B$ and $s \in M$. We build an \ltlc{\eta}-partition of $C^*$ which is $(\eta,\beta,s)$-safe using induction on the three following parameters listed by order of importance:
  \begin{enumerate}
    \item The size of $\beta(C^+) \subseteq M$.
    \item The size of $C$.
    \item The size of $\beta(C^*) \cdot s \subseteq M$.
  \end{enumerate}

  \begin{remark}
    As already mentioned, the proof is similar to that of Theorem~\ref{thm:sfcarac}. In particular, the current proof resembles to that of Corollary~\ref{cor:sfcaracg}. The reader may wonder why the element $s$, which serves as a buffer in these proofs, is not on the same side. The reason is that it was easier to consider $s\alpha(P^{*})$ in Corollary~\ref{cor:sfcaracg}, due to the fact that $P$ is a prefix code, while it is easier to consider $\beta(C^*) \cdot s $ here, due to the fact that we are dealing with pure future linear time temporal logic.
  \end{remark}

  We distinguish two cases depending on the following property of $\beta$, $C$ and $s$. We say that $s$ is \emph{$(\beta,C)$-stable} when the following holds:
  \begin{equation} \label{eq:ltlstable}
    \text{for every $c \in C$,} \quad  \beta(C^*) \cdot s = \beta(cC^*) \cdot s.
  \end{equation}
  We first consider the case when $s$ is $(\beta,C)$-stable. This is the base case. Otherwise, we use induction on our three parameters.

  \smallskip
  \noindent
  {\bf Base case: $s$ is $(\beta,C)$-stable.} In that case, we define $\Kb = \{C^*\cap\eta\inv(t) \mid t \in N\}$. Clearly, this is a finite partition of $C^*$. Moreover, it is clear that $C^* \cap \eta\inv(t) \in \ltlc{\eta}$ for every $t \in N$. Indeed, it is defined by the \ltla{\eta} formula $\left(\untilp{\eta\inv(t)}{\left(\bigvee_{c \in C}c\right)}{max}\right)$.

  It remains to show that \Kb is $(\eta,\beta,s)$-safe. We use the hypothesis that $(\beta,\eta)$ is tame. First, we use the hypothesis that $s$ is $(\beta,C)$-stable to prove the following statement, analogous to Fact~\ref{fct:sfclos:icarbase}.

  \begin{fct}\label{fct:ltlbase}
    Let $q,f \in \beta(C^*)$ such that $f$ is idempotent. Then, we have $fqs = qs$.
  \end{fct}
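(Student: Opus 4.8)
The plan is to mirror the proof of Fact~\ref{fct:sfclos:icarbase} from Theorem~\ref{thm:sfcarac}, transposing it from the setting ``buffer element on the left, factors ranging over a prefix code'' to the present setting ``buffer element $s$ on the right, factors ranging over the letters of $C$''. A pleasant simplification is that $C \subseteq B$ is a set of single letters, so each factor $c$ of a word of $C^*$ has a fixed image $\beta(c)$; hence none of the $1_M$-safety bookkeeping of the original argument is needed, and neither is the tameness of $(\beta,\eta)$ (that hypothesis is consumed later in the base case, not here).

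First I would prove the following auxiliary statement, analogous to~\eqref{eq:sfclos:prelim}: for all $u,v \in C^*$ there exists $r \in \beta(C^*)$ such that $\beta(u)\,r\,s = \beta(v)\,s$. This goes by induction on $|u|$. If $u = \veps$, take $r = \beta(v) \in \beta(C^*)$. Otherwise write $u = u'c$ with $u' \in C^*$ and $c \in C$; the induction hypothesis gives $r' \in \beta(C^*)$ with $\beta(u')\,r'\,s = \beta(v)\,s$. Since $r'\,s \in \beta(C^*)\cdot s$ and $s$ is $(\beta,C)$-stable, \eqref{eq:ltlstable} yields $\beta(C^*)\cdot s = \beta(cC^*)\cdot s$, so $r'\,s = \beta(c)\,r\,s$ for some $r \in \beta(C^*)$; then $\beta(u)\,r\,s = \beta(u')\beta(c)\,r\,s = \beta(u')\,r'\,s = \beta(v)\,s$.

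The fact then follows in two lines. Given $q,f \in \beta(C^*)$ with $f$ idempotent, choose $u,v \in C^*$ with $\beta(u) = f$ and $\beta(v) = q$, apply the auxiliary statement to obtain $r \in \beta(C^*)$ with $f\,r\,s = q\,s$, and compute $f\,q\,s = f\,(f\,r\,s) = f\,r\,s = q\,s$ using $ff = f$. The one point to watch --- and the only place where I would slow down --- is the orientation of everything: here the buffer $s$ sits on the \emph{right}, so the induction must strip letters off the \emph{right} end of $u$ and the stability identity~\eqref{eq:ltlstable} must be applied to $r'\,s$ (not to $s\,r'$ as in the original); once this is fixed, the rest is a routine associativity computation.
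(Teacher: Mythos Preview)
Your proposal is correct and is essentially identical to the paper's own proof: the same auxiliary statement $\beta(u)\,r\,s = \beta(v)\,s$ proved by induction on $|u|$ (stripping letters from the right and invoking~\eqref{eq:ltlstable}), followed by the same two-line idempotent computation $fqs = ffrs = frs = qs$. Your remarks on the left/right transposition relative to Fact~\ref{fct:sfclos:icarbase} are also spot on.
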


  \begin{proof}
    The proof is based on the following preliminary result. For $u,v \in C^*$, we show that,
    \begin{equation}\label{eq:ltlprelim}
      \text{there exists $r\in \beta(C^*)$ such that $\beta(u)rs = \beta(v)s$.}
    \end{equation}
    We fix $u,v \in C^*$ for the proof of~\eqref{eq:ltlprelim}. We use induction on the length of $u$. If $u =\veps$, it suffices to choose $r = \beta(v) \in \beta(C^*)$. Otherwise, $u = u'c$ with $u' \in C^*$, $c \in C$. Induction yields $r' \in \beta(C^*)$ such that $\beta(u')r's= \beta(v)s$. Moreover, since $s$ is $(\beta,C)$-stable and $r' \in \beta(C^*)$, it follows from~\eqref{eq:ltlstable} that there exists $r \in \beta(C^*)$ such that $r's = \beta(c)rs$. Altogether, this yields $\beta(u')\beta(c)rs = \beta(v)s$ and as $u = u'c$, we get $\beta(u)rs = \beta(v)s$, concluding the proof of~\eqref{eq:ltlprelim}.

    We now prove the fact. Let $q,f \in \beta(C^*)$ such that $f$ is idempotent. By definition, we $u,v \in C^*$ such that $q = \beta(v)$ and $f = \beta(u)$. Hence,~\eqref{eq:ltlprelim} yields $r \in \alpha(C^*)$ such that $frs = qs$. Since $f$ is idempotent, this implies that $fqs = ffrs = frs = qs$.
  \end{proof}

  We now prove that every $K \in \Kb$ is $(\eta,\beta,s)$-safe. By definition, $K = C^* \cap \eta\inv(t)$ for $t \in N$. Given $u,v\in K$, we have to show that $\eta(u) = \eta(v)$ and  $\beta(u)s=\beta(v)s$. Let $n = \omega(M)$. Since $u,v \in K$, we have $\eta(u) = \eta(v) = t$. Hence, $\eta(u^n) = \eta(u^{n-1}v)$ and since $\beta(u^n)$ is idempotent, the hypothesis that $(\beta,\eta)$ is tame yields $(\beta(u^{2n-1}vu^n))^n = (\beta(u^{2n-1}vu^n))^{n+1}$. We now multiply by $s$ on the right to obtain $(\beta(u^{2n-1}vu^n))^n s = (\beta(u^{2n-1}vu^n))^{n+1}s$. Since $n = \omega(M)$, we know that $(\beta(u^{2n-1}vu^n))^n$ is an idempotent of $\beta(C^*)$. Therefore, Fact~\ref{fct:ltlbase} yields $(\beta(u^{2n-1}vu^n))^n s=s$. Altogether, we obtain that $s = \beta(u^{2n-1}vu^n) s$. We now multiply by $\beta(u)$ on the left to get  $\beta(u)s = \beta(u^{2n}vu^n) s$. Finally, $\beta(u^n) \in \beta(C^*)$ is an idempotent, we may apply Fact~\ref{fct:ltlbase} twice to get $\beta(u^{2n}vu^n) s = \beta(v) s$. Altogether, this yields $\beta(u)s=\beta(v)s$, as desired.

  \medskip
  \noindent
  {\bf Inductive case: $s$ is not $(\beta,C)$-stable.} By hypothesis, there exists some letter $c \in C$ such that the following property holds:
  \begin{equation}\label{eq:fo:godinduc}
    \beta(cC^*) \cdot s \subsetneq \beta(C^*) \cdot s.
  \end{equation}
  We fix this letter $c \in C$ for the rest of the argument and we let $D$ be the sub-alphabet $D = C \setminus \{c\}$.

  The restrictions $\beta:D^*\to M$ and $\eta:D^*\to N$ still form a tame pair~$(\beta,\eta)$. Therefore, we may apply induction in Lemma~\ref{lem:ltlmain} when replacing $C$ by $D$.
  Indeed, the first induction parameter (the size of $\beta(C^+)$) has not increased here since $D \subseteq C$ and $\beta$ remains unchanged, while the second parameter has decreased: $|D|<|C|$. This yields an $(\eta,\beta,1_M)$-safe \ltlc{\eta}-partition \Hb of $D^*$. We may assume without loss of generality that $H \neq \emptyset$ for every $H \in \Hb$.

  \begin{fct}\label{fct:sfclos:alphind2}
    There exists an $(\eta,\beta,1_M)$-safe \ltlc{\eta}-partition \Hb of  $D^*$ made of nonempty languages.
  \end{fct}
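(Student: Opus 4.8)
The plan is to treat this as a routine cleanup of the partition produced by the induction hypothesis. First I would invoke induction in Lemma~\ref{lem:ltlmain}: since $D = C \setminus \{c\} \subseteq C$ and $\beta$ is unchanged, we have $\beta(D^+) \subseteq \beta(C^+)$, so the first induction parameter (the size of $\beta(C^+)$) does not increase, while the second parameter strictly decreases because $|D| = |C| - 1$. The restrictions of $\beta$ and $\eta$ to $D^*$ still form a tame pair, since condition~\eqref{eq:goodm} quantifies over all words of the ambient free monoid and hence is inherited by the submonoid $D^*$. Applying Lemma~\ref{lem:ltlmain} with $C$ replaced by $D$ and $s$ replaced by $1_M$ therefore yields an $(\eta,\beta,1_M)$-safe \ltlc{\eta}-partition $\Hb_0$ of $D^*$.

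Next I would simply discard the empty language from $\Hb_0$: set $\Hb = \Hb_0 \setminus \{\emptyset\}$. Removing $\emptyset$ from a finite family of sets leaves its union unchanged, so $\Hb$ still covers $D^*$, and its members remain pairwise disjoint, so $\Hb$ is still a finite partition of $D^*$. Since $\Hb \subseteq \Hb_0$, every member of $\Hb$ still belongs to \ltlc{\eta}. The $(\eta,\beta,1_M)$-safety property is a universal statement over the members of the partition---for every $H$ in the partition and every $w,w' \in H$ we have $\eta(w) = \eta(w')$ and $\beta(w) = \beta(w')$---and is therefore inherited by any subfamily, in particular by $\Hb$. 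By construction $\Hb$ consists only of nonempty languages, and $\Hb$ itself is nonempty since $D^*$ is nonempty (it contains $\veps$) and must be covered. This produces the desired partition $\Hb$.

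I expect no real obstacle here: the statement is a bookkeeping convenience---it lets the subsequent arguments speak of ``some $H \in \Hb$'' without worrying about degenerate empty members---and the only point that genuinely needs care, namely that the induction in Lemma~\ref{lem:ltlmain} is legitimately applicable with the parameters strictly decreasing, was already checked in the running text immediately preceding the statement.
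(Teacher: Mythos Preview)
Your proposal is correct and follows exactly the same approach as the paper: the paper's justification is the running text immediately preceding the fact, which applies induction in Lemma~\ref{lem:ltlmain} (checking that the first parameter does not increase and the second strictly decreases, and that tameness is inherited) and then simply remarks that one may assume without loss of generality that every $H \in \Hb$ is nonempty. Your only addition is spelling out that ``without loss of generality'' amounts to discarding $\emptyset$ from the partition, which is harmless and changes nothing structurally.
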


  We distinguish two independent subcases. Observe that the inclusion $\beta(cC^*) \subseteq \beta(C^+)$ holds. The argument differs depending on whether it is strict or not.

  \medskip
  \noindent
  {\bf Subcase~1: $\beta(cC^*) = \beta(C^+)$.} We use induction on our third parameter (\emph{i.e.}, the size of $\beta(C^*)s$). Let $H \in \Hb$. Since \Hb is a partition of $D^*$ which is $(\eta,\beta,1_M)$-safe by definition and $H \neq \emptyset$, there exists a unique element $t_H \in \beta(D^*)$ such that $\beta(x) = t_H$ for every $x \in H$. The construction of \Kb is based on the following fact (this is where we use induction).

  \begin{fct}\label{fct:ltlc1carac}
    For all $H \in \Hb$, there exists an $(\eta,\beta(c)t_H s)$-safe \ltlc{\eta}-partition $\Ub_H$ of\/ $C^*$.
  \end{fct}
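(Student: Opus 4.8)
The plan is to produce $\Ub_H$ by a recursive application of Lemma~\ref{lem:ltlmain} itself: we keep the alphabet $B$, the sub-alphabet $C$, and the morphisms $\beta: B^*\to M$ and $\eta: B^*\to N$ exactly as they are (so $(\beta,\eta)$ is still the given tame pair), and we only replace the element $s$ by $s' = \beta(c)t_H s \in M$. The conclusion of the recursive call is an $(\eta,\beta,\beta(c)t_H s)$-safe $\ltlc{\eta}$-partition of $C^*$, which is precisely the $\Ub_H$ required by the fact. So the real content is to check that this recursive call is licensed by the three induction parameters of the ambient proof of Lemma~\ref{lem:ltlmain}.

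Since neither $C$ nor $\beta$ changes, the first parameter $|\beta(C^+)|$ and the second parameter $|C|$ are unchanged, and it suffices to show that the third parameter strictly drops, i.e. that $\beta(C^*)\cdot s' \subsetneq \beta(C^*)\cdot s$. First, recall $t_H \in \beta(D^*)\subseteq\beta(C^*)$ (this is how $t_H$ was defined, using that the partition \Hb of $D^*$ from Fact~\ref{fct:sfclos:alphind2} is $(\eta,\beta,1_M)$-safe and made of nonempty languages). Hence $\beta(C^*)\beta(c)t_H = \beta(C^*c)\,t_H \subseteq \beta(C^+)\cdot\beta(C^*) = \beta(C^+)$, and by the Subcase~1 hypothesis $\beta(C^+)=\beta(cC^*)$ this gives $\beta(C^*)\cdot s' = \beta(C^*)\beta(c)t_H s \subseteq \beta(cC^*)\cdot s$. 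Finally, $cC^*\subseteq C^*$ yields $\beta(cC^*)\cdot s\subseteq\beta(C^*)\cdot s$, and \eqref{eq:fo:godinduc} makes this inclusion strict; chaining, $\beta(C^*)\cdot s'\subseteq\beta(cC^*)\cdot s\subsetneq\beta(C^*)\cdot s$, so the third parameter has strictly decreased.

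Granting the parameter check, induction produces $\Ub_H$ and the fact follows. The only delicate point is this parameter bookkeeping — specifically, placing the new products on the correct (right-hand) side and verifying that $s'\in\beta(cC^*)\cdot s$, which is what transfers the strict drop guaranteed by \eqref{eq:fo:godinduc}; everything else is a routine unwinding of definitions. Structurally, this step is the mirror image of Fact~\ref{fct:sfclos:sc1carac} in the proof of Theorem~\ref{thm:sfcarac}, with the buffer element $s$ now sitting on the right of the products because we are working with pure-future temporal logic, which is why the relevant one-sided products are swapped compared with that argument.
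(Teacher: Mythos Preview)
Your proof is correct and follows essentially the same route as the paper: apply the induction hypothesis of Lemma~\ref{lem:ltlmain} with $s$ replaced by $\beta(c)t_Hs$, and verify that the third parameter strictly drops via the chain $\beta(C^*)\beta(c)t_Hs \subseteq \beta(C^+)s = \beta(cC^*)s \subsetneq \beta(C^*)s$, the last step being~\eqref{eq:fo:godinduc}. The paper's argument is the same up to cosmetic rearrangement of the intermediate containments.
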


  \begin{proof}
    We fix $H \in \Hb$. Since $t_H \in \beta(D^*)$, we have $\beta(c)t_H s \in \beta(cD^*)s$. Therefore, we have $\beta(C^*)\beta(c)t_H s \subseteq \beta(C^+)s$. Combined with our hypothesis in Subcase~1 (\emph{i.e.}, $\beta(cC^*) = \beta(C^+)$), this yields $\beta(C^*)\beta(c)t_H s \subseteq \beta(cC^*)s$. Finally, we obtain from~\eqref{eq:sfclos:godinduc} (\emph{i.e.}, $\beta(cC^*)s \subsetneq \beta(C^*)s$) that the \emph{strict} inclusion $\beta(C^*)\beta(c)t_H s \subsetneq \beta(C^*)s$ holds. Hence, we may apply induction on our third parameter in Lemma~\ref{lem:ltlmain} (\emph{i.e.}, the size of $\beta(C^*)s$) to obtain the desired finite partition $\Ub_H$ of $C^*$ which is $(\eta,\beta(c)t_H s)$-safe. Note that here, our first two parameters have not increased as $\beta$ and $C$ remain unchanged.
  \end{proof}

  We may now define the desired partition \Kb of $C^*$. Using the partitions $\Ub_H$ given by Fact~\ref{fct:ltlc1carac}, we define,
  \[
    \Kb = \Hb \cup \{UcH \mid H \in \Hb \text{ and } U \in \Ub_H\}.
  \]
  It remains to show that \Kb is indeed an \ltlc{\eta}-partition of $C^*$ which is $(\eta,\beta,s)$-safe. One may verify that \Kb is a partition of $C^*$ since \Hb is a partition of $D^*$ and $\Ub_H$ is a partition of $C^*$ for every $H \in \Hb$ (recall that $D = C \setminus \{c\}$). Let us prove that every $K \in \Kb$ belongs to \ltlc{\eta}. This is immediate if $K \in \Hb$ by hypothesis on \Hb. Otherwise, there exist $H \in \Hb$ and $U \in \Ub_H$ such that $K = UcH$. Since $H \in \ltlc{\eta}$, it follows from Lemma~\ref{lem:ltlsuffix} that there exists a formula $\psi_H \in \ltla{\eta}$ such that for every $w \in B^*$ and every $i \in \pos{w}$, we have $w,i \models \psi_H$ if and only if $\suffix{w}{i} \in H$. Moreover, let $\zeta \in \ltla{\eta}$ be the formula $\zeta := c \wedge \neg \finally{c}$ (given a word $w \in B^*$ and $i \in \pos{w}$, we have $w,i \models \zeta$ if and only if $i$ is the rightmost position in $w$ carrying the letter $c$). Since $U \in \ltlc{\eta}$ (by definition of $\Ub_H$), Lemma~\ref{lem:ltlprefix} yields a formula $\psi_U \in \ltla{\eta}$ such that for every $w \in B^*$, we have $w \models \psi_U$ if and only if there exists $j \in \pos{w}$ such that $w,j \models \zeta$ (by definition of $\zeta$, $j$ must be unique) and $\prefix{w}{j} \in U$. Since $c \not\in D$ and $H \subseteq D^*$, one may now verify that $K = UcH$ is defined by the formula $\psi_U \wedge \finally{(\zeta \wedge \psi_H)}$. Hence, we get $K \in \ltlc{\eta}$, as desired.

  It remains to prove that \Kb is $(\eta,\beta,s)$-safe. Consider $K \in \Kb$ and $w,w' \in K$. We have to show that $\eta(w)=\eta(w')$ and $\beta(w)s= \beta(w')s$. By definition of $\Kb$, there are two cases: first, if $K\in\Hb$ we know by Fact~\ref{fct:sfclos:alphind2} that $\Hb$ is an $(\eta,\beta,1_M)$-safe \ltlc{\eta}-partition of $D^*$. Therefore, we obtain $\eta(w)=\eta(w')$ and $\beta(w) = \beta(w')$, whence $\beta(w)s = \beta(w')s$, as desired. Otherwise, $K=UcH$ with $H \in \Hb$ and $U \in \Ub_H$. Thus, we get $x,x' \in H$ and $u,u' \in U$ such that $w = ucx$ and $w' = u'cx'$. By definition of $t_H$, we have $\beta(x) = \beta(x') = t_H$. Moreover, since $\Hb$ is $(\eta,\beta,1_M)$-safe and $\Ub_{H}$ is $(\eta, \beta(c)t_{H}s)$-safe by Fact~\ref{fct:ltlc1carac}, we also have $\eta(x)=\eta(x')$ and $\eta(u)=\eta(u')$, whence $\eta(w)=\eta(w')$. Finally, $\beta(w)s = \beta(u)\beta(c)t_H s$ and $\beta(w')s = \beta(u')\beta(c)t_H s$ and since $\Ub_H$ is $(\eta,\beta(c)t_H s)$-safe, we obtain $\beta(w)s = \beta(w')s$. This concludes the proof of this subcase.

  \smallskip
  \noindent
  {\bf Subcase~2: $\beta(cC^*) \subsetneq \beta(C^+)$.} We use induction on our first parameter (\emph{i.e.}, the size of $\beta(C^+)$). Consider a word $w \in C^*$. Since $D = C \setminus \{c\}$, $w$ admits a unique decomposition $w = uv$ such that $u \in D^*$ and $v \in (cD^*)^*$ (\emph{i.e.}, $u$ is the largest prefix of $w$ in $D^*$ and $v$ is the corresponding suffix). Using induction, we construct \ltlc{\eta}-partitions of the possible prefixes and suffixes. Then, we combine them to construct a partition of the whole set $C^*$. Actually, not that we already partitioned the set of prefixes: we have an \ltlc{\eta}-partition \Hb of $D^*$ which is $(\eta,\beta,1_M)$-safe. It remains to partition the set of suffixes:  this is where we use induction.

  \begin{lem} \label{lem:sc2carac}
    There exists an \ltlc{\eta}-partition \Vb of $(cD^*)^*$ which is $(\eta,\beta,1_M)$-safe.
  \end{lem}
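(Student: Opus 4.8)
The plan is to imitate the \bsdp{\Cs} argument for Fact~\ref{fct:sfclos:sc2carac}, where the analogous set was partitioned by feeding the new ``letter'' $(P\setminus H)^*H$ to the main lemma. Since Lemma~\ref{lem:ltlmain} performs its induction over sub\emph{alphabets}, I would instead introduce a fresh finite alphabet encoding the blocks of $(cD^*)^*$. Concretely, take the $(\eta,\beta,1_M)$-safe \ltlc{\eta}-partition \Hb of $D^*$ provided by Fact~\ref{fct:sfclos:alphind2}, all of whose members are nonempty, and regard \Hb itself as an alphabet. For $H\in\Hb$ let $t_H\in M$ and $n_H\in N$ be the common values of $\beta(x)$ and of $\eta(x)$ over $x\in H$ (well defined since \Hb is $1_M$-safe), and define morphisms $\bar\beta\colon\Hb^*\to M$ and $\bar\eta\colon\Hb^*\to N$ by $\bar\beta(H)=\beta(c)\,t_H$ and $\bar\eta(H)=\eta(c)\,n_H$. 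Every $w\in(cD^*)^*$ has a unique block decomposition $w=cx_1\cdots cx_m$ with $x_i\in D^*$; setting $\sigma(w)=H_1\cdots H_m\in\Hb^*$, where $H_i\in\Hb$ is the unique class containing $x_i$, one checks $\eta(w)=\bar\eta(\sigma(w))$ and $\beta(w)=\bar\beta(\sigma(w))$. The map $\sigma$ is onto $\Hb^*$ (the classes of \Hb being nonempty), and the $c$-labelled positions of such a $w$, taken in increasing order, are in bijection with the letters $H_1,\dots,H_m$ of $\sigma(w)$ --- this last point is what makes the translation below possible, and it uses crucially that $c\notin D$.

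I would then check that Lemma~\ref{lem:ltlmain} applies to $B:=\Hb$, $\beta:=\bar\beta$, $\eta:=\bar\eta$, $C:=\Hb$ and $s:=1_M$. Tameness of $(\bar\beta,\bar\eta)$ is inherited from that of $(\beta,\eta)$: if $U,V\in\Hb^*$ satisfy $\bar\eta(U)=\bar\eta(V)$ and $\bar\beta(U)\in E(M)$, pick $u,v\in(cD^*)^*$ with $\sigma(u)=U$ and $\sigma(v)=V$; then $\eta(u)=\eta(v)$ and $\beta(u)\in E(M)$, so \eqref{eq:goodm} for $(\beta,\eta)$ gives $(\beta(uvu))^{\omega}=(\beta(uvu))^{\omega+1}$, and $\beta(uvu)=\bar\beta(UVU)$. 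The first induction parameter strictly decreases, since $\bar\beta(\Hb^+)=\beta((cD^*)^+)\subseteq\beta(cC^*)\subsetneq\beta(C^+)$, the last inclusion being the hypothesis of Subcase~2. Lemma~\ref{lem:ltlmain} thus yields an $(\bar\eta,\bar\beta,1_M)$-safe \ltlc{\bar\eta}-partition $\overline{\Vb}$ of $\Hb^*$, and I would put $\Vb:=\{\,\sigma^{-1}(\overline V)\mid\overline V\in\overline{\Vb}\,\}$ with $\sigma^{-1}(\overline V)=\{w\in(cD^*)^*\mid\sigma(w)\in\overline V\}$. As $\overline{\Vb}$ partitions $\Hb^*$, \Vb partitions $(cD^*)^*$; and as $\eta(w),\beta(w)$ depend only on $\sigma(w)$, the $(\eta,\beta,1_M)$-safety of \Vb follows at once from that of $\overline{\Vb}$.

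The remaining point, and the main obstacle, is that $\sigma^{-1}(\overline V)\in\ltlc{\eta}$ for every $\overline V\in\overline{\Vb}$: one must translate the \ltla{\bar\eta} formula $\bar\varphi$ defining $\overline V$ over \Hb into an \ltla{\eta} formula over $B$. I would conjoin a fixed \ltla{\eta} formula asserting ``$w\in(cD^*)^*$'' (which just says $w=\veps$, or $w$ begins with $c$ and all its letters lie in $C$) with a relativisation of $\bar\varphi$ in which every quantifier and every ``until'' is restricted to $c$-labelled positions (and to $\mathit{max}$). A letter atom $H\in\Hb$ is replaced, at a $c$-position, by the formula produced by Lemma~\ref{lem:ltlprefix} for $L:=H\in\ltlc{\eta}$ and $\zeta:={}$``$c\vee\mathit{max}$'', which tests that the current block belongs to $H$. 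A modality $\untilp{\bar L}{\bar\psi_1}{\bar\psi_2}$ with $\bar L=\bar\eta\inv(P)$ is translated, after stepping from the current $c$-position to the next one, by an \ltla{\eta} ``until'' whose language parameter is the $\eta$-recognised language $\eta\inv(\{\,n\in N\mid\eta(c)\,n\in P\,\})$: this is the right parameter because for consecutive-or-further $c$-positions $i_{k+1}<i_l$ of $w$ one has $\eta(c)\cdot\eta(\infix{w}{i_{k+1}}{i_l})=\bar\eta(H_{k+1}\cdots H_{l-1})$, so the $\eta(c)$-shift exactly absorbs the mismatch between infixes of $w$ and factors of $\sigma(w)$; Lemmas~\ref{lem:ltlsuffix} and~\ref{lem:ltlprefix} are then used to relocate $\bar\psi_1,\bar\psi_2$ to the relevant positions, and the degenerate case where the ``until'' is witnessed by two consecutive $c$-positions (the empty factor $\veps$ of $\sigma(w)$, contributing $1_N$) is handled separately. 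Aligning block boundaries with $c$-positions and keeping track of these monoid shifts is the delicate part of the argument; once the translation is in place, the translated formula defines exactly $\sigma^{-1}(\overline V)$, completing the proof.
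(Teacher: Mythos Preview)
Your proposal is correct and follows essentially the same approach as the paper: you introduce the same auxiliary alphabet (the paper writes $\frB=\{\frb_H\mid H\in\Hb\}$ where you use $\Hb$ itself), the same morphisms $\bar\beta,\bar\eta$ (the paper's $\gamma,\delta$), the same tameness and parameter-decrease arguments, the same encoding map $\sigma$ (the paper's $\mu$), and the same pullback $\Vb=\{\sigma^{-1}(\overline V)\}$. The formula translation is also organised around the same relativisation to $c\vee\mathit{max}$ positions; your handling of the language parameter via an $\eta(c)$-shift is a minor variant of the paper's case split on whether the next position satisfies $\zeta$, but the underlying idea is identical.
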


  \begin{proof}
    For each language $H \in \Hb$, we create a letter written $\frb_H$ and let $\frB = \{\frb_H \mid H \in \Hb\}$ as a new alphabet. Moreover, we define new morphisms $\gamma: \frB^* \to M$ and $\delta: \frB^* \to N$. Let $H \in \Hb$ and consider the letter $\frb_H$. Since \Hb is a partition of $D^*$ which is $(\eta,\beta,1_M)$-safe by definition and $H \neq \emptyset$, there exist unique elements $t_H \in \beta(D^*)$ and $q_H \in \eta(D^*)$ such that $\eta(x) = q_H$ and $\beta(x) = t_H$ for every $x \in H$. We let $\gamma(\frb_H) =\beta(c)t_H$ and $\delta(\frb_H)=\eta(c)q_H$.

    Observe that the pair $(\gamma,\delta)$ is tame. Indeed, let $u,v \in \frB^*$ such that $\delta(u)= \delta(v)$ and $\gamma(u) \in E(M)$. By definition of \frB, there exist $w_u,w_v \in (cD^*)^*$ such that $\gamma(u) = \beta(w_u)$, $\gamma(v) = \beta(w_v)$, $\delta(u) = \eta(w_u)$ and $\delta(v) = \eta(w_v)$. Hence, $\eta(w_u) = \eta(w_v)$ and $\beta(w_u) \in E(M)$. Since $(\beta,\eta)$ is tame, it then follows from~\eqref{eq:goodm} that $(\beta(w_uw_vw_u))^{\omega} = (\beta(w_uw_vw_u))^{\omega+1}$. This exactly says that $(\gamma(uvu))^{\omega} = (\gamma(uvu))^{\omega+1}$, as desired. Moreover, by definition of $\frB$, one may verify that $\gamma(\frB^+) = \beta((cD)^+) \subseteq \beta(cC^*)$. Hence, since  $\beta(cC^*) \subsetneq \beta(C^+)$ (this is our hypothesis in Subcase~2), we get $|\gamma(\frB^+)| < |\beta(C^+)$. Consequently, we may apply induction on the first parameter in Lemma~\ref{lem:ltlmain} (\emph{i.e.}, the size of $\beta(C^+)$) to get an \ltlc{\delta}-partition \Gb of $\frB^*$ which is $(\delta,\gamma,1_M)$-safe. We use it to construct \Vb.

    First, we define a map $\mu: (cD^*)^* \to \frB^*$. Observe that since $c \not\in D$, every word $u \in (cD^*)^*$ admits a unique decomposition $u = cu_1 \cdots cu_n$ with $u_1,\dots u_n\in D^*$. For every $i \leq n$, we let $H_i$ as the unique language in \Hb such that $u_i \in H_i$ (recall that \Hb is partition of $D^*$). We then define $\mu(w) = \frb_{H_1} \cdots \frb_{H_n}$. Note that by definition, each position $i \in \pos{w}$ which is labeled by a ``$c$'' corresponds to a unique position in $\mu(w)$. We may now define $\Vb = \{\mu\inv(G) \mid G \in \Gb\}$. It remains to show that \Vb is an \ltlc{\eta}-partition of $(cD^*)^*$ which is $(\eta,\beta,1_M)$-safe. Clearly,  \Vb is a partition of $(cD^*)^*$ by definition since \Gb is a partition of $\frB^*$.

    We first prove that every $V \in \Vb$ belongs to \ltlc{\eta}. By definition, $V=\mu\inv(G)$ for some $G\in\Gb$. Let $\zeta := c \vee max \in \ltla{\eta}$. We know that every $H \in \Hb$ belongs to \ltlc{\eta}. Hence, Lemma~\ref{lem:ltlprefix} yields a formula $\psi'_H \in \ltla{\eta}$ such that for every $w \in B^*$ and $i \in \pos{w}$, we have $w,i \models \psi'_H$ if and only if there exists $j \in \pos{w}$ such that $i < j$, $w,j \models \zeta$, $w,k \not\models \zeta$ for every $i < k < j$ and $\infix{w}{i}{j}\in H$. We let $\psi_H = c \wedge  \psi'_H$. By definition, $w,i \models \psi'_H$ if and only if $i$ has label $c$ and the greatest prefix of \suffix{w}{i} which is in $D^*$ belongs to $H$. The key idea is that when $w \in (cD^*)^*$, the formula $\psi_H$ holds for the positions $i \in \pos{w}$ which are labeled by $c$ and such that the position of $\mu(w) \in \frB^*$ corresponding to $i$ is labeled by $\frb_H \in \frB$. Moreover, since $G \in \Gb$, there exists an \ltla{\delta} formula $\Gamma_G$ defining $G$ by hypothesis on \Gb. We modify $\Gamma_G$ into an $\ltla{\eta}$ formula $\varphi_V$ defining $V = \mu\inv(G)$. First, let $\varphi'_V$ be the formula obtained from $\Gamma_G$ by applying the two following modifications:
    \begin{enumerate}
      \item We replace each atomic sub-formula ``$\frb_H$'' for $H \in \Hb$ by the \ltla{\eta} formula $\psi_H$.
      \item We recursively replace all sub-formulas $\untilp{X}{\varphi_1}{\varphi_2}$. Since $\Gamma_G$ is an \ltla{\delta}-formula, we have $F \subseteq N$ such that $X = \delta\inv(F)$. We recursively replace $\untilp{X}{\varphi_1}{\varphi_2}$ by,
            \[
            \begin{array}{ll}
              & \left(\nex{\zeta} \wedge \untilp{\eta\inv(F)}{(\zeta \Rightarrow \varphi_1)}{(\zeta \wedge \varphi_2)}\right) \\
              \vee & \left(\left(\neg \nex{\zeta}\right) \wedge \left(\until{\left(\neg\nex{\zeta}\right)}{\left(\nex{\zeta} \wedge \left(\untilp{\eta\inv(F)}{(\zeta \Rightarrow \varphi_1)}{(\zeta \wedge \varphi_2)}\right) \right)}\right)\right).
            \end{array}
            \]

    \end{enumerate}
    Finally, we let $\varphi_V := \nex{(c \vee max)} \wedge \varphi'_V$. One may now verify from the definition that $\varphi_V$ defines $V = \mu\inv(G)$.

    It remains to prove that \Vb is $(\eta,\beta,1_M)$-safe. Let $V \in \Vb$ and $v,v' \in V$. By definition, there exists $G \in \Gb$ such that $V = \mu\inv(G)$. Hence, we have $\mu(v),\mu(v') \in G$ and since \Gb is $(\delta,\gamma,1_M)$-safe, this yields $\gamma(\mu(v)) = \gamma(\mu(v'))$ and $\delta(\mu(v)) = \delta(\mu(v'))$. One may verify that $\delta(\mu(v)) = \eta(v)$, $\delta(\mu(v')) = \eta(v')$, $\gamma(\mu(v)) = \beta(v)$ and $\gamma(\mu(v')) = \beta(v')$. Thus, we get $\beta(v) = \beta(v')$ and $\delta(v) = \delta(v')$, concluding the proof.
  \end{proof}

  We are ready to build our \ltlc{\eta}-partition \Kb of $C^*$. Let \Vb be the $(\eta,\beta,1_M)$-safe \ltlc{\eta}-partition of $(cD^*)^*$ given by Lemma~\ref{lem:sc2carac}. We let,
  \[
    \Kb = \{HV \mid H \in \Hb \text{ and } V \in \Vb\}.
  \]
  It is immediate by definition that \Kb is a partition of $C^*$ since $D = C \setminus \{c\}$ and \Hb is a partition of~$D^*$. Let us verify that every $K\in \Kb$ belongs to \ltlc{\eta}. By definition, one can write $K=HV$ for some $H\in\Hb$ and $V \in \Vb$. Let $\zeta:= c \vee max$. Since $H \in \ltlc{\eta}$, Lemma~\ref{lem:ltlprefix} yields an $\ltla{\eta}$ formula $\psi_H$ such that for every $w \in B^*$, we have $w \models \psi_H$ if and only if there exists $j \in \pos{w} \setminus \{0\}$ such that $w,j \models \zeta$, $w,k \not\models \zeta$ for every $k \in \pos{w}$ such that $0 < k < j$ and $\prefix{w}{j} \in H$. Since $V \in \ltlc{\eta}$ (by hypothesis on \Vb), Lemma~\ref{lem:ltlsuffix} yields an $\ltla{\eta}$ formula $\psi_V$ such that for every $w \in B^*$ and every $i \in \pos{w}$, we have $w,i \models \psi_V$ if and only if $\suffix{w}{i} \in V$. One may now verify that $K = HV$ is defined by the formula,
  \[
    \psi_H \wedge \left(\left(\nex{\zeta} \wedge \psi_V\right) \vee \left(\left(\neg \nex{\zeta}\right) \wedge \left(\until{\left(\neg \nex{\zeta}\right)}{\left(\nex{\zeta} \wedge \psi_V\right)}\right)\right)\right).
  \]
  Hence, we get $K \in \ltlc{\eta}$. It remains to verify that \Kb is $(\eta,\beta,s)$-safe (it is in fact $(\eta,\beta,1_M)$-safe). Let $K \in \Kb$ and $w,w' \in K$, we show that $\eta(w) = \eta(w')$ and $\beta(w) = \beta(w')$ (which implies $\beta(w)s = \beta(w')s$). By definition, $K = HV$ with $H \in \Hb$ and $V \in \Vb$. Therefore, $w =uv$ and $w' = u'v'$ with $u,u' \in H$ and $v,v' \in V$. Since $\Hb$ and $\Vb$ are both $(\eta,\beta,1_M)$-safe by definition, we have $\eta(u) = \eta(u')$, $\beta(u) = \beta(u')$, $\eta(v) = \eta(v')$ and $\beta(v) = \beta(v')$. It follows that $\eta(w) = \eta(w')$ and $\beta(w) = \beta(w')$ which concludes the proof.
\end{proof}


\section{\Ratms}
\label{sec:ratms}
We now turn to separation and covering. We prove two results in the paper. In Section~\ref{sec:finite}, we show that \sfp{\Cs}-covering is decidable for every \emph{finite} \vari \Cs. Then, in Section~\ref{sec:group}, we prove that \sfp{\Gs}-covering is decidable for every group \vari \Gs that has decidable separation. In both cases, the algorithms are based on a generic framework which was introduced in~\cite{pzcovering2} for the specific purpose of handling separation and covering. It relies on simple algebraic objects called \emph{\ratms}. We recall this framework in this preliminary section.

We define \ratms and present two particular kinds: the \nice and the \tame ones. We use this notion to associate a computational problem with each lattice~\Cs: ``given as input a \nice \mratm $\rho$ and a regular language $L$, compute an optimal \Cs-cover of $L$ for $\rho$''. Then, we connect this problem to \Cs-covering. Finally, we present new notions that are not defined in~\cite{pzcovering2}. They are specifically designed for handling the classes of the form $Op(\Cs)$  built from an input class~\Cs using an operator. In the paper, we are interested in the case when $Op$ is star-free closure.

\subsection{Definition}

We first introduce \emph{\ratas}. A \emph{\rata} is a monoid $(R,+)$ which is \emph{commutative} ($q + r = r+ q$ for every $q,r \in R$) and \emph{idempotent} ($r+r=r$ for all $r \in R$). The binary operation $+$ is called \emph{addition} and we denote the neutral element of $R$ by $0_R$ (we use an additive notation here, since we are dealing with a commutative monoid). Given a \rata $R$, we also define a canonical ordering ``$\leq$'' over $R$ as follows:
\[\text{for all }
r, s\in R,\quad r\leq s \text{ when } r+s=s.
\]
One may verify that $\leq$ is a partial order and that it makes $R$ an ordered monoid (\emph{i.e.}, $\leq$ is compatible with addition).  It can be verified that every morphism between two \ratas is \emph{increasing} for the canonical orderings. We often use this property implicitly.

\begin{exa}
	For every set $S$, the algebra $(2^S,\cup)$ is a \rata whose neutral element is $\emptyset$. The canonical ordering ``$\leq$'' on $2^S$ is inclusion. Indeed, given $P,Q \in 2^S$, it is clear that $P \subseteq Q$ if and only if $P \cup Q = Q$. In particular, if $A$ is an alphabet, then $(2^{A^*},\cup)$ is a \rata. In practice, these are the only \emph{infinite} \ratas that we shall consider.
\end{exa}

\label{downset}%
We often apply a ``downset operator'' to subsets of our \ratas $R$. That is, for every $S \subseteq R$, we write $\dclosr S$ for the set $\dclosr S = \bigl\{r \mid \exists s \in S \text{ such that } r \leq s\bigr\}$. We also consider Cartesian products $X \times R$ of an arbitrary set $X$ with a \rata $R$. Given $S \subseteq X \times R$, we write $\dclosr S = \{(x,r) \in X \times R \mid \text{$\exists r' \in R$ such that $r \leq r'$ and $(x,r')\in S$}\}$.

\medskip
\noindent
{\bf Definition of a \ratm.} As seen above, $(2^{A^*},\cup)$ is a \rata. A \emph{\ratm} (over $A$) is a monoid morphism $\rho: (2^{A^*},\cup) \to (R,+)$ where $(R,+)$ is an arbitrary \emph{finite} \rata. In other words, we have $\rho(\emptyset) = 0_R$ and $\rho(K_1\cup K_2)=\rho(K_1)+\rho(K_2)$ for all $K_1,K_2 \subseteq A^*$. Note that since \ratms are morphisms of \ratas, they are necessarily increasing: if $K_1 \subseteq K_2$, then $\rho(K_1) \leq \rho(K_2)$. For the sake of improved readability, when applying a \ratm $\rho$ to a singleton language $K = \{w\}$ (\emph{i.e.}, $w \in A^*$ is a word), we write $\rho(w)$ for $\rho(\{w\})$. We often consider \ratms satisfying additional properties.

\medskip
\noindent
{\bf \Nice \ratms.} We say that a \ratm $\rho: 2^{A^*} \to R$ is \emph{\nice} to indicate that for every language $K \subseteq A^*$, there exists a \emph{finite} set $F \subseteq K$ such that $\rho(K) = \rho(F)$.

\begin{rem} \label{rem:notnice}
	Not all \ratms are \nice. Consider the \rata $R = \{0,1\}$ whose addition is defined by $i+j = \mathord{\text{max}}(i,j)$ for $i,j \in R$. We define $\rho: 2^{A^*} \to R$ by $\rho(K) = 0$ if $K \subseteq A^*$ is \emph{finite} and $\rho(K) = 1$ if $K \subseteq A^*$ is \emph{infinite}. One may verify that $\rho$ is \emph{not} \nice: if $K$ is infinite, then $\rho(K) = 1$ while $\rho(F) = 0$ for every finite subset $F \subseteq K$.
\end{rem}

The definition of \nice \ratms motivates the following object. For every \ratm $\rho: 2^{A^*} \to R$ (\nice or not), we associate a map $\rho_*: A^* \to R$ defined as the restriction of $\rho$ to $A^*$: for every $w \in A^*$, $\rho_*(w) = \rho(w)$. One may verify that when $\rho$ is \nice, it is characterized by $\rho_*$. More precisely, for every $K \subseteq A^*$, we have $\rho(K) = \sum_{w \in K} \rho_*(w)$ (the sum is well-defined as it boils down to a finite one since $\rho$ is nice and $R$ is idempotent and commutative).

\medskip
\noindent
{\bf \Mratms.} The \ratas of \mratms have more structure: they are \emph{idempotent semirings}. A \emph{semiring} is a tuple $(R,+,\cdot)$ where $R$ is a set and ``$+$'' and ``$\cdot$''  are two binary operations, such that the following axioms are satisfied:
\begin{itemize}
	\item $(R,+)$ is a commutative monoid (its neutral element is denoted by $0_R$).
	\item $(R,\cdot)$ is a monoid (its neutral element is denoted by $1_R$).
	\item The neutral element of $(R,+)$ is a zero for multiplication: $0_R r = r0_R = 0_R$ for all $r \in R$.
	\item The multiplication distributes over addition: $r(s + t) = rs + rt$ and $(r + s)t = rt + st$ for every $r,s,t \in R$.
\end{itemize}
Finally, a semiring $(R,+,\cdot)$ is \emph{idempotent} when $r+r = r$ for every $r \in R$ (on the other hand, there is no additional constraint on the multiplication). By definition, it follows that in this case, the additive monoid $(R,+)$ is a \rata.

\begin{exa}
  For every alphabet $A$, the triple $(2^{A^*},\cup,\cdot)$ is an idempotent semiring (here, we use language concatenation ``$\cdot$'' as the multiplication; its neutral element is the singleton $\{\veps\}$).
\end{exa}

A \ratm $\rho: 2^{A^*} \to R$ is \emph{\tame} when the \rata $(R,+)$ is equipped with a second binary operation ``$\cdot$'' such that $(R,+,\cdot)$ is an \emph{idempotent semiring} and $\rho$ is also a monoid morphism from $(2^{A^*},\cdot)$ to $(R,\cdot)$. Thus, the axioms are as follows:
\begin{enumerate}
	\item $\rho(\emptyset) = 0_R$ and for all $K_1,K_2 \subseteq A^*$, we have $\rho(K_1\cup K_2)=\rho(K_1)+\rho(K_2)$.
	\item$\rho(\varepsilon) = 1_R$ and for all $K_1,K_2 \subseteq A^*$, we have $\rho(K_1K_2) = \rho(K_1) \cdot \rho(K_2)$.
\end{enumerate}
Altogether, this exactly says that $\rho$ is a semiring morphism from $(2^{A^*},\cup,\cdot)$ to $(R,+,\cdot)$. 

A key point is that a \ratm $\rho: 2^{A^*}\to (R,+,\cdot)$ that is both \emph{\nice} and \emph{\tame} can be finitely represented. Indeed, since $\rho$ is \nice, it is characterized by the map $\rho_*: A^* \to R$. Moreover, since $\rho$ is \tame, $\rho_*$ is a monoid morphism from $A^*$ to $(R,\cdot)$. Altogether, it follows that $\rho$ is finitely representable: it suffices to store the image $\rho(a) \in R$ of each letter $a \in A$ as well as the addition and multiplication tables of $(R,+,\cdot)$. This means that we can speak of algorithms \emph{taking a \nice \mratm as input}.

\medskip
\noindent
{\bf Canonical \nice \mratm associated to a morphism.} We complete the presentation with a simple construction. It associates a canonical \nice \mratm to a morphism into a finite monoid. We shall use it~to make the link with covering. Consider a morphism $\alpha: A^* \to M$ into a finite monoid. We associate a \nice \mratm $\rho_\alpha: 2^{A^*} \to 2^{M}$ to~$\alpha$. Consider the triple $(2^{M},\cup,\cdot)$ whose multiplication is defined as follows for all $T,T' \in 2^{M}$:
\[
TT' = \{tt' \in M \mid \text{$t \in T$ and $t' \in T'$}\}.
\]
One may verify that $(2^{M},\cup,\cdot)$ is an idempotent semiring. For every $K \subseteq A^*$, we define,
\[
\rho_\alpha(K) = \alpha(K) = \{t \in M \mid \alpha\inv(t) \cap K \neq \emptyset\}.
\]
One may verify that $\rho_\alpha: 2^{A^*} \to 2^{M}$ is a \emph{\nice \mratm}.

\subsection{Optimal covers and optimal \imprints.}

Now that we have defined what \ratms are, we turn to \imprints. Consider a \ratm $\rho: 2^{A^*} \to R$. Given any finite set of languages~\Kb, we define the $\rho$-\imprint of~\Kb. Roughly speaking, when \Kb is a cover of some language $L$, this object measures the ``quality'' of \Kb, which is a subset of~$R$. Intuitively, the smaller the imprint, the better the cover. The  \emph{$\rho$-\imprint of\/ \Kb} is the subset of~$R$ defined by:
\[
\prin{\rho}{\Kb} = \dclosr \big\{\rho(K) \mid K \in\Kb\big\}.
\]
We now define optimality. Consider an arbitrary \ratm $\rho: 2^{A^*} \to R$ and a lattice~\Ds. Given a language $L$, an \emph{optimal} \Ds-cover of $L$ for $\rho$ is a \Ds-cover \Kb of $L$ having the least possible imprint among all \Ds-covers, \emph{i.e.}, which satisfies the following~property:
\[
\prin{\rho}{\Kb} \subseteq \prin{\rho}{\Kb'} \quad \text{for every \Ds-cover $\Kb'$ of $L$}.
\]
In general, there can be infinitely many optimal \Ds-covers for a given \ratm $\rho$. The key point is that there always exists at least one, provided that \Ds is a lattice. We state this simple property in the following lemma (proved in~\cite[Lemma 4.15]{pzcovering2}).

\begin{lem}\label{lem:bgen:opt}
	Let \Ds be a lattice. For every language $L$ and every \ratm~$\rho$, there exists an optimal \Ds-cover of\/ $L$ for $\rho$.
\end{lem}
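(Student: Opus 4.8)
The plan is to establish optimality at the level of \imprints rather than covers: although $L$ may have infinitely many \Ds-covers, I will show that the set of \imprints arising from \Ds-covers of $L$ is finite and closed under binary meets, which forces it to possess a least element.

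First I would note that at least one \Ds-cover of $L$ exists, since \Ds is a lattice and hence contains $A^*$, so $\{A^*\}$ is a \Ds-cover of $L$. Next, since $\rho: 2^{A^*} \to R$ is a \ratm, the \rata $R$ is finite, and every \imprint $\prin{\rho}{\Kb} = \dclosr\{\rho(K) \mid K \in \Kb\}$ is a downset of $R$; therefore the collection $\mathcal{I}$ of all \imprints of \Ds-covers of $L$ is a finite subset of the powerset of $R$. The core step is the following closure property: given two \Ds-covers $\Kb_1$ and $\Kb_2$ of $L$, the family $\Kb = \{K_1 \cap K_2 \mid K_1 \in \Kb_1,\ K_2 \in \Kb_2\}$ is again a \Ds-cover of $L$ with $\prin{\rho}{\Kb} \subseteq \prin{\rho}{\Kb_1} \cap \prin{\rho}{\Kb_2}$. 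Indeed, each $K_1 \cap K_2$ lies in \Ds because lattices are closed under intersection, and $\Kb$ covers $L$ because any $w \in L$ lies in some $K_1 \in \Kb_1$ and some $K_2 \in \Kb_2$, hence in $K_1 \cap K_2$. For the \imprint inclusion, \ratms are increasing, so $\rho(K_1 \cap K_2) \leq \rho(K_1)$ and $\rho(K_1 \cap K_2) \leq \rho(K_2)$; since $\prin{\rho}{\Kb_1}$ and $\prin{\rho}{\Kb_2}$ are downsets, $\rho(K_1 \cap K_2)$ belongs to both, and taking the downward closure of all such elements yields $\prin{\rho}{\Kb} \subseteq \prin{\rho}{\Kb_1} \cap \prin{\rho}{\Kb_2}$.

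Finally I would iterate this. Writing $\mathcal{I} = \{I_1,\dots,I_m\}$ and picking, for each $j$, a \Ds-cover $\Kb_j$ of $L$ with $\prin{\rho}{\Kb_j} = I_j$, repeated application of the core step produces a single \Ds-cover $\Kb$ of $L$ with $\prin{\rho}{\Kb} \subseteq I_1 \cap \cdots \cap I_m$. But $\prin{\rho}{\Kb}$ is itself a member of $\mathcal{I}$ and is contained in each $I_j$, so it is the least element of $\mathcal{I}$; consequently $\prin{\rho}{\Kb} \subseteq \prin{\rho}{\Kb'}$ for every \Ds-cover $\Kb'$ of $L$, i.e. $\Kb$ is an optimal \Ds-cover of $L$ for $\rho$. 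There is no genuine obstacle here: the only points worth stressing are that finiteness of $R$ is what makes $\mathcal{I}$ finite (and the iteration terminate), and that the argument must be run on \imprints, a finite set, rather than directly on covers.
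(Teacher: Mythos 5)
Your proof is correct and follows the same standard refinement argument as the cited source~\cite[Lemma 4.15]{pzcovering2}: note that some \Ds-cover of $L$ exists (e.g.\ $\{A^*\}$), that the set of achievable \imprints is a finite family of downsets of the finite \rata $R$, that intersecting two \Ds-covers pointwise produces a \Ds-cover whose \imprint is contained in the intersection of the two original \imprints, and conclude by iterating over the finitely many achievable \imprints. All the supporting facts you invoke (lattices contain $A^*$ and are closed under intersection; \ratms are increasing; intersections of downsets are downsets) are available in the paper, so the argument is complete.
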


Clearly, given a lattice \Ds, a language $L$ and a \ratm $\rho$, all optimal \Ds-covers of $L$ for $\rho$ have the same $\rho$-\imprint. Hence, this unique $\rho$-\imprint is a \emph{canonical} object for \Ds, $L$ and $\rho$. We call it~the \emph{optimal $\rho$-\imprint on $L$ for \Ds} and we denote it by $\opti{\Ds}{L,\rho}$:
\[
\opti{\Ds}{L,\rho} = \prin{\rho}{\Kb} \quad \text{for any optimal \Ds-cover \Kb of $L$  for $\rho$}.
\]
An important special case is when $L = A^*$. In this case, we write \opti{\Ds}{\rho} for \opti{\Ds}{A^*,\rho}. Let us present a few properties of optimal \imprints. First, we have the following useful fact (proved in~\cite[Facts 4.16 and 4.17]{pzcovering2}).

\begin{fct}\label{fct:linclus}
	Let \Cs and \Ds be lattices such that $\Cs \subseteq \Ds$, $\rho: 2^{A^*} \to R$ be a \ratm and $H,L \subseteq A^*$ be two languages such that $H \subseteq L$. Then, $\opti{\Ds}{H,\rho} \subseteq \opti{\Cs}{L,\rho}$.
\end{fct}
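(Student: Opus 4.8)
\textbf{Proof plan for Fact~\ref{fct:linclus}.} The statement asserts two independent monotonicity properties of optimal imprints, combined into one: enlarging the covered language and shrinking the class both can only shrink the optimal imprint. I would prove the two inclusions in one shot by producing, from an optimal \Cs-cover of $L$ for $\rho$, a \Ds-cover of $H$ whose imprint is already contained in $\opti{\Cs}{L,\rho}$; then optimality of a \Ds-cover of $H$ forces $\opti{\Ds}{H,\rho}$ to be contained in that imprint, hence in $\opti{\Cs}{L,\rho}$.

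Concretely: let \Kb be an optimal \Cs-cover of $L$ for $\rho$, so $\prin{\rho}{\Kb} = \opti{\Cs}{L,\rho}$ by definition. Since $\Cs \subseteq \Ds$, every language in \Kb belongs to \Ds, so \Kb is also a \Ds-cover of $L$. Since $H \subseteq L$, we have $L \subseteq \bigcup_{K\in\Kb} K$ implies $H \subseteq \bigcup_{K\in\Kb} K$, so \Kb is in particular a \Ds-cover of $H$ as well (a cover of $L$ is automatically a cover of any subset of $L$; closure of \Ds under nothing extra is needed here). Now let $\Kb'$ be an optimal \Ds-cover of $H$ for $\rho$, which exists by Lemma~\ref{lem:bgen:opt} since \Ds is a lattice. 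Optimality of $\Kb'$ among \Ds-covers of $H$, applied to the competing \Ds-cover \Kb, gives $\prin{\rho}{\Kb'} \subseteq \prin{\rho}{\Kb}$. But $\prin{\rho}{\Kb'} = \opti{\Ds}{H,\rho}$ and $\prin{\rho}{\Kb} = \opti{\Cs}{L,\rho}$, so we conclude $\opti{\Ds}{H,\rho} \subseteq \opti{\Cs}{L,\rho}$, as desired.

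There is essentially no obstacle here: the only subtlety worth stating explicitly is that a cover of $L$ is a cover of every subset of $L$, which is immediate from the definition of a cover as a finite family whose union contains the target language. Everything else is unwinding definitions (optimal cover, optimal imprint) and invoking Lemma~\ref{lem:bgen:opt} for the existence of $\Kb'$. I would keep the write-up to three or four sentences, matching the terse style in which this fact is cited from~\cite{pzcovering2}.
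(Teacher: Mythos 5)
Your proof is correct, and it is the standard argument: since $\Cs \subseteq \Ds$, an optimal $\Cs$-cover of $L$ is in particular a $\Ds$-cover of $H \subseteq L$, so by optimality of the $\Ds$-cover of $H$ one gets the inclusion of imprints. The paper does not spell out a proof of this fact — it cites Facts~4.16 and~4.17 of the reference, which presumably prove the two monotonicities (in the class and in the language) as separate statements; you combine them into one step via a single competing cover, which is an equivalent and perfectly clean packaging. The only minor thing worth noting is that you invoke Lemma~\ref{lem:bgen:opt} explicitly for $\Kb'$ but only implicitly for $\Kb$ (whose existence also relies on $\Cs$ being a lattice); this is harmless.
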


Additionally, we have the following lemma (proved in~\cite[Fact 4.7]{pzbpolcj}).

\begin{lem}\label{lem:optunion}
	Let \Ds be a lattice and let $\rho: 2^{A^*} \to R$ be a \ratm. Then, $\opti{\Ds}{\emptyset,\rho} = \emptyset$. Moreover, given two languages $H,L$, we have $\opti{\Ds}{H \cup L,\rho} = \opti{\Ds}{H,\rho} \cup \opti{\Ds}{L,\rho}$
\end{lem}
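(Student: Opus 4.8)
The statement has two parts: first that $\opti{\Ds}{\emptyset,\rho} = \emptyset$, and second that optimal imprints distribute over finite unions. Both follow from the characterization of optimal imprints via optimal covers (Lemma~\ref{lem:bgen:opt} guarantees these exist since \Ds is a lattice) together with the downward-closure and monotonicity properties of imprints. I would handle the two parts separately.

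\textbf{First part.} For $\opti{\Ds}{\emptyset,\rho} = \emptyset$, I would observe that $\{\emptyset\}$ is a \Ds-cover of $\emptyset$ (recall $\emptyset \in \Ds$ since \Ds is a lattice), and its $\rho$-\imprint is $\dclosr\{\rho(\emptyset)\} = \dclosr\{0_R\}$. Since $\rho$ is a morphism of \ratas, $0_R$ is the least element of $R$ for the canonical ordering, so $\dclosr\{0_R\} = \{0_R\}$. But a moment's care is needed: is $\opti{\Ds}{\emptyset,\rho}$ equal to $\{0_R\}$ or to $\emptyset$? In fact any cover \Kb of $\emptyset$ could also be taken to be the empty set of languages (the empty union covers $\emptyset$), and $\prin{\rho}{\varnothing} = \dclosr\varnothing = \varnothing$. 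Since $\varnothing$ is the least possible imprint, the optimal \imprint on $\emptyset$ is $\varnothing$. So the key point is simply that the empty family is itself a valid \Ds-cover of the empty language.

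\textbf{Second part.} For $\opti{\Ds}{H \cup L,\rho} = \opti{\Ds}{H,\rho} \cup \opti{\Ds}{L,\rho}$, I would prove the two inclusions. For ``$\supseteq$'': by Fact~\ref{fct:linclus} (or just monotonicity of optimal \imprints in the language argument, which follows from the fact that a cover of $H\cup L$ is a cover of $H$ and of $L$), we have $\opti{\Ds}{H,\rho} \subseteq \opti{\Ds}{H\cup L,\rho}$ and likewise for $L$, giving the inclusion. For ``$\subseteq$'': take an optimal \Ds-cover $\Kb_H$ of $H$ and an optimal \Ds-cover $\Kb_L$ of $L$ for $\rho$. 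Then $\Kb_H \cup \Kb_L$ is a \Ds-cover of $H \cup L$, and its $\rho$-\imprint is $\prin{\rho}{\Kb_H} \cup \prin{\rho}{\Kb_L} = \opti{\Ds}{H,\rho} \cup \opti{\Ds}{L,\rho}$ (imprints of unions of families are unions of imprints, directly from the definition $\prin{\rho}{\Kb} = \dclosr\{\rho(K)\mid K\in\Kb\}$). Hence $\opti{\Ds}{H\cup L,\rho}$, being the least imprint over all \Ds-covers of $H\cup L$, is contained in $\opti{\Ds}{H,\rho}\cup\opti{\Ds}{L,\rho}$.

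\textbf{Main obstacle.} There is no serious obstacle here; the proof is essentially bookkeeping with the definitions. The only subtlety worth flagging is the first part: one must be comfortable treating the empty family of languages as a legitimate cover of $\emptyset$ (and hence $\prin{\rho}{\varnothing}=\varnothing$) rather than being tempted to use $\{\emptyset\}$, which would give $\{0_R\}$ instead. This is exactly why the statement gives $\varnothing$ and not $\{0_R\}$, and it is the one place where a reader might stumble. Everything else is a direct application of Lemma~\ref{lem:bgen:opt}, Fact~\ref{fct:linclus}, and the definition of $\rho$-\imprint.
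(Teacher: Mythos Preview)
Your proof is correct. The paper does not give its own proof of this lemma; it simply cites \cite[Fact~4.7]{pzbpolcj}. Your argument is the natural one and is presumably what that reference does: the empty family is a \Ds-cover of $\emptyset$ with empty imprint, and for the union case you combine Fact~\ref{fct:linclus} (monotonicity in the language argument) with the observation that the union of two optimal covers is a cover of the union whose imprint is the union of the two optimal imprints.
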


We complete Lemma~\ref{lem:optunion} with a similar statement for language concatenation instead of union (proved in~\cite[Lemma 5.8]{pzcovering2}). Note that it requires more hypotheses: \Ds must be a \vari and the \ratm $\rho$ must be \tame.

\begin{lem}\label{lem:mratmult}
	Let \Ds be a \vari and let $\rho: 2^{A^*} \to R$ be a \mratm. Given two languages $H,L \subseteq A^*$,
	we have $\opti{\Ds}{H,\rho}\cdot\opti{\Ds}{L,\rho}\subseteq \opti{\Ds}{HL,\rho}$.
\end{lem}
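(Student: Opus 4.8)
The plan is to reduce the statement to one well-chosen $\Ds$-cover of $HL$ and to manufacture from it $\Ds$-covers of $H$ and of $L$ whose imprints multiply back into that of $\Kb$. First I would dispose of the degenerate case: if $H=\emptyset$ or $L=\emptyset$, then $HL=\emptyset$, so $\opti{\Ds}{HL,\rho}=\emptyset$ and also $\opti{\Ds}{H,\rho}=\emptyset$ or $\opti{\Ds}{L,\rho}=\emptyset$ by Lemma~\ref{lem:optunion}; since the product of the empty subset of $R$ with anything is empty, the inclusion is trivial. So assume $H,L\neq\emptyset$.

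Next, fix an optimal $\Ds$-cover $\Kb$ of $HL$ for $\rho$ (it exists by Lemma~\ref{lem:bgen:opt}), so that $\opti{\Ds}{HL,\rho}=\prin{\rho}{\Kb}$. The key idea is to index the languages of $\Kb$ by a single $\Ds$-morphism. Since every $K\in\Kb$ belongs to the \vari $\Ds$, Proposition~\ref{prop:genocm} yields a $\Ds$-morphism $\eta\colon A^*\to N$ recognizing all of them at once; in particular each $\eta\inv(t)$ for $t\in N$ lies in $\Ds$, and every $K\in\Kb$ is a union of such languages. From $\eta$ I build the two finite families
\[
  \Kb_1=\{\eta\inv(t)\mid t\in N,\ \eta\inv(t)\cap H\neq\emptyset\},\qquad
  \Kb_2=\{\eta\inv(t)\mid t\in N,\ \eta\inv(t)\cap L\neq\emptyset\}.
\]
They consist of languages of $\Ds$, and they are covers of $H$ and of $L$ respectively (each $u\in H$ lies in $\eta\inv(\eta(u))\in\Kb_1$, and similarly for $L$). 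Hence $\Kb_1,\Kb_2$ are $\Ds$-covers, so by optimality $\opti{\Ds}{H,\rho}\subseteq\prin{\rho}{\Kb_1}$ and $\opti{\Ds}{L,\rho}\subseteq\prin{\rho}{\Kb_2}$, and it remains only to prove $\prin{\rho}{\Kb_1}\cdot\prin{\rho}{\Kb_2}\subseteq\prin{\rho}{\Kb}$.

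This last inclusion is the heart of the proof, and the step whose set-up is the real obstacle: one must choose the covers so that the witnesses of $p$ and $q$ are forced to be \emph{compatible}, which is exactly what indexing by a common morphism $\eta$ buys us. Concretely, take $p\in\prin{\rho}{\Kb_1}$ and $q\in\prin{\rho}{\Kb_2}$. By definition of imprints there are $t_1,t_2\in N$ with $p\leq\rho(\eta\inv(t_1))$ and $q\leq\rho(\eta\inv(t_2))$, and with $\eta\inv(t_1)\cap H$ and $\eta\inv(t_2)\cap L$ nonempty; pick $u\in H$ and $v\in L$ with $\eta(u)=t_1$, $\eta(v)=t_2$. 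Then $uv\in HL$, so $uv\in K$ for some $K\in\Kb$; since $K$ is recognized by $\eta$ and $\eta(uv)=t_1t_2$, this forces $\eta\inv(t_1t_2)\subseteq K$. Combining $\eta\inv(t_1)\eta\inv(t_2)\subseteq\eta\inv(t_1t_2)\subseteq K$ with the fact that $\rho$ is a \mratm (hence multiplicative and increasing) and that the canonical order of an idempotent semiring is compatible with multiplication, we obtain
\[
  pq\ \leq\ \rho(\eta\inv(t_1))\cdot\rho(\eta\inv(t_2))\ =\ \rho\bigl(\eta\inv(t_1)\eta\inv(t_2)\bigr)\ \leq\ \rho(\eta\inv(t_1t_2))\ \leq\ \rho(K).
\]
Since $\prin{\rho}{\Kb}$ is downward closed and contains $\rho(K)$, this gives $pq\in\prin{\rho}{\Kb}$, and chaining the inclusions yields $\opti{\Ds}{H,\rho}\cdot\opti{\Ds}{L,\rho}\subseteq\prin{\rho}{\Kb}=\opti{\Ds}{HL,\rho}$. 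The only things left to fill in are the routine monotonicity facts about ordered idempotent semirings and the straightforward checks that $\Kb_1,\Kb_2$ are covers; notably, the argument never uses that $\rho$ is \nice.
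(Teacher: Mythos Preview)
Your proof is correct and follows the standard approach (the paper itself does not give a proof, referring instead to \cite[Lemma~5.8]{pzcovering2}, whose argument is essentially the one you have written). The key idea---recognizing all languages of an optimal \Ds-cover of $HL$ by a single \Ds-morphism $\eta$ and using the fibers $\eta\inv(t)$ to manufacture compatible \Ds-covers of $H$ and $L$---is exactly right, and all the routine checks (monotonicity of $\rho$, compatibility of the order with multiplication in an idempotent semiring, the inclusion $\eta\inv(t_1)\eta\inv(t_2)\subseteq\eta\inv(t_1t_2)$) go through as you indicate.
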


\subsection{Connection with covering.} We now connect these definitions to the covering problem. The key idea is that solving \Ds-covering for a class \Ds boils down to finding an algorithm that computes the optimal \imprint for~\Ds from a \nice \mratm given as input. In~\cite{pzcovering2}, two statements are presented. The first is simpler but it only applies Boolean algebras, while the second, more involved, applies to all lattices. Since all classes investigated in the paper are Boolean algebras, we only present the first~one.

\begin{prop}\label{prop:breduc}
	Let \Ds be a Boolean algebra. There exists an effective reduction from \Ds-covering to the following problem:
	
	\begin{tabular}{ll}
		{\bf Input:} & A \nice \mratm $\rho: 2^{A^*} \to R$ and $F \subseteq R$. \\
		{\bf Question:} & Is it true that $\opti{\Ds}{\rho} \cap F = \emptyset$?
	\end{tabular}
\end{prop}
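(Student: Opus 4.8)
The statement to prove is Proposition~\ref{prop:breduc}: a reduction from \Ds-covering to the ``optimality testing'' problem, for \Ds a Boolean algebra. The plan is to show that deciding whether a pair $(L_1,\Lb_2)$ is \Ds-coverable can be encoded as an instance of the optimality problem by building an appropriate \nice \mratm $\rho$ from the syntactic information of the input languages, and then choosing a well-chosen subset $F$ of the target \rata. First I would recall (or re-derive) the characterization of \Ds-coverability in terms of optimal imprints: when \Ds is a lattice, a pair $(L_1,\Lb_2)$ is \Ds-coverable if and only if the optimal $\rho$-imprint on $L_1$ for \Ds avoids a certain ``bad'' set, where $\rho$ is the \nice \mratm $\rho_\alpha$ canonically associated with a morphism $\alpha$ recognizing $L_1$ and all languages in $\Lb_2$ simultaneously. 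Concretely, take $\alpha\colon A^*\to M$ recognizing $L_1$ and each $L\in\Lb_2$ (the existence of such a common morphism is standard, by taking a product of the syntactic monoids), and consider the \nice \mratm $\rho_\alpha\colon 2^{A^*}\to 2^M$ defined in the preliminaries.

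Next I would make the reduction effective by reducing ``optimal imprint on $L_1$'' to ``optimal imprint on $A^*$''. The trick here exploits that \Ds is a Boolean algebra, hence quotient-closed and closed under complement; this is precisely why the simpler statement of \cite[]{pzcovering2} applies only to Boolean algebras. The idea is that for a language $L_1$ recognized by $\alpha$, one can design a modified \mratm $\rho'$ over $2^M$ (or a slightly enlarged \rata) so that the optimal imprint of $A^*$ for $\rho'$ records exactly the optimal imprint of $L_1$ for $\rho_\alpha$: intuitively, $\rho'$ ``kills'' the contribution of words outside $L_1$ by sending them to an absorbing element, while behaving like $\rho_\alpha$ on $L_1$. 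One checks that $\rho'$ remains a \nice \mratm (niceness and tameness are preserved because the modification is defined letterwise on a finite monoid), and that $\rho'$ is finitely representable, so it is a legal input to the target problem. Then the set $F\subseteq R$ is taken to be the set of elements $T\in 2^M$ (or its $\rho'$-image) that are ``separating-incompatible'' with $\Lb_2$, i.e.\ those $T$ such that for every $L\in\Lb_2$, $T$ meets $\alpha(L)$; this is a finite, computable set. The equivalence ``$(L_1,\Lb_2)$ is \Ds-coverable iff $\opti{\Ds}{\rho'}\cap F=\emptyset$'' is then verified by unwinding the definitions of optimal cover, imprint, and the separation condition for a finite set of languages.

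To close, I would verify that every step of the construction is genuinely algorithmic: computing a common recognizing morphism $\alpha$ from the automata/morphisms given as input to \Ds-covering; constructing the \rata $2^M$ with its addition and multiplication tables; defining $\rho'$ on each letter of $A$; and computing $F$ from $\alpha$ and the (regular, hence effectively given) languages in $\Lb_2$. All of these are finite computations, so the map $(L_1,\Lb_2)\mapsto(\rho',F)$ is computable and yields a correct reduction. Since the two directions of the key equivalence are routine bookkeeping once the right $\rho'$ and $F$ are in place, the main obstacle is the design of $\rho'$ and the proof that its optimal imprint on $A^*$ genuinely captures the optimal imprint of $L_1$ for $\rho_\alpha$ — this is where the Boolean-algebra hypothesis is essential and where a subtle argument using Fact~\ref{fct:linclus} and Lemma~\ref{lem:optunion} (decomposing $A^*$ as $L_1\cup(A^*\setminus L_1)$ and controlling each piece) is needed.

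\textbf{Remark on sourcing.} Since this proposition is attributed in the text to the framework of~\cite{pzcovering2}, in a final write-up I would most likely simply cite the corresponding statement there rather than reprove it; the sketch above is how I would reconstruct it if needed.
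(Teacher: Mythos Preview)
Your overall shape --- build a \nice \mratm from morphisms recognizing the input languages, and encode ``bad'' behaviour into a finite computable set $F$ --- matches the paper's proof sketch. However, your specific mechanism for handling the distinguished language $L_1$ has a gap, and the paper's route is simpler and avoids it.

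The problematic step is the construction of a $\rho'$ that ``kills the contribution of words outside $L_1$ by sending them to an absorbing element''. For $\rho'$ to remain a \emph{multiplicative} rating map, the restriction $\rho'_*$ must be a monoid morphism into $(R,\cdot)$, so $\rho'_*(w)$ is the product of the letter images $\rho'_*(a)$. But membership of $w$ in $L_1$ is not determined letterwise, so you cannot arrange $\rho'_*(w)$ to be absorbing precisely when $w\notin L_1$ while keeping $\rho'$ multiplicative. Your fallback hint via Lemma~\ref{lem:optunion} does not rescue this either: it gives $\opti{\Ds}{\rho}=\opti{\Ds}{L_1,\rho}\cup\opti{\Ds}{A^*\setminus L_1,\rho}$, but one cannot in general extract the first summand from the union alone.

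The paper sidesteps this entirely by treating the distinguished language symmetrically with those in $\Lb_2$. Writing the input as $(L_0,\{L_1,\dots,L_n\})$ and taking morphisms $\alpha_i:A^*\to M_i$ with $L_i=\alpha_i^{-1}(F_i)$, it sets $R=2^{M_0}\times\cdots\times 2^{M_n}$, defines $\rho(K)=(\alpha_0(K),\dots,\alpha_n(K))$, and lets $F$ consist of those tuples $(X_0,\dots,X_n)$ with $X_i\cap F_i\neq\emptyset$ for \emph{every} $i$, including $i=0$. The constraint ``intersects $L_0$'' is thus encoded in $F$, not in $\rho$. The Boolean-algebra hypothesis enters in the direction ``coverable $\Rightarrow \opti{\Ds}{\rho}\cap F=\emptyset$'': a separating \Ds-cover $\Kb'$ of $L_0$ is extended to a \Ds-cover of $A^*$ by adjoining $A^*\setminus\bigcup\Kb'$, which lies in \Ds precisely because \Ds is closed under complement and finite union; this extra piece misses $L_0$, so its $\rho$-image has empty $0$-th component intersection with $F_0$ and hence lies outside~$F$. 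Your remark that in a final write-up one would simply cite~\cite{pzcovering2} is exactly what the paper does.
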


\begin{proof}[Proof sketch]
	We briefly describe the reduction (we refer the reader to~\cite{pzcovering2} for details). Consider an input pair $(L_0,\{L_1,\dots,L_n\})$ for \Ds-covering. Since the languages $L_i$ are regular, for every $i \leq n$, one may compute a morphism $\alpha_i: A^* \to M_i$ into a finite monoid recognizing~$L_i$ together with the set $F_i \subseteq M_i$ such that $L_i = \alpha_i\inv(F_i)$. Consider the associated \nice \mratms $\rho_{\alpha_i} : 2^{A^*} \to 2^{M_i}$. Moreover, let $R$ be the idempotent semiring $2^{M_0} \times \cdots \times 2^{M_n}$ equipped with the componentwise addition and multiplication. We define a \nice \mratm $\rho: 2^{A^*} \to R$ by letting $\rho(K) = (\rho_{\alpha_0}(K),\dots,\rho_{\alpha_n}(K))$ for every $K \subseteq A^*$. Finally, let $F \subseteq R$ be the set of all tuples $(X_0,\dots,X_n) \in R$ such that $X_i \cap F_i \neq \emptyset$ for every $i \leq n$. One may now verify that $(L_0,\{L_1,\dots,L_n\})$ is \Ds-coverable if and only if $\opti{\Ds}{\rho} \cap F = \emptyset$. Let us point out that this equivalence is only true when \Ds is a Boolean algebra. When \Ds is only a lattice, one has to handle the language $L_0$ separately.
\end{proof}

In view of Proposition~\ref{prop:breduc}, for a Boolean algebra \Ds, getting a \Ds-covering algorithm boils down to finding a procedure computing the set $\opti{\Ds}{\rho}\subseteq R$ from a \nice \mratm $\rho: 2^{A^*} \to R$. In practice, these procedures are often presented as elegant \emph{characterization theorems}. The key idea is that such a theorem should provide a description of~the set $\opti{\Ds}{\rho} \subseteq R$, which yields an algorithm for computing it as an immediate corollary. Typically, such a result is called a ``characterization of optimal \imprints for \Ds''. For example a~characterization of optimal \imprints for~\sfr is presented in~\cite{pzcovering2} (recall that \sfr is the class of star-free languages). This procedure reformulates a result of~\cite{pzfoj} in the framework of \ratms.

\begin{exa} \label{exa:sfcov}
  It is shown in~\cite{pzfoj,pzcovering2} that for every \nice \mratm $\rho: 2^{A^*} \to R$, the optimal $\rho$-\imprint for \sfr, $\opti{\sfr}{\rho} \subseteq R$, is the least subset $S \subseteq R$ which satisfies the following conditions:
  \begin{enumerate}
	\item {\bf Trivial elements:} For every $w \in A^*$, we have $\rho(w) \in S$.
	\item {\bf Closure under downset.}\label{item:sfcov2} $\dclosr S = S$.
	\item {\bf Closure under multiplication.}\label{item	:sfcov3} For every $q,r \in S$, we have $qr \in S$.
	\item {\bf \sfr-closure.}\label{item:sfcov4} For every $r \in S$, we have $r^\omega + r^{\omega+1} \in S$.
  \end{enumerate}
  This characterization yields a \emph{least fixpoint procedure} that computes \opti{\sfr}{\rho} from $\rho$: it starts from the set of all trivial elements and saturates it with the operations given by conditions~\eqref{item:sfcov2}--\eqref{item:sfcov4} above. Together with Proposition~\ref{prop:breduc}, this yields the decidability of \sfr-covering.
\end{exa}

In the paper, our aim is to generalize the characterization presented in Example~\ref{exa:sfcov} to arbitrary classes of the form \sfp{\Cs} where \Cs is a \vari. We are able to handle two cases: the case when \Cs is a~\emph{finite \vari} and the case when \Cs is a \emph{group \vari}. We present two characterizations of optimal \imprints, one for each case. They are generic in the sense that each of them applies to families of classes rather than to a single class. This raises a question. Intuitively, a generic characterization of optimal \imprints for \sfp{\Cs} should be parametrized by the class \Cs. How is this achieved? It turns~out that this requires to work with \emph{more general objects} capturing \emph{additional information}. Roughly speaking, given a \nice \mratm $\rho: 2^{A^*} \to R$, our characterizations describe a single object that captures both $\opti{\sfp{\Cs}}{\rho}\subseteq R$ and $\copti{\rho}\subseteq R$, as well as extra information, which connects them. The key idea is that while we are only interested in the set \opti{\sfp{\Cs}}{\rho}, this more general object is required to formulate a characterization. This is not surprising since characterizations of optimal \imprints are often \emph{fixpoint descriptions}, as seen in Example~\ref{exa:sfcov}. An important point is that these more general objects are also optimal \imprints. However, they involve auxiliary \ratms built from $\Cs$ and $\rho$. The last part of the section is devoted to defining these objects.

\subsection{Nested optimal \imprints} We introduce a construction from~\cite{pseps3j,pzbpolcj}, which takes as input a lattice \Ds and a \ratm $\rho: 2^{A^*} \to R$ and builds a new \ratm \bratauxd whose \rata is $(2^R,\cup)$. We let,
\[
\begin{array}{llll}
	\bratauxd: & 2^{A^*} & \to & (2^R,\cup) \\
	& K & \mapsto & \opti{\Ds}{K,\rho}.
\end{array}
\]
Let us reformulate Lemma~\ref{lem:optunion}, which exactly states that $\bratauxd: 2^{A^*} \to (2^R,\cup)$ is itself a \hbox{\ratm}.

\begin{cor} \label{cor:optunion}
	Let \Ds be a lattice and $\rho: 2^{A^*} \to R$ a \ratm., Then, $\bratauxd: 2^{A^*} \to 2^R$ is a \ratm as well.
\end{cor}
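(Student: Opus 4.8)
The plan is to show that \bratauxd\ is a monoid morphism from the \rata $(2^{A^*},\cup)$ to the \rata $(2^R,\cup)$, which is exactly what the corollary asserts. By definition, \bratauxd\ sends a language $K$ to $\opti{\Ds}{K,\rho}$, so I need to verify the two morphism conditions: that $\bratauxd(\emptyset)$ is the neutral element $\emptyset$ of $(2^R,\cup)$, and that $\bratauxd(K\cup L)=\bratauxd(K)\cup\bratauxd(L)$ for all languages $K,L\subseteq A^*$. Both facts are already available: Lemma~\ref{lem:optunion} states precisely that $\opti{\Ds}{\emptyset,\rho}=\emptyset$ and that $\opti{\Ds}{H\cup L,\rho}=\opti{\Ds}{H,\rho}\cup\opti{\Ds}{L,\rho}$ whenever \Ds\ is a lattice.

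So the proof is essentially a one-line unfolding of definitions. First I would recall that $(2^R,\cup)$ is a \rata\ (it is commutative and idempotent with neutral element $\emptyset$), so that it makes sense to ask whether \bratauxd\ is a \ratm\ into it; this is an instance of the fact noted in the excerpt that $(2^S,\cup)$ is a \rata\ for any set $S$, here with $S=R$. Then I would invoke Lemma~\ref{lem:optunion} with $\rho$ and \Ds\ (which is a lattice by hypothesis): its first assertion gives $\bratauxd(\emptyset)=\opti{\Ds}{\emptyset,\rho}=\emptyset=0_{2^R}$, and its second assertion gives, for arbitrary $H,L\subseteq A^*$,
\[
\bratauxd(H\cup L)=\opti{\Ds}{H\cup L,\rho}=\opti{\Ds}{H,\rho}\cup\opti{\Ds}{L,\rho}=\bratauxd(H)+\bratauxd(L),
\]
where $+$ denotes union, the addition of the \rata\ $(2^R,\cup)$. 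These two properties are exactly the axioms for a monoid morphism from $(2^{A^*},\cup)$ to $(2^R,\cup)$, hence \bratauxd\ is a \ratm.

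There is no real obstacle here; the corollary is a reformulation of Lemma~\ref{lem:optunion} and its only purpose is to record that the nested construction \bratauxd\ fits into the \ratm\ framework, so that later one may iterate the optimal-imprint construction. The only thing to be mildly careful about is that the lemma requires \Ds\ to be a lattice (so that optimal covers exist, by Lemma~\ref{lem:bgen:opt}, and the union formula holds), which is granted in the statement of the corollary. I would also remark in passing — though it is not needed for the statement — that \bratauxd\ is in general \emph{not} \nice\ nor \tame\ without further assumptions on \Ds\ and $\rho$; those finer properties are addressed separately in the paper.
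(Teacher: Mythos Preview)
Your proof is correct and matches the paper's approach exactly: the corollary is introduced in the paper as a direct reformulation of Lemma~\ref{lem:optunion}, and you unfold this reformulation by checking the two morphism axioms using precisely that lemma.
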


Let us point out that \bratauxd is neither \nice nor \tame in general, even if this is the case for the original \ratm $\rho$. In practice, this is not an issue for the ``\tame'' property. Actually, \bratauxd is \tame when $\rho$ is \tame and the lattice \Ds is a \vari \emph{closed under concatenation} (such as when $\Ds = \sfp{\Cs}$). On the other hand, \bratauxd is rarely \nice (see~\cite[Example 6.3]{pzbpolcj} for a counterexample). This is why it is important that most results of the framework hold for \emph{arbitrary \ratms}.

Let us now provide some high-level intuition on why this construction is important when dealing with star-free closure. Consider some input \vari \Cs. Since \sfp{\Cs} is a Boolean algebra, we know by Proposition~\ref{prop:breduc} that in order to solve \sfp{\Cs}-covering, it suffices to characterize the set $\opti{\sfp{\Cs}}{\rho}\subseteq R$ for every \nice \mratm $\rho: 2^{A^*}\to R$. Roughly speaking, the two characterizations of optimal \imprints for \sfp{\Cs} that we present (for finite \varis \Cs in Section~\ref{sec:finite} and for \varis of group languages in Section~\ref{sec:group}) consider the set  $\copti{\bratauxsfc} \subseteq 2^R$, which is more general. Indeed, one may verify that $\opti{\sfp{\Cs}}{\rho}$ is the union of all sets in $\copti{\bratauxsfc}$. The point is that the extra information contained in this set is required in order to formulate these characterizations. Let us point out that this discussion is only meant to provide a sketchy general intuition. In practice, we shall refine these ideas by adopting an approach tailored to the two particular kinds of input \varis that we shall consider.


\section{Covering for finite input classes}
\label{sec:finite}
In this section, we prove that separation and covering are both decidable for the class \sfp{\Cs} when \Cs is a finite \vari. The algorithm is based on the framework introduced in Section~\ref{sec:ratms}: we present a generic effective characterization of optimal \imprints for \sfp{\Cs}. Given as input an arbitrary \emph{finite} \vari \Cs and a \nice \mratm $\rho: 2^{A^*} \to R$, it provides an effective description of the set $\opti{\sfp{\Cs}}{\rho} \subseteq R$. By Proposition~\ref{prop:breduc}, having this set in hand suffices to decide \sfp{\Cs}-covering. As announced at the end of Section~\ref{sec:ratms}, the characterization actually describes more information than just the set $\opti{\sfp{\Cs}}{\rho} \subseteq R$. The first part of the section is devoted to defining the full object that we characterize. In the second part, we present the characterization itself and its proof.

\subsection{Pointed optimal \imprints}

Consider an arbitrary finite \vari \Cs. Recall that since \Cs is \emph{finite}, Lemma~\ref{lem:cmdiv} implies that there exists a \emph{unique} (up to renaming) \Cs-morphism recognizing exactly all languages in \Cs. We denote it by $\etac: A^* \to \canc$ and call it the \emph{canonical \Cs-morphism}. The set~$\Kb=\{\etac\inv(t)\mid t\in\canc\}$ is the \emph{finest} partition of $A^*$ into languages of \Cs. Consequently, \Kb is an optimal \Cs-cover of $A^*$ \emph{for all \ratms}. In particular, if $\rho: 2^{A^*} \to R$ is a \nice \mratm, then \Kb is an optimal \Cs-cover of $A^*$ for the auxiliary \ratm $\brataux{\sfp{\Cs}}{\rho}: 2^{A^*} \to 2^R$. By definition, it follows that,
\[
  \opti{\Cs}{\brataux{\sfp{\Cs}}{\rho}} = \prin{\brataux{\sfp{\Cs}}{\rho}}{\Kb} = \dclosp{2^R} \{\brataux{\sfp{\Cs}}{\rho}(K) \mid K \in \Kb\} =\dclosp{2^R} \{\opti{\sfp{\Cs}}{\etac\inv(t),\rho} \mid t \in \canc\}.
\]
As explained in Section~\ref{sec:ratms}, the set $\opti{\Cs}{\brataux{\sfp{\Cs}}{\rho}}$ is exactly the information that our characterization of optimal \imprints for \sfp{\Cs} will describe. More precisely, we characterize the family of sets $\opti{\sfp{\Cs}}{\etac\inv(t),\rho} \subseteq R$ for $t \in \canc$. According to Lemma~\ref{lem:optunion}, the union of these sets is  $\opti{\sfp{\Cs}}{A^*,\rho}=\opti{\sfp{\Cs}}{\rho}$, whose knowledge is enough to decide \sfp\Cs-covering. Let us point out that this is also how the characterization depends on the finite \vari~\Cs: it is parametrized by the canonical \Cs-morphism $\etac: A^* \to \canc$. We now introduce additional notations that will be convenient in order to manipulate the family of sets \opti{\sfp{\Cs}}{\etac\inv(t),\rho} in our statements and our proofs.

\medskip\noindent{\bf Definition.} Let \Ds be a lattice, $\eta: A^* \to N$ be a morphism into a finite monoid and $\rho: 2^{A^*} \to R$ be a \ratm. The \emph{$\eta$-pointed optimal $\rho$-\imprint for \Ds} is defined as the following set $\popti{\Ds}{\eta}{\rho} \subseteq N \times R$:
\[
  \popti{\Ds}{\eta}{\rho} = \bigl\{(t,r) \in N \times R \mid r \in \opti{\Ds}{\eta\inv(t),\rho}\bigl\}.
\]
Clearly, \popti{\Ds}{\eta}{\rho} encodes all sets \opti{\Ds}{\eta\inv(t),\rho} for $t \in N$. We shall use the~notation in the case when $\Ds = \sfp{\Cs}$ for some \vari \Cs and $\eta$ is the canonical \Cs-morphism \etac.

We complete the definition with a simple result which implies that $\popti{\Ds}{\eta}{\rho} \subseteq N \times R$ is more general than the set $\opti{\Ds}{\rho} \subseteq R$. Indeed, since $\opti{\Ds}{\rho} = \opti{\Ds}{A^*,\rho}$, we have the following immediate corollary of Lemma~\ref{lem:optunion}.

\begin{cor}\label{cor:pointgen}
  Let \Ds be a lattice, $\eta: A^* \to N$ be a morphism into a finite monoid and $\rho: 2^{A^*} \to R$ be a \mratm. Then,
  \[
    \opti{\Ds}{\rho} = \bigcup_{t\in N}  \opti{\Ds}{\eta\inv(t),\rho} =\{r \in R \mid \text{there exists $t \in N$ such that $(t,r) \in \popti{\Ds}{\eta}{\rho}$}\}.
  \]
\end{cor}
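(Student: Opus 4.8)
The plan is to deduce the statement directly from Lemma~\ref{lem:optunion} together with the definition of $\eta$-pointed optimal \imprints; no new idea is required, and indeed the text flags it as an immediate corollary. First I would recall that $\opti{\Ds}{\rho}$ is by convention a shorthand for $\opti{\Ds}{A^*,\rho}$, and that since $\eta: A^* \to N$ is a total map its fibers cover the free monoid, that is, $A^* = \bigcup_{t \in N} \eta\inv(t)$. If $\eta$ happens not to be surjective, some of these fibers are empty, which is harmless because $\opti{\Ds}{\emptyset,\rho} = \emptyset$ by Lemma~\ref{lem:optunion}.

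Next I would apply the union formula of Lemma~\ref{lem:optunion}, namely $\opti{\Ds}{H \cup L,\rho} = \opti{\Ds}{H,\rho} \cup \opti{\Ds}{L,\rho}$, finitely many times. As $N$ is a finite monoid it is nonempty, so writing $N = \{t_1,\dots,t_k\}$ one proceeds by induction on $k$: the base case $k = 1$ is trivial, and the inductive step is a single application of the binary formula with $H = \eta\inv(t_1) \cup \cdots \cup \eta\inv(t_{k-1})$ and $L = \eta\inv(t_k)$. This yields
\[
  \opti{\Ds}{\rho} = \opti{\Ds}{A^*,\rho} = \bigcup_{t \in N} \opti{\Ds}{\eta\inv(t),\rho},
\]
which is the first claimed equality.

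For the second equality I would simply unfold the definition of the $\eta$-pointed optimal \imprint: a pair $(t,r) \in N \times R$ belongs to $\popti{\Ds}{\eta}{\rho}$ precisely when $r \in \opti{\Ds}{\eta\inv(t),\rho}$. Hence $\{r \in R \mid \text{there exists } t \in N \text{ such that } (t,r) \in \popti{\Ds}{\eta}{\rho}\}$ is exactly $\bigcup_{t \in N} \opti{\Ds}{\eta\inv(t),\rho}$, and combining this with the previous display concludes.

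There is essentially no obstacle here; the only points requiring a moment of care are bookkeeping ones, namely that Lemma~\ref{lem:optunion} is stated only for two languages, so the passage to the finite union indexed by $N$ needs the trivial induction above, and that the possibly-empty fibers of a non-surjective $\eta$ contribute nothing thanks to $\opti{\Ds}{\emptyset,\rho} = \emptyset$.
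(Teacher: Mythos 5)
Your proof is correct and is exactly the argument the paper has in mind: decompose $A^*$ into the fibers $\eta\inv(t)$, apply Lemma~\ref{lem:optunion} inductively to pass to the finite union (with the first part of the lemma handling possibly empty fibers), and unfold the definition of $\popti{\Ds}{\eta}{\rho}$ for the second equality. The paper gives no proof, calling it an immediate corollary, and your write-up supplies precisely the bookkeeping that justifies that claim.
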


\noindent
{\bf Pointed covers.} Pointed optimal \imprints are closer to being a notation rather than a new notion. Yet, it is possible to define them \emph{directly} in terms of ``covers''. This will be convenient for manipulating them. However, we have to slightly generalize the notion of cover in order to do so.

Consider a morphism $\eta: A^* \to N$ into a finite monoid and a language $L \subseteq A^*$. An \emph{$\eta$-pointed cover} of $L$ is a \emph{finite} set \Kb of pairs $(s,K) \in N \times 2^{A^*}$ such that for every $w \in L$, there exists $(s,K) \in \Kb$ such that $\eta(w) = s$ and $w\in K$. In other words, the set $\{K \mid (s,K) \in \Kb\}$ must be a cover of $L\cap\eta\inv(s)$ for every $s \in N$. Additionally, given some class \Ds, we say that \Kb is an $\eta$-pointed \Ds-cover when it also satisfies $K \in \Ds$ for each pair $(s,K) \in \Kb$.

We generalize \imprints to pointed covers. Let $\eta: A^* \to N$ be a morphism into a finite monoid and $\rho: 2^{A^*} \to R$ a be \ratm. If \Kb is an $\eta$-pointed cover of some language $L \subseteq A^*$, we write,
\[
  \pprin{\eta,\rho}{\Kb} = \dclosr \{(s,\rho(K)) \mid (s,K) \in \Kb\} \subseteq N \times R.
\]
Here, we use the extended definition of the downset operation (see the definition page~\pageref{downset}). The following lemma provides an alternate definition of pointed optimal \imprints. Roughly, it implies that when \Ds is a lattice, there always exists an ``optimal'' $\eta$-pointed \Ds-cover \Kb of $A^*$ (\emph{i.e.}, such that \pprin{\eta,\rho}{\Kb} is minimal for inclusion) and that it satisfies $\pprin{\eta,\rho}{\Kb} = \popti{\Ds}{\eta}{\rho}$.

\begin{lem} \label{lem:pcov}
  Let \Ds be a lattice, $\eta: A^* \to N$ be a morphism into a finite monoid and $\rho: 2^{A^*} \to R$ be a \ratm. The two following properties hold:
  \begin{itemize}
    \item For every $\eta$-pointed \Ds-cover \Kb of $A^*$, we have $\popti{\Ds}{\eta}{\rho} \subseteq  \pprin{\eta,\rho}{\Kb}$.
    \item There exists an $\eta$-pointed \Ds-cover \Kb of $A^*$ such that $\popti{\Ds}{\eta}{\rho} =  \pprin{\eta,\rho}{\Kb}$.
  \end{itemize}
\end{lem}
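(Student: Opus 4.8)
The plan is to reduce Lemma~\ref{lem:pcov} to the already-established theory of optimal \Ds-covers (Lemma~\ref{lem:bgen:opt} and the definitions of $\opti{\Ds}{L,\rho}$ and $\popti{\Ds}{\eta}{\rho}$), using the fact that an $\eta$-pointed cover of $A^*$ is essentially a uniform way of packaging, for each $s\in N$, an ordinary \Ds-cover of $\eta\inv(s)$. The key bookkeeping identity to keep in mind throughout is
\[
  \pprin{\eta,\rho}{\Kb} = \dclosr \bigcup_{s\in N} \bigl(\{s\}\times \prin{\rho}{\Kb_s}\bigr),
\]
where $\Kb_s = \{K \mid (s,K)\in\Kb\}$, and $\popti{\Ds}{\eta}{\rho} = \bigcup_{s\in N}\bigl(\{s\}\times \opti{\Ds}{\eta\inv(s),\rho}\bigr)$ (the latter set is already downset-closed in the second coordinate by definition of optimal imprints, so the outer $\dclosr$ is harmless).

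For the first item, let \Kb be an arbitrary $\eta$-pointed \Ds-cover of $A^*$. Fix $s\in N$; I would first observe that $\Kb_s$ is a \Ds-cover of $\eta\inv(s)$: indeed every $w\in\eta\inv(s)\subseteq A^*$ is covered by some pair $(t,K)\in\Kb$ with $\eta(w)=t$ and $w\in K$, and since $\eta(w)=s$ this forces $t=s$, so $w\in K\in\Kb_s$; and $K\in\Ds$ for each such pair because \Kb is an $\eta$-pointed \Ds-cover. By definition of the optimal imprint, $\opti{\Ds}{\eta\inv(s),\rho}\subseteq\prin{\rho}{\Kb_s}$. Taking the union over $s\in N$ and applying the identities above (and the fact that $\prin{\rho}{\Kb_s}$ is downset-closed) yields $\popti{\Ds}{\eta}{\rho}\subseteq\pprin{\eta,\rho}{\Kb}$, which is the claim.

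For the second item, I would build the witnessing pointed cover by gluing together, for each $s\in N$, an optimal \Ds-cover of $\eta\inv(s)$: by Lemma~\ref{lem:bgen:opt} such a cover $\Kb^{(s)}$ exists for every $s$, and $\prin{\rho}{\Kb^{(s)}} = \opti{\Ds}{\eta\inv(s),\rho}$. I then set $\Kb = \bigcup_{s\in N}\{(s,K)\mid K\in\Kb^{(s)}\}$, a finite set of pairs in $N\times 2^{A^*}$ since $N$ is finite and each $\Kb^{(s)}$ is finite. This \Kb is an $\eta$-pointed \Ds-cover of $A^*$: given $w\in A^*$, let $s=\eta(w)$; since $\Kb^{(s)}$ covers $\eta\inv(s)\ni w$, some $K\in\Kb^{(s)}$ contains $w$, and $(s,K)\in\Kb$ with $\eta(w)=s$ and $w\in K$; moreover $K\in\Ds$. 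Finally, $\Kb_s = \Kb^{(s)}$ by construction (here one uses that the pairs for distinct values of $s$ do not interfere, because the first coordinate records $s$), so $\prin{\rho}{\Kb_s} = \opti{\Ds}{\eta\inv(s),\rho}$, and plugging into the displayed identity for $\pprin{\eta,\rho}{\Kb}$ gives $\pprin{\eta,\rho}{\Kb} = \dclosr\bigcup_{s\in N}\bigl(\{s\}\times\opti{\Ds}{\eta\inv(s),\rho}\bigr) = \popti{\Ds}{\eta}{\rho}$, using once more that each $\opti{\Ds}{\eta\inv(s),\rho}$ is already downset-closed.

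There is no real obstacle here; the statement is essentially a reformulation, and the only point requiring a little care is making sure that in the construction of \Kb the first coordinate genuinely isolates the contribution of each $\eta$-class, so that $\Kb_s$ recovers exactly $\Kb^{(s)}$ and the imprint identities go through termwise. One should also double-check the trivial edge cases ($\eta\inv(s)=\emptyset$ for some $s$, where the optimal cover is empty and contributes nothing, consistent with $\opti{\Ds}{\emptyset,\rho}=\emptyset$ from Lemma~\ref{lem:optunion}), but these cause no difficulty.
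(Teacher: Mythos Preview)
Your proposal is correct and follows essentially the same approach as the paper: both arguments decompose an $\eta$-pointed cover \Kb into the slices $\Kb_s=\{K\mid (s,K)\in\Kb\}$, observe that each $\Kb_s$ is a \Ds-cover of $\eta\inv(s)$, and for the second item glue together optimal \Ds-covers of the individual sets $\eta\inv(s)$. The only cosmetic difference is that you state the bookkeeping identity $\pprin{\eta,\rho}{\Kb}=\dclosr\bigcup_{s}\{s\}\times\prin{\rho}{\Kb_s}$ upfront, whereas the paper unwinds it elementwise.
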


\begin{proof}
  For the first assertion, let \Kb be an $\eta$-pointed \Ds-cover \Kb of $A^*$ and let $(s,r) \in \popti{\Ds}{\eta}{\rho}$. Let $\Kb_s = \{K \mid (s,K) \in \Kb\}$, which is a \Ds-cover of $\eta\inv(s)$ by definition of pointed covers. Since $(s,r)\in\popti{\Ds}{\eta}{\rho}$, we have $r \in \opti{\Ds}{\eta\inv(s),\rho}$, which yields $K \in \Kb_s$ such that $r \leq \rho(K)$. By definition of $\Kb_s$, we have $(s,K) \in \Kb$. Hence, we get $(s,r) \in \pprin{\eta,\rho}{\Kb}$, as desired.

  We turn to the second assertion. For every $s \in N$, let $\Kb_s$ be an optimal \Ds-cover of $\eta\inv(s)$ for $\rho$: $\prin{\rho}{\Kb_s}=\opti{\Ds}{\eta\inv(s),\rho}$. We let $\Kb = \{(s,K) \mid s \in N \text{ and } K \in \Kb_s\}$. By definition, \Kb is an $\eta$-pointed cover of $L$. Let us prove that $\popti{\Ds}{\eta}{\rho} =  \pprin{\eta,\rho}{\Kb}$. The left to right inclusion is immediate from the first assertion. For the converse one, let $(s,r) \in \pprin{\eta,\rho}{\Kb}$. By definition of $\Kb$, we get $K \in \Kb_s$ such that $r \leq \rho(K)$, \emph{i.e.}, $r \in \prin{\rho}{\Kb_s}=\opti{\Ds}{\eta\inv(s),\rho}$. Hence, $(s,r) \in \popti{\Cs}{\eta}{\rho}$ by definition.
\end{proof}

\subsection{Characterization}

Let us first present the characterization. Given an arbitrary morphism $\eta: A^* \to N$ into a finite monoid and a \mratm $\rho: 2^{A^*} \to R$, we define the \sfr-saturated subsets of $N \times R$ for $\eta$ and $\rho$ (this notion makes sense for arbitrary \mratms, but it is only useful for those that are \emph{\nice}). Let $S \subseteq N \times R$. We say that $S$ is \emph{\sfr-saturated for $\eta$ and $\rho$} when it satisfies the following properties:
\begin{enumerate}
  \item {\bf Trivial elements.} For every $w \in A^*$, we have $(\eta(w),\rho(w)) \in S$.
  \item {\bf Closure under downset.} $\dclosr S = S$.
  \item {\bf Closure under multiplication.} For every $(s,q),(t,r) \in S$, we have $(st,qr) \in S$.
  \item {\bf \sfr-closure.} For every $(e,r) \in S$, if $e \in N$ is an idempotent, then $(e,r^\omega + r^{\omega+1}) \in S$.
\end{enumerate}

We are ready to present the characterization. Given a finite \vari \Cs, if $\rho$ is \nice, we show that the least \sfr-saturated subset of $\canc \times R$ for \etac and $\rho$ is exactly \popti{\sfp{\Cs}}{\etac}{\rho}.

\begin{thm} \label{thm:carfinite}
  Let \Cs be a finite \vari and $\rho: 2^{A^*} \to R$ be a \nice \mratm. Then, \popti{\sfp{\Cs}}{\etac}{\rho} is the least \sfr-saturated subset of $\canc \times R$ for $\etac$ and $\rho$.
\end{thm}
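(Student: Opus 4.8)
The plan is to prove Theorem~\ref{thm:carfinite} by two inclusions, mirroring the structure of Example~\ref{exa:sfcov} but carrying the extra ``pointing'' coordinate $t\in\canc$ throughout. Write $S$ for the least \sfr-saturated subset of $\canc\times R$ for $\etac$ and $\rho$ (it exists as an intersection of \sfr-saturated sets, which is easily checked to remain \sfr-saturated). I must show $S=\popti{\sfp{\Cs}}{\etac}{\rho}$.

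\medskip\noindent\textbf{\sffamily Soundness: $S\subseteq\popti{\sfp{\Cs}}{\etac}{\rho}$.} Here I would show that $\popti{\sfp{\Cs}}{\etac}{\rho}$ is itself \sfr-saturated for $\etac$ and $\rho$; minimality of $S$ then gives the inclusion. The four conditions are checked in order. \emph{Trivial elements}: for $w\in A^*$, the singleton $\{w\}$ is $\etac(w)$-pointed and belongs to $\sfp{\Cs}$ (it is a concatenation of the singletons $\{a\}$), so $(\etac(w),\rho(w))$ lies in the imprint of any optimal pointed cover, hence in $\popti{\sfp{\Cs}}{\etac}{\rho}$ by Lemma~\ref{lem:pcov}. \emph{Downset closure} is immediate from the definition of pointed optimal imprints (the defining set $\opti{\sfp{\Cs}}{\etac\inv(t),\rho}$ is downward closed). \emph{Multiplication}: given $(s,q),(t,r)\in\popti{\sfp{\Cs}}{\etac}{\rho}$, use that $\etac$ is a morphism so $\etac\inv(s)\etac\inv(t)\subseteq\etac\inv(st)$, and combine Lemma~\ref{lem:mratmult} (applied to the \vari $\sfp{\Cs}$, which is one by Proposition~\ref{prop:vari}, and the \mratm $\rho$) with Fact~\ref{fct:linclus} to get $qr\in\opti{\sfp{\Cs}}{\etac\inv(st),\rho}$. \emph{\sfr-closure}: given $(e,r)\in\popti{\sfp{\Cs}}{\etac}{\rho}$ with $e$ idempotent, I need $r^\omega+r^{\omega+1}\in\opti{\sfp{\Cs}}{\etac\inv(e),\rho}$. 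The point is that $e$ idempotent makes $\etac\inv(e)$ ``closed under iteration'' in the relevant sense: $\etac\inv(e)\etac\inv(e)\subseteq\etac\inv(e)$, and since $\etac$ recognizes all languages in \Cs, the language $\etac\inv(e)$ is in \Cs, so $(\etac\inv(e))^*\in\bsdp{\Cs}$ only if $\etac\inv(e)$ were a prefix code — which it need not be. Instead, I would argue directly: any optimal $\sfp{\Cs}$-cover of $\etac\inv(e)$ can be refined through $\etac$ itself, and then I can mimic the classical star-free saturation (the passage from $r$ to $r^\omega+r^{\omega+1}$) inside $\etac\inv(e)$, using that $\etac\inv(e)A^*\etac\inv(e)\cap\etac\inv(e)$ stays inside $\etac\inv(e)$. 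Concretely one builds, from a cover language $K\subseteq\etac\inv(e)$ with $\rho(K)=r$, the language $(K\cap\etac\inv(e))^*\cap\etac\inv(e)$-type iterates and applies the \sfr-covering characterization of Example~\ref{exa:sfcov} relative to the morphism $\etac$; this is the step where I expect to lean on a lemma stating that \emph{within} a single $\etac$-class, $\sfp{\Cs}$-covers behave like \sfr-covers.

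\medskip\noindent\textbf{\sffamily Completeness: $\popti{\sfp{\Cs}}{\etac}{\rho}\subseteq S$.} This is the harder direction. I must build, for each $t\in\canc$, a $\sfp{\Cs}$-cover of $\etac\inv(t)$ all of whose languages $K$ satisfy $(t,\rho(K))\in S$ (indeed $(\etac(w),\rho(K))\in S$ with possibly finer pointing, but $K\subseteq\etac\inv(t)$ forces the pointing to be $t$). The construction should be extracted from the proof of the implication $(3)\Rightarrow(2)$ in Theorem~\ref{thm:sfcarac}: there, starting from aperiodicity of the \Cs-orbits, one produces \bsdp{\Cs}-partitions of $A^*$ into $s$-safe pieces by a triple induction on $|\alpha(P^+)|$, $|\Hb|$, $|s\alpha(P^*)|$. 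Here I would run the analogous induction but tracking $\rho$-values in $R$ and the pointing in $\canc$, so that each piece produced has its $\rho$-image in $S$. The four \sfr-saturation rules of $S$ are exactly what is needed to absorb the four construction steps of \bsdp{\Cs}: trivial elements handle the singletons $\{a\}$; closure under downset handles that a cover may be coarser; closure under multiplication handles (unambiguous) concatenation and intersection-with-\Cs; and \sfr-closure handles the Kleene star on a prefix code of bounded synchronization delay — this is where $r^\omega+r^{\omega+1}$ appears, via the base case of Lemma~\ref{lem:sfclos:fromapertostar} where aperiodicity of \Cs-orbits is used, here replaced by ``$S$ is \sfr-closed''. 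The main obstacle is marrying the \emph{semantic} bookkeeping of the \bsdp{\Cs} construction (disjointness, unambiguity, prefix-code/synchronization-delay conditions) with the \emph{fixpoint} bookkeeping of $S$: I need to ensure that at every inductive step the newly created languages' $\rho$-images are generated from previously-certified ones by the four rules, and in particular that the Kleene-star step genuinely only ever needs the $r\mapsto r^\omega+r^{\omega+1}$ rule (rather than some stronger closure). I expect the cleanest route is to prove a single combined induction statement: ``for every prefix code $P$ of bounded synchronization delay admitting a $\sfp{\Cs}$-partition $\Hb$ whose pieces have $\rho$-images in (a suitable localization of) $S$, and every $(t,r)$, there is a $\sfp{\Cs}$-partition of $P^*$ whose $\rho$-images are again in $S$'', proved by the same triple induction, with the base case invoking the \sfr-closure rule of $S$ and Lemma~\ref{lem:cmorph} to relate $\etac$-classes to \Cs-pairs. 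Then specializing to $P=A$, $\Hb=\{\{a\}\mid a\in A\}$ yields the claim.

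\medskip\noindent Finally, combining the two inclusions gives $S=\popti{\sfp{\Cs}}{\etac}{\rho}$, and by Corollary~\ref{cor:pointgen} one recovers $\opti{\sfp{\Cs}}{\rho}$ as the set of second coordinates, yielding via Proposition~\ref{prop:breduc} the decidability of $\sfp{\Cs}$-covering for finite \varis \Cs. The overall effort is concentrated in the completeness direction and, within it, in the Kleene-star inductive step; the soundness direction is largely a matter of assembling Propositions~\ref{prop:vari}, \ref{prop:aper}, Lemmas~\ref{lem:mratmult}, \ref{lem:pcov} and Fact~\ref{fct:linclus}.
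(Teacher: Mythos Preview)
Your overall architecture matches the paper's: soundness via showing $\popti{\sfp{\Cs}}{\etac}{\rho}$ is \sfr-saturated (the paper's Proposition~\ref{prop:sound}), completeness via a triple induction patterned on Lemma~\ref{lem:sfclos:fromapertostar} (the paper's Proposition~\ref{prop:comp} and its Lemma~\ref{lem:pumping}). Two places, however, need correction.

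\medskip\noindent\textbf{Soundness, \sfr-closure.} Your description wanders (prefix codes, ``$(K\cap\etac\inv(e))^*$-type iterates'', a hoped-for lemma that $\sfp{\Cs}$-covers inside an $\etac$-class behave like \sfr-covers) without landing on an argument. The paper's route is short and uses precisely the tool you list at the end: take an optimal $\sfp{\Cs}$-cover \Kb of $\etac\inv(e)$, recognize all $K\in\Kb$ by a single $\sfp{\Cs}$-morphism $\alpha:A^*\to M$ (Proposition~\ref{prop:genocm}), pick $H=\alpha\inv(x)$ with $r\leq\rho(H)$ and a witness $u\in H\cap\etac\inv(e)$. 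Since $\etac(u)=e$ is idempotent and $\etac$ is the canonical $\Cs$-morphism, \emph{every} \Cs-morphism sends $u$ to an idempotent (Lemma~\ref{lem:cmdiv}), so Proposition~\ref{prop:aper} gives $x^{\omega+1}=x^\omega$. Hence $H^p\cup H^{p+1}\subseteq\alpha\inv(x^p)\subseteq K$ for the $K\in\Kb$ containing $u^p$, which yields $r^\omega+r^{\omega+1}\leq\rho(K)$. No iterates of $K$ and no \bsdp{\Cs} machinery are needed here.

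\medskip\noindent\textbf{Completeness, induction parameters and base case.} You propose to reuse the parameters $|\alpha(P^+)|$, $|\Hb|$, $|s\alpha(P^*)|$ from Lemma~\ref{lem:sfclos:fromapertostar}. That will not work: there is no morphism $\alpha$ here, and the ``stability'' dichotomy must be detected in $\canc\times R$, not in $M$. The paper introduces \Hb-products and \Hb-unions and the sets $Q^+_\Hb,Q^*_\Hb\subseteq\canc\times R$ built from $(\etac,\rho)$-images; the three parameters become $|Q^+_\Hb|$, $|\Hb|$, $|(s,r)\cdot Q^*_\Hb|$. This upgrade is what makes the two subcases (Subcase~1 via the third parameter, Subcase~2 via the first) go through while simultaneously certifying membership in $S$. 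Also, Lemma~\ref{lem:cmorph} plays no role in the base case: what is used there is (i) that $\rho$ is \emph{nice}, so $\rho(P^*\cap\etac\inv(t))$ equals $\rho$ of a finite \Hb-union, and (ii) the \Hb-stability hypothesis, which yields the cancellation Fact~\ref{fct:basecase} and lets \sfr-closure of $S$ absorb the idempotent powers.
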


Given a \nice \mratm $\rho: 2^{A^*} \to R$ as input, it is clear the one may compute the least \sfr-saturated subset of $\canc \times R$ for $\etac$ and $\rho$. This is achieved with a least fixpoint procedure. Hence, Theorem~\ref{thm:carfinite} provides an algorithm for computing \popti{\sfp{\Cs}}{\etac}{\rho}. It then follows from Corollary~\ref{cor:pointgen} that one may compute \opti{\sfp{\Cs}}{\rho} from \popti{\sfp{\Cs}}{\etac}{\rho}:
\[
  \opti{\sfp{\Cs}}{\rho} = \{r \in R\mid \text{there exists $t \in N$ such that $(t,r) \in \popti{\sfp{\Cs}}{\etac}{\rho}$}\}.
\]
Together with Proposition~\ref{prop:breduc}, we obtain that \sfp{\Cs}-covering is decidable. Naturally, this result extends to separation by Lemma~\ref{lem:covsep}. Altogether, we get the following corollary.

\begin{cor} \label{cor:carfinite}
  Let \Cs be a finite \vari. Then, \sfp{\Cs}-covering and \sfp{\Cs}-separation are decidable.
\end{cor}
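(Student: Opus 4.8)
The plan is to obtain this corollary by feeding the characterization of Theorem~\ref{thm:carfinite} into the generic framework of \ratms from Section~\ref{sec:ratms}. Since \Cs is a \vari, Proposition~\ref{prop:vari} gives that \sfp{\Cs} is a \vari, in particular a Boolean algebra; hence Proposition~\ref{prop:breduc} provides an effective reduction from \sfp{\Cs}-covering to the following decision problem: given a \nice \mratm $\rho: 2^{A^*} \to R$ and a set $F \subseteq R$, decide whether $\opti{\sfp{\Cs}}{\rho} \cap F = \emptyset$. Thus it suffices to give an algorithm which, on input a \nice \mratm $\rho$, computes the finite set $\opti{\sfp{\Cs}}{\rho} \subseteq R$; decidability of \sfp{\Cs}-separation will then follow for free from Lemma~\ref{lem:covsep}, since \sfp{\Cs} is a lattice and therefore $L_1$ is \sfp{\Cs}-separable from $L_2$ exactly when $(L_1,\{L_2\})$ is \sfp{\Cs}-coverable.

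The algorithm computing $\opti{\sfp{\Cs}}{\rho}$ proceeds as follows. First, one computes the canonical \Cs-morphism $\etac: A^* \to \canc$ from (a presentation of) the finite \vari \Cs: by Proposition~\ref{prop:genocm} the product of the syntactic morphisms of the finitely many languages of \Cs, restricted to its surjective image, is a \Cs-morphism recognizing all of them, and by Lemma~\ref{lem:cmdiv} this is \etac up to renaming. Next, one computes the least \sfr-saturated subset $S$ of $\canc \times R$ for $\etac$ and $\rho$ by a least-fixpoint procedure: one starts from the set of trivial elements $\{(\etac(w),\rho(w)) \mid w \in A^*\}$, which is computable as the image of the monoid morphism $w \mapsto (\etac(w),\rho_*(w))$ into the finite monoid $\canc \times (R,\cdot)$ (here using that $\rho$ is \tame), and then repeatedly closes under downset, multiplication and \sfr-closure; since $\canc \times R$ is finite this stabilises in finitely many steps. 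By Theorem~\ref{thm:carfinite}, $S = \popti{\sfp{\Cs}}{\etac}{\rho}$, so Corollary~\ref{cor:pointgen} yields $\opti{\sfp{\Cs}}{\rho} = \{r \in R \mid \exists t \in \canc,\ (t,r) \in S\}$, which is then read off from $S$. Combined with the reduction of Proposition~\ref{prop:breduc}, this settles decidability of \sfp{\Cs}-covering, and hence, as noted above, of \sfp{\Cs}-separation.

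There is no real obstacle at this level: all the mathematical content sits in Theorem~\ref{thm:carfinite}, and here we are merely assembling it with the framework. The only points deserving a sentence of care in the write-up are (i) the effectiveness of the above fixpoint, which rests on the fact that a \nice \mratm is finitely representable — this is exactly the discussion of Section~\ref{sec:ratms}, and guarantees both that ``an algorithm taking $\rho$ as input'' is meaningful and that the set of trivial elements is computable — and (ii) that \etac is effectively obtainable from \Cs, which simply presupposes the standing convention that the finite \vari \Cs is given by an effective presentation of its languages. Neither causes any difficulty.
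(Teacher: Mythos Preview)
Your proposal is correct and follows essentially the same route as the paper: compute the least \sfr-saturated subset of $\canc\times R$ by a fixpoint procedure, identify it with $\popti{\sfp{\Cs}}{\etac}{\rho}$ via Theorem~\ref{thm:carfinite}, project to $\opti{\sfp{\Cs}}{\rho}$ via Corollary~\ref{cor:pointgen}, and conclude with Proposition~\ref{prop:breduc} and Lemma~\ref{lem:covsep}. You add a bit more detail on effectiveness (computing $\etac$, computing the trivial elements) than the paper bothers to spell out, but the argument is the same.
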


In practice, there are not many interesting applications of Corollary~\ref{cor:carfinite}. Indeed, the only important class that is the star-free closure of a finite \vari is the original class of star-free languages. Indeed, we have $\sfr = \sfp{\stzer}$ where $\stzer = \{\emptyset,A^*\}$. It was already know that \sfr has decidable covering: we presented an effective characterization of optimal \imprints for \sfr taken from~\cite{pzcovering2} in Example~\ref{exa:sfcov}. This specialized characterization is actually an immediate corollary of Theorem~\ref{thm:carfinite} since the canonical \stzer-morphism is the unique one $\eta_\stzer: A^* \to \{1\}$ into a trivial monoid $\{1\}$.

Nonetheless, Theorem~\ref{thm:carfinite} is an important result. Indeed, we shall use it as a subresult in the proof of our second characterization, which describes optimal \imprints for \sfp{\Gs} when \Gs is a group \vari (more precisely, we shall apply Propositions~\ref{prop:sound} and~\ref{prop:comp} below).

\medskip

We turn to the proof of Theorem~\ref{thm:carfinite}. It involves two independent statements, which correspond respectively to soundness and completeness of the least fixpoint procedure computing \popti{\sfp{\Cs}}{\etac}{\rho} from a \nice \mratm $\rho$. Let us start with soundness, which is simpler to establish and does not require the hypothesis that $\rho$ is \nice.

\begin{prop}[Soundness] \label{prop:sound}
  Consider  a finite \vari \Cs and a \mratm $\rho: 2^{A^*} \to R$. Then, the set $\popti{\sfp{\Cs}}{\etac}{\rho} \subseteq \canc \times R$ is \sfr-saturated for $\etac$ and $\rho$.
\end{prop}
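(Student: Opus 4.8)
The plan is to verify the four defining conditions of \sfr-saturation directly for the set $S = \popti{\sfp{\Cs}}{\etac}{\rho}$. Throughout, I write $\eta = \etac$ and $N = \canc$ for brevity, and I use the alternate description of pointed optimal \imprints via $\eta$-pointed covers given by Lemma~\ref{lem:pcov}. The key structural facts I rely on are: $\sfp{\Cs}$ is a \vari closed under concatenation (Proposition~\ref{prop:vari}); the set $\{\eta\inv(t) \mid t \in N\}$ is the finest partition of $A^*$ into languages of \Cs, so for each $t$ the singleton-type decomposition is compatible with $\eta$; and the behavior of optimal \imprints under union, concatenation, and the \sfr-closure operation (Lemmas~\ref{lem:optunion}, \ref{lem:mratmult}, and the characterization of optimal \imprints for \sfr recalled in Example~\ref{exa:sfcov}, i.e. the fact that $\opti{\sfr}{H,\rho}$ is closed under $r \mapsto r^\omega + r^{\omega+1}$).

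First, for \textbf{trivial elements}: given $w \in A^*$, the singleton $\{w\}$ belongs to $\sfp{\Cs}$ (it is a concatenation of singletons $\{a\}$), and it is an $\sfp{\Cs}$-cover of $\eta\inv(\eta(w)) \cap \{w\} = \{w\}$, hence $\rho(w) \in \opti{\sfp{\Cs}}{\eta\inv(\eta(w)),\rho}$ by Fact~\ref{fct:linclus}, giving $(\eta(w),\rho(w)) \in S$. For \textbf{closure under downset}: this is immediate since each set $\opti{\sfp{\Cs}}{\eta\inv(t),\rho}$ is a $\rho$-\imprint, hence downward closed by construction, and the extended downset on $N \times R$ acts componentwise in the $R$-coordinate. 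For \textbf{closure under multiplication}: suppose $(s,q),(t,r) \in S$, so $q \in \opti{\sfp{\Cs}}{\eta\inv(s),\rho}$ and $r \in \opti{\sfp{\Cs}}{\eta\inv(t),\rho}$. Since $\eta$ is a morphism, $\eta\inv(s)\eta\inv(t) \subseteq \eta\inv(st)$, so by Fact~\ref{fct:linclus} we have $\opti{\sfp{\Cs}}{\eta\inv(st),\rho} \subseteq \opti{\sfp{\Cs}}{\eta\inv(s)\eta\inv(t),\rho}$; wait—that inclusion goes the wrong way, so instead I use Lemma~\ref{lem:mratmult} applied to $H = \eta\inv(s)$ and $L = \eta\inv(t)$: $\opti{\sfp{\Cs}}{\eta\inv(s),\rho}\cdot\opti{\sfp{\Cs}}{\eta\inv(t),\rho} \subseteq \opti{\sfp{\Cs}}{\eta\inv(s)\eta\inv(t),\rho} \subseteq \opti{\sfp{\Cs}}{\eta\inv(st),\rho}$, where the last step is Fact~\ref{fct:linclus} since $\eta\inv(s)\eta\inv(t) \subseteq \eta\inv(st)$ (and $\sfp{\Cs} \subseteq \sfp{\Cs}$). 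Hence $qr \in \opti{\sfp{\Cs}}{\eta\inv(st),\rho}$, i.e. $(st,qr) \in S$.

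The main obstacle is \textbf{\sfr-closure}: given $(e,r) \in S$ with $e \in N$ idempotent, I must show $(e,r^\omega + r^{\omega+1}) \in S$, i.e. $r^\omega + r^{\omega+1} \in \opti{\sfp{\Cs}}{\eta\inv(e),\rho}$. The delicate point is that $\eta\inv(e)$ need not be of the form $K^*$ for a nice prefix code, so one cannot directly invoke the \sfr characterization of Example~\ref{exa:sfcov}. The plan here is to exploit idempotency of $e$: since $\eta(\veps) = 1_N$ is idempotent and $\eta$ recognizes only languages of \Cs, and more importantly $e \cdot e = e$, the language $\eta\inv(e)$ satisfies $\eta\inv(e)\eta\inv(e) \subseteq \eta\inv(e)$, so $(\eta\inv(e))^+ \subseteq \eta\inv(e)$. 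Then, starting from an optimal $\sfp{\Cs}$-cover \Kb of $\eta\inv(e)$ realizing $r$, I consider the cover obtained by adding, for each $K \in \Kb$ contributing $r$, the language $K^* \cap \eta\inv(e)$ or rather arguing via the \sfr-style iteration: because $\sfp{\Cs}$ contains \sfr and is closed under the operations, one can take a language $H \in \sfp{\Cs}$ with $H \subseteq \eta\inv(e)$, $\rho(H) = r$ obtained from an optimal cover, form $H' = (H^+ \cap \eta\inv(e))$ which lies in $\sfp{\Cs}$ and is still $\subseteq \eta\inv(e)$ by the idempotency observation, and use the fact that $\rho$ is \tame together with the \sfr-closure property applied inside the sub-\vari to produce an $\sfp{\Cs}$-cover of $\eta\inv(e)$ witnessing $r^\omega + r^{\omega+1}$. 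Concretely, I expect to mimic the standard argument: there is $K \in \Kb$ with $r \leq \rho(K)$; replace $K$ by the finitely many languages $K \cap \eta\inv(e)$, $KK \cap \eta\inv(e)$, \dots and their iterate $K^* \cap \eta\inv(e) \in \sfp{\Cs}$, whose $\rho$-image dominates $r^\omega + r^{\omega+1}$ since $\rho(K^*) = \rho(\bigcup_n K^n) \geq \rho(K^\omega) + \rho(K^{\omega+1}) = \rho(K)^\omega + \rho(K)^{\omega+1}$ (tameness and finiteness of $R$), and these iterates still cover $\eta\inv(e)$ together with the rest of \Kb. This yields $r^\omega + r^{\omega+1} \in \opti{\sfp{\Cs}}{\eta\inv(e),\rho}$ by optimality and Fact~\ref{fct:linclus}, completing the verification. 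I anticipate the bookkeeping showing the modified family is still an $\sfp{\Cs}$-cover of $\eta\inv(e)$—in particular that no word of $\eta\inv(e)$ is dropped—to be the part requiring the most care, and the idempotency of $e$ is exactly what makes it work.
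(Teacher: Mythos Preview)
Your treatment of the first three properties (trivial elements, downset, multiplication) is essentially the same as the paper's, and correct. The \sfr-closure argument, however, has a genuine gap.

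The issue is directional. To show $r^\omega + r^{\omega+1} \in \opti{\sfp{\Cs}}{\eta\inv(e),\rho}$, you must show that this element lies in $\prin{\rho}{\Kb}$ for an \emph{optimal} $\sfp{\Cs}$-cover $\Kb$ of $\eta\inv(e)$---equivalently, in the imprint of \emph{every} $\sfp{\Cs}$-cover. Your plan instead \emph{constructs} a new cover (by adjoining $K^* \cap \eta\inv(e)$, or $H^+ \cap \eta\inv(e)$, to an existing one). But enlarging or modifying a cover can only \emph{enlarge} its imprint; it tells you nothing about the optimal (i.e., smallest) imprint. Exhibiting some $\sfp{\Cs}$-language whose $\rho$-image dominates $r^\omega+r^{\omega+1}$ is not the point: you need to know that any optimal cover \emph{already} contains such a language. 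Neither ``optimality'' nor Fact~\ref{fct:linclus} bridges this gap in the direction you need. There is also a secondary problem: the $K \in \Kb$ witnessing $r \leq \rho(K)$ need not satisfy $K \subseteq \eta\inv(e)$, so your idempotency observation $(\eta\inv(e))^+\subseteq\eta\inv(e)$ does not apply to $K^+$.

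The paper's argument supplies the missing idea: fix an optimal cover $\Kb$, pass to an $\sfp{\Cs}$-morphism $\alpha$ recognizing every $K\in\Kb$, and use the refinement $\Hb=\{\alpha\inv(x)\mid \alpha\inv(x)\cap\eta\inv(e)\neq\emptyset\}$ to find $H=\alpha\inv(x)$ with $r\leq\rho(H)$ and a witness $u\in H\cap\eta\inv(e)$. The crucial structural input---which your proposal never invokes---is Proposition~\ref{prop:aper} (combined with Lemma~\ref{lem:cmdiv}): since $\etac(u)=e$ is idempotent and $\etac$ is the canonical $\Cs$-morphism, $\alpha(u)^{\omega+1}=\alpha(u)^\omega$. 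This aperiodicity forces $H^p\cup H^{p+1}\subseteq \alpha\inv(x^p)$ for suitable $p$, and since $u^p\in\eta\inv(e)$ lies in some $K\in\Kb$ recognized by $\alpha$, one gets $\alpha\inv(x^p)\subseteq K$, hence $r^\omega+r^{\omega+1}\leq\rho(K)$. The aperiodicity step is what collapses the two powers into a single $\alpha$-class, and without it there is no reason any single language of an arbitrary optimal cover should dominate $r^\omega+r^{\omega+1}$.
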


\begin{proof}
  Recall that \sfp{\Cs} is a \vari by Proposition~\ref{prop:vari}. There are four properties to verify. We start with the first three, which are standard.  For the trivial elements, consider $w \in A^*$ and let \Kb be an optimal \sfp{\Cs}-cover of $\etac\inv(\etac(w))$. Since $w \in \etac\inv(\etac(w))$, there exists $K \in \Kb$ such that $w\in K$. Therefore, $\rho(w) \leq \rho(K)$, which yields $\rho(w) \in \prin{\rho}{\Kb} = \opti{\sfp{\Cs}}{\etac\inv(\etac(w)),\rho}$. By definition, this implies that $(\etac(w),\rho(w)) \in \popti{\sfp{\Cs}}{\etac}{\rho}$. Closure under downset is immediate by definition of \imprints. Finally, for closure under multiplication, consider $(s_1,r_1),(s_2,r_2) \in \popti{\sfp{\Cs}}{\etac}{\rho}$. We have $r_i \in \opti{\sfp{\Cs}}{\etac\inv(s_i),\rho}$ for $i = 1,2$. Since \sfp{\Cs} is a \vari, Lemma~\ref{lem:mratmult} yields $r_1r_2 \in \opti{\sfp{\Cs}}{\etac\inv(s_1)\etac\inv(s_2),\rho}$. Clearly, $\etac\inv(s_1)\etac\inv(s_2) \subseteq \etac\inv(s_1s_2)$. Thus, Fact~\ref{fct:linclus} yields $r_1r_2 \in \opti{\sfp{\Cs}}{\etac\inv(s_1s_2),\rho}$. By definition, this exactly says that $(s_1s_2,r_1r_2) \in  \popti{\sfp{\Cs}}{\etac}{\rho}$.

  It remains to handle \sfr-closure.  Let $(e,r) \in \popti{\sfp{\Cs}}{\etac}{\rho}$ be such that $e \in \canc$ is idempotent. We show that $(e,r^\omega + r^{\omega+1}) \in \popti{\sfp{\Cs}}{\etac}{\rho}$. Let \Kb be an optimal \sfp{\Cs}-cover of $\etac\inv(e)$. By definition, it now suffices to prove that $r^\omega + r^{\omega+1} \in \prin{\rho}{\Kb}$. Since \sfp{\Cs} is a \vari, Proposition~\ref{prop:genocm} yields an \sfp{\Cs}-morphism $\alpha: A^* \to M$  recognizing every $K \in \Kb$. Let \Hb be the set of all languages $\alpha\inv(x)$ for $x \in M$ such that $\alpha\inv(x) \cap \etac\inv(e)\neq \emptyset$.  By definition, \Hb is an \sfp{\Cs}-cover of $\etac\inv(e)$. Hence, since $(e,r) \in \popti{\sfp{\Cs}}{\etac}{\rho}$, we have $r \in \prin{\rho}{\Hb}$ which yields $H \in \Hb$ such that $r \leq \rho(H)$. By definition of $\Hb$, we get $x \in M$ such that $H = \alpha\inv(x)$ and some word $u \in H \cap \etac\inv(e)$. Let $p = \omega(M) \times \omega(R)$ and $L = \alpha\inv(x^p)$. We claim that:
  \begin{equation} \label{eq:sound}
    H^p \cup H^{p+1} \subseteq L.
  \end{equation}
  Let us first explain why this implies that $r^\omega + r^{\omega+1} \in \prin{\rho}{\Kb}$. Since $u^p \in \etac\inv(e)$ (recall that $e \in \canc$ is idempotent) and \Kb is a cover of $\etac\inv(e)$, we get $K \in \Kb$ such that $u^p \in K$. Moreover, since $u^p \in L = \alpha\inv(x^p)$ and $K$ is recognized by $\alpha$ (by definition), it follows that $L \subseteq K$. Hence, since $r \leq \rho(H)$ and $p = \omega(M) \times \omega(R)$, it follows from~\eqref{eq:sound} that $r^\omega + r^{\omega+1} \leq \rho(K)$ which yields $r^\omega + r^{\omega+1} \in \prin{\rho}{\Kb}$, as desired.

  It remains to prove~\eqref{eq:sound}. Let $w \in H^p \cup H^{p+1}$. We need to prove that $\alpha(w) = x^p$. This is immediate when $w \in H^p$ since $H = \alpha\inv(x)$. Assume now that $w \in H^{p+1}$. In that case, we have $\alpha(w) = x^{p+1}$. Hence, it suffices to prove that $x^{p+1} = x^p$. By hypothesis $\etac(u) = e$ is an idempotent of \canc. Since \etac is the canonical \Cs-morphism, it follows from Lemma~\ref{lem:cmdiv} that the image of $u$ under \emph{any} \Cs-morphism is an idempotent. Hence, since $\alpha$ is an \sfp{\Cs}-morphism, Proposition~\ref{prop:aper} implies that $(\alpha(u))^{\omega+1} = (\alpha(u))^{\omega}$. Since $x =\alpha(u)$ and $p = \omega(M) \times \omega(R)$, it follows that $x^{p+1} = x^p$, as~desired.
\end{proof}

We now turn to completeness. As usual, this is the most difficult part of the proof. Note that again, we shall rely on Theorem~\ref{thm:bsdcar}: we build our languages in \sfp{\Cs} using the operations available in the definition of \bsdp{\Cs}.

\begin{prop}[Completeness] \label{prop:comp}
  Let \Cs be a \vari, $\eta: A^* \to N$ be a \Cs-morphism, $\rho: 2^{A^*} \to R$ be a \emph{\nice} \mratm and $S \subseteq N \times R$ be an \sfr-saturated set for $\eta$ and $\rho$. There exists an $\eta$-pointed \sfp{\Cs}-cover \Kb of $A^*$ such that $\pprin{\eta,\rho}{\Kb} \subseteq S$.
\end{prop}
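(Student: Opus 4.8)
The plan is to build the $\eta$-pointed \sfp{\Cs}-cover \Kb of $A^*$ by induction, following the same blueprint as the proof of Theorem~\ref{thm:sfcarac}, but carrying along the extra data of an \sfr-saturated set $S$ and using it to control the $\rho$-images of the languages we construct. The central device will be a notion generalizing "$s$-safe" from that proof: given $s \in M$ (for a suitable \sfp{\Cs}-morphism $\alpha: A^* \to M$ that we will obtain from Proposition~\ref{prop:genocm}) and an element $r \in R$, I would say a language $K$ is $(s,r)$-controlled when $K$ is $s$-safe in the sense of Theorem~\ref{thm:sfcarac} and moreover $(\eta(w), \rho(w)) \in S$ for all $w \in K$ (roughly; the precise formulation may need to track $r$ as a multiplicative buffer on the $R$-side, mirroring the role of $s$ on the $M$-side). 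The goal reduces to showing: for every prefix code $P$ of bounded synchronization delay admitting a suitable controlled \bsdp{\Cs}-partition \Hb, and for every $s \in M$ and $r \in R$, there is a controlled \bsdp{\Cs}-partition of $P^*$; then applying this with $P = A$, $s = 1_M$, $r = 1_R$ yields a partition of $A^*$ all of whose blocks $K$ are $1_M$-safe (so $\rho(K)$ is a single element-sum, in fact $\alpha(K)$ is a singleton) and satisfy $(\eta(w),\rho(w)) \in S$ for $w \in K$; pairing each block with the common $\eta$-value of its words gives \Kb, and $\pprin{\eta,\rho}{\Kb} \subseteq S$ follows because $S$ is downset-closed.

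The induction would run on the same three parameters as in Theorem~\ref{thm:sfcarac} (size of $\alpha(P^+)$, size of \Hb, size of $s\cdot\alpha(P^*)$), with the same split into a base case ($s$ is \Hb-stable) and an inductive step, and the same two subcases in the step. The new content is purely on the $R$-side: wherever Theorem~\ref{thm:sfcarac} concatenates/unions languages, I must check the corresponding $\rho$-images land in $S$, using closure of $S$ under multiplication (for unambiguous concatenation, via $\rho$ being \tame so $\rho(K_1 K_2) = \rho(K_1)\rho(K_2)$), under downset (for passing to sublanguages and for disjoint union, noting $\rho(K_1 \uplus K_2) = \rho(K_1) + \rho(K_2)$ and that in a \rata $a + b \leq$ any upper bound, but here more simply the blocks are kept disjoint so we never need to add), and crucially under \sfr-closure in exactly the one spot where Theorem~\ref{thm:sfcarac} invokes aperiodicity of \Cs-orbits. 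That spot is Fact~\ref{fct:sfclos:icarbase}/the base-case verification: there, an idempotent of $\alpha(P^*)$ is shown to act trivially; here the parallel statement will be that a power $r^\omega$ with $r = \rho(\text{some block})$ can be absorbed, and this is precisely what property~(4) of \sfr-saturation ($(e,r^\omega + r^{\omega+1}) \in S$ for idempotent $e$) is designed to supply. Since $\rho$ is \nice, $\rho(K) = \sum_{w\in K}\rho_*(w)$ is a genuine finite sum, which is what lets "$(\eta(w),\rho(w))\in S$ for all $w\in K$" be equivalent to "$(\eta(K),\rho(K)) \in S$" when $\eta,\alpha$ are constant on $K$ — this equivalence is where niceness is used, and it is essential.

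The main obstacle I expect is getting the bookkeeping of the two buffers — $s \in M$ on the algebraic side and the multiplicative context in $R$ on the rating side — to stay synchronized through the inductive step, particularly in Subcase~1, where Theorem~\ref{thm:sfcarac} replaces $s$ by $sr_U t$ and re-invokes the lemma. Here one must simultaneously replace the $R$-buffer by the matching product $\rho(U)t_R$ (for $t_R$ the common $\rho$-value of the $H$-block) and verify that the *strict* decrease of the third parameter still holds while the new controlled partition produced by induction genuinely witnesses membership of the concatenated blocks in $S$. A secondary subtlety: unlike Theorem~\ref{thm:sfcarac}, where $1_M$-safety forces $\alpha$ to be constant on a block, here I also want $\eta$ constant on each final block, which is why the base case intersects with $\eta\inv(t)$ — I would need to check the \Cs-morphism $\eta$ supplied in the hypothesis is compatible with (i.e. can be merged into, via Proposition~\ref{prop:genocm}) the \sfp{\Cs}-morphism $\alpha$ obtained from the cover, so that the final refinement by $\eta$-classes keeps all blocks in \bsdp{\Cs}. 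Once that compatibility is arranged, the argument is a faithful instrumentation of the Theorem~\ref{thm:sfcarac} proof with $S$-membership checks slotted in at each constructor.
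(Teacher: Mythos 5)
Your high-level instincts are sound: transplant the three-parameter induction from Theorem~\ref{thm:sfcarac}, let $\sfr$-closure of $S$ take over from aperiodicity at the idempotent step, intersect with $\eta\inv(t)$ in the base case, and use niceness of $\rho$ to tame infinite blocks. But there is a genuine circularity at the centre of the proposal. Your ``$(s,r)$-controlled'' notion is built on $s$-safety for an auxiliary $\sfp{\Cs}$-morphism $\alpha:A^*\to M$ that you say is ``obtained from the cover''; yet the cover $\Kb$ is exactly the object you are trying to construct, and unlike in Theorem~\ref{thm:sfcarac} there is no input language whose syntactic morphism could supply $\alpha$. Proposition~\ref{prop:genocm} only produces a morphism once finitely many $\sfp{\Cs}$-languages have already been fixed; at the outset you have none. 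The natural candidate $(\eta,\rho_*):A^*\to N\times(R,\cdot)$ is a monoid morphism, but its level sets $\eta\inv(t)\cap\rho_*\inv(r)$ need not lie in $\sfp{\Cs}$, so it is not an $\sfp{\Cs}$-morphism and the $s$-safety machinery does not transfer.

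The paper's proof dispenses with any auxiliary morphism and works entirely inside $Q=N\times R$. Given a prefix code $P$ with an $\eta$-pointed partition $\Hb$, it defines the subset of $Q$ consisting of all pairs $(t,\rho(H))$ where $(t,H)$ ranges over $\Hb$-unions (finite unions of products of blocks of $\Hb$ sharing the $\eta$-image $t$), together with a strict variant; the inner lemma then produces, from a buffer $(s,r)\in S$, an $\eta$-pointed $\sfp{\Cs}$-partition $\Kb$ of $P^*$ every element $(t,K)$ of which satisfies \emph{both} that $(t,\rho(K))$ lies in the $\Hb$-union set \emph{and} that $(st,r\rho(K))\in S$. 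Your controlled notion captures only the second half; the first half, which you do not anticipate, is what makes the three induction parameters (size of the strict $\Hb$-union set, $|\Hb|$, size of the buffer times the $\Hb$-union set) well defined in the absence of any $\alpha$, and it is what is preserved through the concatenations of the inductive step. Niceness is also used differently than you describe: $S$ is not closed under addition, so you cannot pass from ``$(t,\rho(w))\in S$ for every $w\in K$'' to ``$(t,\rho(K))\in S$''; instead, niceness shows that $\rho(P^*\cap\eta\inv(t))$ equals $\rho$ of a \emph{finite} union of $\Hb$-products, hence lies in the $\Hb$-union set, and membership in $S$ is argued separately via the stability hypothesis and $\sfr$-closure. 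Repairing the proposal therefore means replacing the phantom $\alpha$ by these $\Hb$-union sets and redesigning the invariant and the induction around them --- a substantive change, not a bookkeeping one.
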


\begin{proof}
  We build the $\eta$-pointed \sfp{\Cs}-cover \Kb using induction. Let us start with some terminology. A first point is that we build particular $\eta$-pointed covers. Let $P \subseteq A^*$ be an arbitrary language. An \emph{$\eta$-pointed \sfp{\Cs}-partition} of $P$ is a \emph{finite} set $\Hb\subseteq N\times 2^{A^*}$ such that for every $(t,H) \in \Hb$, we have $H \in \sfp{\Cs}$ and for every $t \in N$, the set $\{H \mid (t,H) \in \Hb\}$ is a \emph{partition} of $P \cap \eta\inv(t)$ (this implies that $H \subseteq \eta\inv(t)$ for every $(t,H) \in \Hb$). Note that in particular, \Hb is an $\eta$-pointed \sfp{\Cs}-cover of~$P$.

  If \Hb is an $\eta$-pointed \sfp{\Cs}-partition, an \emph{\Hb-product} is a pair $(t_1 \cdots t_n,H_1 \cdots H_n)$ for some $n \in \nat$ where $(t_i,H_i) \in \Hb$ for every $i \leq n$. In particular, the pair $(1_N,\{\veps\})$ is an \Hb-product: this is the case $n = 0$ in the definition. When $n \geq 1$, we speak of \emph{strict \Hb-products}. Finally, an \emph{\Hb-union} (resp. strict \emph{\Hb-union}) is a pair $(t,G_1 \cup \cdots \cup G_m)$ for some $t \in N$ and $m \in \nat$ such that $(t,G_j)$ is an \Hb-product (resp. strict \Hb-product) for every $j \leq m$. Note that $(t,\emptyset)$ is a strict \Hb-union for every $t \in \nat$: this corresponds to the case $m=0$ in the definition.

  \newcommand{\csatp}[1]{\ensuremath{Q^+_{#1}}\xspace}
  \newcommand{\csats}[1]{\ensuremath{Q^*_{#1}}\xspace}

  Additionally, we write $Q = N \times R$. Observe that $Q$ is a monoid for the componentwise multiplication. Moreover, $S \subseteq Q$ by definition. Finally, for every $\eta$-pointed \sfp{\Cs}-partition of \Hb, we associate two subsets of $Q$. The definitions are as follows:
  \begin{itemize}
    \item We let $\csatp{\Hb} \subseteq Q$ be the set of all elements $(t,\rho(H)) \in Q$ where $(t,H)$ is a \emph{strict} \Hb-union.
    \item We let $\csats{\Hb} \subseteq Q$ be the set of all elements $(t,\rho(H)) \in Q$ where $(t,H)$ is an \Hb-union.
  \end{itemize}
  Clearly, $\csatp{\Hb} \subseteq \csats{\Hb}$. Moreover, we have the following simple fact which we shall use implicitly throughout the proof.

  \begin{fct} \label{fct:csatm}
    Let $P\subseteq A^*$ and \Hb be an $\eta$-pointed \sfp{\Cs}-partition of $P$. For all $(t_1,q_1),(t_2,q_2) \in \csats{\Hb}$, we have $(t_1t_2,q_1q_2) \in \csats{\Hb}$.
  \end{fct}

  \begin{proof}
    By definition, there are two \Hb-unions $(t_1,H_1),(t_2,H_2) \in \Hb$ such that $q_i = \rho(H_i)$ for $i = 1,2$. One may verify from the definition that $(t_1t_2,H_1H_2)$ remains an \Hb-union since language distributes over union. Since $q_1q_2 = \rho(H_1H_2)$, it follows that $(t_1t_2,q_1q_2) \in \csats{\Hb}$.
  \end{proof}

  We are ready to prove Proposition~\ref{prop:comp}. It is based on the following statement.

  \begin{lem} \label{lem:pumping}
    Let $P \subseteq A^+$ be a prefix code with bounded synchronization delay and \Hb be an $\eta$-pointed \sfp{\Cs}-partition of~$P$ such that $(t,\rho(H)) \in S$ for every $(t,H) \in \Hb$. Then, for every $(s,r) \in S$, there exists an $\eta$-pointed \sfp{\Cs}-partition \Kb of $P^*$ such that,
    \begin{equation} \label{eq:sfclos:covergoal}
      \text{for every $(t,K) \in \Kb$, we have $(t,\rho(K)) \in \csats{\Hb}$ and $(st,r\rho(K)) \in S$.}
    \end{equation}
  \end{lem}

  Before we prove Lemma~\ref{lem:pumping}, let us first complete the main argument. We have to build an $\eta$-pointed \sfp{\Cs}-cover \Kb of $A^*$ such that $\pprin{\eta,\rho}{\Kb} \subseteq S$. Observe that $A \subseteq A^+$ is a prefix code with bounded synchronization delay. Moreover, $\Hb = \{(\eta(a),\{a\}) \mid a \in A\}$ is an $\eta$-pointed \sfp{\Cs}-partition of $A$ and we have $(\eta(a),\rho(a)) \in S$ for every $a \in A$ since $S$ is \sfr-saturated (these are trivial elements). Finally, $(1_N,1_R) \in S$ (again, this is a trivial element). Therefore, we may apply Lemma~\ref{lem:pumping} in the case when $P = A$ and $(s,r) = (1_N,1_R) \in S$. This yields an $\eta$-pointed \sfp{\Cs}-partition \Kb of $A^*$ satisfying~\eqref{eq:sfclos:covergoal}. In particular, $(t,\rho(K)) \in S$ for every $(t,K) \in \Kb$. Therefore, since $S$ is \sfr-saturated, closure under downset yields $\pprin{\eta,\rho}{\Kb} \subseteq S$, which completes the proof.

  \smallskip

  We now prove Lemma~\ref{lem:pumping}. Let $P \subseteq A^+$ be a prefix code with bounded synchronization delay and \Hb be an $\eta$-pointed \sfp{\Cs}-partition of $P$ such that $(t,\rho(H)) \in S$ for all $(t,H) \in \Hb$. Finally, let $(s,r) \in S$. We need to build an $\eta$-pointed \sfp{\Cs}-partition \Kb of $P^*$ satisfying~\eqref{eq:sfclos:covergoal}. We proceed by induction on the three following parameters, listed by order of importance:
  \begin{enumerate}
    \item The size of the set $\csatp{\Hb} \subseteq Q$,
    \item The size of \Hb,
    \item The size of the set $(s,r) \cdot \csats{\Hb} \subseteq Q$.
  \end{enumerate}

  We distinguish two main cases depending on the following property. We say that \emph{$(s,r)$ is \Hb-stable} when the following holds,
  \begin{equation} \label{eq:sfclos:ratstable}
    \text{for every $(t,H) \in \Hb$,} \quad (s,r) \cdot \csats{\Hb} = (s,r) \cdot \csats{\Hb} \cdot (t,\rho(H)).
  \end{equation}
  We first consider the case when $(s,r)$ is \Hb-stable. This is the base case: we construct \Kb directly. Then, we handle the case when $(s,r)$ is not \Hb-stable using induction on our three parameters.

  \smallskip
  \noindent {\bf Base case: $(s,r)$ is \Hb-stable.} In this case, we let $\Kb = \{(t,P^* \cap \eta\inv(t)) \mid t \in N\}$. Let us first verify that this is an $\eta$-pointed \sfp{\Cs}-partition \Kb of $P^*$. It is immediate that $\{P^* \cap \eta\inv(t)\}$ is a partition of $P^* \cap \eta\inv(t)$ for every $t \in N$. Moreover, we have $P^* \cap \eta\inv(t) \in \sfp{\Cs}$ for every $t \in N$. Indeed, we have $P \in\sfp{\Cs}$: it is the disjoint union of all languages involved in the $\eta$-pointed \sfp{\Cs}-partition \Hb of $P$. Since $\sfp{\Cs} = \bsdp{\Cs}$ by Theorem~\ref{thm:bsdcar}, we obtain $P^* \cap \eta\inv(t) \in \sfp{\Cs}$ since $P$ is a prefix code with bounded synchronization delay and $\eta\inv(t) \in \Cs$ (recall that $\eta$ is a \Cs-morphism by hypothesis). It remains to prove that \Kb satisfies~\eqref{eq:sfclos:covergoal}: for every $(t,K) \in \Kb$, we show that $(t,\rho(K)) \in \csats{\Hb}$ and $(st,r\rho(K)) \in S$. We start with the former property (this is where we use the hypothesis that $\rho$ is~\nice).

  \begin{fct} \label{fct:images}
    For every $(t,K) \in \Kb$, we have $(t,\rho(K)) \in \csats{\Hb}$ .
  \end{fct}

  \begin{proof}
    By definition of \Kb, we have $K = P^* \cap \eta\inv(t)$. Since \Hb is an $\eta$-pointed partition of $P$, one may verify that $P^* \cap \eta\inv(t)$ is the (infinite) union of \emph{all} \Hb-products $(t',H)$ such that $t' = t$. Since $\rho$ is \nice, it follows that there exists \emph{finitely many} \Hb-products $(t,H_1),\dots,(t,H_\ell)$ such that $\rho(K) = \rho(H_1) + \cdots + \rho(H_\ell) = \rho(H_1 \cup \cdots \cup H_\ell)$. Clearly, $(t,H_1 \cup \cdots \cup H_\ell)$ is an \Hb-union and it follows that $(t,\rho(K)) \in \csats{\Hb}$, as desired.
  \end{proof}

  It remains to show that $(st,r\rho(K)) \in S$ for every $K \in\Kb$. We prove that for every $(t,q) \in \csats{\Hb}$, we have $(st,rq)  \in S$. In view of Fact~\ref{fct:images}, this yields the desired result. First, we use the hypothesis that $(s,r)$ is \Hb-stable to prove the following fact.

  \begin{fct}\label{fct:basecase}
    Let $(e,H)$ be an \Hb-product such that $(e,\rho(H)) \in Q$ is a pair of idempotents. For every $(t,q) \in \csats{\Hb}$, we have $(ste,rq\rho(H)) = (st,rq)$.
  \end{fct}

  \begin{proof}
    We first use the hypothesis that $(s,r)$ is $\Hb$-stable to prove the following preliminary result which holds regardless of whether $(e,\rho(H))$ is a pair of idempotents or not:
    \begin{equation} \label{eq:sfclos:auxproof}
      \text{there exists $(x,y) \in \csats{\Hb}$ such that $(sxe,ry\rho(H)) = (st,rq)$}.
    \end{equation}
    Since $(e,H)$ is an \Hb-product, we can find elements $(t'_1,H_1),\dots,(t'_n,H_n)$ of $\Hb$ such that $(e,H) = (t'_1 \cdots t'_n,H_1 \cdots H_n)$. We proceed by induction on $n$. If $n = 0$, then $e = 1_N$ and $H = \{\veps\}$.  It suffices to choose $(x,y) = (t,q) \in \csats{\Hb}$. We now assume that $n \geq 1$. By induction, we get $(x',y') \in \csats{\Hb}$ such that $(sx't'_2 \cdots t'_n,ry'\rho(H_2 \cdots H_n)) = (st,rq)$. Since $(s,r)$ is $\Hb$-stable, Property \eqref{eq:sfclos:ratstable} yields $(x,y) \in \csats{\Hb}$ such that $(sx',ry') =  (sxt'_1,ry\rho(H_1))$. Altogether, it follows that $(sxe,ry\rho(H)) = (st,rq)$. This concludes the proof of~\eqref{eq:sfclos:auxproof}.

    We use~\eqref{eq:sfclos:auxproof} to conclude Fact~\ref{fct:basecase}. Indeed, since $(sxe,ry\rho(H)) = (st,rq)$ for $(x,y) \in \csats{\Hb}$, if $(e,\rho(H))$ is a pair of multiplicative idempotents, then we obtain $(ste,rq\rho(H)) = (st,rq)$.
  \end{proof}

  We are ready to prove that $(st,rq)  \in S$ for every $(t,q) \in \csats{\Hb}$. We first treat the special case when~$t$ is an idempotent of $N$. Then, we reuse this special case to treat the general one. Assume that $t$ is an idempotent $e \in E(N)$. Hence, $(e,q)\in\csats{\Hb}$ and we need to prove that $(se,rq) \in S$. By definition, there are finitely many \Hb-products $(e,H_1),\dots,(e,H_\ell)$ such that $q = \rho(H_1) + \cdots + \rho(H_\ell)$. Consider an index $i \leq \ell$. Since $(e,H_i)$ is an \Hb-product and we know that $(t',\rho(H')) \in S$ for every $(t',H') \in \Hb$, it follows from closure under multiplication for $S$ (recall that $S$ is \sfr-saturated) that $(e,\rho(H_i)) \in S$. Since $e$ is idempotent, it then follows from \sfr-closure that $(e,(\rho(H_i))^\omega + (\rho(H_i))^{\omega+1}) \in S$. Since this holds for all $i \leq \ell$, $e$ is idempotent and $(s,r) \in S$, closure under multiplication yields,
  \[
    \Bigl(se,\quad r\prod_{1 \leq i \leq \ell} \left((\rho(H_i))^\omega + (\rho(H_i))^{\omega+1}\right)\Bigr) \in S.
  \]
  Let $k = \omega(R)$. For every $i \leq \ell$, we have $(e,(\rho(H_i))^\omega) = (e,\rho(H_i^k))$ and it is clear that $(e,H_i^k)$ is an \Hb-product since this is the case for $(e,H_i)$. Therefore, since $(e,(\rho(H_i))^\omega)$ is a pair of idempotents, Fact~\ref{fct:basecase} implies that $(st'e,rq'(\rho(H_i))^\omega)) = (st',rq')$ for every $(t',q') \in \csats{\Hb}$. This yields:
  \[
    \Bigl(se,\quad r\prod_{1 \leq i \leq \ell} \left((\rho(H_i))^\omega + (\rho(H_i))^{\omega+1}\right)\Bigr) = \Bigl(se,\quad r\prod_{1 \leq i \leq \ell} \left(1_R + \rho(H_i)\right)\Bigl),
  \]
  Therefore, $(se, r\prod_{1 \leq i \leq \ell} \left(1_R + \rho(H_i)\right))\in S$.
  Note that $q = \rho(H_1) + \cdots + \rho(H_\ell) \leq \prod_{1 \leq i \leq \ell} \left(1_R + \rho(H_i)\right)$. By closure under downset for $S$, we get $(se,rq) \in S$, which concludes the case when $t$ is~idempotent.

  We now consider an arbitrary element $(t,q) \in \csats{\Hb}$ (\emph{i.e.}, $t \in N$ need not be idempotent) and show that $(st,rq) \in S$. By definition of \csats{\Hb} there are finitely many \Hb-products $(t,H_1), \dots (t,H_n)$ such that $q = \rho(H_1) + \cdots \rho(H_n)$. Since every $(t',H') \in \Hb$ satisfies $(t',\rho(H')) \in S$ by hypothesis and $S$ is closed under multiplication, we get $(t,\rho(H_i)) \in S$ for every $i \leq n$. Moreover, there exists a number $k \geq 1$ such that $(t^k,(\rho(H_i))^k) \in \csats{\Hb}$ is a pair of idempotents for each $i \leq n$. In particular, $t^k$ is an idempotent of $N$. Clearly, we have $(t^k,q(\rho(H_1))^{k-1}) \in \csats{\Hb}$. Since $t^k \in N$ is an idempotent, we obtain from the above special case that $(st^k,rq(\rho(H_1))^{k-1}) \in S$. Since we also have $(t,\rho(H_1)) \in S$, it then follows from closure under multiplication that $(st^{k+1},rq(\rho(H_1))^{k}) \in S$. Finally, since $(t^k,(\rho(H_1))^k) $ is a pair of idempotents, we obtain from Fact~\ref{fct:basecase} that $(st,rq) = (st^{k+1},rq(\rho(H_1))^{k}) \in S$. This concludes the proof for the base case.

  \medskip
  \noindent
  {\bf Inductive step: $(s,r)$ is not $\Hb$-stable.} Our hypothesis yields a pair  $(t,H) \in \Hb$ such that the following \emph{strict} inclusion holds:
  \begin{equation} \label{eq:sfclos:coverind}
    (s,r) \cdot \csats{\Hb} \cdot (t,\rho(H)) \subsetneq (s,r) \cdot \csats{\Hb}.
  \end{equation}
  We fix this pair $(t,H) \in \Hb$ for the remainder of the proof. First, we use induction on our second parameter in Lemma~\ref{lem:pumping} to prove the following fact.

  \begin{fct} \label{fct:covalphind}
    There exists an $\eta$-pointed \sfp{\Cs}-partition \Ub of $(P \setminus H)^*$ such that $(x,\rho(U)) \in \csats{\Hb} \cap S$ for every $(x,U) \in \Ub$.
  \end{fct}

  \begin{proof}
    Since $P$ is a prefix code with bounded synchronization delay, Fact~\ref{fct:sprefsub} implies that so is $P \setminus H$. We want to apply induction in Lemma~\ref{lem:pumping} for the case when $P$ has been replaced by $P \setminus H$. Let $\Gb = \Hb \setminus \{(t,H)\}$. By hypothesis on \Hb, one may verify that $\Gb$ is an $\eta$-pointed \sfp{\Cs}-partition of $P \setminus H$ and that $(t',\rho(G)) \in S$ for every $(t',G) \in \Gb$. Finally, it is immediate that $\csatp{\Gb} \subseteq \csatp{\Hb}$ (our first induction has not increased) and $\Gb \subsetneq \Hb$ (our second induction parameter has decreased). Hence, we may apply Lemma~\ref{lem:pumping} in the case when $P,\Hb$ and $(s,r) \in S$ have been replaced by $P \setminus H,\Gb$ and $(1_N,1_R) \in S$. This yields an $\eta$-pointed \sfp{\Cs}-partition \Ub of $(P \setminus H)^*$ such that $(x,\rho(U)) \in \csats{\Hb} \cap S$ for every $(x,U) \in \Ub$.
  \end{proof}

  We fix the $\eta$-pointed \sfp{\Cs}-partition \Ub of $(P \setminus H)^*$ given by Fact~\ref{fct:covalphind} for the remainder of the proof. We distinguish two subcases. Since $(t,H) \in \Hb$, one may verify from the definitions of \csatp{\Hb} and \csats{\Hb} that $\csats{\Hb} \cdot (t,\rho(H)) \subseteq \csatp{\Hb}$. We consider two subcases depending on whether this inclusion is strict.

  \medskip\noindent{\bf Subcase~1: we have the equality $\csats{\Hb} \cdot (t,\rho(H)) = \csatp{\Hb}$.} We use the following fact, which is proved using our hypotheses and induction on our third parameter (\emph{i.e.}, the size of $(s,r) \cdot \csats{\Hb}$).

  \begin{fact} \label{fct:sfclos:covbufind}
    For every $(x,U) \in \Ub$, there exists an $\eta$-pointed \sfp{\Cs}-partition $\Wb_{x,U}$ of $P^*$ such that $(y,\rho(W)) \in \csats{\Hb}$ and $(sxty,r\rho(UHW)) \in S$ for every $(y,W) \in \Wb_{x,U}$.
  \end{fact}

  \begin{proof}
    We fix $(x,U) \in \Ub$ for the proof. By definition of \Ub in Fact~\ref{fct:covalphind}, $(x,\rho(U)) \in \csats{\Hb}$ and $(x,\rho(U)) \in S$. Hence, since $(s,r),(t,\rho(H))\in S$ by hypothesis and $S$ is closed under multiplication, we get $(sxt,r\rho(UH)) \in S$. Moreover, it is clear that we have the inclusions $(sxt,r\rho(UH))\cdot \csats{\Hb} \subseteq (s,r) \cdot \csats{\Hb} \cdot (t,\rho(H)) \cdot \csats{\Hb} \subseteq (s,r) \cdot \csatp{\Hb}$. Combined with our hypothesis in Subcase~1 (\emph{i.e.}, $\csats{\Hb} \cdot (t,\rho(H)) = \csatp{\Hb}$), this yields $(sxt,r\rho(UH))\cdot \csats{\Hb} \subseteq (s,r) \cdot \csats{\Hb} \cdot (t,\rho(H))$. We may then use~\eqref{eq:sfclos:coverind} (\emph{i.e.}, the inclusion $(s,r) \cdot \csats{\Hb} \cdot (t,\rho(H)) \subsetneq (s,r) \cdot \csats{\Hb}$) to get the \textbf{strict} inclusion $(sxt,r\rho(UH)) \cdot \csats{\Hb} \subseteq (s,r) \cdot \csats{\Hb}$. Consequently, induction on our third parameter (\emph{i.e.}, the size of $(s,r) \cdot \csats{\Hb}$) in Lemma~\ref{lem:pumping} (we consider the case when $(s,r) \in S$ has been replaced by $(sxt,r\rho(UH))  \in S$) yields the desired $\eta$-pointed \sfp{\Cs}-partition $\Wb_{x,U}$ of $P^*$. Note that here, our first two parameters have not increased as they only depend on \Hb, which remains unchanged.
  \end{proof}

  It remains to use Fact~\ref{fct:sfclos:covbufind} to conclude the proof of Subcase~1. We build our \sfp{\Cs}-partition \Kb of $P^*$ as follows,
  \[
    \Kb = \Ub \cup \bigcup_{(x,U) \in \Ub} \{(xty,UHW) \mid (y,W) \in \Wb_{x,U}\}.
  \]
  Let us show that \Kb is an $\eta$-pointed \sfp{\Cs}-partition of $P^*$ satisfying~\eqref{eq:sfclos:covergoal}. First, observe that for every $(t',K) \in \Kb$, we have $K \in \sfp{\Cs}$. This is immediate by hypothesis on \Ub when $(t',K) \in \Ub$. Otherwise, $K = UHW$ for $(x,U) \in \Ub$ and $(y,W) \in \Wb_{x,U}$ and $U,H,W \in \sfp{\Cs}$. Hence, $K \in \sfp{\Cs}$ since \sfp{\Cs} is closed under concatenation.

  That \Kb is an $\eta$-pointed \sfp{\Cs}-partition of $P^*$ is also simple to verify since $P$ is a prefix code, $H \subseteq P$ and $\Ub,\Wb_{x,U}$ are $\eta$-pointed partitions of $(P \setminus H)^*$ and $P^*$ respectively. Each word $w \in P^*$ admits a \emph{unique} decomposition $w = w_1 \cdots w_n$ with $w_1,\dots,w_n \in P$. We partition $P^*$ by looking at the leftmost factor belonging to $H$ (if it exists).

  It remains to prove that~\eqref{eq:sfclos:covergoal} holds. Consider $(t',K) \in \Kb$, we show that $(t',\rho(K)) \in \csats{\Hb}$ and $(st',r\rho(K)) \in S$. If $(t',K) \in \Ub$, this is immediate by definition of \Ub in Fact~\ref{fct:covalphind}. Otherwise, $(t',K) = (wty,UHW)$ with $(x,U) \in \Ub$ and $(y,W) \in \Wb_{x,U}$. By definition of \Ub and $\Wb_{x,U}$, we have $(x,\rho(U)),(y,\rho(W)) \in \csats{\Hb}$. Thus, $(xty,\rho(UHW)) \in \csats{\Hb}$. Moreover, $(sxty,r\rho(UHW)) \in S$ by definition of $\Wb_{x,U}$ in Fact~\ref{fct:sfclos:covbufind}. This concludes the first subcase.

  \medskip\noindent{\bf Subcase~2: we have the strict inclusion $\csats{\Hb} \cdot (t,\rho(H)) \subsetneq \csatp{\Hb}$.} Recall that our objective is to construct an $\eta$-pointed \sfp{\Cs}-partition \Kb of $P^*$ satisfying~\eqref{eq:sfclos:covergoal}. We begin by giving a brief overview of the construction. Consider a word $w \in P^*$. Since $P$ is a prefix code, $w$ admits a unique decomposition as a concatenation of factors in $P$. We may look at the rightmost factor in $H \subseteq P$ to uniquely decompose $w$ in two parts (each of them possibly empty): a prefix in $((P \setminus H)^*H)^*$ and a suffix in $(P \setminus H)^*$. We use induction to construct $\eta$-pointed \sfp{\Cs}-partitions of the sets of possible prefixes and suffixes. Then, we combine them to construct an $\eta$-pointed \sfp{\Cs}-partition of the whole set $P^*$. Actually, we already constructed a suitable $\eta$-pointed \sfp{\Cs}-partition of the possible suffixes in $(P \setminus H)^*$: \Ub (see Fact~\ref{fct:covalphind}). Hence, it remains to partition the prefixes. We do so in the following lemma, which is proved using the hypothesis of Subcase~2 and induction on our first parameter.

  \begin{fct} \label{fct:covindratm}
    There is an $\eta$-pointed \sfp{\Cs}-partition \Vb of $((P \setminus H)^*H)^*$ such that $(z,\rho(V)) \in \csats{\Hb}$ and $(z,\rho(V)) \in S$ for every $V \in \Vb$.
  \end{fct}

  \begin{proof}
    Let $L = (P\setminus H)^*H$. Since $P$ is a prefix code with bounded synchronization delay, so is $L$ by Fact~\ref{fct:sprefnest}. We want to apply induction in Lemma~\ref{lem:pumping} for the case when $P$ has been replaced by $L$. Doing so requires building an appropriate $\eta$-pointed \sfp{\Cs}-partition of $L$ and proving that one of our induction parameters has decreased.

    Let $\Fb = \{(xt,UH) \mid (x,U) \in \Ub\}$. Since \Ub is an $\eta$-pointed \sfp{\Cs}-partition of $(P\setminus H)^*$ and $P$ is a prefix code, one may verify that \Fb is an $\eta$-pointed \sfp{\Cs}-partition of $L=(P\setminus H)^*H$. Finally, given $(y,F) \in \Fb$, we have $(y,F) = (xt,UH)$ for $(x,U) \in \Ub$, which means that $(y,\rho(F)) = (xt,\rho(UH)) \in S$ since $S$ is closed under multiplication. It remains to show that our induction parameters have decreased. Since $\Fb = \{(xt,UH) \mid (x,U) \in \Ub\}$ and $(x,\rho(U)) \in \csats{\Hb}$ for every $(x,U) \in \Ub$ (by definition of \Ub in Fact~\ref{fct:covalphind}), one may verify that $\csatp{\Fb} \subseteq \csats{\Hb} \cdot (t,\rho(H))$. Hence, since $\csats{\Hb} \cdot (t,\rho(H)) \subsetneq \csatp{\Hb}$ by hypothesis in Subcase~2, we have $\csatp{\Fb} \subsetneq \csatp{\Hb}$. Our first induction parameter has decreased. Altogether, it follows that we may apply Lemma~\ref{lem:pumping} in the case when $P,\Hb$ and $(s,r) \in S$ have been replaced by $L,\Fb$ and $(1_N,1_R) \in S$. This yields an $\eta$-pointed \sfp{\Cs}-partition \Vb of $L^* = ((P \setminus H)^*H)^*$ such that for every $(z,V) \in \Vb$, $(z,\rho(V)) \in \csats{\Fb}$ and $(z,\rho(V)) \in S$. Finally, it is clear by definition that $\csats{\Fb} \subseteq \csats{\Hb}$. Hence, the lemma follows.
  \end{proof}

  We are ready to construct the $\eta$-pointed \sfp{\Cs}-partition \Kb of $P^*$ and conclude the main argument. We let $\Kb = \{(zx,VU) \mid (z,V) \in \Vb \text{ and } (x,U) \in \Ub\}$. It is immediate by definition and Fact~\ref{fct:sprefnest} that \Kb is an $\eta$-pointed partition of $P^*$ since $P$ is a prefix code and $\Vb,\Ub$ are $\eta$-pointed partitions of $((P \setminus H)^*H)^*$ and $(P \setminus H)^*$ respectively (see the above discussion). Additionally, it is immediate by definition that \Kb is actually an $\eta$-pointed $\sfp{\Cs}$-partition of $P^*$ (it only contains concatenations of languages in \sfp{\Cs}). It remains to prove that \Kb satisfies~\eqref{eq:sfclos:covergoal}. Let $(t',K) \in \Kb$. By definition, there are $(z,V)\in\Vb$ and $(x,U) \in \Ub$ such that $(t',K) = (zx,VU)$. By definition of \Ub and \Vb, we have $(x,\rho(U))(z,\rho(V)) \in \csats{\Hb}$ and $(x,\rho(U))(z,\rho(V)) \in S$. Moreover, $(s,r) \in S$ by hypothesis. Therefore, since both $\csats{\Hb}$ and $S$ are closed under multiplication, it follows that $(t',\rho(K)) \in \csats{\Hb}$ and $(st',r\rho(K)) \in S$. This completes the proof of Lemma~\ref{lem:pumping}.
\end{proof}

We may now combine Proposition~\ref{prop:sound} and Proposition~\ref{prop:comp} to prove Theorem~\ref{thm:carfinite}. The argument is standard.

\begin{proof}[Proof of Theorem~\ref{thm:carfinite}]
  Let \Cs be a finite \vari and $\rho: 2^{A^*} \to R$ be a \emph{\nice} \mratm. We have to prove that \popti{\sfp{\Cs}}{\etac}{\rho} is the least \sfr-saturated subset of $\canc \times R$ for $\etac$ and $\rho$. We proved in Proposition~\ref{prop:sound} that \popti{\sfp{\Cs}}{\etac}{\rho} is \sfr-saturated. We need to show that it is the least such set. Let $S \subseteq \canc \times R$ which is \sfr-saturated for $\etac$ and $\rho$. Proposition~\ref{prop:comp} yields an $\etac$-pointed \sfp{\Cs}-cover of \Kb of $A^*$ such that $\pprin{\etac,\rho}{\Kb} \subseteq S$. Since Lemma~\ref{lem:pcov} implies that $\popti{\sfp{\Cs}}{\etac}{\rho} \subseteq \pprin{\etac,\rho}{\Kb}$, we obtain $\popti{\sfp{\Cs}}{\etac}{\rho} \subseteq S$, which completes the proof.
\end{proof}


\section{Covering for group input classes}
\label{sec:group}
We now consider separation and covering for the classes \sfp{\Gs} when \Gs is a group \vari. We prove that both problems are decidable when \Gs-separation is decidable. In this case as well, the algorithm is based on the framework introduced in Section~\ref{sec:ratms}: we present a generic effective characterization of optimal \imprints for \sfp{\Gs}. Given an arbitrary \emph{group} \vari \Gs and a \nice \mratm $\rho: 2^{A^*} \to R$, it describes the set $\opti{\sfp{\Gs}}{\rho} \subseteq R$. Moreover, this description is effective when \Gs-separation is decidable. As announced at the end of Section~\ref{sec:ratms}, the characterization actually describes a more general object than the set $\opti{\sfp{\Gs}}{\rho}$. The first part of the section is devoted to defining this object. In the second part, we present and prove the characterization~itself.

\subsection{Optimal \gids}

As explained above, characterizing the optimal \imprints for \sfp{\Gs} (when \Gs is a group \vari) requires working with more general objects than the ones that we actually want to compute:  the sets $\opti{\sfp{\Gs}}{\rho} \subseteq R$ associated to a \nice \mratm $\rho: 2^{A^*} \to R$. In this case as well, we look at sets which are strongly related to the ``nested optimal \imprints'' $\opti{\Gs}{\bratauxsfg} \subseteq 2^R$ associated to the auxiliary \ratms $\bratauxsfg: 2^{A^*} \to 2^R$. Yet, since we are dealing with input classes that are \emph{group \varis}, it will not be necessary to consider the whole set: a single special element of $\opti{\Gs}{\bratauxsfg} \subseteq 2^R$ suffices. Let us first define it. The definition makes sense for \emph{all \ratms}: given an arbitrary \ratm $\tau: 2^{A^*}\to Q$, we identify a special element of $\opti{\Gs}{\tau} \subseteq Q$.

The definition is based on a simple idea: a group \vari \Gs does not contain any finite language. In particular, $\{\veps\} \not\in \Gs$. Hence, given a \ratm $\tau: 2^{A^*} \to Q$, the optimal $\tau$-\imprint for \Gs on the singleton $\{\veps\}$, \emph{i.e.}, $\opti{\Gs}{\{\veps\},\tau}$, is an important object. This leads to the following definition.

\medskip
\noindent
\textbf{\Goptids.} Let \Gs be a lattice and let $\tau: 2^{A^*} \to Q$ be a \ratm. We call \emph{\gid} any language in \Gs containing \veps. An \emph{\goptid for $\tau$} is a \gid $L$ such that for every \gid $L'$, we have $\tau(L) \leq \tau(L')$. In practice, we use this notion when \Gs is a group \vari, but this is not required for the definition. As expected, \goptids for $\tau$ always exist.

\begin{lem} \label{idenex}
  For any lattice \Gs and \ratm $\tau: 2^{A^*} \to Q$, there exists an \goptid for~$\tau$.
\end{lem}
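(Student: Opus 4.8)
The plan is to mimic the standard existence argument for optimal covers (Lemma~\ref{lem:bgen:opt}) but restricted to the subclass of \gids. The key observation is that, although there may be infinitely many \gids, the \rata $Q$ is \emph{finite}, so the set $\{\tau(L) \mid L \in \Gs,\ \veps \in L\}$ has only finitely many elements. First I would note that this set is nonempty, since $A^* \in \Gs$ (as \Gs is a lattice) and $\veps \in A^*$, so $A^*$ is a \gid and $\tau(A^*) \in Q$ witnesses nonemptiness.

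Next, I would observe that the collection of \gids is closed under finite intersection: if $L_1, L_2 \in \Gs$ both contain \veps, then $L_1 \cap L_2 \in \Gs$ (lattices are closed under intersection) and $\veps \in L_1 \cap L_2$, so $L_1 \cap L_2$ is again a \gid. Since $\tau$ is increasing (it is a \ratm, hence a morphism of \ratas, hence monotone for the canonical ordering), we have $\tau(L_1 \cap L_2) \leq \tau(L_1)$ and $\tau(L_1 \cap L_2) \leq \tau(L_2)$; that is, the values of $\tau$ on \gids are closed under taking a common lower bound within that value set. Now pick \gids $L_1, \dots, L_n$ so that $\{\tau(L_1), \dots, \tau(L_n)\}$ is exactly the (finite) set of all values $\tau(L)$ as $L$ ranges over \gids — possible since $Q$ is finite. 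Set $L = L_1 \cap \cdots \cap L_n$. Then $L$ is a \gid by the closure property, and for each $i$ we have $\tau(L) \leq \tau(L_i)$ by monotonicity. Since every \gid $L'$ satisfies $\tau(L') = \tau(L_i)$ for some $i$, it follows that $\tau(L) \leq \tau(L')$ for every \gid $L'$, which is exactly the defining property of an \goptid for $\tau$.

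There is essentially no obstacle here: the argument is a routine finiteness-plus-closure argument, entirely parallel to Lemma~\ref{lem:bgen:opt} but with ``\Gs-cover of $L$'' replaced by ``\gid'' and ``union of covers'' replaced by ``intersection of languages''. The only point requiring the hypothesis that \Gs is a lattice (rather than merely an arbitrary class) is the closure of \gids under intersection; monotonicity of $\tau$ and finiteness of $Q$ are automatic from the definition of \ratm. I would keep the written proof to three or four sentences, essentially: the set of $\tau$-values on \gids is finite and nonempty; pick representatives and intersect the corresponding \gids; the result is a \gid whose $\tau$-value lies below all of them, hence below that of every \gid.
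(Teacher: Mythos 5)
Your proof is correct and follows essentially the same route as the paper's: both observe that the set of values $\tau(L)$ over all \gids $L$ is a finite, nonempty subset of $Q$, pick one \gid per value, intersect them (using that \Gs is a lattice), and invoke monotonicity of $\tau$ to conclude. Your version merely makes the finiteness of the value set explicit, which the paper leaves implicit.
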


\begin{proof}
  Let $U = \{\tau(L) \mid L \in \Gs \text{ and } \veps \in L\}=\{\tau(L)\mid L\text{ is a \gid}\}$. Clearly, $U$ is nonempty: $\tau(A^*) \in U$ since $A^* \in \Gs$, as \Gs is a lattice. For every $q \in U$, fix an arbitrary \gid $L_q$ such that $q = \tau(L_q)$ and let $L =  \bigcap_{q \in U} L_q$. Since \Gs is a lattice, we have $L \in \Gs$. Moreover, $\veps \in L$ by definition. Since $L \subseteq L_q$ for all $q \in U$, it follows that $\tau(L) \leq q$ for every $q \in U$. By definition of $U$, this implies that $\tau(L) \leq \tau(L')$ for every \gid $L'$. Hence, $L$ is an \goptid for $\tau$.
\end{proof}

We complete the definition with a second notion, which is the counterpart of optimal \imprints in this context. By definition, all \goptids for $\tau$ have the same image under $\tau$. Hence, this image (in $Q$) is a canonical object for \Gs and $\tau$. We write it $\ioptig{\tau} \in Q$. In other words, $\ioptig{\tau} = \tau(K)$ for every \goptid $K$ for $\tau$. We now connect this object with optimal \imprints.

\begin{lem} \label{lem:corresp}
  If\/ \Gs is a lattice and $\tau: 2^{A^*} \to Q$ is a \ratm, $\opti{\Gs}{\{\veps\},\tau} = \dclosq \{\ioptig{\tau}\}$.
\end{lem}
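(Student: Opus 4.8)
The plan is to prove the two inclusions of the claimed equality $\opti{\Gs}{\{\veps\},\tau} = \dclosq \{\ioptig{\tau}\}$ separately, using the definition of $\ioptig{\tau}$ as $\tau(K)$ for an optimal \gid $K$ for $\tau$, together with the basic properties of optimal covers and imprints recalled in Section~\ref{sec:ratms}.

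First I would observe that for any \gid $K$ (that is, $K \in \Gs$ with $\veps \in K$), the singleton family $\{K\}$ is a \Gs-cover of $\{\veps\}$, since $\{\veps\} \subseteq K$. Hence $\opti{\Gs}{\{\veps\},\tau} \subseteq \prin{\tau}{\{K\}} = \dclosq\{\tau(K)\}$. Applying this to an \emph{optimal} \gid $K$ for $\tau$ (which exists by Lemma~\ref{idenex}), for which $\tau(K) = \ioptig{\tau}$ by definition, gives the inclusion $\opti{\Gs}{\{\veps\},\tau} \subseteq \dclosq\{\ioptig{\tau}\}$.

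For the reverse inclusion $\dclosq\{\ioptig{\tau}\} \subseteq \opti{\Gs}{\{\veps\},\tau}$, since the right-hand side is already downward closed (it is an imprint, hence closed under the downset operator by definition), it suffices to show $\ioptig{\tau} \in \opti{\Gs}{\{\veps\},\tau}$. Let \Kb be an optimal \Gs-cover of $\{\veps\}$ for $\tau$, so that $\prin{\tau}{\Kb} = \opti{\Gs}{\{\veps\},\tau}$; such a cover exists by Lemma~\ref{lem:bgen:opt} since \Gs is a lattice. Because $\veps \in \{\veps\}$, there is some $K \in \Kb$ with $\veps \in K$, and this $K$ belongs to \Gs since \Kb is a \Gs-cover; thus $K$ is a \gid. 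By optimality of the \gid defining $\ioptig{\tau}$, we get $\ioptig{\tau} \leq \tau(K)$, and since $\tau(K) \in \prin{\tau}{\Kb} = \opti{\Gs}{\{\veps\},\tau}$ and this set is closed under downset, we conclude $\ioptig{\tau} \in \opti{\Gs}{\{\veps\},\tau}$, as required.

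I do not expect a genuine obstacle here: both inclusions follow directly from unwinding the definitions of optimal covers, imprints, and \goptids, plus the existence statements in Lemma~\ref{lem:bgen:opt} and Lemma~\ref{idenex}. The only point requiring mild care is making sure the downset-closure of $\opti{\Gs}{\{\veps\},\tau}$ is invoked correctly (it holds because this set equals $\prin{\tau}{\Kb}$, which by definition has the form $\dclosq\{\dots\}$), so that producing a single element of the set above $\ioptig{\tau}$ suffices to capture the whole downset $\dclosq\{\ioptig{\tau}\}$.
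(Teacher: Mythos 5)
Your proof is correct and follows essentially the same approach as the paper: both directions use the singleton $\Gs$-cover of $\{\veps\}$ formed from an optimal $\veps$-approximation and the downward-closedness of imprints. The only cosmetic difference is that for the reverse inclusion the paper observes that $\{H\}$ is itself an optimal cover before concluding, whereas you go directly via $\ioptig{\tau}\leq\tau(K)\in\prin{\tau}{\Kb}$ and downset closure.
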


\begin{proof}
  For the left to right inclusion, let $L$ be an \goptid for $\tau$. By definition, we have $L \in \Gs$, $\veps \in L$ and $\ioptig{\tau} = \tau(L)$. Clearly, $\Kb = \{L\}$ is a \Gs-cover of $\{\veps\}$. It follows that $\opti{\Gs}{\{\veps\},\tau} \subseteq \prin{\tau}{\Kb} = \dclosq \{\ioptig{\tau}\}$. Conversely, let \Hb be an optimal \Gs-cover of $\{\veps\}$ for $\tau$. There exists $H \in \Hb$ such that $\veps \in H$. Hence, $\{H\}$ is also an optimal \Gs-cover of $\{\veps\}$  for $\tau$ and it follows that $\dclosq \{\tau(H)\} = \prin{\tau}{\{H\}} = \opti{\Gs}{\{\veps\},\tau}$. Finally, since $H \in \Gs$ and $\veps \in H$, we have $\ioptig{\tau} \leq \tau(H)$, which yields $\dclosq \{\ioptig{\tau}\} \subseteq \dclosq \{\tau(H)\}$. Altogether, we obtain $\dclosq \{\ioptig{\tau}\} \subseteq\opti{\Gs}{\{\veps\},\tau}$, as desired.
\end{proof}

We complete the definition with a key property. When $\tau$ is a \emph{\nice} \ratm, the element $\ioptig{\tau} \in Q$ can be specified in terms of \Gs-separation. 

\begin{lem} \label{lem:sepiopt}
  Let \Gs be a lattice and $\tau: 2^{A^*} \to Q$ be a \nice \ratm. Let $S \subseteq Q$ be the set of all elements $s \in Q$ such that $\{\veps\}$ is not \Gs-separable from $\tau_*\inv(s)$. Then, $\ioptig{\tau} = \sum_{s \in S} s$.
\end{lem}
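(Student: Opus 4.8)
The statement asserts that when $\tau$ is \nice, the canonical element $\ioptig{\tau} \in Q$ equals $\sum_{s \in S} s$, where $S$ consists of those $s \in Q$ for which $\{\veps\}$ is not \Gs-separable from $\tau_*\inv(s)$. The plan is to prove the two inequalities $\ioptig{\tau} \leq \sum_{s \in S} s$ and $\sum_{s \in S} s \leq \ioptig{\tau}$ separately. For the first, I would fix an \goptid $K$ for $\tau$ (which exists by Lemma~\ref{idenex}), so that $\ioptig{\tau} = \tau(K)$, and show that $s \leq \tau(K)$ for every $s \in S$; adding these up gives $\sum_{s \in S} s \leq \tau(K) = \ioptig{\tau}$, which is one half, and also immediately gives the reverse direction once we show $\tau(K) \leq \sum_{s\in S} s$. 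Let me restructure: it is cleanest to prove (i) $s \leq \ioptig{\tau}$ for each $s \in S$, and (ii) $\ioptig{\tau} \leq \sum_{s \in S} s$.

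For (i), fix $s \in S$. Since $\{\veps\}$ is not \Gs-separable from $\tau_*\inv(s)$, every \gid $L$ (which lies in \Gs and contains $\veps$) must intersect $\tau_*\inv(s)$; pick $w \in L \cap \tau_*\inv(s)$. Then $s = \tau_*(w) = \tau(w) \leq \tau(L)$ because $\{w\} \subseteq L$ and \ratms are increasing. Taking $L$ to be an \goptid for $\tau$ gives $s \leq \tau(L) = \ioptig{\tau}$. Hence $\sum_{s \in S} s \leq \ioptig{\tau}$ since $\ioptig{\tau}$ is an upper bound for all $s \in S$ (here I use that in a \rata the canonical order makes finite sums least upper bounds — $r \leq q$ and $r' \leq q$ imply $r + r' \leq q$).

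For (ii), I need to exhibit a single \gid $L$ with $\tau(L) \leq \sum_{s \in S} s$; then $\ioptig{\tau} = \tau(K) \leq \tau(L) \leq \sum_{s\in S} s$ for an \goptid $K$. The idea: for each $s \in Q \setminus S$, the singleton $\{\veps\}$ \emph{is} \Gs-separable from $\tau_*\inv(s)$, so there is $K_s \in \Gs$ with $\veps \in K_s$ and $K_s \cap \tau_*\inv(s) = \emptyset$. Set $L = \bigcap_{s \in Q \setminus S} K_s$. Since \Gs is a lattice and $Q$ is finite, $L \in \Gs$, and $\veps \in L$, so $L$ is a \gid. By construction $L \cap \tau_*\inv(s) = \emptyset$ for every $s \notin S$, i.e. $L \subseteq \bigcup_{s \in S} \tau_*\inv(s)$. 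Now I use that $\tau$ is \nice: $\tau(L) = \sum_{w \in L} \tau_*(w)$ (a finite sum, well-defined by idempotence and commutativity). Every $w \in L$ satisfies $\tau_*(w) \in S$ by the previous sentence, so each term of this sum lies in $\{\tau_*(w) \mid w \in L\} \subseteq S$, whence $\tau(L) = \sum_{w \in L}\tau_*(w) \leq \sum_{s \in S} s$. Combining with $\ioptig{\tau} \leq \tau(L)$ (definition of \goptid) completes (ii).

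\textbf{Main obstacle.} The only genuinely delicate point is the use of niceness in step (ii): one must invoke the characterization $\tau(L) = \sum_{w \in L}\tau_*(w)$ recalled in the preliminaries and argue that this sum reduces to a finite one indexed by a subset of $S$ (finitely many distinct values $\tau_*(w)$, all landing in $S$), so that the inequality $\tau(L) \leq \sum_{s\in S} s$ genuinely follows from monotonicity of $+$ and not from some infinitary manipulation. Everything else — existence of \goptids, the increasing property of \ratms, and closure of \Gs under finite intersection — is routine and already available from the excerpt.
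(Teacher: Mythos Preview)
Your proof is correct and follows essentially the same route as the paper: both directions are argued exactly as you describe, with the separating languages $K_s$ intersected to produce a single \gid whose image is bounded by $\sum_{s\in S}s$ via niceness. The only cosmetic difference is that the paper invokes niceness by extracting a finite set of witnesses $w_1,\dots,w_k$ rather than writing the identity $\tau(L)=\sum_{w\in L}\tau_*(w)$, but this is the same argument.
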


\begin{proof}
  First, let us prove that $\sum_{s \in S} s \leq \ioptig{\tau}$. This boils down to proving that $s \leq \ioptig{\tau}$ for every $s \in S$. By definition of $S$, we know that $\{\veps\}$ is not \Gs-separable from $\tau_*\inv(s)$. Let $L$ be an \goptid for $\tau$. By definition, we have $L \in \Gs$, $\veps \in L$ and $\ioptig{\tau} = \tau(L)$. Since $\{\veps\}$ is not \Gs-separable from $\tau_*\inv(s)$, $L\in\Gs$ cannot separate $\{\veps\}$ and $\tau_*\inv(s)$, and therefore $L \cap \tau_*\inv(s) \neq \emptyset$. It follows that $s \leq \tau(L)$, \emph{i.e.}, that  $s \leq \ioptig{\tau}$.

  Conversely, we prove that $\ioptig{\tau} \leq \sum_{s \in S} s$. For every $q \in Q \setminus S$, we know that $\{\veps\}$ is \Gs-separable from $\tau_*\inv(q)$. Hence, we get $H_q \in \Gs$ such that $\veps \in H_q$ and $H_q \cap \tau_*\inv(q) = \emptyset$. Let $H = \bigcap_{q \in Q \setminus S} H_q$. Since \Gs is a lattice, we have $H \in \Gs$. Moreover, $\veps \in H$ by definition. Therefore, $\ioptig{\tau} \leq \tau(H)$. Since $\tau$ is \emph{\nice}, there are finitely many words $w_1,\dots,w_k \in H$ such that $\tau(H) = \tau(w_1) + \cdots + \tau(w_k)$. Finally, it follows from the definition of $H$ that $\tau(w_i) \not\in Q \setminus S$ for every $i \leq k$. In other words, we have $\tau(w_i) \in S$ for every $i \leq k$ and we obtain that $\ioptig{\tau} \leq \tau(w_1) + \cdots + \tau(w_k) \leq \sum_{s \in S} s$, as desired.
\end{proof}

In particular when \Gs has decidable separation, Lemma~\ref{lem:sepiopt} yields an algorithm taking a \emph{\nice \mratm} $\tau: 2^{A^*} \to Q$ as input and computing the element $\ioptig{\tau} \in Q$.

\begin{cor} \label{cor:epsep}
  Let \Gs be a lattice \vari with decidable separation. Given a \nice \mratm $\tau: 2^{A^*} \to Q$ as input, the element $\ioptig{\tau} \in Q$ is computable.
\end{cor}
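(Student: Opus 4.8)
The plan is to obtain the statement directly from Lemma~\ref{lem:sepiopt}, which reduces the computation of $\ioptig{\tau}$ to finitely many instances of the \Gs-separation problem. First I would recall that, since $\tau: 2^{A^*} \to Q$ is a \nice \mratm, it is finitely representable: it is entirely determined by the finite idempotent semiring $Q$ (its addition and multiplication tables) together with the images $\tau(a) \in Q$ of the letters $a \in A$, and moreover the restriction $\tau_*: A^* \to (Q,\cdot)$ is a monoid morphism into the \emph{finite} monoid $(Q,\cdot)$. Consequently, for every $s \in Q$, the language $\tau_*\inv(s)$ is regular, and a finite automaton (or monoid morphism) recognizing it can be computed from the input representation of $\tau$.

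Next, by Lemma~\ref{lem:sepiopt} we have $\ioptig{\tau} = \sum_{s \in S} s$, where $S \subseteq Q$ is the set of all $s \in Q$ such that $\{\veps\}$ is not \Gs-separable from $\tau_*\inv(s)$. Here $Q$ is finite, $\{\veps\}$ is a regular language, and each $\tau_*\inv(s)$ is regular and effectively computable by the previous paragraph. Hence, using the hypothesis that \Gs-separation is decidable, for each of the finitely many elements $s \in Q$ we may decide whether $s \in S$; this yields an effective description of the set $S$. Finally, having computed $S$, we obtain $\ioptig{\tau}$ by evaluating the finite sum $\sum_{s \in S} s$ inside $Q$, a routine computation using the addition table. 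This completes the argument.

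Since the whole proof is an immediate combination of Lemma~\ref{lem:sepiopt} with the decidability assumption and the finite representability of \nice \mratms, I do not anticipate a genuine obstacle. The one point deserving a moment's care is that the separation algorithm expects its two input languages in a standard effective form (finite automaton, finite monoid morphism, or the like), and that $\tau_*\inv(s)$ is available in such a form precisely because niceness \emph{and} tameness of $\tau$ force $\tau_*$ to be a morphism into the finite monoid $(Q,\cdot)$. It is worth stressing that we really use that $\tau$ is a \nice \emph{\mratm}: if $\tau$ were only assumed \nice, then $\tau_*$ would be a mere map and the languages $\tau_*\inv(s)$ need not even be regular, so this step would break down.
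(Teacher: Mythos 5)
Your proof is correct and takes exactly the same route as the paper, which presents this corollary as an immediate consequence of Lemma~\ref{lem:sepiopt} combined with the finite representability of \nice \mratms. Your added observation that tameness is what makes each $\tau_*\inv(s)$ a regular language (so that the \Gs-separation oracle can be applied to it) is precisely the implicit point in the paper's one-line justification.
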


\noindent
{\bf Application to \sfp{\Gs}-covering.} We now explain how these notions are used in the context of \sfp{\Gs}-covering for a group \vari \Gs. Consider a \vari \Ds containing \Gs and which is closed under concatenation. Intuitively, \Ds is meant to be a class that has been built from \Gs using an operator. In practice, we shall use the case when $\Ds = \sfp{\Gs}$. The high levels ideas outlined in Section~\ref{sec:ratms} suggest that given a \nice \mratm $\rho: 2^{A^*} \to R$, characterizing the optimal $\rho$-\imprint for \Ds, \emph{i.e.}, $\opti{\Ds}{\rho}\subseteq R$, involves working with the set $\opti{\Gs}{\bratauxd}\subseteq 2^R$ defined from the auxiliary \ratm $\bratauxd: 2^{A^*}\to 2^R$. Yet, because \Gs is a group \vari and \Ds~is closed under concatenation, partial information on \opti{\Gs}{\bratauxd} suffices. As we prove below, it is enough to consider the \emph{single element} $\ioptig{\bratauxd} \in 2^R$ (which belongs to \opti{\Gs}{\bratauxd} by Lemma~\ref{lem:corresp}).

First, observe that regardless of our hypotheses on \Gs and \Ds, this set is directly connected to the object that we truly want to characterize: the set $\opti{\Ds}{\rho} \subseteq R$. Indeed, by definition, we have $\ioptig{\bratauxd}= \bratauxd(L)=\opti{\Ds}{L,\rho}$ where $L \in \Gs$ is an \goptid for \bratauxd. Hence Fact~\ref{fct:linclus} implies that,
\[
  \ioptig{\bratauxd} = \opti{\Ds}{L,\rho} \subseteq \opti{\Ds}{A^*,\rho} = \opti{\Ds}{\rho}.
\]
It turns out that when \Gs is a group \vari and \Ds is closed under concatenation, \opti{\Ds}{\rho} is \emph{characterized} by its subset \ioptig{\bratauxd}. That is, one may compute the former from the latter.

\begin{prop} \label{prop:idencomp}
  Let \Gs be a group \vari and \Ds be a \vari closed under concatenation such that $\Gs \subseteq \Ds$. Let $\rho: 2^{A^*} \to R$ be a \mratm. Then, \opti{\Ds}{\rho} is the least subset $S \subseteq R$ containing \ioptig{\bratauxd} and satisfying the following conditions:
  \begin{enumerate}
  \item {\bf Trivial elements:} For every $w \in A^*$, we have $\rho(w) \in S$.
  \item {\bf Closure under downset:} We have $\dclosr S = S$.
  \item {\bf Closure under multiplication:} For every $q,r \in S$, we have $qr \in S$.
  \end{enumerate}
\end{prop}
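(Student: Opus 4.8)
The plan is to prove Proposition~\ref{prop:idencomp} by establishing two inclusions. Write $S_0$ for the least subset of $R$ that contains $\ioptig{\bratauxd}$ and is closed under the three listed operations (trivial elements, downset, multiplication); such a least set exists since these properties are preserved under arbitrary intersection and $R$ itself satisfies them. We must show $\opti{\Ds}{\rho} = S_0$.

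\medskip\noindent\textbf{Inclusion $S_0 \subseteq \opti{\Ds}{\rho}$ (soundness).} It suffices to check that $\opti{\Ds}{\rho}$ contains $\ioptig{\bratauxd}$ and satisfies the three closure properties; then minimality of $S_0$ gives the inclusion. Containment of $\ioptig{\bratauxd}$ is exactly the computation displayed just before the statement: $\ioptig{\bratauxd} = \opti{\Ds}{L,\rho} \subseteq \opti{\Ds}{A^*,\rho} = \opti{\Ds}{\rho}$ by Fact~\ref{fct:linclus}, using that an \goptid $L$ for \bratauxd lies in $\Gs \subseteq \Ds$ and contains \veps. For the trivial elements: for $w \in A^*$, $\{\{w\}\}$ is a \Ds-cover of $\{w\}$, so $\rho(w) \in \opti{\Ds}{\{w\},\rho} \subseteq \opti{\Ds}{A^*,\rho}$ by Fact~\ref{fct:linclus} again. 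Closure under downset holds because \imprints are downsets by definition, so $\opti{\Ds}{\rho}$ is a downset. Closure under multiplication follows from Lemma~\ref{lem:mratmult}: for $q,r \in \opti{\Ds}{\rho} = \opti{\Ds}{A^*,\rho}$ we get $qr \in \opti{\Ds}{A^*A^*,\rho} = \opti{\Ds}{A^*,\rho}$, where we use that \Ds is a \vari closed under concatenation and $\rho$ is a \mratm. This direction is routine.

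\medskip\noindent\textbf{Inclusion $\opti{\Ds}{\rho} \subseteq S_0$ (completeness).} This is the substantive direction. The goal is to exhibit, for each $r \in \opti{\Ds}{\rho}$, a \Ds-cover $\Kb$ of $A^*$ whose imprint is small enough; more precisely, I would build a single \Ds-cover $\Kb$ of $A^*$ with $\prin{\rho}{\Kb} \subseteq S_0$, which forces $\opti{\Ds}{\rho} \subseteq \prin{\rho}{\Kb} \subseteq S_0$. Fix an \goptid $L \in \Gs$ for \bratauxd, so $\bratauxd(L) = \ioptig{\bratauxd} \subseteq S_0$ (the latter because $S_0$ is downward closed and contains $\ioptig{\bratauxd}$ — note $\ioptig{\bratauxd}$ is itself a \emph{set} of elements of $R$, each of which we need inside $S_0$; this is where the downset closure of $S_0$ is used). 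The idea is to take an optimal \Ds-cover $\Hb$ of $L$ for $\rho$, so $\prin{\rho}{\Hb} = \bratauxd(L) \subseteq S_0$, and then to "pump" $\Hb$ into a cover of all of $A^*$ using that $L$ contains \veps and is a group language, so that $A^* = A^* L$ up to the syntactic structure. Concretely, since $L \in \Gs$, one chooses a \Gs-morphism $\alpha : A^* \to G$ recognizing $L$ (via Proposition~\ref{prop:genocm}), with $G$ a group by Lemma~\ref{lem:gmorph}; then $\veps \in L$ means $1_G \in \alpha(L)$, and a standard argument lets one write each word $w$ as $w = w'v$ with $v \in L$ and $\alpha(w') = \alpha(w)$ — actually one uses powers $\alpha(w)^{\omega}$ to land in $L$. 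Combining $\rho(\{w\})$-type trivial elements with elements of $\prin{\rho}{\Hb}$ via closure under multiplication then yields that every relevant product lies in $S_0$. I would structure the cover $\Kb$ as $\{ \{w\} \cdot H : w \in A^*,\ H \in \Hb\}$ intersected/refined appropriately so it stays in \Ds (using closure under concatenation) and finite (using that only finitely many $\alpha$-classes occur, and $\rho$ being \nice controls the imprint).

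\medskip\noindent The main obstacle I expect is precisely this completeness construction: making the "pumping" rigorous so that (i) the resulting family is a genuine finite \Ds-cover of $A^*$, not merely of $L$ or of $A^*L$, (ii) every element of its imprint is obtained from $\ioptig{\bratauxd}$ and trivial elements by multiplication and downset — which requires carefully tracking how $\rho$ of a concatenation $\{w\}H$ decomposes as $\rho(w)\cdot\rho(H)$ and invoking that $\rho(H) \in \ioptig{\bratauxd}$ only after replacing $w$ by a suitable power so that the $\alpha$-image enters the subgroup generated by $\alpha(L)$, and (iii) the group hypothesis on \Gs is genuinely used (it is what makes $\alpha(w)^\omega$ invertible, hence lets us reach an idempotent in $\alpha(L)$ without changing $\rho$-information beyond what multiplication in $S_0$ absorbs). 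A clean way to organize this is to first prove an auxiliary lemma: for any $w \in A^*$ there is $k \geq 1$ with $w^k \in L$ and, more usefully, $A^* = \bigcup_{w} wL'$ for a finite set of representatives, where $L' \in \Gs$ is obtained by a quotient of $L$; closure of \Gs under quotients (it is a \vari) is then the key structural input, together with $\Gs \subseteq \Ds$ and concatenation-closure of \Ds to keep everything inside \Ds.
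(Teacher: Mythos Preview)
Your soundness direction is correct and matches the paper's argument essentially verbatim.

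Your completeness direction has a genuine gap. The cover you propose, $\{\{w\}\cdot H : w \in A^*,\ H \in \Hb\}$, is infinite, and none of your suggested repairs closes the hole. Replacing $\{w\}$ by the $\alpha$-class $\alpha^{-1}(\alpha(w))$ does make the family finite and keeps it inside \Ds (since $\alpha^{-1}(g) \in \Gs \subseteq \Ds$ and \Ds is concatenation-closed), but then you lose control of the imprint: there is no reason for $\rho(\alpha^{-1}(g))$ to lie in $S_0$, because $S_0$ is only guaranteed to contain $\ioptig{\bratauxd}$, the trivial elements $\rho(w)$ for \emph{single words} $w$, and their products and downsets. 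Invoking niceness of $\rho$ is illegitimate here, since the proposition does not assume it. And writing $A^* = \bigcup_w wL'$ over finitely many single-word representatives $w$ cannot cover $A^*$: each $wL'$ only contains words with the fixed prefix $w$.

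The paper's key idea, which you are missing, is to interleave copies of $L$ with \emph{single letters} rather than with words or $\alpha$-classes. Concretely, one proves (Lemma~\ref{lem:ucov}) that the finitely many languages $L a_1 L a_2 L \cdots a_\ell L$ with $\ell \leq |G|$ and $a_i \in A$ already cover $A^*$: given any $v \in A^*$, a pigeonhole argument on the prefixes' $\eta$-images lets one insert cut points so that each intermediate factor $v_i$ has $\eta(v_i) = 1_G$, hence $v_i \in L$. Then one covers each $L a_1 L \cdots a_\ell L$ by the concatenations $H_0 a_1 H_1 \cdots a_\ell H_\ell$ with $H_i$ ranging over an optimal \Ds-cover \Hb of $L$. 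Now every factor's $\rho$-image is either $\rho(H_i) \in \prin{\rho}{\Hb} = \ioptig{\bratauxd} \subseteq S_0$ or a trivial element $\rho(a_i) \in S_0$, and closure under multiplication finishes. The point is that single letters are the only ``glue'' whose $\rho$-image you can control via the trivial-elements clause; your decomposition tries to use larger pieces as glue and that is where it breaks.
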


Clearly, Proposition~\ref{prop:idencomp} yields a least fixpoint procedure for computing the set $\opti{\Ds}{\rho}$ from the set $\ioptig{\bratauxd}$. This is exactly how we handle optimal \imprints for \sfp{\Gs} below. Rather than directly characterizing the sets $\opti{\sfp{\Gs}}{\rho} \subseteq R$ for all \nice \mratms $\rho: 2^{A^*} \to R$, we instead characterize the sets $\ioptig{\bratauxsfg} \subseteq R$, which \emph{carry more information} by Proposition~\ref{prop:idencomp}. Before we present the characterization, let us prove Proposition~\ref{prop:idencomp}.

\begin{proof}[Proof of Proposition~\ref{prop:idencomp}]
 Let $S \subseteq R$ be the least subset of $R$ containing \ioptig{\bratauxd} and satisfying the three conditions in Proposition~\ref{prop:idencomp}. We first prove that $S \subseteq \opti{\Ds}{\rho}$. This is immediate. Indeed, as seen above, we have $\ioptig{\bratauxd} \subseteq \opti{\Ds}{\rho}$. Also, $\opti{\Ds}{\rho}$ contains all trivial elements: clearly, $\rho(w) \in \opti{\Ds}{A^*,\rho} = \opti{\Ds}{\rho}$ for every $w \in A^*$. Moreover, $\opti{\Ds}{\rho}$ is closed under downset since it is an \imprint. Finally, since \Ds is a \vari, Lemma~\ref{lem:mratmult} implies that $\opti{\Ds}{\rho}$ is closed under multiplication. Altogether, we get $S \subseteq \opti{\Ds}{\rho}$ as desired.

  We turn to the inclusion $\opti{\Ds}{\rho} \subseteq S$. We build a \Ds-cover \Kb of $A^*$ such that $\prin{\rho}{\Kb} \subseteq S$. Since $\opti{\Ds}{\rho}\subseteq \prin{\rho}{\Kb}$ by definition, this yields the desired result. Let $L\subseteq A^*$ be an \goptid for \bratauxd. By definition, $L \in \Gs$, $\veps\in L$ and $\ioptig{\bratauxd} = \bratauxd(L) = \opti{\Ds}{L,\rho}$.  We now use the following lemma to build a special cover of $A^*$.

  \begin{lem}\label{lem:ucov}
  There exists a cover \Ub of $A^*$ such that each $U \in \Ub$ satisfies $U = La_1L \cdots a_\ell L$ where $\ell \in \nat$ and $a_1, \dots,a_\ell \in A$ (if $\ell = 0$, then $U = L$).
  \end{lem}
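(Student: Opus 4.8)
The plan is to exploit that $L$, being an \goptid for \bratauxd, belongs to the group \vari \Gs and contains \veps. First I would fix a morphism $\eta\colon A^*\to G$ into a \emph{finite group} recognizing $L$, say $L=\eta\inv(F)$ with $1_G\in F$ (the latter precisely because $\veps\in L$). The statement then reduces to a purely combinatorial claim: \emph{every} word $w\in A^*$ admits a factorization $w=u_0a_1u_1\cdots a_\ell u_\ell$ with $u_0,\dots,u_\ell\in L$, $a_1,\dots,a_\ell\in A$ and $\ell\le|G|-1$. Granting this, I would take $\Ub=\{\,La_1L\cdots a_\ell L\mid 0\le\ell\le|G|-1,\ a_1,\dots,a_\ell\in A\,\}$, which is a \emph{finite} set of languages, each of the prescribed shape (the case $\ell=0$ giving $U=L$), and a cover of $A^*$ since every $w$ lies in the member of \Ub picked out by its separators $a_1,\dots,a_\ell$.

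To prove the combinatorial claim for a fixed $w=b_1\cdots b_m$, I would run a greedy left-to-right procedure on the prefix images $g_i=\eta(b_1\cdots b_i)$ (with $g_0=1_G$). Put $h_0=1_G$ and let $u_0=b_1\cdots b_{e_0}$ be the longest prefix of $w$ with $g_{e_0}\in F$ (it exists since $g_0\in F$, and then $u_0\in L$); if $e_0=m$ stop, otherwise let $a_1=b_{e_0+1}$ be a separator, set $h_1=g_{e_0+1}$, and iterate: let $e_r$ be the largest index $\ge e_{r-1}+1$ with $g_{e_r}\in h_rF$, let $u_r=b_{e_{r-1}+2}\cdots b_{e_r}$ (so $u_r\in L$, and $u_r$ may be empty, which is fine since $\veps\in L$); if $e_r=m$ stop and set $\ell=r$, else let $a_{r+1}=b_{e_r+1}$ and $h_{r+1}=g_{e_r+1}$. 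The cut positions strictly increase, so the process terminates with $w=u_0a_1u_1\cdots a_\ell u_\ell$ and all $u_r\in L$.

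The main obstacle is the bound $\ell\le|G|-1$; everything else is routine bookkeeping. Here I would argue that the group elements $h_0,h_1,\dots,h_\ell$ are pairwise distinct. By the maximality built into the choice of $e_r$, for every position $p$ with $e_r<p\le m$ one has $g_p\notin h_rF$; and for each $r'>r$ the element $h_{r'}$ equals $g_p$ for $p=e_{r'-1}+1$, and $p>e_r$ (since the $e$'s strictly increase) while $p\le m$ (since the procedure did not stop at step $r'-1$). Hence $h_{r'}\notin h_rF$, which, as $1_G\in F$, forces $h_r\inv h_{r'}\neq 1_G$, i.e.\ $h_{r'}\neq h_r$. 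Thus $h_0,\dots,h_\ell$ are $\ell+1$ distinct elements of the finite group $G$, so $\ell+1\le|G|$. Assembling $\Ub$ as described above then completes the proof.
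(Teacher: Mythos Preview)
Your proof is correct and follows essentially the same approach as the paper: fix a group morphism $\eta:A^*\to G$ recognizing $L$, take $\Ub$ to be the finitely many products $La_1L\cdots a_\ell L$ with $\ell$ bounded in terms of $|G|$, and show by a greedy factorization with a distinctness argument on prefix images that every word lies in some member of~$\Ub$. The only minor difference is that the paper's pumping argument produces factors $v_i$ with $\eta(v_i)=1_G$ (hence in $L$ since $\veps\in L$), whereas you work directly with the accepting set $F$ and only require $u_i\in L$; this yields the slightly sharper bound $\ell\le|G|-1$ instead of $\ell\le|G|$, but the structure of the argument is the same.
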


  \begin{proof}
    With every word $w=a_1 \cdots a_\ell \in A^*$, we associate the language $U_w =  La_1L \cdots a_\ell L$ (in particular, $U_\veps = L$). Since $L \in \Gs$, Proposition~\ref{prop:genocm} yields a \Gs-morphism $\eta: A^* \to G$ recognizing~$L$. By hypothesis on \Gs, Lemma~\ref{lem:gmorph} implies that $G$ is a group. We let $k = |G|$ and $\Ub=\{U_w\mid w\in A^* \text{ and } |w| \leq k\}$. All languages in \Ub have the desired form. Hence, it suffices to verify that \Ub is a cover of $A^*$. We fix $v \in A^*$ and exhibit $U \in \Ub$ such that $v \in U$. It can be verified using a pumping argument that there exist $\ell \leq k$, $a_1,\dots,a_\ell \in A$ and $v_0,\dots,v_\ell \in A^*$ such that $v=v_0a_1v_1\cdots a_\ell v_\ell$ and $\eta(v_0a_1v_1 \cdots a_iv_i) =  \eta(v_0a_1v_1 \cdots a_i)$ for every $i \leq \ell$. Since $G$ is a group, this yields $\eta(v_i) = 1_G$ for all $i \leq \ell$. Hence, since $\eta$ recognizes $L$ and $\veps \in L$, we get $v_i \in L$ for every $i \leq \ell$. Thus $v \in U_w$ for $w = a_1 \cdots a_\ell$. Finally, we have $U_w \in \Ub$ since $|w| = \ell \leq k$. This completes the proof.
  \end{proof}

  Let \Hb be an optimal \Ds-cover of $L$ for $\rho$. By definition, we have $\prin{\rho}{\Hb} = \opti{\Ds}{L,\rho}=\ioptig{\bratauxd}$. Consider the cover~ \Ub of $A^*$ given by Lemma~\ref{lem:ucov}. Let $U \in \Ub$. By definition $U=La_1L\cdots a_n L$ for $a_1,\dots,a_n\in A$. Let $\Kb_U = \{H_0a_1H_1 \cdots a_nH_n \mid H_0,\dots,H_n \in \Hb\}$. Since \Hb is a \Ds-cover of $L$ and \Ds is closed under concatenation, $\Kb_U$ is a \Ds-cover of $U$. Finally, let $\Kb = \bigcup_{U\in \Ub} \Kb_U$. Since \Ub is a cover of $A^*$ and $\Kb_U$ is a \Ds-cover of $U$ for every $U \in \Ub$, it is immediate that \Kb is a \Ds-cover of $A^*$.

  It remains to prove that $\prin{\rho}{\Kb}\subseteq S$. Since $S$ is closed under downset, it suffices to prove that $\rho(K) \in S$ for every $K \in \Kb$. By definition, there exist $a_1,\dots,a_n \in A$ and $H_0,\dots,H_n \in \Hb$ such that $K= H_0a_1H_1 \cdots a_nH_n$. Hence, 
 it suffices to prove that $\rho(H_0a_1H_1 \cdots a_nH_n)\in S$. By definition of \Hb, we have $\prin{\rho}{\Hb} = \ioptig{\bratauxd}$. Therefore, $\rho(H_i) \in \ioptig{\bratauxd}$ for every $i \leq n$. Since $S$ contains \ioptig{\bratauxd}, this yields $\rho(H_i) \in S$ for every $i \leq n$. Moreover, since $S$ contains the trivial elements, we have $\rho(a_i) \in S$ for every $i \leq n$. Finally, since $S$ is closed under multiplication, we obtain $\rho(H_0a_1H_1 \cdots a_nH_n) \in S$, which completes the proof.
\end{proof}

\subsection{Characterization}

\newcommand{\sfratp}[1]{\ensuremath{\mu_{\rho,#1}}\xspace}
\newcommand{\sfrats}{\sfratp{S}}

Let us first present the characterization. Consider an arbitrary group \vari \Gs and an arbitrary \mratm $\rho: 2^{A^*} \to R$.  We define special subsets of $R$, which we call \emph{\sfr-complete} for \Gs and $\rho$. The definition is based on an auxiliary \nice \mratm $\sfrats: 2^{A^*} \to 2^{R}$, which we associate to every subset $S \subseteq R$. Its \rata is $(2^{R}, \cup,\cdot)$ (the multiplication on $2^R$ is obtained by lifting the multiplication on $R$ to subsets). Since we are defining a \emph{\nice} \mratm, it suffices to specify the image of each letter $a \in A$.  We let,
\[
  \sfrats(a) = \{s\rho(a)s' \mid s,s' \in S\} \in 2^R.
\]
Observe that by definition, we have $\ioptig{\sfrats} \subseteq R$ for each set $S \subseteq R$. Now, consider an arbitrary subset $S\subseteq R$. We say that $S$ is \emph{\sfr-complete for \Gs and $\rho$} when it satisfies the following~properties:
\begin{enumerate}
  \item {\bf Closure under downset:} $\dclosr S = S$.
  \item {\bf Closure under multiplication:} For every $q,r \in S$, we have $qr \in S$.
  \item {\bf \Gs-operation:} We have $\ioptig{\sfrats} \subseteq S$.
  \item {\bf \sfr-closure.} For every $r \in S$, we have $r^\omega + r^{\omega+1} \in S$.
\end{enumerate}

\begin{rem} \label{rem:compne}
  The definition does not explicitly require that an \sfr-complete subset $S \subseteq R$ for \Gs and $\rho$ contains some trivial elements. Yet, this is a consequence of\/ \Gs-operation. Indeed, we have $\sfrats(\veps) = \{1_R\}$ (this is the multiplicative neutral element of $2^{R}$). Thus, $1_R \in \ioptig{\sfrats} \subseteq S$ by definition.
\end{rem}

With this definition in hand, we may state the main theorem of this section. When $\rho$ is \nice, \ioptig{\bratauxsfg} is the least $\sfr$-complete subset of $R$.

\begin{thm} \label{thm:cargroup}
  Let \Gs be a group \vari and $\rho: 2^{A^*} \to R$ a \emph{\nice} \mratm. Then, \ioptig{\bratauxsfg} is the least $\sfr$-complete subset of $R$ for \Gs and $\rho$.
\end{thm}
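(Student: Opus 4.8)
## Proof strategy for Theorem~\ref{thm:cargroup}

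The plan is to prove this in two directions, exactly matching the soundness/completeness split used for Theorem~\ref{thm:carfinite}, and to reduce the group case to the finite case by carefully choosing an auxiliary finite \vari. The crucial observation is that, by definition, $\ioptig{\bratauxsfg} = \bratauxsfg(L) = \opti{\sfp{\Gs}}{L,\rho}$ where $L\in\Gs$ is an \goptid for $\bratauxsfg$. So we must characterize the optimal $\rho$-\imprint on a single optimal \gid, not on all of $A^*$. The bridge to the finite case is that $L\in\Gs$ is recognized by a \Gs-morphism $\eta:A^*\to G$ into a finite group (Proposition~\ref{prop:genocm} and Lemma~\ref{lem:gmorph}), and the finite \vari $\Gs_\eta$ generated by $\eta$ has $\etac$-style data that Theorem~\ref{thm:carfinite} and its components (Propositions~\ref{prop:sound} and~\ref{prop:comp}) can be applied to.

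First I would establish \textbf{soundness}: that $\ioptig{\bratauxsfg}$ is \sfr-complete for \Gs and $\rho$. Closure under downset is automatic since $\ioptig{\bratauxsfg}=\opti{\sfp{\Gs}}{L,\rho}$ is an \imprint. Closure under multiplication: since $\veps\in L$ we have $L\cdot L\supseteq L$ hence $LL$ is, up to an \goptid-argument, comparable to $L$; more cleanly, $\opti{\sfp{\Gs}}{L,\rho}\cdot\opti{\sfp{\Gs}}{L,\rho}\subseteq \opti{\sfp{\Gs}}{LL,\rho}$ by Lemma~\ref{lem:mratmult}, and $LL$ is again a \gid with $\bratauxsfg(LL)\supseteq\ioptig{\bratauxsfg}$ by optimality, combined with the fact that $LL\subseteq$ (some \gid contained in $L$ after intersection) — the right framing is $LL\supseteq L$ so actually $\bratauxsfg(LL)\geq\bratauxsfg(L)$, and one shows the reverse containment of \imprints via the fact that $LL\cap\Gs$-type closure keeps us inside \gids; I would in fact take $L$ closed under the group structure so that $LL=L$ can be arranged (replace $L$ by $\bigcap$ over a group's worth of quotients as in the proof of Lemma~\ref{idenex}), making multiplicative closure immediate. \sfr-closure follows from Proposition~\ref{prop:aper}/the $(1)\Rightarrow(3)$ direction of Theorem~\ref{thm:sfcarac} applied to an \sfp{\Gs}-morphism recognizing an optimal cover of $L$, exactly as in the \sfr-closure step of Proposition~\ref{prop:sound}. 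For \Gs-operation, $\ioptig{\sfrats}\subseteq S$: one unfolds $\sfrats(a)=\{s\rho(a)s'\mid s,s'\in S\}$ with $S=\ioptig{\bratauxsfg}$; since each such $s$ is realized by a language in an optimal cover of $L$, the products $L a_1 L\cdots a_n L$-type languages (compare Lemma~\ref{lem:ucov}) show $\sfrats$-images are dominated by $\bratauxsfg$-images on \gids, and the \goptid for $\sfrats$ sits below $\ioptig{\bratauxsfg}$.

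Next, \textbf{completeness}: given an \sfr-complete $S\subseteq R$ I must show $\ioptig{\bratauxsfg}\subseteq S$, i.e.\ exhibit an \sfp{\Gs}-cover \Kb of some \gid $L$ with $\prin{\rho}{\Kb}\subseteq S$. The \Gs-operation condition $\ioptig{\sfrats}\subseteq S$ supplies exactly the $\Gs$-part: using Lemma~\ref{lem:sepiopt} on the \nice \mratm $\sfrats$ (when $\rho$ is \nice, so is $\sfrats$), the elements of $\ioptig{\sfrats}$ correspond to $\Gs$-nonseparability data, which lets us pick a \Gs-morphism $\eta:A^*\to G$ and an \goptid $L$ for $\bratauxsfg$ recognized by $\eta$. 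Over the finite \vari $\Gs_\eta=\sfp{\{\eta\inv(g)\mid g\in G\}}$ — or rather its "group part", the class of languages recognized by $\eta$ — I build from $S$ an \sfr-saturated subset $\hat S\subseteq G\times R$ (pairs $(\eta(w),\rho(w))$ for trivial elements, closed under the required operations, with the $\eta$-component tracking group elements) and invoke the completeness half of the finite case (Proposition~\ref{prop:comp} applied with the \Cs-morphism $\eta$) to get an $\eta$-pointed \sfp{\Cs}-cover \Kb of $A^*$ with $\pprin{\eta,\rho}{\Kb}\subseteq\hat S$. Restricting \Kb to the block over $1_G$ (using that $L$ is essentially $\eta\inv(1_G)$) gives an \sfp{\Gs}-cover of $L$ whose \imprint lands in $S$; closure under downset of $S$ finishes it. The \sfr-closure and multiplicative-closure conditions on $S$ are what make $\hat S$ genuinely \sfr-saturated.

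The \textbf{main obstacle} I expect is the bookkeeping in completeness: translating the "one optimal \gid" picture into the $\eta$-pointed finite-\vari machinery and back, ensuring that the $\Gs$-operation condition $\ioptig{\sfrats}\subseteq S$ gives precisely the right seed elements (the $sa s'$ products encode "what happens between two consecutive occurrences of a fixed letter inside a \gid"), and that restricting the pointed cover to the $1_G$-block does not lose the cover property of $L$. A secondary subtlety is arranging that the chosen \goptid $L$ is simultaneously (a) optimal for $\bratauxsfg$, (b) recognized by a group morphism, and (c) closed under the group action so $LL=L$; the intersection trick from Lemma~\ref{idenex} and Lemma~\ref{lem:ucov} handles this, but it must be threaded consistently through both directions. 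Everything else — downset, multiplicative and \sfr-closure — is routine and parallels Propositions~\ref{prop:sound} and~\ref{prop:idencomp}.
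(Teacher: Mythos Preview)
Your overall architecture is right and matches the paper: reduce both directions to the finite case via Propositions~\ref{prop:sound} and~\ref{prop:comp}, and recover the group data by slicing a set in $G\times R$ at the coordinate $1_G$. Completeness is essentially as you describe; the one point you gloss over is the actual definition of $\hat S$, which is where the work lies. The paper takes $\eta$ recognizing an \goptid for $\sfrats$ (not for $\bratauxsfg$ as you write) and sets $S' = \{(1_G,s)\mid s\in S\}\cup\{(t,r)\mid r\in\dclosr\sfrats(\eta^{-1}(t))\}$. Verifying that this particular $S'$ is \sfr-saturated uses all four \sfr-complete conditions on $S$; the \Gs-operation hypothesis is exactly what forces the $1_G$-slice of $S'$ to equal $S$. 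Your ``closure of trivial elements under the operations'' description does not obviously produce a set whose $1_G$-slice is contained in $S$.

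For soundness the paper takes a different, cleaner route than your property-by-property sketch: it proves a single lemma (Lemma~\ref{lem:soundg}) identifying $\ioptig{\bratauxsfg}$ with $\{r\mid(1_G,r)\in\popti{\sfp{\Hs}}{\eta_\Hs}{\rho}\}$ for a well-chosen \emph{finite} group \vari $\Hs\subseteq\Gs$ (built from the finitely many \Gs-languages used in an optimal \sfp{\Gs}-cover of an \goptid), and then inherits all four closure properties at once from Proposition~\ref{prop:sound}. Your direct arguments can be made to work, but two spots need more than you indicate. First, the multiplication step: you need $LL\subseteq L$ (not $\supseteq$) so that Fact~\ref{fct:linclus} applies after Lemma~\ref{lem:mratmult}; arranging $L=\eta^{-1}(1_G)$ does give this, but your write-up has the inclusion backwards before the fix. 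Second, and more substantively, for \sfr-closure ``exactly as in Proposition~\ref{prop:sound}'' you need the witness $u\in L$ to have idempotent image under the \Gs-morphism supplied by Proposition~\ref{prop:aper}, but that morphism depends on the \sfp{\Gs}-morphism $\alpha$ recognizing your optimal cover of $L$, which in turn depends on $L$ --- a synchronization you do not address. It can be fixed by shrinking $L$ after the fact (any sub-\gid of an \goptid is again optimal), but this is precisely the bookkeeping that the paper's packaging via Lemma~\ref{lem:soundg} avoids.
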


When \Gs-separation is decidable, Theorem~\ref{thm:cargroup} yields a least fixpoint procedure for computing \ioptig{\bratauxsfg} from a \nice \mratm $\rho: 2^{A^*} \to R$. The computation starts from the empty set and saturates it with the operations in the definition of $\sfr$-complete subsets. It is clear that we may implement downset, multiplication and \sfr-closure. Moreover, we may compute \ioptig{\sfrats} from a set $S \subseteq R$ by Corollary~\ref{cor:epsep} since \Gs-separation is decidable. Eventually, the computation reaches a fixpoint and it is straightforward to verify that this set is the least $\sfr$-complete subset of $R$, \emph{i.e.}, \ioptig{\bratauxsfg} by Theorem~\ref{thm:cargroup}. One may then use a second least fixpoint procedure provided by Proposition~\ref{prop:idencomp} to compute the set $\opti{\sfp{\Gs}}{\rho} \subseteq R$ from \ioptig{\bratauxsfg}.

Consequently, it follows from Proposition~\ref{prop:breduc} that \sfp{\Gs}-covering is decidable if \Gs-separation is decidable.  As before, the result can be lifted to separation for \sfp{\Gs} using Lemma~\ref{lem:covsep}. Altogether, we obtain  the following corollary.

\begin{cor} \label{cor:cargroup}
  Let \Gs be a group \vari with  decidable separation. Then, separation and covering are decidable for \sfp{\Gs}.
\end{cor}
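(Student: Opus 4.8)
The plan is to combine Theorem~\ref{thm:cargroup}, Proposition~\ref{prop:idencomp} and Proposition~\ref{prop:breduc} into a single decision procedure. First I would recall the overall strategy: by Proposition~\ref{prop:breduc}, since \sfp{\Gs} is a Boolean algebra (it is even a \vari closed under concatenation by Proposition~\ref{prop:vari}), \sfp{\Gs}-covering reduces effectively to the problem of deciding, given a \nice \mratm $\rho: 2^{A^*} \to R$ and $F \subseteq R$, whether $\opti{\sfp{\Gs}}{\rho} \cap F = \emptyset$. So it suffices to exhibit an algorithm which, on input a \nice \mratm $\rho: 2^{A^*} \to R$, computes the set $\opti{\sfp{\Gs}}{\rho} \subseteq R$. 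The rest of the proof produces this algorithm, using the assumption that \Gs-separation is decidable.

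The computation proceeds in two nested least-fixpoint stages. In the first stage, I would compute $\ioptig{\bratauxsfg} \subseteq R$ by Theorem~\ref{thm:cargroup}, which states that this set is the least $\sfr$-complete subset of $R$ for \Gs and $\rho$. Starting from $\emptyset$, one iteratively closes under the four operations in the definition of $\sfr$-completeness: closure under downset, closure under multiplication, and \sfr-closure $r \mapsto r^\omega + r^{\omega+1}$ are all trivially implementable since $R$ is finite; the only nontrivial operation is \Gs-operation, which requires computing $\ioptig{\sfrats}$ from the current set $S \subseteq R$. For this, one builds the auxiliary \nice \mratm $\sfrats: 2^{A^*} \to 2^R$ (it is finitely representable, being \nice and \tame, once the images $\sfrats(a) = \{s\rho(a)s' \mid s,s' \in S\}$ of the letters are recorded), and then applies Corollary~\ref{cor:epsep}, which computes $\ioptig{\sfrats}$ from a \nice \mratm provided \Gs-separation is decidable. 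Since $2^R$ is finite, this iteration terminates, and the fixpoint reached is exactly $\ioptig{\bratauxsfg}$ by Theorem~\ref{thm:cargroup}. In the second stage, I would feed $\ioptig{\bratauxsfg}$ into the least-fixpoint procedure of Proposition~\ref{prop:idencomp} (applicable since \Gs is a group \vari and $\Ds = \sfp{\Gs}$ is a \vari closed under concatenation containing \Gs): starting from $\ioptig{\bratauxsfg}$ together with the trivial elements $\rho(w)$, one closes under downset and multiplication to obtain $\opti{\sfp{\Gs}}{\rho}$. Here only finitely many trivial elements matter, since $R$ is finite and $\rho$ is a morphism, so $\{\rho(w) \mid w \in A^*\}$ is the submonoid of $(R,\cdot)$ generated by $\{\rho(a) \mid a \in A\}$, which is computable.

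Having computed $\opti{\sfp{\Gs}}{\rho}$, the reduction of Proposition~\ref{prop:breduc} decides whether $\opti{\sfp{\Gs}}{\rho} \cap F = \emptyset$, hence decides \sfp{\Gs}-covering. Finally, decidability of \sfp{\Gs}-separation follows from decidability of covering: by Proposition~\ref{prop:vari} the class \sfp{\Gs} is a \vari, in particular a lattice, so Lemma~\ref{lem:covsep} gives that $L_1$ is \sfp{\Gs}-separable from $L_2$ if and only if the pair $(L_1,\{L_2\})$ is \sfp{\Gs}-coverable, and the latter is decidable. This completes the proof.

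The only genuine subtlety — and what I would flag as the point requiring care rather than a real obstacle — is checking that the two fixpoint iterations are genuinely effective and terminating, and in particular that the call to Corollary~\ref{cor:epsep} inside the first iteration is legitimate. One must verify that $\sfrats$ is indeed \nice and \tame so that it can be passed as input to a \Gs-separation oracle: \tameness holds because the multiplication on $2^R$ lifting that of $R$ makes $(2^R,\cup,\cdot)$ an idempotent semiring and $\sfrats$ is defined as a morphism on letters, and \niceness holds because $\sfrats$ is defined letter-by-letter on a finite alphabet into a finite \rata. All the heavy lifting — the correctness of the fixpoint characterizations themselves — has already been done in Theorem~\ref{thm:cargroup} and Proposition~\ref{prop:idencomp}, so the present argument is purely a matter of assembling these pieces, with finiteness of $R$ (and hence of $2^R$) guaranteeing termination throughout.
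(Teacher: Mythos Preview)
Your proposal is correct and follows essentially the same approach as the paper: both arguments compute $\ioptig{\bratauxsfg}$ via the least-fixpoint characterization of Theorem~\ref{thm:cargroup} (invoking Corollary~\ref{cor:epsep} for the \Gs-operation step), then feed the result into Proposition~\ref{prop:idencomp} to obtain $\opti{\sfp{\Gs}}{\rho}$, and conclude via Proposition~\ref{prop:breduc} and Lemma~\ref{lem:covsep}. Your write-up is somewhat more explicit than the paper's about effectiveness and termination (e.g., verifying that $\sfrats$ is \nice and \tame so that Corollary~\ref{cor:epsep} applies), but the structure is identical.
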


Corollary~\ref{cor:cargroup} has three important applications: the group \varis \md (the modulo languages), \abg (the alphabet modulo testable languages) and \grp (all group languages). Indeed, since it is known that the three of them have decidable separation (see \emph{e.g.}, \cite{pzgroups}), we obtain the decidability of covering for the classes \sfp{\md}, \sfp{\abg} and \sfp{\grp}.

\medskip

We now concentrate on the proof of Theorem~\ref{thm:cargroup}. In this case as well, we present two independent statements corresponding respectively to soundness and completeness. Let us start with the former. The argument is based on Proposition~\ref{prop:sound}, which addresses soundness for the characterization of optimal \imprints for \sfp{\Cs} when \Cs is a \emph{finite} \vari.

\begin{prop}[Soundness] \label{prop:soundg}
  Let \Gs be a group \vari of group languages and $\rho: 2^{A^*}\to R$ be a \mratm. Then, $\ioptig{\bratauxsfg}\subseteq R$ is \sfr-complete for \Gs and~$\rho$.
\end{prop}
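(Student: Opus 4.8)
The goal is to show that $S := \ioptig{\bratauxsfg} \subseteq R$ satisfies the four defining properties of an \sfr-complete subset: closure under downset, closure under multiplication, \Gs-operation ($\ioptig{\sfratp{S}} \subseteq S$), and \sfr-closure. The main idea is to reduce everything to Proposition~\ref{prop:sound}, the soundness statement in the finite case. To do this I first need to name a \Gs-morphism that plays the role of the "canonical \Cs-morphism" there. By Proposition~\ref{prop:genocm}, pick a \Gs-morphism $\eta: A^* \to G$; by Lemma~\ref{lem:gmorph}, $G$ is a group. Writing $\Ds_\eta = \sfp{\{\eta\inv(t) \mid t \in G\}}$, I would argue (as in the proof of Proposition~\ref{prop:aper}) that $\Ds_\eta \subseteq \sfp{\Gs}$, and moreover that $\eta$ can be chosen so that $\ioptig{\bratauxsfg} = \ioptig{\brataux{\Ds_\eta}{\rho}}$, using the fact that an \goptid for $\bratauxsfg$ is a single language in \Gs, recognized by some \Gs-morphism which we may fold into $\eta$. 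Then Proposition~\ref{prop:sound} (applied with the finite \vari $\{\eta\inv(t)\mid t\in G\}$, whose canonical morphism is $\eta$) tells us that $\popti{\Ds_\eta}{\eta}{\rho} \subseteq G \times R$ is \sfr-saturated for $\eta$ and $\rho$, and Lemma~\ref{lem:kerorbit}/Lemma~\ref{lem:corresp} links the "$\veps$-component" of this set to $\ioptig{\bratauxsfg}$.

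**Deriving the four properties.** Closure under downset is immediate since $\ioptig{\bratauxsfg}$ is an \imprint (or one can invoke the downset clause of \sfr-saturation for $\popti{\Ds_\eta}{\eta}{\rho}$ restricted to the idempotent $1_G$). Closure under multiplication: since $\sfp{\Gs}$ is a \vari (Proposition~\ref{prop:vari}) and in fact closed under concatenation, and since an \goptid $L$ for $\bratauxsfg$ satisfies $LL \subseteq L$-like absorption up to the group structure (because $\veps \in L$ and $L$ is recognized by the group morphism $\eta$, one gets $\eta(L) $ is a subgroup, hence $LL$ is again covered by a \gid), Lemma~\ref{lem:mratmult} gives $\ioptig{\bratauxsfg}\cdot\ioptig{\bratauxsfg} = \bratauxsfg(L)\cdot\bratauxsfg(L) \subseteq \bratauxsfg(LL) \subseteq \bratauxsfg(L)$. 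For \sfr-closure, I would take $r \in S$, so $r \in \opti{\sfp{\Gs}}{L,\rho}$ where $L$ is an \goptid; since $\veps \in L$ and $L$ is recognized by the group $\eta$, any $w \in L$ has $\eta(w^k) = 1_G$ for $k = \omega(G)$, so $w^k \in L$; running the \sfr-closure clause of the \sfr-saturation of $\popti{\Ds_\eta}{\eta}{\rho}$ at the idempotent $1_G$ — which is available precisely because $(1_G, r) \in \popti{\Ds_\eta}{\eta}{\rho}$ — yields $(1_G, r^\omega + r^{\omega+1}) \in \popti{\Ds_\eta}{\eta}{\rho}$, i.e. $r^\omega+r^{\omega+1} \in \ioptig{\bratauxsfg}$.

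**The \Gs-operation clause.** This is the property that is genuinely new here and is, I expect, the main obstacle. We must show $\ioptig{\sfratp{S}} \subseteq S$ where $\sfratp{S}: 2^{A^*} \to 2^R$ is the \nice \mratm with $\sfratp{S}(a) = \{s\rho(a)s' \mid s,s'\in S\}$. By Lemma~\ref{lem:sepiopt}, $\ioptig{\sfratp{S}} = \bigcup\{\sfratp{S}{}_*(w) \mid w \in A^*,\ \{\veps\}$ not \Gs-separable from $(\sfratp{S})_*\inv(\sfratp{S}{}_*(w))\}$. So I need: for every word $w = a_1\cdots a_n$ such that $\{\veps\}$ is not \Gs-separable from the set of words with the same $\sfratp{S}$-image, every element of $(\sfratp{S})_*(w) = \sfratp{S}(a_1)\cdots\sfratp{S}(a_n)$ lies in $S$. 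Such an element has the form $s_0\rho(a_1)s_1\rho(a_2)\cdots s_{n-1}\rho(a_n)s_n$ with all $s_i \in S$. The key point is that non-\Gs-separability of $\{\veps\}$ from this set means (by definition of $\ioptig{\bratauxsfg}$ as $\bratauxsfg(L)$ for an \goptid $L$, together with the fact that $L$ being a \gid forces $w$-like words into $L$) that the product $H_0 a_1 H_1 \cdots a_n H_n$, where each $H_i$ is taken from an optimal $\sfp{\Gs}$-cover of $L$, is itself contained in a language of $\sfp{\Gs}$ covering $A^*$, exactly as in the cover \Kb built in the proof of Proposition~\ref{prop:idencomp} (Lemma~\ref{lem:ucov}). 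Hence $\rho(H_0a_1H_1\cdots a_nH_n) = \rho(H_0)\rho(a_1)\cdots\rho(a_n)\rho(H_n) \in \opti{\sfp{\Gs}}{\rho}$; but since the $\rho(H_i)$ range over $\ioptig{\bratauxsfg} = S$ and $\rho(a_i) = 1_R\cdot\rho(a_i)\cdot 1_R \in \sfratp{S}(a_i)$, one sees the arbitrary element $s_0\rho(a_1)\cdots\rho(a_n)s_n$ can be realized this way and therefore belongs to $S$ — using closure under multiplication of $S$ and the trivial elements $1_R \in S$ (Remark~\ref{rem:compne}). The care needed is in matching the "not \Gs-separable" condition on $\sfratp{S}$ precisely with the existence of the covering language $H_0a_1\cdots a_nH_n$; I would extract this from Lemma~\ref{lem:ucov} applied to the \goptid $L$, observing that $\sfratp{S}$ is designed so that $(\sfratp{S})_*\inv$ of a value collects exactly the words $w$ for which such a decomposition with factors in $L$ exists modulo the group. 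Once that matching is made, the inclusion $\ioptig{\sfratp{S}} \subseteq S$ follows, completing the verification that $S = \ioptig{\bratauxsfg}$ is \sfr-complete.
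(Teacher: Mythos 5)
Your high-level strategy is the right one and matches the paper: reduce to Proposition~\ref{prop:sound} by choosing a \Gs-morphism $\eta:A^*\to G$ (so $G$ is a group by Lemma~\ref{lem:gmorph}), pass to the finite \vari generated by $\eta$, and read off the properties of $S = \ioptig{\bratauxsfg}$ from the \sfr-saturation of the resulting pointed optimal \imprint. However, two pieces of your plan are not yet correct as written.

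The first problem is that you never actually prove the identity you rely on, namely $S = \{r \mid (1_G, r) \in \popti{\sfp{\Hs}}{\eta}{\rho}\}$ where $\Hs$ is the finite \vari of languages recognized by $\eta$. You dismiss this with the phrase ``fold into $\eta$,'' but the \emph{right-to-left} inclusion is genuinely delicate: you need $\opti{\sfp{\Hs}}{\eta\inv(1_G),\rho}\subseteq \bratauxsfg(L)$, which requires an optimal \sfp{\Gs}-cover of the \goptid $L$ to already consist of languages in $\sfp{\Hs}$. Each $K$ in that cover is built from \emph{finitely many} languages of \Gs, and all of them---not just $L$---must be recognized by $\eta$ for the argument to go through. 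That is exactly what Lemma~\ref{lem:soundg} in the paper does (choosing $\Hb\subseteq\Gs$ a finite set generating the cover, and letting $\eta$ recognize $\{L\}\cup\Hb$), and it is the technical heart of the proof; the proposal skips it. Relatedly, your claim that ``$\eta(L)$ is a subgroup'' in the multiplication step is false in general: if $L=\eta\inv(F)$ with $1_G\in F$, then $\eta(L)=F$, which has no reason to be a subgroup. The multiplication closure can be salvaged cleanly (pass to the \goptid $\eta\inv(1_G)$, which is idempotent under concatenation), or---better---it simply drops out of the identity above once it is established.

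The second, more serious problem is the \Gs-operation clause. Your argument mixes Lemma~\ref{lem:sepiopt} with the cover \Kb built in the proof of Proposition~\ref{prop:idencomp}, and concludes that elements $s_0\rho(a_1)\cdots\rho(a_n)s_n$ lie in $\opti{\sfp{\Gs}}{\rho}$. But $S$ is the strictly smaller set $\opti{\sfp{\Gs}}{L,\rho}$, and nothing in your argument pins the product back into this subset. The reason the paper's argument works is that it keeps track of the $G$-component throughout: each factor contributes $(1_G,s_i)$ or $(\eta(a_i),\rho(a_i))$ in $\popti{\sfp{\Hs}}{\eta}{\rho}$, so the product lands at $(\eta(w),s)$; and the key is that $w$ was pulled from $\eta\inv(1_G)$, so $\eta(w)=1_G$ and we land back in the $1_G$-slice, which is exactly $S$ by the identity above. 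Your sketch loses this tracking entirely; the remark about ``$(\sfratp{S})_*\inv$ collecting exactly the words admitting a decomposition with factors in $L$'' is not accurate---$\sfratp{S}$ has no built-in relation to $L$---so the ``matching'' step does not hold as stated. Establishing Lemma~\ref{lem:soundg} first and then running all four closure properties through $\popti{\sfp{\Hs}}{\eta}{\rho}$ would both fix the \Gs-operation step and simplify your other three verifications.
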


\begin{proof}
  For the sake of avoiding clutter, let $S = \ioptig{\bratauxsfg}$. We start with a preliminary~result.

  \begin{lem} \label{lem:soundg}
  There exists a finite group \vari \Hs that satisfies $\Hs \subseteq \Gs$ and such that $S = \{r \in R \mid (1_G,r) \in \popti{\sfp{\Hs}}{\eta_\Hs}{\rho}\}$ where $\eta_\Hs: A^* \to G$ is the canonical \Hs-morphism.
  \end{lem}

  Before we prove Lemma~\ref{lem:soundg}, let us use it to complete the main argument. Note that Proposition~\ref{prop:sound} implies that $\popti{\sfp{\Hs}}{\eta_\Hs}{\rho}\subseteq G\times R$ is \sfr-saturated for $\eta_\Hs$ and $\rho$. We shall use this property multiple times. We show that $S$ is \sfr-complete for \Gs and $\rho$. That $S$ is closed under downset, multiplication and \sfr-closure is immediate from Lemma~\ref{lem:soundg} (more precisely, it follows from the equality $S = \{r \in R \mid (1_G,r) \in \popti{\sfp{\Hs}}{\eta_\Hs}{\rho}\}$ and the fact that \popti{\sfp{\Hs}}{\eta_\Hs}{\rho} satisfies the three corresponding properties in the definition of \sfr-saturated sets). Hence, we concentrate on \Gs-operation. Given $s \in \ioptig{\sfrats}$, we prove that $s \in S$. By hypothesis in Lemma~\ref{lem:soundg}, we have $\eta_\Hs\inv(1_G) \in \Hs \subseteq \Gs$. Moreover, it is clear that $\veps \in \eta_\Hs\inv(1_G)$. Therefore, we obtain $\ioptig{\sfrats} \subseteq \sfrats(\eta_\Hs\inv(1_G))$. This yields $w \in \eta_\Hs\inv(1_G)$ such that $s \in \sfrats(w)$. There are now two cases. First, if $w = \veps$, then $\sfrats(\veps) = \{1_R\}$ which yields $s = 1_R$. Clearly, we have $(1_G,1_R) \in \popti{\sfp{\Hs}}{\eta_\Hs}{\rho}$ (this is a trivial element), hence we obtain $s = 1_R \in S$ by Lemma~\ref{lem:soundg}. Assume now that $w \in A^+$ and let $a_1,\dots,a_n \in A$ be the letters such that $w = a_1 \cdots a_n$. By definition of \sfrats, the fact that  $s \in \sfrats(w)$ yields $s_1,\dots,s_n,t_1,\dots,t_n \in S$ such that $s = s_1\rho(a_1)t_1 \cdots s_n\rho(a_n)t_n$. Clearly, we have $(\eta_\Hs(a_i),\rho(a_i)) \in \popti{\sfp{\Hs}}{\eta_\Hs}{\rho}$ for every $i \leq n$ (this is a trivial element). Also, Lemma~\ref{lem:soundg} yields $(1_G,s_i),(1_G,t_i) \in  \popti{\sfp{\Hs}}{\eta_\Hs}{\rho}$ for every $i \leq n$. Altogether, since \popti{\sfp{\Hs}}{\eta_\Hs}{\rho} is closed under multiplication, we get $(\eta_\Hs(w),s) = (\eta_\Hs(a_1 \cdots a_n),s) \in \popti{\sfp{\Hs}}{\eta_\Hs}{\rho}$. Finally, since $w \in \eta_\Hs\inv(1_G)$, we get $(1_G,s) \in \popti{\sfp{\Hs}}{\eta_\Hs}{\rho}$ and  Lemma~\ref{lem:soundg} yields $s \in S$, as desired.

  \smallskip

  It remains to prove Lemma~\ref{lem:soundg}. We first define \Hs. Let $L \in \Gs$ be an \goptid for \bratauxsfg. By definition, $\veps \in L$ and $S = \ioptig{\bratauxsfg}=\bratauxsfg(L)$. Let \Kb be an optimal \sfp{\Gs}-cover of $L$ for $\rho$, so that $\prin{\rho}{\Kb}=\opti{\sfp{\Gs}}{L,\rho} = \bratauxsfg(L) = S$. By definition of \sfp\Gs, there exists a \emph{finite} set of languages $\Hb \subseteq \Gs$ such that every language $K \in \Kb\subseteq \sfp{\Gs}$ is built from the languages in \Hb and the singletons $\{a\}$ for $a \in A$ using only concatenation and Boolean combinations. It follows from Proposition~\ref{prop:vari} that \sfp{\Gs} is a \vari. Therefore, Proposition~\ref{prop:genocm} yields a \Gs-morphism $\eta: A^*\to G$ all languages in $\{L\} \cup \Hb$. Moreover, since \Gs is a group \vari, Lemma~\ref{lem:gmorph} implies that $G$ is a group. We write \Hs for the class of all languages recognized by $\eta$. That is, $\Hs = \{\eta\inv(F) \mid F \subseteq G\}$. One may verify that \Hs is a finite group \vari such that $\Hs \subseteq \Gs$. Moreover, $\eta$ is an \Hs-morphism that recognizes every language in \Hs. Hence, $\eta$ is the canonical \Hs-morphism by Lemma~\ref{lem:cmdiv}.

  It remains to prove that $S = \{r \in R \mid (1_G,r) \in \popti{\sfp{\Hs}}{\eta}{\rho}\}$. We start with the left to right inclusion. Let $r \in S$. By definition, $S = \ioptig{\bratauxsfg}$. Hence, since $\eta\inv(1_G) \in \Gs$ and $\veps \in \eta\inv(1_G)$, we have $r \in \bratauxsfg(\eta\inv(1_G)) = \opti{\sfp{\Gs}}{\eta\inv(1_G),\rho}$. Finally, since $\Hs \subseteq \Gs$, this implies by Fact~\ref{fct:linclus} that $r \in \opti{\sfp{\Hs}}{\eta\inv(1_G),\rho}$, which exactly says that $(1_G,r) \in \popti{\sfp{\Hs}}{\eta}{\rho}$, as desired. We turn to the converse inclusion. Let $r \in R$ such that $(1_G,r) \in \popti{\sfp{\Hs}}{\eta}{\rho}$. We show that $r \in S$. By definition $r \in \opti{\sfp{\Hs}}{\eta\inv(1_G),\rho}$. Moreover, since $\eta$ recognizes $L$ and we have $\veps \in L$, we have $\eta\inv(1_G) \subseteq L$. Therefore, Fact~\ref{fct:linclus} yields $r \in \opti{\sfp{\Hs}}{L,\rho}$. Moreover, it is immediate by definition of \Hs that every $K \in \Kb$ belongs to \sfp{\Hs}. Hence, \Kb is an \sfp{\Hs}-cover of $L$ and we get $r \in \prin{\rho}{\Kb} = S$. This completes the proof.
\end{proof}

We turn to completeness. The argument is based on Proposition~\ref{prop:comp},  the direction addressing completeness for the characterization of optimal \imprints for \sfp{\Cs} when \Cs is a \emph{finite} \vari.

\begin{prop}[Completeness]\label{prop:compg}
  Let \Gs be a \vari of group languages and $\rho: 2^{A^*} \to R$ be a \nice \mratm. Let $S \subseteq R$ be $\sfr$-complete for \Gs and $\rho$. Then, there exists a language $L \in \Gs$ such that $\veps \in L$, as well as an \sfp{\Gs}-cover $\Kb$ of\/ $L$ such that $\prin{\rho}{\Kb} \subseteq S$.
\end{prop}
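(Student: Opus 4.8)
The plan is to reduce this to Proposition~\ref{prop:comp} (completeness for finite input classes), in the same spirit as the proof of Proposition~\ref{prop:soundg} but in the reverse direction. The key difficulty is that Proposition~\ref{prop:comp} requires a \emph{finite} \vari, whereas \Gs need not be finite, so I must first manufacture an appropriate finite group subvariety $\Hs \subseteq \Gs$ whose canonical \Hs-morphism $\eta: A^* \to G$ is rich enough to witness everything we need. The source of the needed finiteness is the term $\ioptig{\sfrats}$ appearing in the \Gs-operation axiom: since $S$ is $\sfr$-complete, we have $\ioptig{\sfrats} \subseteq S$, and $\ioptig{\sfrats}$ is computed from an \goptid for the \emph{\nice} \mratm $\sfrats$. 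By Lemma~\ref{lem:sepiopt} (or just by taking an \goptid directly via Lemma~\ref{idenex}), there is a language $L_0 \in \Gs$ with $\veps \in L_0$ such that $\sfrats(L_0) = \ioptig{\sfrats}$. I will take $L$ to be this $L_0$ (or an intersection of it with a few more group languages, as forced below).

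First I would fix the \goptid $L \in \Gs$ for $\sfrats$, so $\veps \in L$ and $\sfrats(L) = \ioptig{\sfrats} \subseteq S$. By Proposition~\ref{prop:genocm} and Lemma~\ref{lem:gmorph}, there is a \Gs-morphism $\eta: A^* \to G$ recognizing $L$, with $G$ a finite group; let $\Hs = \{\eta\inv(F) \mid F \subseteq G\}$, a finite group \vari with $\Hs \subseteq \Gs$, and $\eta$ is its canonical \Hs-morphism by Lemma~\ref{lem:cmdiv}. The goal is then to produce an \sfp{\Gs}-cover \Kb of $L$ with $\prin{\rho}{\Kb} \subseteq S$; since $\eta\inv(1_G) \subseteq L$ (as $\veps \in L$ and $\eta$ recognizes $L$), it suffices — by the downset-closure of $S$ and Fact~\ref{fct:linclus}-style reasoning — to build an $\eta$-pointed \sfp{\Hs}-cover of $A^*$ whose pointed imprint on the $1_G$-coordinate lands in $S$, and then restrict to $\eta\inv(1_G)$. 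Concretely, I would define a subset $T \subseteq G \times R$ and check it is \sfr-saturated for $\eta$ and $\rho$ in the sense of Section~\ref{sec:finite}, then invoke Proposition~\ref{prop:comp}.

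The heart of the argument is choosing $T$ and verifying \sfr-saturation. The natural candidate is $T = \{(g,r) \in G \times R \mid r \in \crtype{g}\cdot\text{-stuff}\}$; more precisely I expect $T$ to be built so that its $1_G$-slice is exactly $S$ and its $g$-slice for $g \neq 1_G$ absorbs the "conjugation by group elements" that the predicates of \Gs allow — this is precisely what the definition $\sfrats(a) = \{s\rho(a)s' \mid s,s'\in S\}$ encodes, and it is why $L$ was chosen as an \goptid for $\sfrats$ rather than for $\bratauxsfg$ directly. The four saturation conditions (trivial elements, downset, multiplication, \sfr-closure) should follow: trivial elements and downset from Remark~\ref{rem:compne} and the downset-closure of $S$; multiplication from multiplicative closure of $S$ together with the group structure of $G$; and \sfr-closure of $T$ from \sfr-closure of $S$ applied slice-wise, using that idempotents of $G$ are all equal to $1_G$ since $G$ is a group, so the \sfr-closure condition on $T$ only ever fires on the $1_G$-slice, where it is exactly the \sfr-closure of $S$. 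Once $T$ is \sfr-saturated, Proposition~\ref{prop:comp} (with \Hs in place of the finite \vari, $\eta$ its canonical morphism) yields an $\eta$-pointed \sfp{\Hs}-cover \Kb of $A^*$ with $\pprin{\eta,\rho}{\Kb} \subseteq T$. Taking $\Kb_L = \{K \cap \eta\inv(1_G) \mid (1_G,K) \in \Kb\}$ (or rather $\{K \mid (1_G,K)\in\Kb\}$, which already covers $\eta\inv(1_G)\subseteq L$, intersecting with $L$ if desired) gives an \sfp{\Gs}-cover of $L$ — note $\eta\inv(1_G) \in \Hs \subseteq \Gs \subseteq \sfp{\Gs}$ and \sfp{\Gs} is closed under intersection by Proposition~\ref{prop:vari} — whose $\rho$-imprint lies in the $1_G$-slice of $T$, namely in $S$.

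The main obstacle I anticipate is pinning down the exact definition of $T$ and checking that its $1_G$-slice is not strictly larger than $S$ after we demand \sfr-saturation; that is, showing the \emph{least} \sfr-saturated $T$ with the right trivial elements and with the group-conjugation closure baked into the non-identity slices has $1_G$-slice contained in $S$. This is where the precise form of the \Gs-operation axiom ($\ioptig{\sfrats} \subseteq S$, with $\sfrats(a) = \{s\rho(a)s'\mid s,s'\in S\}$) does the work: any element forced into the $1_G$-slice by a path through $G$ that returns to $1_G$ factors as an alternating product of $\rho(a_i)$'s and slice-elements, hence lies in $\sfrats(w)$ for some $w \in \eta\inv(1_G)$, hence in $\sfrats(L) = \ioptig{\sfrats} \subseteq S$. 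Making this bookkeeping clean — possibly by defining $T$ directly as $\{(g,r) : r \in \sfrats(\eta\inv(g))\text{-type set derived from }S\}$ and proving \sfr-saturation of that explicit set rather than taking a least fixpoint — is the technical crux, but it is routine given the structure already in place for the finite case.
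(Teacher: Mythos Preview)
Your proposal is correct and follows essentially the same route as the paper: choose an \goptid $L$ for $\sfrats$, recognize it by a \Gs-morphism $\eta:A^*\to G$, build an \sfr-saturated set in $G\times R$ whose $1_G$-slice is $S$ (the paper gives exactly the formula you anticipate, namely $S' = \{(1_G,s)\mid s\in S\}\cup\{(t,r)\mid r\in\dclosr\sfrats(\eta\inv(t))\}$), apply Proposition~\ref{prop:comp}, and take the $1_G$-component of the resulting pointed cover. One minor correction: Proposition~\ref{prop:comp} does \emph{not} require \Cs to be finite---only that $\eta$ be a \Cs-morphism---so your detour through a finite subvariety $\Hs$ is unnecessary (though harmless, since $\sfp{\Hs}\subseteq\sfp{\Gs}$).
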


\begin{proof}
  We start with a preliminary lemma. It is based on the hypothesis that $S$ is \sfr-complete for \Gs and $\rho$. We need it in order to apply Proposition~\ref{prop:comp}.

  \begin{lem} \label{lem:compg}
    There exist a \Gs-morphism $\eta: A^* \to G$ and a set $S' \subseteq G \times R$ which is \sfr-saturated for $\eta$ and $\rho$, and such that $S = \{r \mid (1_G,r) \in S'\}$.
  \end{lem}

  We first apply Lemma~\ref{lem:compg} to complete the proof of Proposition~\ref{prop:compg}. We define $L = \eta\inv(1_G)$. It is clear that $\veps \in L$ and $L\in \Gs$ since $\eta$ is a \Gs-morphism. Moreover, since $S'$ is \sfr-saturated for $\eta$ and $\rho$, Proposition~\ref{prop:comp} yields an $\eta$-pointed \sfp{\Gs}-cover \Hb of $A^*$ such that $\pprin{\eta,\rho}{\Hb} \subseteq S'$. We define $\Kb = \{K \mid (1_G,K) \in \Hb\}$.  By definition, \Kb is an \sfp{\Gs}-cover of $L = \eta\inv(1_G)$. Moreover, since $S = \{r \mid (1_G,r) \in S'\}$ and $\pprin{\eta,\rho}{\Hb} \subseteq S'$, it is immediate from the definition that $\prin{\rho}{\Kb} \subseteq S$, which completes the proof.

  \smallskip

  It remains to prove Lemma~\ref{lem:compg}. Let us first define $\eta$.  We let $L$ be an \goptid for the \nice \mratm $\sfrats: 2^{A^*} \to 2^R$. That is, we have $L \in \Gs$, $\veps \in L$ and $\sfrats(L) = \ioptig{\sfrats}$. Proposition~\ref{prop:genocm} yields a \Gs-morphism $\eta: A^* \to G$ recognizing $L$. By hypothesis on \Gs, Lemma~\ref{lem:gmorph} implies that $G$ is a group. Also since $\veps \in L$ and $L$ is recognized by $\eta$, it is immediate that $\eta\inv(1_G)\subseteq L$. Finally, we define $S' \subseteq G \times R$ as the following set:
  \[
    S' = \{(1_{G},s) \mid s \in S\} \cup \{(t,r) \mid r \in \dclosr \sfrats(\eta\inv(t))\}.
  \]
  Let us first prove that $S = \{r \mid (1_G,r) \in S'\}$. The left to right inclusion is immediate by definition: if $r \in S$, then $(1_G,r) \in S'$. We turn to the converse inclusion. Let $r \in R$ such that $(1_G,r) \in S'$. By definition of $S'$, either $r \in S$ or $r \in \dclosr \sfrats(\eta\inv(1_{G}))$. In the former case, we are done. Hence, we assume that $r \in \dclosr \sfrats(\eta\inv(1_{G}))$. By definition, $\eta\inv(1_{G}) \subseteq L$ ($L$ is recognized by $\eta$ and $\veps \in L$). Hence, since $\sfrats(L) = \ioptig{\sfrats}$ we get $r \in \dclosr \ioptig{\sfrats}$. Finally, since $S$ is \sfr-complete for \Gs and $\rho$, we have $\dclosr \ioptig{\sfrats} \subseteq S$ and we get $r \in S$, as desired.

  \smallskip
  It remains to prove that $S'$ is \sfr-saturated for $\eta$ and $\rho$. This involves four properties. Let us start with trivial elements. Consider $w \in A^*$, we show that $(\eta(w),\rho(w)) \in S'$.  By definition of $S'$, it suffices to prove that $\rho(w) \in \dclosr \sfrats(\eta\inv(\eta(w)))$. Clearly, $w \in \eta\inv(\eta(w))$. Hence, it now suffices to prove that $\rho(w) \in \sfrats(w)$. If $w = \veps$, then $\rho(w) = 1_R$ and $\sfrats(\veps)= \{1_R\}$ (this is the neutral element of $2^R$). Thus, it is immediate that $\rho(w) \in \sfrats(w)$. Assume now that $w \in A^+$. We have $a_1,\dots,a_n \in A$ such that $w=a_1 \cdots a_n$. Since $S$ is \sfr-complete, we have $1_R \in S$ (see Remark~\ref{rem:compne}). Hence, we get $\rho(a_i) \in S \cdot \{\rho(a_i)\} \cdot S = \sfrats(a_i)$ for every $i \leq n$. It then follows that $\rho(w) \in \sfrats(w)$ which concludes this case. We turn to downset. Consider $(t,r) \in S'$ and $q \leq r$. We show that $(t,q) \in S'$. By definition, there are two possible cases. First, it may happen that $t = 1_{G}$ and $r \in S$. In that case, $q \in S$ since $\dclosr S = S$ ($S$ is complete) and we get that $(1_{G},q)\in S'$ by definition. Otherwise, $r \in \dclosr \sfrats(\eta\inv(t))$ which yields $q \in \dclosr \sfrats(\eta\inv(t))$ as well and we get $(t,q) \in S'$, concluding the proof for downset.

  \smallskip
  We turn to closure under multiplication. Let $(s,q),(t,r) \in S'$. We have to show that $(st,qr) \in S'$. There are several cases. Assume first that $s = t = 1_{G}$. In that case, we proved above that $q,r  \in S$. Since $S$ is complete, this implies that $qr \in S$ and we obtain that $(st,qr) = (1_{G},qr) \in S'$. We now assume that either $s$ or $t$ is distinct from $1_{G}$ for the remainder of this case. If both $s$ and $t$ are distinct from $1_{G}$, we have $q \in \dclosr\sfrats(\eta\inv(s))$ and $r \in \dclosr \sfrats(\eta\inv(t))$ by definition of $S'$. It follows that $qr \in \dclosr \sfrats(\eta\inv(s)\eta\inv(t))$. Since $\eta\inv(s)\eta\inv(t) \subseteq \eta\inv(st)$, this yields $qr \in  \dclosr \sfrats(\eta\inv(st))$ and we get $(st,qr) \in S'$. Finally, we handle the case when $s = 1_{G}$ and $t \neq 1_{G}$ (the symmetrical case is left to the reader). Since $S = S'(1_G)$, the hypothesis that $t = 1_{G}$ yields $q \in S$. Moreover, we have $r \in \dclosr \sfrats(\eta\inv(t))$ since $t \neq 1_{G}$. Note that $t \neq 1_{G}$ also implies that $\eta\inv(t) \subseteq A^+$. Thus, since $SS \subseteq S$ ($S$ is complete), one may verify from the definition of \sfrats that $S \cdot \sfrats(\eta\inv(t)) \subseteq \sfrats(\eta\inv(t))$. Thus, since $q \in S$ and $r \in \dclosr \sfrats(\eta\inv(t))$, we obtain that $qr \in \dclosr \sfrats(\eta\inv(t))$. Finally, $1_{G}t = t$. Thus, $qr \in \dclosr \sfrats(\eta\inv(1_{G}t))$ which yields $(1_{G}t,qr) \in S'$, concluding the proof.

  \smallskip
  It remains to prove that $S'$ satisfies \sfr-closure. Since $G$ is a group, $1_{G}$ is the only idempotent of~$G$. Hence, we have to show that for $r \in R$ such that $(1_{G},r) \in S'$, we have $(1_{G},r^{\omega} + r^{\omega+1}) \in S'$. Since $S = \{r \mid (1_G,r) \in S'\}$, we know that $r \in S$. Hence, since $S$ is \sfr-complete, we get $r^{\omega} + r^{\omega+1} \in S$ which implies $(1_{G},r^{\omega} + r^{\omega+1}) \in S'$ since $S = \{r \mid (1_G,r) \in S'\}$. This completes the proof.
\end{proof}

We may now combine Proposition~\ref{prop:soundg} and Proposition~\ref{prop:compg} to prove Theorem~\ref{thm:cargroup}.

\begin{proof}[Proof of Theorem~\ref{thm:cargroup}]
  Let \Gs be a group \vari and $\rho: 2^{A^*} \to R$ be a \nice \mratm. We prove that \ioptig{\bratauxsfg} is the least $\sfr$-complete subset of $R$ for \Gs and $\rho$. Proposition~\ref{prop:soundg} implies that \ioptig{\bratauxsfg} is $\sfr$-complete. We have to show that it is the least such~set.

  Let $S \subseteq R$ be $\sfr$-complete for \Gs and $\rho$. We prove that $\ioptig{\bratauxsfg} \subseteq S$. Proposition~\ref{prop:compg} yields $L \in \Gs$ such that $\veps \in L$ and an \sfp{\Gs}-cover $\Kb$ of $L$ such that $\prin{\rho}{\Kb} \subseteq S$. By definition of \goptids, we have $\ioptig{\bratauxsfg} \subseteq \bratauxsfg(L)$. Hence, it suffices to show that $\bratauxsfg(L) \subseteq S$. By definition, we have $\bratauxsfg(L) = \opti{\sfp{\Gs}}{L,\rho} \subseteq \prin{\rho}{\Kb} \subseteq S$. This concludes the proof.
\end{proof}


\section{Conclusion}
\label{sec:conc}
We investigated the star-free closure operator $\Cs \mapsto \sfp{\Cs}$ applied \varis or regular languages. First, we presented several equivalent ways of defining \sfp\Cs, including a generic algebraic characterization that generalizes earlier results~\cite{schutzsf,STRAUBING1979319,Pinambigu}. It implies that \sfp{\Cs}-\emph{membership} is decidable when \Cs-\emph{separation} is decidable. A key proof ingredient for all these results is an alternative definition of star-free closure: the operator $\Cs \mapsto \bsdp{\Cs}$, which we prove to coincide with star-free closure. This correspondence generalizes the work of Schützenberger~\cite{schutzbd} who introduced a single class \bsd ({\em i.e.}, \bsdp{\stzer}) corresponding to the star-free languages ({\em i.e.},~\sfp{\stzer}). Moreover, we presented two generic logical characterizations of star-free closure. The first one is based on first-order logic and generalizes the work of McNaughton and Papert on the star-free languages~\cite{mnpfosf}: we have $\sfp{\Cs} = \fo(\infsigc)$ for every \vari \Cs. The second one is based on linear temporal logic and generalizes Kamp's theorem~\cite{kltl}: we have $\sfp{\Cs} = \ltlc{\Cs} = \ltlpc{\Cs}$ for every \vari \Cs. Finally, we gave two generic characterizations of optimal \imprints for \sfp{\Cs}, for two particular kinds of \vari \Cs: the \emph{finite} \varis and the \emph{group} \varis. They imply that \sfp{\Cs} has decidable separation and covering when \Cs is a finite \vari and that \sfp{\Gs} has decidable separation and covering when \Gs is a group \vari with decidable separation.


\printbibliography


\end{document}